\theoremstyle{plain}
\newtheorem{theorem}{Theorem}[section]
\newtheorem*{theorem*}{Theorem}
\theoremstyle{definition}
\newtheorem{definition}[theorem]{Definition}
\newtheorem*{definition*}{Definition}
\theoremstyle{remark}
\newtheorem*{remark*}{Remark}
\newcommand{\approvalspace}{\vspace{.8in plus 3em minus 3em}}
\newlength{\approvallength}
\newcommand{\titleType}{}
\patchcmd{\chapter}{\thispagestyle{plain}}{}{}{}
\g@addto@macro{\UrlBreaks}{\UrlOrds}
\newcommand{\labeltarget}[1]{\Hy@raisedlink{\hypertarget{#1}{}}}
\newcommand{\thesisspacing}{\doublespacing} 
\newcommand\defeq{\stackrel{\smash{\scriptscriptstyle\mathrm{def}}}{=}}
\renewcommand\theadfont{\bfseries}
\newcommand{\head}[1]{\textnormal{\textbf{#1}}}
\newcommand{\normal}[1]{\multicolumn{1}{l}{#1}}
\tikzset{Sm/.style={rectangle,fill=white,draw,minimum size=0.5cm,inner sep=0pt},}
\tikzset{Im/.style={rectangle,fill=black,draw,minimum size=0.5cm,inner sep=0pt},}
\tikzset{Sw/.style={circle,fill=white,draw,minimum size=0.5cm,inner sep=0pt},}
\tikzset{Iw/.style={circle,fill=black,draw,minimum size=0.5cm,inner sep=0pt},}
\def\ErdosGallai{{Erd\H{o}s$-$Gallai }}
\def\ErdosRenyi{{Erd\H{o}s-R\'{e}nyi }}
\def\ED/{$\text{ED}_{\text{max}}$}
\def\ND/{$\text{ND}_{\text{max}}$}
\def\SND/{$\text{S}_{\text{max}}\text{ND}_{\text{min}}$}
\tikzset{main node/.style={circle,fill=blue!20,draw,minimum size=.5cm,inner sep=0pt},}
\tikzset{Sm/.style={rectangle,fill=white,draw,minimum size=0.5cm,inner sep=0pt},}
\tikzset{Im/.style={rectangle,fill=black,draw,minimum size=0.5cm,inner sep=0pt},}
\tikzset{Sw/.style={circle,fill=white,draw,minimum size=0.5cm,inner sep=0pt},}
\tikzset{Iw/.style={circle,fill=black,draw,minimum size=0.5cm,inner sep=0pt},}
\newcommand*{\rom}[1]{\expandafter\@slowromancap\romannumeral #1@}
\newif\ifnotesw \noteswtrue
\def\be{\begin{equation}}
\def\ee{\end{equation}}
\newcolumntype{d}[1]{D{.}{.}{4}}
\newcommand{\mlc}[1]{\multicolumn{1}{c}{#1}}
\def\so { \textbf{SocNet }} 
\def\se { \textbf{SexNet }}
\begin{document}

\pagenumbering{gobble}

\newpage
\singlespacing
\renewcommand{\titleType}{an abstract} 
\newpage
  \thispagestyle{empty}
    {\sc
      \begin{center}
      \large{
        Mathematical Models for Predicting and Mitigating the Spread of Chlamydia Sexually Transmitted Infection\\
        \vspace{.5in}
        \titleType \\
        submitted on the \today \\
        to the department of mathematics\\
        of the school of science and engineering of\\
        tulane university\\
        in partial fulfillment of the requirements\\
        for the degree of\\
        doctor of philosophy\\
        by\\
        \vspace{.3in} \approvalspace
        \parbox[c]{\approvallength}{\hrulefill \\ 
          \makebox[\approvallength]{Asma Azizi Boroojeni}}
        }
    \end{center}
    \approvalspace \hfill approved: \makebox[\approvallength]{\ }
    
    \hfill \parbox{\approvallength}{ \hrulefill \\ James Mac Hyman \\ chairman}
    \approvalspace
    
     \hfill
      \parbox{\approvallength}{ \hrulefill \\ Patricia Kissinger}
      \approvalspace

     \hfill
      \parbox{\approvallength}{ \hrulefill \\ Lisa Fauci}
      \approvalspace
    
     \hfill
      \parbox{\approvallength}{ \hrulefill \\ Kun Zhao}
      \approvalspace

     \hfill
      \parbox{\approvallength}{ \hrulefill \\ Scott McKinley}
      \approvalspace
  }

\thesisspacing

\chapter*{Abstract} 
Chlamydia trachomatis
(Ct) is the most common  bacterial sexually transmitted infection (STI) in the United States and is major cause of infertility, pelvic inflammatory disease, and ectopic pregnancy among women.
Despite decades of screening women for Ct, rates continue to increase among them in high prevalent areas such as New Orleans.  A pilot study in New Orleans found
approximately $11\%$ of  $14-24$ year old of African Americans (AAs)
were infected with Ct. 
 Our goal is to mathematically model the impact of
different interventions for  AA men resident in New Orleans  on the
general rate of Ct among women resident at the same region.
We create and analyze  mathematical models such as multi-risk and continuous-risk compartmental models and  agent-based network model to first help understand the spread of Ct and second evaluate and estimate  behavioral and biomedical interventions including condom-use, screening, partner notification, social friend notification, and rescreening.
Our compartmental models predict the Ct prevalence is a function of the number of partners for a person, and quantify how this distribution changes as a function of condom-use. We also  observe that although increased Ct screening and  rescreening, and treating partners of infected people will reduce the prevalence, these mitigations alone are not sufficient to control the epidemic. A combination of both sexual partner and social friend notification is  needed to mitigate Ct.

\newpage
\singlespacing
\renewcommand{\titleType}{a dissertation}

\thesisspacing

\newpage
~\vspace{1in}

\newpage
\pagenumbering{roman}
\setcounter{page}{2} 
\addcontentsline{toc}{chapter}{Acknowledgement} 
\chapter*{Acknowledgement}

Foremost, \textbf{Thanks to merciful God} for all the countless gifts dedicated me and to my parents for their love and for  supporting me spiritually throughout my PhD  and my life.

 I would like to express my heartiest gratitude to my advisor Prof. \textbf{James Mac Hyman} for his  sincere guidance, his continuous support of my Ph.D study and research,  his patience, motivation, enthusiasm, and immense knowledge.  
 Mac you are more than   an advisor for me: with your way of teaching I grew up both personally and profesionally. You are the very best advisor any lucky student can have. 
 
I also would like to thank my thesis committee members: Profs. Patricia Kissinger, Lisa Fauci, Scott Mcckinley and  Kun Zhao, for their encouragement, insightful comments, and hard questions. My sincere thanks also goes to Norine Schmidth, Dr. Charles  Stocker, and Dr. Martin , for leading me working on this project projects.

And an special thank to  all my friends and fellows specially to Martha Dryer,  Jeremy Dewar, Zhuolin Qu, Justin Davis, and Ling Xue and  all those, too many to name.

This work was supported by the endowment for the Evelyn and John G.
Phillips Distinguished Chair in Mathematics at Tulane University, grants from the
National Institutes of Health National Institute of Child Health and Human
Development, NCHID/NIAID ( R01HD086794),  Office of Adolescent Health, OAH
(TP2AH000013), the National Institute of General Medical Sciences program for
Models of Infectious Disease Agent Study (U01GM097658), and by grants from the National Science Foundation (DMS1263374).

\newpage 
\addcontentsline{toc}{chapter}{List of Tables} 
\listoftables

\newpage 
\addcontentsline{toc}{chapter}{List of Figures} 
\setcounter{lofdepth}{2} 
\listoffigures

\newpage
\tableofcontents

\newpage
\pagenumbering{arabic}
\pagestyle{myheadings} 
\chapter{Chlamydia Trachomatis}\label{ct}
Chlamydia trachomatis (Ct)  is an infection  from the family of Sexually Transmitted Infections (STIs),   which are  transmitted through sexual acts and they can be caused by  bacteria or viruses.  Here, sexual act means any  type of sexual intercourse including oral, vaginal, and anal sex.
STIs  are one of the most common causes of illness and even death worldwide, and therefore, are a major public health issue. These infections exert a high emotional toll on suffered individuals, as well as an economic burden on public health system.  The World Bank estimated that among women aged $15-44$ year old, STIs (excluding HIV) are the second most common causes of healthy life lost after maternal morbidity \cite{adler1998abc}.  

There are more than $20$ different STIs including chlamydia, gonorrhea, syphilis, herpes,  viral hepatitis, and HIV  affecting men and women of all backgrounds and economic levels. However, because of lack of an effective notification system in many countries and also lack of symptom in most of these STIs, the size of the global burden of STIs is uncertain. 
In this Chapter we are going to review a background and feature of  the most prevalent bacterial  STI i.e Ct.

\section{History of Ct}
Chlamydia  is an infection caused by a kind of bacteria called  Chlamydia trachomatis (Ct) that is passed during sexual act, usually thorough vaginal and anal intercourse. Ct was first discovered in $1907$ by german parasitologist Stanislaus von Prowazek.  Genus part of the name, Chlalmydia, comes from the Greek word chlamys, which means cloak and the species part of the name, trachomatis is also Greek and means rough or harsh \cite{dimitrakov2011chlamydia}.
 
 Most of Ct cases do not show any symptom, for that reason it sometimes called \textit{Silent infection}, for the cases showing symptom, its  symptoms are similar to some other infections, therefore, it was not recognized as a sexually transmitted disease till $1980$.
 
 Today Ct is the most common and the most spread bacterial STI in the world. In $1997$ there were $537,904$ reported diagnoses, however, by $2009$ the annual total had more than doubled to $1,244,180$. 
  In the United States over $2.8$ million cases of Ct are reported each year \cite{torrone2014prevalence}. Based on Center of Disease Control and Prevention (CDC) report, about three million American women and men become infected with Ct every year. Spreading of Ct  among African Americans, AAs, was eight times bigger than whites and rates among American Indians/Alaska Natives and Hispanics are also higher than among whites \cite{cdc}. The Figure  (\ref{fig:rate1}) shows the rate of reported infected cases by gender in years $1994\textnormal{-}2014$ and Figure (\ref{fig:rate2}) is reported cases by region in $2014$ in the United States. 
\begin{figure}[H]
\centering
\includegraphics[width=0.7\textwidth]{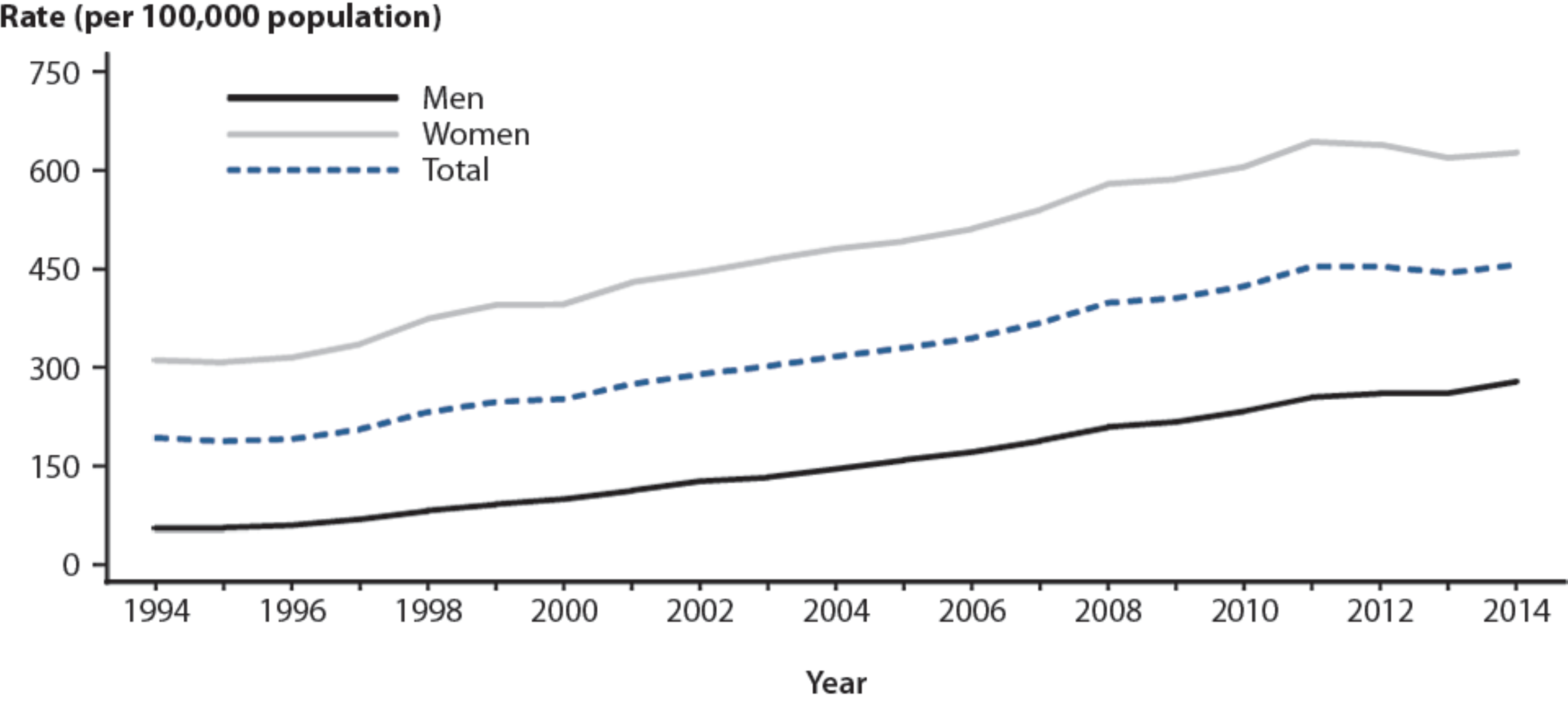}
\caption[\textbf{Ct vs gender}]{Rates of Reported Cases by gender in United States in years $1994-2014$ \cite{cdc}.}
\label{fig:rate1}
\end{figure}
\begin{figure}[H]
\centering
\includegraphics[width=0.7\textwidth]{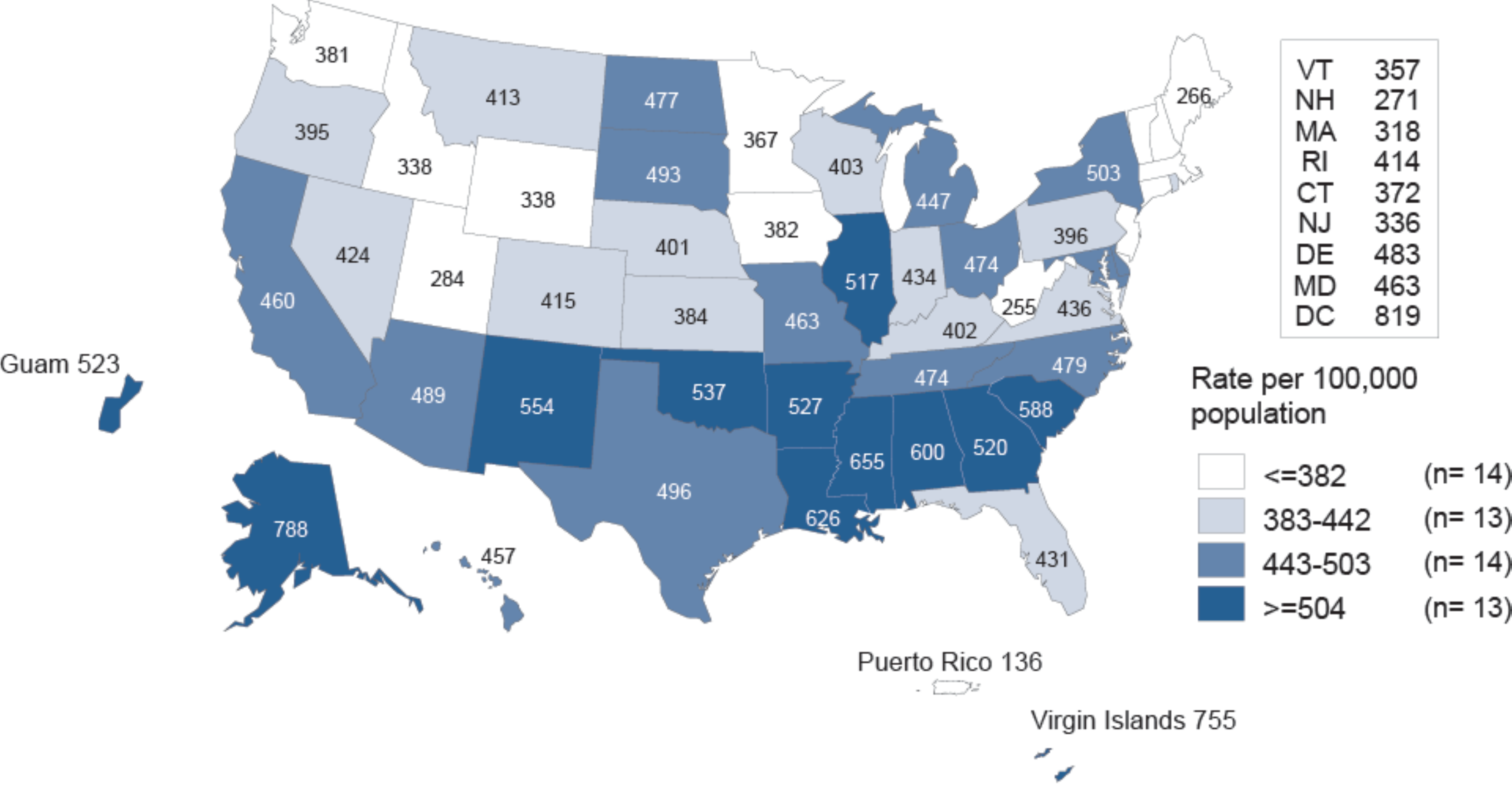}
\caption[\textbf{Ct vs region}]{Rates of Reported Cases by region in United States in  $2014$ \cite{cdc}.}
\label{fig:rate2}
\end{figure}

Ct mostly affects  young people, individuals ages $15-25$ years old. CDC estimates that adolescent and young adults, people  ages $15-25$ years old,  make up around  one quarter of the sexually active population, but account for $67\%$ of the Ct infections that occur in the United States \cite{cdc}.  In $2013$, the rate among $15-19$ year old people was $1852.1$ cases per $100000$ and the rate among $20-25$ year old people was $2451.6$ cases per $100000$. 
As shown in Figure (\ref{age1}), among women, the highest age-specific rates of reported Ct in $2013$ were among those aged $15-19$ years ($2941.0$ cases per $100000$ women) and $20-25$ years ($3651.1$ cases per $100000$ women) \cite{cdc}.

\begin{figure}[h]
\centering
\includegraphics[width=\textwidth]{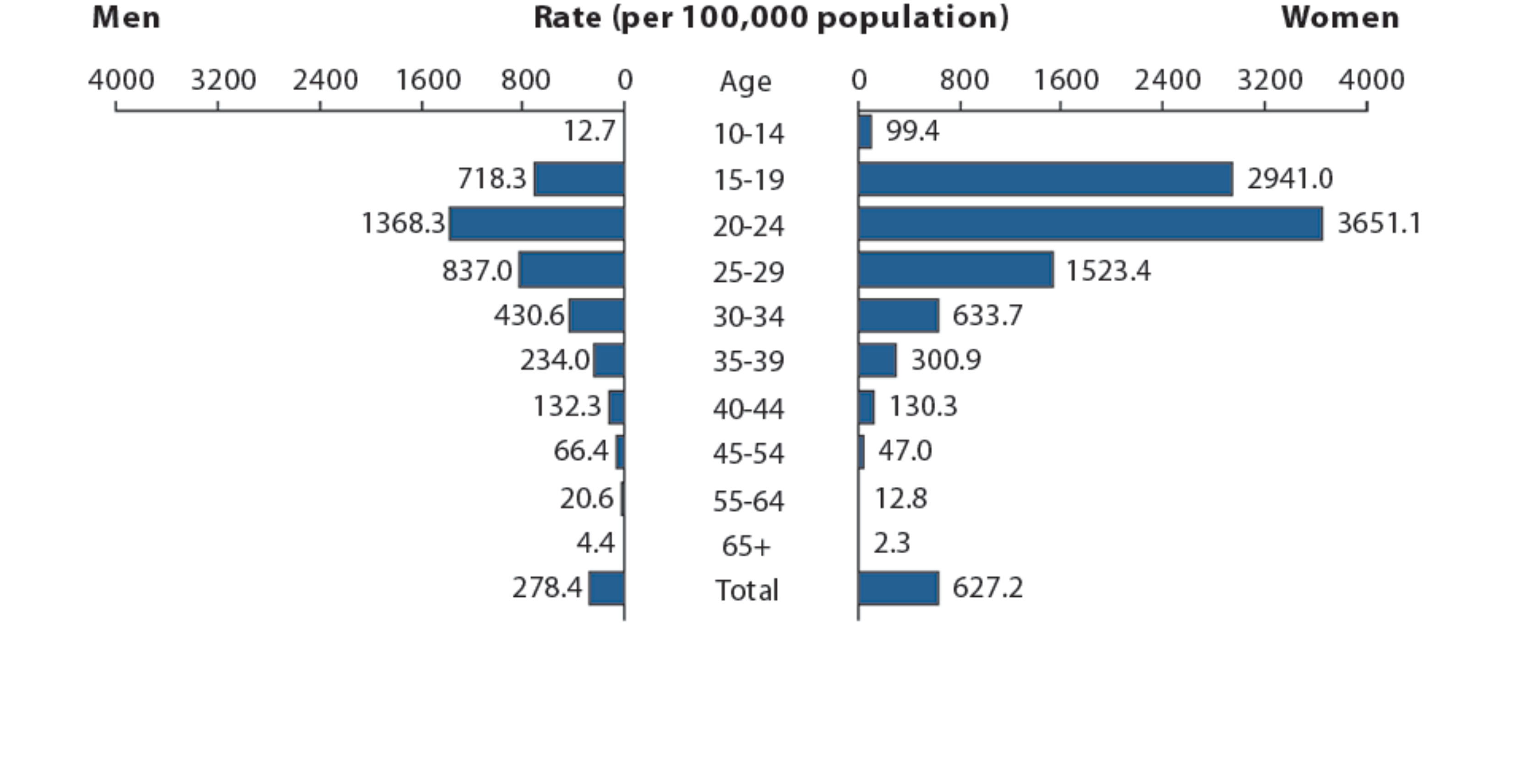}
\caption[\textbf{Ct vs age}]{ Ct Rates of Reported Cases by Age Group and Sex, United States, 2016 \cite{cdc}.}
\label{age1}
\end{figure}

\begin{figure}
\centering
\includegraphics[scale=2]{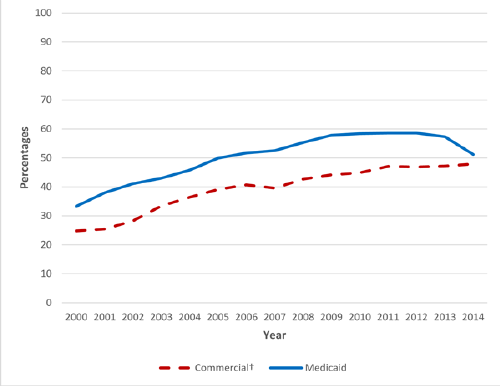}
\caption[\textbf{Screened females for Ct}]{Percentage of sexually active female aged $16$--−$24$ years  who were screened for Ct infection, by health plan type and year, United States, $2000$--‐$2014$}
\label{screen}
\end{figure}

\section{Symptoms and Causes }
Ct is known as a silent infection because most of the infected people are asymptomatic and lack abnormal physical examination findings: about $70\%-95\%$  of women and $90\%$ of men with Ct have no  symptoms \cite{farley2003asymptomatic,korenromp2002proportion}.
However, in the case infected people show symptoms, they are different for women and men. 

Infected women with Ct may experience  abdominal pain, abnormal vaginal discharge, bleeding between menstrual periods, low-grade fever, painful intercourse,
 pain or a burning feeling while urinating, swelling inside the vagina or around the anus, the urge to urinate more than usual,  vaginal bleeding after intercourse, and
  yellowish discharge from the cervix \cite{cdc}. Infected men  may experience  pain or a burning feeling while urinating,
 pus or watery or milky discharge from the penis,
 swollen or tender testicles, and swelling around the anus \cite{cdc}.

Ct infections are associated with a spectrum of clinical diseases,
urethritis and including epididymitis
among men, and cervicitis, salpingitis,
and acute urethral syndrome among
women  \cite{quinn1996epidemiologic}. 
Although at the early stage of Ct the damages go unnoticed, but Ct can lead to serious health problems, that is, because Ct is silent infection, it can sometimes  cause other diseases.

 Ct is a major cause of
infertility, pelvic inflammatory disease (PID), and ectopic pregnancy among women with estimated annual cost exceeds five billion dollars \cite{cohen1998sexually,  datta2012chlamydia,gottlieb2010introduction,gottlieb2010summary,hillis1996screening,lan1995chlamydia,pearlman1992review,westrom1995effect,westrom1993sexually},  
 and has been associated with increased HIV acquisition
and transmission  \cite{cohen1998sexually,gottlieb2010introduction,lan1995chlamydia,pearlman1992review,golden2003partner,hillis1996screening,niccolai2011behavioural,ward2010contribution,westrom1995effect,westrom1993sexually}. Untreated, an estimated $16\%$ of,
women with Ct will develop PID \cite{cohen1998sexually}, and $6\%$ will have tubal infertility \cite{lan1995chlamydia}.   
 In pregnant women, untreated Ct has been associated with pre-term delivery, as well as ophthalmia neonatorum (conjunctivitis) and pneumonia in the newborn \cite{rours2011chlamydia}.

\section{Control and Prevention}

The rate of spread of Ct in a population is determined 
by three factors \cite{eng1997hidden}:
\begin{enumerate}
\item the probability of acquiring
the infection by susceptible individuals, i.e the efficiency of transmission ($\beta$),
\item the rate of exposure of susceptible persons to infected partners ($c$),  and
\item  the length of time that  persons are infected and are able to transmit infection ($\tau$).
\end{enumerate}
There is an important concept in epidemiology- called basic reproduction number and shown by $\mathcal{R}_0$-  which states on average how many infections result from one infected person in a wholly susceptible population. For the very simple model this value is calculated as $\mathcal{R}_0=\beta c\tau$. If this value is greater than one, then  Ct can increase in the community. But, if it is less than one, then the rate of spread
of the Ct will die out. The pattern of spread for when $\mathcal{R}_0\leq 1$ and $\mathcal{R}_0> 1$ is shown Figure (\ref{patern_of_r0}). 

\begin{figure}[H]
\begin{tikzpicture}[level/.style={sibling distance=15mm/#1, level distance=15mm}
]
\node [circle,draw,fill, red] (z){}
  child {node [circle,draw,fill, red] (a) {}
    child {node [circle,draw,fill, red] (b) {}
        child {node [circle,draw,fill, red] (d) {}}
        child {node [circle,draw,fill, red] (e) {}}
               }
    child {node [circle,draw,fill, red] (g) {}
      child {node [circle,draw,fill, green] (x) {}}
          }
     child {node [circle,draw,fill,green] (g) {}
          }
  }
    child {node [circle,draw,fill, red] (a) {}
    child {node [circle,draw,fill, red] (b) {}
        child {node [circle,draw,fill, red] (d) {}}
        child {node [circle,draw,fill, green] (e) {}}
    }
    child {node [circle,draw,fill, red] (g) {}
    child {node [circle,draw,fill, green] (f) {}}
    }
  }
  child {node [circle,draw,fill, green] (j) {}
}
;
\path (a) -- (j) ;
\path (b) -- (g);
\path (d) -- (e) ;
  child [grow=down] {
    node (y) {}
    edge from parent[draw=none]
  };
 \node [left= 1cm of z] (SnW)  {$\mathcal{R}_0> 1$};
  \node [left= 6cm of z, circle,draw,fill, red] (0){}
    child {node [circle,draw,fill, green] (1) {}
}
  child {node [circle,draw,fill, red] (2) {}
       child {node [circle,draw,fill, green] (3) {}
}
 child {node [circle,draw,fill, red] (4) {}
      child {node [circle,draw,fill, green] (5) {}
}
}
}
  ;
\node [left= 1cm of 0] (SnW)  {$\mathcal{R}_0\leq 1$};
\node [right= 1.5cm of j, circle,draw,fill, green] (S)  {};
\node [right=1mm of S] (l1)  {$Susceptible$};
\node [below= .5cm of S, circle,draw,fill, red] (I)  {};
\node [right=1mm of I] (l2)  {$Infected$};
\end{tikzpicture}
\caption[\textbf{Concept of} $\mathcal{R}_0$]{Pattern of Spreading of infection for the cases $\mathcal{R}_0\leq 1$ and $\mathcal{R}_0 > 1$.}
\label{patern_of_r0}
\end{figure}
Our  main goal of interventions is keeping $\mathcal{R}_0$ less than $1$. Therefore, we can prevent the spread of Ct within a population by reducing the rate of exposure to Ct, reducing the efficiency of transmission, or shortening the duration of infectiousness for Ct \cite{eng1997hidden}. Targeting each of these factors by individuals or committees to control the epidemic of Ct ends up with different strategies to take \cite{eng1997hidden}. 

For individual level, the only safe way to prevent Ct is to abstain from sexual act with others \cite{cdc}. However, this way is not realistic and applicable. People can reduce their risk of catching or transmitting the infection  by changing their behaviors such as  using condoms during every sexual act, reducing the number of concurrent sex partners, and undergoing regular screenings.

For population level, the strategies to control Ct with emphasis on different components depends on the local pattern and distribution of Ct in the community and economical condition of the community. There are several principles to apply: prevention can be aimed at uninfected people in the community to prevent them from acquiring infection (reducing to exposure and transmission) or at infected people to prevent the  transmission of the infection to their sexual partners (reducing infection period) \cite{adler1998abc,rogstad2011abc}. 
In this subsection, we explain each of the principles which was taken from \cite{adler1998abc}.

\subsection{Behavioral approach: reducing to exposure and transmission  }
A \textbf{behavioral intervention} is a set of interventions  encouraged by public health to individuals for implementing in order to reduce Ct transmission. Individual, group, and community-level
behavioral interventions seek to directly change so-called
behavioral determinants of risky behaviors,
such as sexual and drug use knowledge, attitudes, beliefs,
perceptions of risk, barriers, social norms, motivation to
change, behavioral intentions, self-efficacy (confidence)
and a variety of skills (e.g., partner negotiation skills,
correct condom use skills) as a route to behavior change \cite{adler1998abc}.

A \textbf{sexual behavior}  is commonly defined as behavior that effects one's risk of contracting Ct and generally STIs.
 Because sexual activity is typically initiated
in adolescence or early adulthood and
because that period for many young people is
characterized by greater amounts of experimentation,
partner change, and risk taking
than in later years, research programs with a
focus on the behaviors of adolescents and
young adults are of particular importance \cite{turner1998adolescent}. Aral \cite{aral1993sexual} reviewed the sexual and other behaviors that place
individuals at a high risk of exposure to Ct. These behaviors are:
\begin{enumerate}
\item Initiation of sexual intercourse at an early age, because  adolescents are biologically more susceptible to Ct than adults.
\item Taking lots of concurrent partners: the greater the number of partners an individual has, the greater is the risk of exposure to any STI, because this behavior increases the chance of having an infectious partner.
\item  Having sex with a partner who is likely to have had
many partners.
\item Increased frequency of intercourse: the greater is the frequency of intercourse with an infected partner, the greater are the chances of transmission. 
\item Lack of circumcision of male partner: women with male partners who are circumcised are at lower risk of exposure compared to those with uncircumcised partners.
\item Lack of barrier contraceptive use such as condoms \cite{aral1993sexual}.
\end{enumerate}

\textbf{Condoms}, if  used correctly and consistently during every sexual intercourse, are
 the most effective method of preventing exposure to Ct \cite{cdc}. Condoms are also highly effective against bacterial and viral STIs including HIV
infection, however, failure to use a condom correctly and consistently, rather than potential
defects of the condom itself, is considered to be the major barrier to condom
effectiveness \cite{centers1993update}. Data show that condom-use has increased in the United States in the last few decades: six in ten high school students in the United States, who are sexually active, reported they used condoms at their most recent sexual intercourse. Condom-use among this group increased from $46\%$ in $1991$, to $63\%$ in $2003$, and was $59\%$ in $2013$ \cite{WinNT}. Reece et al. \cite{reece2010condom} also studied  rates of condom-use among sexually active individuals in the United States population. Based on their result, adolescents reported condom-use during $79.1\%$ of the past $10$ vaginal intercourse events.

\subsection{Biomedical approach: reducing infection period }
The goal of \textbf{Biomedical interventions}  is to reduce the risk of
infected individuals  transmitting  infection to
their partners \cite{adler1998abc}. These approaches entail encouraging health seeking behavior and increasing
screening and appropriate treatment of symptomatic and
asymptomatic people and  tracing, screening, and treating sexual partners of infected people, and presumptive treatment of people at high risk of infection \cite{adler1998abc}. 
Historically,  most of the Ct programs aim to  reduce infection period by treating infected people and their partners through screening and partner notification.

\emph{\textbf{Screening}}:
early diagnosis and treatment of Ct are valuable and inexpensive, because if Ct is well controlled other serious long term sequelae can be prevented \cite{eng1997hidden}. In the United States specialized STI clinics provide screening and treatment for people with symptoms of, or who feel they are at risk of, STIs  \cite{eng1997hidden}. 

\emph{\textbf{Partner Notification}}:
partner notification has been a component of STI programs in the United States for many years \cite{rothenberg1990strategies}, and has continued to be supported through current federally funded STI programs. 
For the infections which the incubation period  is long like syphilis, partner notification is able to break the chain of transmission  by identifying source of infection and their partners and or by identifying and treating partners exposed to infection \cite{cotton1994cdc}. For STIs with short incubation period like gonorrhea and Ct the rational of partner notification has to be modified \cite{cotton1994cdc}. For example, emphasis can be placed on locating
asymptomatic infected  partners of symptomatic or asymptomatic screened individuals and on providing early treatment to prevent complications \cite{cotton1994cdc}. Partner notification  prevents transmission to 
partners and directly benefit the exposed individual by preventing
symptomatic infection and is considered to be a strategy that benefits the  partner of individual
index patient and the community and even index patients themselves because treatment of their partner causes that they do not  become reinfected \cite{adler1998abc}. 

There are several approaches of implementing partner notification: 
one of the widespread techniques in partner notification is \textit{provider referral}. This method  relies on intensive interviews with
patients about their sexual histories and partners, followed by active outreach by public health staff to identify and locate partners to ensure that they are
examined and treated \cite{eng1997hidden}. Although labor intensive and costly, provider referral is still carried out within most public health programs for some selected 
 STIs including Ct \cite{golden2003partner}.
 
In  \textit{Patient referral} the  patients themselves notify  their partners about Ct exposure which is time effective for public health staff.
 
 If index patient undertakes to notify partners themselves in a given time frame \textit{Contract referral} will be used, i.e if the partners are not notified in this period, the  health adviser will attempt to notify them with the patient’s
consent \cite{adler1998abc}. 

When the partner is notified through one of the above methods then he/she may be given medications without testing,  \textit{partner treatment}. This practice, although widespread,  has several disadvantages for preventing infection including the small but real risk of adverse drug reactions in unseen patients, the inability to screen the partner for other STIs, and the lack of opportunity to examine and counsel the partner \cite{adler1998abc}.
Partner may follow test first and then follow medication if infected, \textit{partner screening}. 
The following diagram explains the process in partner notification.
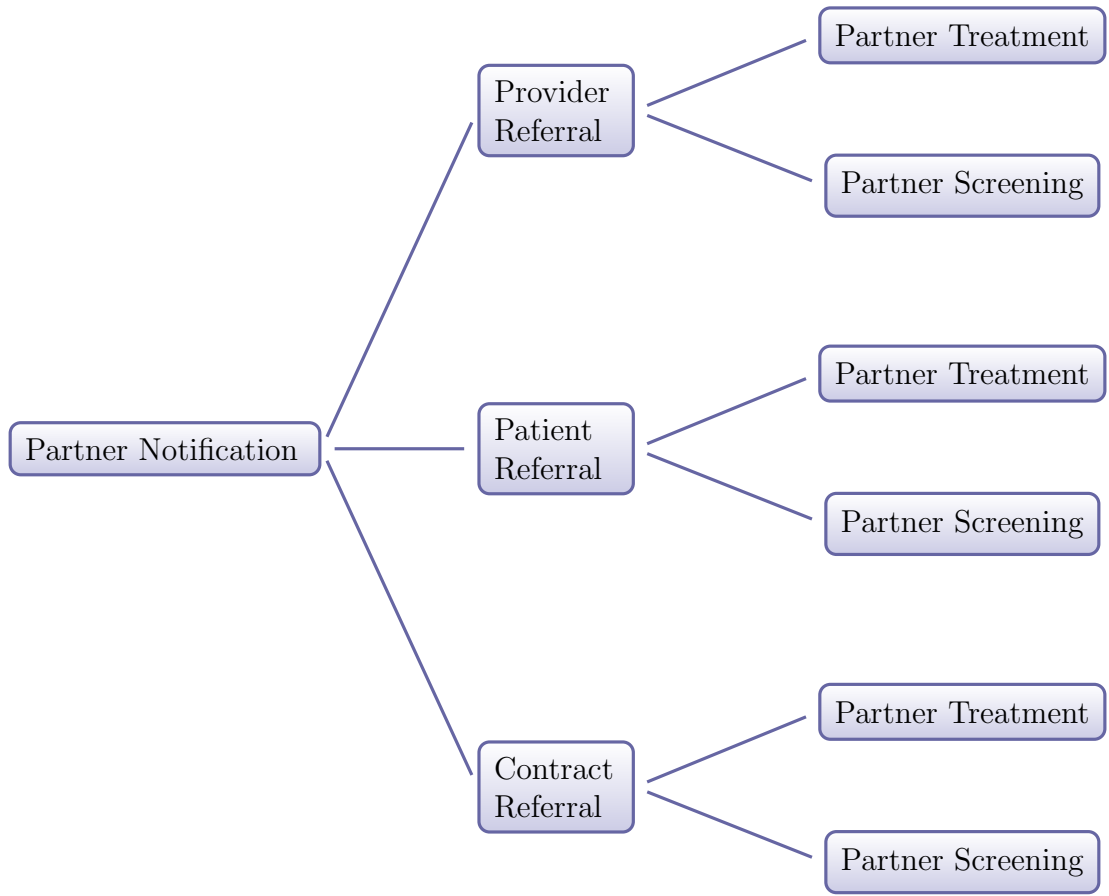
\begin{figure}[h]
\centering
\par\medskip
\hspace*{-1.5cm}
\begin{tikzpicture}[
    grow=right,
    level 1/.style={sibling distance=4.5cm,level distance=5.2cm},
    level 2/.style={sibling distance=2cm, level distance=5.4cm},
    edge from parent/.style={very thick,draw=blue!40!black!60,
        shorten >=5pt, shorten <=5pt},
    edge from parent path={(\tikzparentnode.east) -- (\tikzchildnode.west)},
    kant/.style={text width=4.5cm, text centered, sloped},
    every node/.style={text ragged, inner sep=2mm},
    punkt/.style={rectangle, rounded corners, shade, top color=white,
    bottom color=blue!50!black!20, draw=blue!40!black!60, very
    thick }
    ]
    
\node[punkt, text width=9em] {Partner Notification}
     child { 
     node[punkt, text width=4em] {Contract Referral}
     child {
        node[punkt] [rectangle split, rectangle split, rectangle split parts=1,text ragged] {Partner Screening}
        edge from parent
             node [kant, above] {} node [kant, below] {} 
    }
    child {
        node[punkt] [rectangle split, rectangle split, rectangle split parts=1,text ragged] {Partner Treatment}
        edge from parent
                         node [kant, above] {} 
    }
     edge from parent
             node [kant, above,] {} node [kant, below] {}
     }
    child {
        node[punkt, text width=4em] {Patient Referral}
        child {
        node[punkt] [rectangle split, rectangle split, rectangle split parts=1,text ragged] {Partner Screening}
        edge from parent
             node [kant, above] {} node [kant, below] {} 
    }
    child {
        node[punkt] [rectangle split, rectangle split, rectangle split parts=1,text ragged] {Partner Treatment}
        edge from parent
                         node [kant, above] {} 
    }
    edge from parent
                         node [ above] { } node [ below] {}
             }
        child {
        node[punkt, text width=4em] {Provider Referral}
        child {
        node[punkt] [rectangle split, rectangle split, rectangle split parts=1,text ragged] {Partner Screening}
        edge from parent
             node [kant, above] {} node [kant, below] {} 
    }
    child {
        node[punkt] [rectangle split, rectangle split, rectangle split parts=1,text ragged] {Partner Treatment}
        edge from parent
                         node [kant, above] {} 
    }
                    edge from parent{
                node [kant, above] {} node [kant, below] {} }
               };
\end{tikzpicture}
\caption[\textbf{Components of partner notification}]{Components of Partner Notification.}
\label{chart}
\end{figure}
\section{Mathematical Approaches to Control Ct}
Mathematical models create frameworks for understanding underling epidemiology of diseases and how they are correlated to the social structure of the infected population \cite{eaton2010youth, del2005effects, del2007mixing, hyman1997behavior, hyman1997disease, hyman1999differential, hyman2001initialization, hyman2003modeling, hyman1988using, hyman1989effect, hyman1994risk}.  Transmission-based models can help the medical/scientific community to understand and to anticipate the spread of diseases in different populations and help them to evaluate the potential effectiveness of different approaches for bringing the epidemic under control. Therefore, primary goal of mathematical modeling effort is to create a detailed model that can be used understand the spread of infection and predict the impact of mitigation efforts to reduce the prevalence. In this Section we review some of the most important and recent mathematical models for Ct, and then we outline our project on mathematical model and mitigation efforts on controlling Ct in New Orleans.
    
 The SEIRS Ct transmission model  developed by Althaus et al. \cite{althaus2010transmission} captures the most essential transitions through an infection with Ct to assess the impact of Ct infection screening programs. 
Using sensitivity analysis they identified the time to recovery from infection and the duration of the asymptomatic period as the two most important model parameters governing the disease prevalence. 
Longer recovery time diminishes the effect of screening, however  longer duration of the asymptomatic period results in a more pronounced impact of  program.  They also used their model to improve the estimates for the duration of the asymptomatic period by reanalyzing previously published data on persistence of Ct in asymptomatically infected women.  This model did not divide the population into separate risk groups and assumed that all men and women had the same number of partners.

 Clarke et al.   \cite{clarke2012exploring} investigated how control plans can affect observable quantities and demonstrated that partner positivity (the probability that the partner of an infected person is infected) is insensitive to changes in screening coverage or partner notification efficiency.
They also evaluated the  cost-effectiveness of increasing  partner notification versus screening and concluded that partner notification along with screening is the most cost-effective mitigation approach.

Kretzschmar et al.  and Turner et al. \cite{kretzschmar1996modeling, kretzschmar2001comparative} 
evaluated different screening  and  partner referral methodologies 
in controlling Ct.   They compared the 
RIVM model  to evaluate the effectiveness of opportunistic  Ct  screening program in the Netherlands \cite{kretzschmar2001comparative}; 
the ClaSS model to evaluate proactive, register-based  Ct  screening using home sampling in the UK \cite{low2007epidemiological}; and 
the HPA model to evaluate opportunistic national  Ct  screening program in UK \cite{turner2006modelling}.

A selective sexual mixing STI model was developed by Hyman et al. \cite{hyman1997disease} to capture the 
heterogenous mixing among people with different number of partners. 
 This model is well-described by
Del Valle et al. \cite{del2013mathematical} to investigate the impact of different mixing assumptions on spread of infectious diseases and  how sensitivity analysis can be used to prioritize different possible mitigation efforts.

 
 Our goal in this work is  to create, to analyze, and to extend mathematical models for understanding and predicting the spread of Ct. These models can help guide public health workers
improve the effectiveness of intervention strategies for mitigating the impact of this infection. 
Our  main focus will be to help optimize the interventions in an ongoing program for
reducing the prevalence of Ct in the New Orleans adolescent and young adult AAs.
We will design several compartment and agent-based network models within different Chapters:\\
Chapter \ref{multi} is about a multi-risk compartment model for Ct. In Chapter \ref{continuous} we will extend this model to a continuous-risk compartment model. And finally in Chapter \ref{abm} we will provide a next generation of  agent-based network models for the spread of Ct in New Orleans, and will test and will compare different mitigation method to control its epidemic. 

In all proposed models the parameters will be  estimated within a reasonable level of accuracy in order for results to give  qualitative and quantitative understanding of how Ct is spreading \cite{hyman1997disease}.  We  will use local sensitivity analysis to identify the relative importance of the model parameters and numerical examples to illustrate how we can prioritize mitigation strategies based on their predicted effectiveness. 
\chapter{Multi-risk  Compartment Model}\label{multi}

In this chapter we develop and analyze our first and simplest model  that is a multi-risk compartmental   model that can be used to help understand the spread of Ct and to quantify the relative effectiveness of different mitigation efforts. 
Our model is closely related to the deterministic population-based model developed by Clarke et al.   \cite{clarke2012exploring} to explore the short-term impacts of increasing screening and  partner notification. Also, it  is related to the STI models for the spread of the HIV/AIDS virus in a heterosexual network \cite{hyman1988using, hyman1989effect}. 

The number of partners a person has (his/her  risk), and the number of partners that their partners have (his/her partner's risk) both affect the spread of Ct.  That is, different  assumptions about the distribution of risk behavior of the population will result in different disease forecasts.   
We use  the selective sexual mixing STI model developed by Hyman et al. \cite{hyman1997disease} to capture the 
heterogenous mixing among people with different numbers of partners.

Although age, ethnicity, economic statues,  and the spatial location of the individuals all influence the assortative mixing of sexual acts,  the risk of contracting Ct is primarily a function of the number of partners a person has, the number of acts per partner, the probability that a partner is infected, and the use of prophylactics (e.g. condoms).

In our ordinary differential equation model (ODE), we consider defining the risk categories based on  the number of partners a person has.
The relative importance of the number of partners and the number of acts per partner on the spread of an STI depends on the disease infectiousness.  Ct is a very infectious disease and the probability of transmission per sexual act  from an infected person to uninfected one is high;  one act with an infected person is enough to catch the infection. Therefore, the number of people a person infects  depends mostly  upon the number of partners he/she has.

In this chapter we first formulate the mathematical model, then  we derive the basic reproduction number, $\mathcal{R}_0$, for two main risk groups (high-risk and low-risk) for men and women.   
We then use sensitivity analysis of $\mathcal{R}_0$  and the equilibrium points with respect to the model parameters 
to study how the heterogeneous mixing affects spread of Ct \cite{azizi2016multi}.

\section{Ct Transmission Model Overview}\label{Model}
In modeling the spread of Ct,  the population is divided into the susceptible sexually active population (S), the exposed infected, but not infectious population (E) and the infectious population (I). Once a person has recovered from Ct infection, they are again susceptible to infection.  Therefore, the models all have a S$\rightarrow$E$\rightarrow$I$\rightarrow$S (SEIS) structure, or a SIS structure if the exposed state is combined with the infectious state.

Because the exposed (infected, but not infectious) time period is short compared to time in the infectious stage, we 
do not include a exposed stage in our model.  
We divide men and women into $n$ risk groups based on the number of partners an individual has in a year.  
This SIS model can be written as the system of $2n$ ordinary differential equations: 
\begin{equation}\label{multi_model}
\begin{split}
\frac{d S_k}{d t}&=\mu (N_k-S_k)-\lambda_kS_k+\gamma_k I_k,\\
\frac{d I_k}{d t}&=\lambda_kS_k-\gamma_k I_k-\mu I_k,\\
\end{split}
\end{equation}
where $k=1, \cdots, n$ denotes men  with risk from $1$ to $n$, and $k=n+1, \cdots, 2n$ denotes women  with risk from $1$ to $n$.  
The migration rate, $\mu$, determines the rate at which people enter and leave the population,  $N_k = S_k+I_k$ is the total population  of group $k$, 
$\lambda_k$ is  the rate at which a susceptible person in risk group $k$ is being infected, and $\gamma_k$ is the rate that a person recovers  either through screening, or natural recovery.  

We model a population of $15-25$ year-old individuals and assume that 
the primary mechanism for migration is by aging into, and out of, the population, where migration rate $\mu = 0.003 = [(25-15)~ \mbox{years}]^{-1}$, with the assumption  that death is negligible compared to the rate that 
people enter and leave the modeled population. We assume that, in the absence of infection, 
equilibrium population $N^o_k$ for each risk group of men and women is given, and that 
everyone aging into the model population enters as a susceptible person.

The rate  the infected population is treated, $\gamma_k$, depends upon the sex of the person and  their risk level.  The treatment can be initiated when infection is identified through screening, partner notification, or a medical check-up.
Most infected people  are asymptomatic and, when a significant fraction of a population is infected, then screening has been found to be a cost-effective approach to identify, and treat,  infected people.

Natural recovery rate, $\gamma^n_k$, is determined by assuming an exponential distribution for the average time to recovery $1/\gamma^n_k$, and screening recovery rate, $\gamma^s_k$, is determined by assuming a log normal distribution for the average time to recovery through screening $1/\gamma^s_k$.
We also define the probability that an infected individual is screened and treated each day, $\sigma_d^k$, in terms of the fraction of the population that will be screened at least once within a year as $\sigma_y^k$.  That is, 
\begin{equation}
\sigma_d^k=1 - (1-\sigma_y^k)^{1/365}.
\end{equation}

\subsection{Transmission rate}

We will derive the disease transmission rate for the heterosexual case where 
a susceptible person in group $k$ can be infected by someone of the opposite sex in any of the infected groups $j$.

The force of infection, $\lambda_k$,  is the rate that people in risk group $k$ are infected through sexual acts.   
We define $\lambda_k$ as the sum of the rate of disease transmission from
each infected group, $I_{j}$,  to the susceptible group, $S_k$:
\begin{equation} \label{infrate}
\lambda_k = \sum_{j=1}^{n}  \lambda_{kj}.
\end{equation}
The rate of disease transmission from the
infected people $I_{j}$ in group $j$ to the susceptible individuals $S_{k}$ in group $k$, $\lambda_{kj}$, 
 is defined as the product of three factors:
 \begin{align*}
\lambda_{kj} &= \left( \begin{array}{c}\mbox{Number of partners} \\
                                       \mbox{a susceptible in group $k$} \\
                                       \mbox{has with someone in} \\
                                        \mbox{group $j$ per unit time}
                       \end{array}
                \right)
                \left( \begin{array}{c}\mbox{ Probability of}\\
                                       \mbox{disease transmission} \\
                                       \mbox{ per partner}
                       \end{array}
                \right)
                \left( \begin{array}{c}\mbox{Probability that} \\
                                       \mbox{partner in  group $j$ } \\
                                       \mbox{ is infectious}
                       \end{array}
                \right) \\ 
                &= \hspace{2cm}p_{kj}\hspace{5cm}\beta_{k}\hspace{4cm}P_I(t,j)~,
\end{align*}

where
\begin{itemize}
\item $p_{kj}$ is the number of sexual partners per unit time that each individual in group $k$ has with
someone in group $j$, and 
\item $\beta_{k}$ is the probability of disease transmission per partner for a susceptible person  in group $k$, and 
\item $P_I(t,j)$ is the probability of that the person in group $j$ is infected. 
\end{itemize}

For this last factor, we assume that the partners in group $j$ are all equally likely to be infected.  That is, 
the probability  the person in group $j$ is infected  is the same as the fraction of the people in 
group $j$ that are infected, $\frac{I_{j}}{N_j}$.

\subsection{Partnership formation} 
The extent that Ct  spreads through a population is sensitive to 
the  heterogenous mixing (partnership selection) among the different risk groups. 
The model approximates the  mixing  through  mixing probabilities $p_{kj}$  
that define how many partners a typical person in group $k$ has with someone in group $j$. 
These mixing functions must  dynamically change to 
account for variations in the size of the groups 
\cite{hyman1997behavior, del2013mathematical, busenberg1991general}.

The force of infection, $\lambda_k$,  depends on how many partners 
people in group $k$ have, the number 
of acts they have per partner, and the probability that their partners are infected.  
The mixing is biased since people who only have a few sexual partners (low-risk) 
typically have partners who are also at low risk. 

We define the model parameters so that someone in group $k$ has, on average, 
 $p_{kj}$ partners who are in group $j$ per day.  
 Therefore, the total number of partnerships per day  between people in group 
 $k$ and group $j$ is  $p_{kj}N_k = p_{jk}N_j$.
Since each partnership 
may have more than one act, we define $a_{k}$ as the average number of sexual acts per partner for people in group $k$.  

To determine $p_{kj}$, we use a heterogeneous mixing algorithm developed 
in   \cite{hyman1997behavior}.  
This approach starts by defining 
$\bar p_k$ as the \emph{desired} number of partnerships someone in group 
$k$ wishes to have per unit time.   Because there may not be sufficient  
available partners for everyone to have their desired number of partners, 
the actual number of partners could be different. 

We define the \emph{proportional} partnership (mixing) as the desired fraction $\rho_{kj}$ 
of these partnerships that a person in group $k$ wants to have 
with someone in group $j$.  That is, a person in group $k$ wants to have 
 an average of $\rho_{kj} \bar p_k$ partnerships per unit time with someone in group $j$.
Unfortunately, there is no guarantee that the total number of desired partnerships 
that people in group $k$ want to have with people in group $j$ will be the same 
as the total number of desired partnerships 
that people in group $j$ want to have with people in group $k$.  That is, in general 
$\rho_{kj} \bar p_k N_k \ne  \rho_{jk} \bar p_j N_j$, and this must be reconciled. 

Since not everyone can have their desired number of partners distributed 
exactly as they wish, the different heterogenous mixing algorithms represent different  
compromises to resolve  these conflicts. 
All of the heterogenous mixing algorithms maintain the  detailed balance for 
mixing where the total number of partnerships for people in group $k$ with 
people in group $j$ is the same as the total number of partnerships that 
people in group $j$ have with people in group $k$.  
In our model, we use the heterogenous mixing algorithm based on the algorithm 
described in  \cite{hyman1997behavior,  del2013mathematical} 
to determine $p_{kj}$.

The population in group $k$ desires $\rho_{kj} \bar p_k N_k$ partners from group $j$, 
and the population in group $j$ desires $\rho_{jk} \bar p_j N_j$ partners from group $k$.  
As a compromise, we  set the total number of partners the people in group 
$k$  have with people in group $j$, and vice versa, to be the harmonic mean 
\begin{equation}
p_{kj} N_k= p_{jk}N_j= \frac{2(\rho_{kj} \bar p_k N_k) (\rho_{jk} \bar p_j N_j)}{(\rho_{kj} \bar p_kN_k)+ (\rho_{jk} \bar p_j N_j)}.
\end{equation} 
Other possibilities include the geometric mean or minimum of 
$(\rho_{kj} \bar p_kN_k)$ and  $(\rho_{jk} \bar p_j N_j)$.
All of these averages satisfy the balance condition 
to have the property 
that if $\rho_{jk} =0$ then $p_{kj} =p_{jk}=0$, where if one group refuses to have a 
partnership with another group, then this partnership does not happen. 
In our model, we use the  harmonic mean and define 
\begin{equation}
p_{kj} = \frac{1 }{ N_k} \frac{2(\rho_{kj} \bar p_kN_j) (\rho_{jk} \bar p_j N_k)}{(\rho_{kj} \bar p_kN_k)+ (\rho_{jk} \bar p_j N_j)}~.\end{equation} 
Hence, $p_k= \sum_{j}p_{kj}$ is the actual average number of partners someone in group $k$ has per day. 

Note that this approach is only appropriate if the desired number of partners between any two groups is in close agreement, 
that is,  $\rho_{kj} \bar p_k N_k \approx \rho_{jk} \bar p_j N_j$.  
This is because, the approach assumes that if the partners are not available from the desired group, then the individuals 
will not change their preferences to seek partners in other risk groups.  The model can be extended to 
handle these situations where the people adjust their desires to be in closer alignment with the availability of partners 
through a simple iterative algorithm.  However, 
we avoid this complication in our simulations and initialize the populations so the groups desires are close to the 
availability of partnerships. 

\subsection{Probability of transmission per partner}
The probability of a susceptible person catches infection from their infected partner depends upon the number of sexual acts between the people. 
We allow the number of acts per partner for a person in group $k$, $a_k$, to depend upon the number of his/her actual partners  and  his/her total number of acts per unit time,
$A_k=a_k p_k$, where $A_k$ is total number of acts per unit time.
  
 The probability of transmission per act, $ \beta$, can be used to define  the probability that a susceptible person will not be infected by a single act with an infected person, $1- \beta$.  
Therefore, the probability of someone in group $k$ not being infected after $a_{k}$ acts with an 
infected person  is  $(1-\beta)^{a_{k}}$. 
Hence, the  probability of being infected per partner is   \cite{hyman2003modeling}
\begin{equation} \label{betaij}
\beta_{k} = 1-(1-\beta)^{a_{k}}.
\end{equation}

\begin{figure}[htp] \label{ fig:cir}
\begin{center}
  \includegraphics[width=4in]{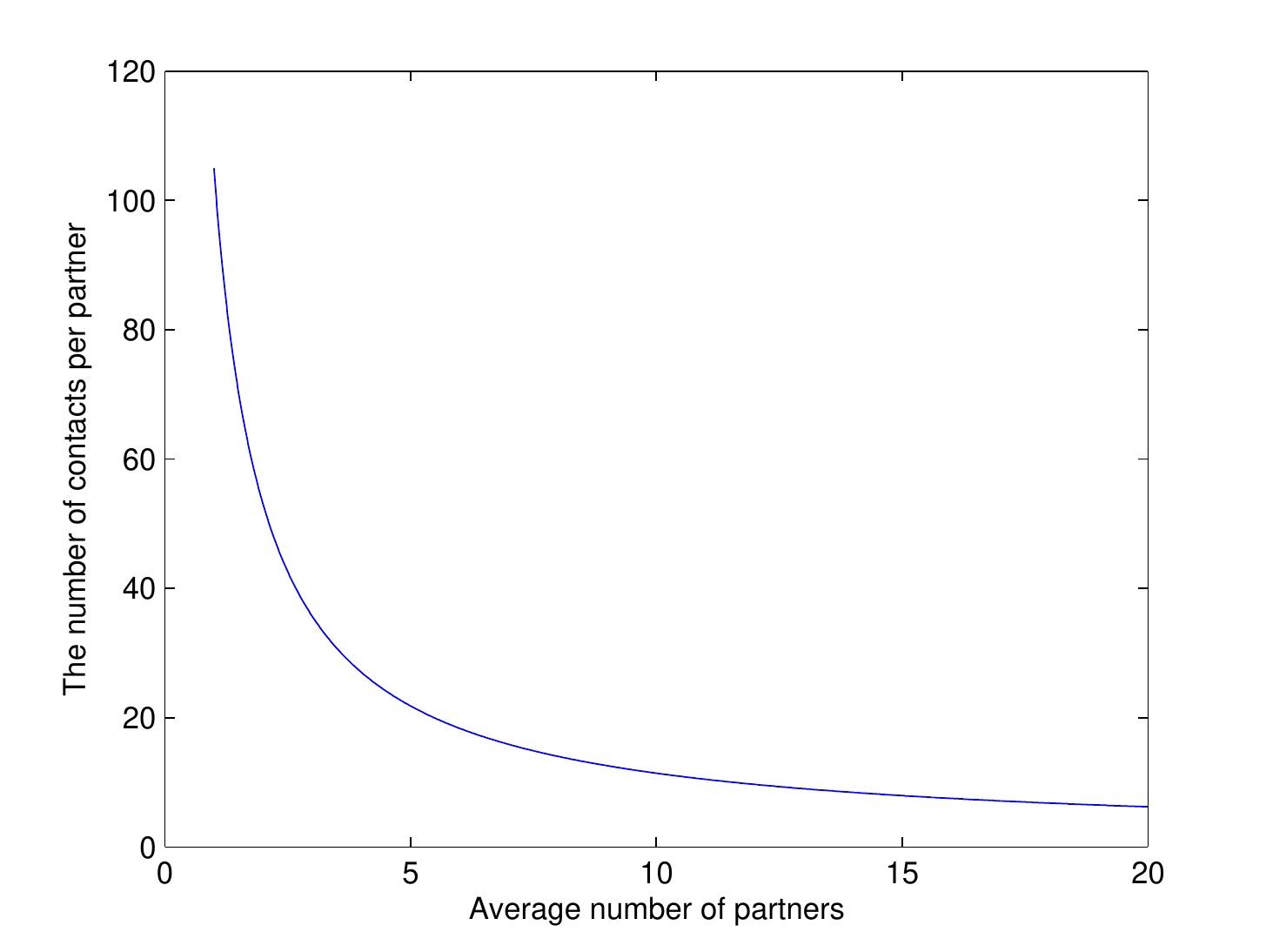}\\
  \caption[\textbf{The number of acts per partnership per unit time vs number of partners}]{The number of acts per partnership per unit time (year), $a_k$,  is a decreasing function of $p_{k}$, that 
  is, people with more partners have fewer acts per partner per unit time than people with fewer partners. If $A_k$ is total number of acts
  for a person in group $k$, $a_k$ is defined as $a_k=\frac{A_k}{p_k}=\frac{A_k}{\sum_j p_{kj}}.$}
  \end{center}
\end{figure}

\begin{table}[htbp]
\centering
\resizebox{\columnwidth}{!}{
\begin{tabular}{ lp{9cm}lll }
\toprule[1.5pt]
 \normal{\head{Parameter}} & \normal{\head{Description}}
  & \normal{\head{Unit}} & \head{Baseline}\\
  \cmidrule(lr){1-3}\cmidrule(l){3-4}
  N& Total population. & people & $20000$\\
 $N_1$ & High-risk men population. & people &  $2400$\\ 
   $N_2$ & Low-risk men population. & people &  $4600$\\
 $N_3$ & High-risk women population. & people &  $2500$\\
 $N_4$ & Low-risk women population. & people &  $10,500$\\
$\mu$ & Migration rate. & 1/days& $0.00$ \\
  $\beta$ & Probability of transmission per act. & 1/act & $0.11$\\
        $1/\gamma_i^n~ (i=1,2,3,4)$& Average time to recover without treatment.   & days&  $365$   \\
        $1/\gamma_i^s~ (i=1,2,3,4)$& Average time to recover with treatment.   & days&  $7$   \\
        $A_1$& Total number of acts per time for a high-risk man.  & 1/days&  $0.14$   \\
        $A_2$& Total number of acts per time for a low-risk man.   & 1/days&  $0.07$   \\
        $A_3$& Total number of acts per time for a high-risk woman.   & 1/days&  $0.07$   \\
        $A_4$& Total number of acts per time for a low-risk woman.   & 1/days&  $0.04$   \\
        $\bar p_{1}$ & Desired number of partners per time for a high-risk man.& people/days&  $0.14$ \\
         $\bar p_{2}$ & Desired number of partners per time for a low-risk man.& people/days&  $0.03$   \\
          $\bar p_{3}$ & Desired number of partners per time for a high-risk woman.& people/days&  $0.06$\\
           $\bar p_{4}$ & Desired number of partners per time for a low-risk woman. & people/days&  $0.02$ \\
            $\bar \rho_{13}$ & Desired fraction of high-risk partner for a high-risk man.& --&  $0.75$ \\
            $\bar \rho_{24}$ & Desired fraction of low-risk partner for a low-risk man.& --&  $0.80$ \\
            $\bar \rho_{31}$ & Desired fraction of high-risk partner for a high-risk woman.& --&  $0.75$ \\
            $\bar \rho_{42}$ & Desired fraction of low-risk partner for a low-risk woman.& --&  $0.80$ \\
            $\sigma_y^i~ (i=1,2,3,4)$ & Fraction of people in group $i$ randomly screened per year. & -- & $0.00$ \\
\bottomrule[1.5pt]
\end{tabular}}
\caption[\textbf{Variables and parameters of the  model}]{Table for the variables and parameters in the  model (\ref{multi_model}). 
All the parameter values are based on our assumption and are not estimated or taken from literature.}
\label{tab:params_table}
\end{table}

\section{Basic Reproduction Number}

The basic  reproduction number, $\mathcal{R}_0$, is the number of new infections introduced if a newly infected person is introduced into a population at the ($S_k=N_k$, $I_k=0$). 
We will derive the basic reproduction number $\mathcal{R}_0$  using the next generation approach  \cite{van2002reproduction, heffernan2005perspectives} 
for situation  with two risk levels for men and women labeled: $1=$ high-risk men, $2=$ low-risk men, $3=$ high-risk women and $4=$ low-risk women.
We have  differential equations 
\begin{equation} 
\frac{d I_k}{d t}=\sum_j \alpha_{kj} I_j  -   \tau_k^{-1} I_k,
\label{model2I}
\end{equation}
where $\tau_k=1/(\gamma_k+\mu)$ is the average time that an infected person stays in the $k$-th infection compartment.
The force from infection, $\alpha_{kj}=\beta_k q_{kj} {S_k}$, is the rate (per day) that a typical infected person in group $j$ infects a susceptible one in group $k$.  Here the factor $q_{kj}$ is defined as $q_{kj}=p_{kj}/{N_j}$ and because we had $p_{kj}N_k=p_{jk}N_j$, therefore, $p_{kj}/{N_j}=p_{jk}/{N_k}$, and that means $q_{kj}=q_{jk}$  is 
 the fraction of people in group $j$ someone in group $k$ has as a partner. 
The  Equation (\ref{model2I}) for the infected populations,  $I=(I_1, I_2, I_3, I_4)^T$,  can 
be written as a matrix equation for the rate of production of new infections, $F$, 
minus the removal rate of individuals from that population class, $V$,  
 \begin{equation}\label{nextgen}
\frac{dI}{dt} =  {\bf F} I-  {\bf V} I,
\end{equation} 
where  the $kl$-th element of the matrix \textbf{F} is ${\bf F}_{kl}=\alpha_{kl}$, and $V$ is diagonal matrix  $V_{kk}=\tau_k^{-1}$, for $k,l=1,...,4$.  
At DFE, $\alpha_{kj}=\beta_k q_{kj} {N_k} $ and the  Jacobian matrices, ${\bf J}_F $ and ${\bf J}_V^{-1}$, of $FI$ and $VI$ are 
\begin{equation*}
J_F= \left( \begin{array}{cccc}
0 & 0 & \alpha_{13}& \alpha_{14}\\
0 & 0 & \alpha_{23} & \alpha_{23} \\
\alpha_{31} & \alpha_{32} & 0  & 0\\
\alpha_{41} & \alpha_{42} & 0 & 0
 \end{array} \right), ~~~
J_V^{-1}= \left( \begin{array}{cccc}
\tau_1 & 0 & 0 & 0\\
0 & \tau_2 & 0 & 0\\
0 & 0 & \tau_3 & 0 \\
0 & 0 &  0 &\tau_4 
\end{array} \right).
\end{equation*}
We define $\mathcal{R}_0$ as spectral radius of $J_F J_V^{-1}$ or (equivalently) $J_V^{-1}J_F$,
         \begin{equation}
      J_F J_V^{-1}=\left( 
         \begin{matrix}
           0 & 0 & \alpha_{13} \tau_3& \alpha_{14} \tau_4  \\
           0 & 0 & \alpha_{23} \tau_3& \alpha_{24}  \tau_4\\
           \alpha_{31} \tau_1 &   \alpha_{32}  \tau_2 & 0 & 0 \\
            \alpha_{41}  \tau_1 &  \alpha_{42} \tau_2 & 0& 0\\
         \end{matrix}
         \right),~~~
        J_V^{-1}J_F=\left( 
         \begin{matrix}
           0 & 0 & \alpha_{13} \tau_1& \alpha_{14} \tau_1  \\
           0 & 0 & \alpha_{23} \tau_2& \alpha_{24} \tau_2\\
           \alpha_{31} \tau_3 & \alpha_{32} \tau_3 & 0 & 0 \\
           \alpha_{41} \tau_4 & \alpha_{42} \tau_4 & 0& 0\\
         \end{matrix}
         \right).
         \end{equation}
Note that $jk$-th element of $J_V^{-1}J_F $ is  $\mathcal{R}_0^{j \rightarrow k}=\alpha_{jk} \tau_j$.  
Based on a result of  Sylvester's inertia theorem \cite{sylvester1852xix}\footnote{Let $K$ be a hermittian matrix. We define $e^+(K)$ as the number of positive eigenvalues, $e^-(K)$ as the number of negative eigenvalues, and $e^0(K)$ as the number of zero eigenvalues. Inertia of $K$ is a tuple $(e^+(K), e^-(K)), e^0(K)$. If $A$ is an invertible matrix then Sylvester inertia theorem states:
	$inertia(K)=inertia(A^{-1}KA).$ },  if a matrix $K$ can be factored into the product of a diagonal  positive definite matrix $A$ and a symmetric matrix $B$, then eigenvalues of  $K$ are the same as eigenvalues of  $A^{\frac{1}{2}}  B  A^{\frac{1}{2}}$.
To apply this result, we  rewrite $J_V^{-1}J_F=AB$ where $A$ is a diagonal  positive definite matrix and $B$ is symmetric,
        \begin{equation}
        A=\left( 
                \begin{matrix}
                  \beta_1N_1 \tau_1  & 0 & 0& 0  \\
                   0 & \beta_{2} N_2 \tau_2  & 0& 0 \\
                 0 & 0 & \beta_{3} N_3 \tau_3 & 0 \\
                  0 & 0 & 0& \beta_{4}  N_4 \tau_4\\
                \end{matrix} 
                \right),~~~
                B= \left( 
                         \begin{matrix}
                           0 & 0 &q_{13}&q_{14}  \\
                           0 & 0 & q_{23}& q_{24} \\
                           q_{31} & q_{32} & 0 & 0 \\
                           q_{41} & q_{42} & 0& 0\\
                         \end{matrix}
                         \right).
                            \end{equation}
   Therefore, eigenvalues of  $J_V^{-1}J_F$ are the same as eigenvalues of symmetric  block anti-diagonal generation matrix  
  \small
 
   \begin{align} 
   & A^{\frac{1}{2}}  B  A^{\frac{1}{2}} = \left( \begin{matrix}
                                  0 & 0 & \sqrt{\alpha_{13} \tau_1  \alpha_{31}  \tau_3 }&
                                  \sqrt{\alpha_{14} \tau_1 \alpha_{41} \tau_4} \\
                                         0 & 0 & \sqrt{\alpha_{23} \tau_2 \alpha_{32} \tau_3 }&
                                         \sqrt{\alpha_{24} \tau_2 \alpha_{42} \tau_4} \\
                                  \sqrt{\alpha_{31} \tau_3 \alpha_{13} \tau_1} & 
                                  \sqrt{\alpha_{32} \tau_3 \alpha_{23} \tau_2 } & 0 & 0 \\
                                  \sqrt{\alpha_{41} \tau_4 \alpha_{14} \tau_1 } & 
                                  \ \sqrt{\alpha_{42} \tau_4 \alpha_{42} \tau_2} & 0& 0\\
                                \end{matrix} \right)   
                                = \left( 
  	\begin{matrix}
  	0_{4\times 4} & M   \\
  	M^T & 0_{4\times 4} 
  	\end{matrix}
  	\right),
  \label{matrix1}\end{align}    
  where $M^T$ is transpose of $M$ and
	\begin{equation}
              M =
               \left( 
                 \begin{matrix}
                                   \sqrt{\alpha_{13} \tau_1  \alpha_{31}  \tau_3 }&
                                  \sqrt{\alpha_{14} \tau_1 \alpha_{41} \tau_4} \\
                                  \sqrt{\alpha_{23} \tau_2 \alpha_{32} \tau_3 }&
                                   \sqrt{\alpha_{24} \tau_2 \alpha_{42} \tau_4} \\
                 \end{matrix}
                 \right)~~~ =~~~\left( 
                 \begin{matrix}
                                  r_{13}&
                                   r_{14} \\
                                  r_{23}&
                                    r_{24} \\
                 \end{matrix}
                 \right),
              \end{equation}
where  $r_{jk} =  \sqrt{\mathbb{R}_0^{j \rightarrow {k}}\mathbb{R}_0^{{k}\rightarrow j}}$ is the geometric average of group j-to-group $k$ and group $k$-to-group j reproduction numbers.
The basic reproduction number  is spectral radius of $J_V^{-1}J_F$, therefore, 
$\mathcal{R}_0=\rho({A^{\frac{1}{2}}  B  A^{\frac{1}{2}}})=\sqrt{\rho(M^TM)}$ where 
 $$
 M^TM=
 \left( \begin{matrix}                               
r_{13}^2 + r_{23}^2& r_{13}r_{14} + r_{23}r_{24}\\
 r_{13}r_{14} + r_{23}r_{24}&     r_{14}^2 + r_{24}^2
\end{matrix}   \right),
 $$
 and 
   \begin{equation} \label{R0}
   \mathcal{R}_0=\frac{1}{2}((r_{13}^2+r_{23}^2+r_{14}^2+r_{24}^2)
   +\sqrt{(r_{13}^2+r_{23}^2-r_{14}^2-r_{24}^2)^2+4(r_{13}r_{14}+r_{23}r_{24})^2 }).
   \end{equation}


\section{Sensitivity Analysis}
We  use sensitivity analysis  to  quantify the change in model output quantities of interest (QOI), such as the basic reproduction number $\mathcal{R}_0$ and endemic equilibrium point,  due to variations in  the model input  parameters of interest (POI), such as the average time to recovery after infection \cite{arriola2007being, arriola2009sensitivity, manore2014comparing}.

Consider the situation where the baseline value of the input POI is $p_{b}$ and generates the baseline output QOI $q_b=q(p_b)$. 
Sensitivity analysis is used to address what happens if $p_b$ is changed by the fraction $\theta_p$, $p_{new}=p_b(1+\theta_p)$. Our goal is to find resulting fractional change in the output variable $q_{new}=q_b(1+\theta^q_p)$.
That is, the normalized  sensitivity index measures the relative change in the input variable  $p$, with respect to the output variable $q$
and  can be estimated by the Taylor series
\begin{equation}
q_{new}=q(p_b+\theta_p p_b) \approx q_b+\theta_p p_b \frac{\partial q}{\partial p}\biggr\vert_{p=p_b}
=q_b(1+\theta^q_p)~.
\end{equation}
We define the normalized  sensitivity index as
\begin{equation}\label{E:defsenind}
\mathbb{S}_p^q := \frac{p_b}{q_b} \frac{\partial q}{\partial p}\biggr\vert_{p=p_b} = \frac{\theta^q_p}{\theta_p}.
\end{equation}
That is, if the input $p$ is changed by $\theta_p$ percent, then the output $q$ will change by $\theta^q_p=\mathbb{S}_p^q \theta_p$ percent. 
The sign of $\mathbb{S}_p^q$  determines the direction of changes, increasing (for positive $\mathbb{S}_p^q$) and decreasing (for negative $\mathbb{S}_p^q$). Note that this local  sensitivity index is valid only in a small neighborhood of the baseline values. 

\subsection{Sensitivity indices of $\mathcal{R}_0$} 

The ability of Ct to become established in a population and its early growth rate is characterized by $\mathcal{R}_0$, Equation (\ref{R0}).
Sensitivity analysis of $\mathcal{R}_0$ can quantify the relative importance of the  different social 
and epidemiological parameters in reducing the ability of the STI to become established in a new population.  

Table (\ref{table:sens})  of the sensitivity indices of $\mathcal{R}_0$ shows that it is most sensitive to the 
probability of transmission per act $\beta$ with $\mathbb{S}^{\mathcal{R}_0}_{\beta} = 1.95.$
That is, if the probability of infection per act decreases- say by increasing the condom-use - by $15\%$ 
then $\theta_{ \beta} =-0.15$, then $\mathcal{R}_0$ will decrease by $30\%$ from $4.01$ to $2.84$:
$$
{\mathcal{R}_0}_{new}= \mathcal{R}_0 ( 1 + \theta_{\beta} \mathbb{S}^{\mathcal{R}_0}_{{\beta}} ) =4.01( 1 - 0.15 \times 1.95) = 2.84.
$$
That is, sensitivity analysis can quantify the amount of behavior change that would be needed to keep an epidemic from becoming established in a new population. 

A negative sensitivity index indicates that $\mathcal{R}_0$ is a decreasing function of correspondent parameter, while the positive ones show $\mathcal{R}_0$ increases when the parameter increases.  
The second most important model parameters for the early growth rate are the recovery rates of the high-risk men and women $\gamma_1$ and $\gamma_3$.   
Since, $\mathbb{S}^{\mathcal{R}_0}_{\gamma_1}= \mathbb{S}^{\mathcal{R}_0}_{\gamma_3}=-0.95$, a $10\%$ increase in the screening rate 
would result in a $9.5\%$ decrease in $\mathcal{R}_0$, which this supports the need to actively screen both men and women for Ct infection. 

The number of acts for high-risk men, $A_1$, is also an important parameter for controlling the early growth of the Ct. Because local-sensitivity analysis is valid in a small neighborhood of the baseline case, sometimes it is useful to plot the change in the QOI over a wide range of possible values.  The sensitivity index is then the slope of the response curve at the baseline  values. 
The Figure  (\ref{fig:sensitive}) shows how $\mathcal{R}_0$ changes as  parameters $\beta$ and $A_1$ are varied over a broad range.  

\begin{table}[H]
  \centering
\begin{tabular}{@{}c d{4} d{4} d{4} d{4} d{4} d{4} @{}}
    \toprule
   
    & \multicolumn{6}{c}{\textbf{Sensitivity index of $\mathcal{R}_0$ for all parameters in the model}}
    \\
    \cmidrule(rl){2-5} \cmidrule(rl){5-7}

    & \mlc{Parameter p} & \mlc{Baseline } & \mlc{$\mathbb{S}^{\mathcal{R}_0}_p$}
    & \mlc{Parameter p} & \mlc{Baseline } & \mlc{$\mathbb{S}^{\mathcal{R}_0}_p$}\\
    \midrule
     & {\beta}  & 0.11   & 1.95   & \bar{p}_1   & 0.14    & 0.05 \\
     & A_1  & 0.14  & 0.76   &  \bar{p}_2 & 0.03  & -0.06\\
     & A_2 & 0.07 & 0.15 &  \bar{p}_3 & 0.06& 0.24\\
     & A_3   & 0.07   & 0.71   & \bar{p}_4  &  0.02  & -0.07\\
     & A_4 & 0.04& 0.22   & \gamma_1& 0.003 & -0.95\\
     & \bar{\rho}_{13} & 0.75 & 0.35 &\gamma_2& 0.003  & -0.48\\
      & \bar{\rho}_{31} & 0.75 & 0.58 & \gamma_3 & 0.003  & -0.95\\
       & \bar{\rho}_{24}& 0.8 & 0.04 & \gamma_4 & 0.003  & -0.48\\
        & \bar{\rho}_{42} & 0.8 & 0.12 & \mu & 0.00  & -0.09\\
   \bottomrule

  \end{tabular}
  \caption[\textbf{Sensitivity index of} $\pmb{\mathcal{R}_0}$ \textbf{with respect to parameters of the model}]{The sensitivity index of $\mathcal{R}_0$ with respect to parameters of the model at the baseline parameter values where $\mathcal{R}_0=4.01$. The most sensitive parameter is the  probability of transmission per act, $\beta$, followed by the recovery (screening) rates of the high-risk men and women $\gamma_1$ and $\gamma_3$.
}
\label{table:sens}
\end{table}

\begin{figure}[htp] 
\begin{center}
  \includegraphics[width=90mm]{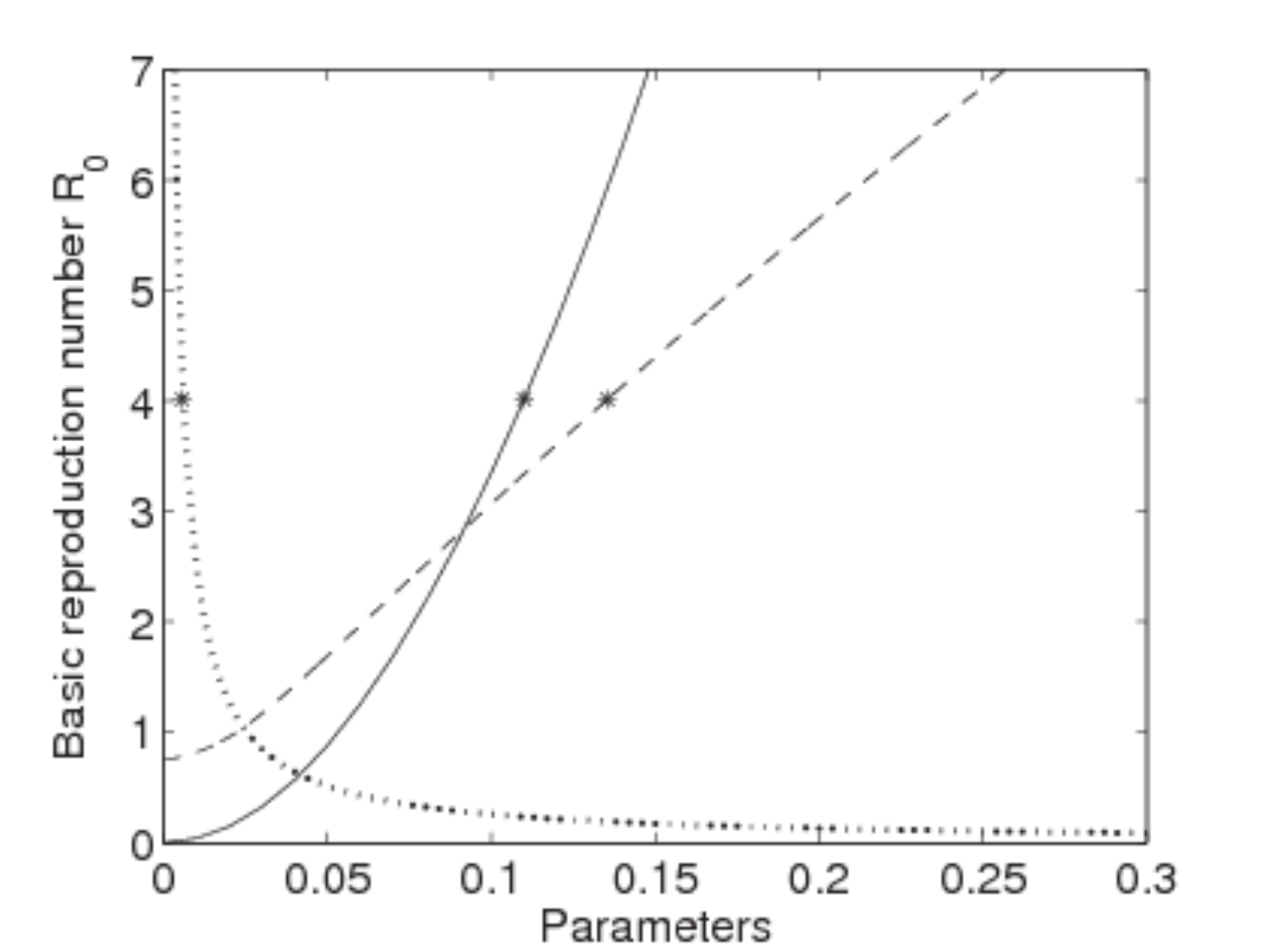}\\
 \caption[\textbf{Sensitivity of basic reproduction number}]{ The sensitivity of $\mathcal{R}_0$ with respect to $\beta$ (solid line), and $A_1$ (dashed line). 
The sensitivity index is then the slope of the response curve at the baseline  values, indicated by $*$.   
The response is approximately linear near the baseline case and, therefore, the local sensitivity analysis is actually valid over a  broad range of parameters.}
  
  \label{fig:sensitive}
  \end{center}
 
\end{figure}
  
\subsection{Sensitivity indices of  endemic equilibriums}

The current Ct epidemic is  established in many cities, therefore,  to evaluate the relative impact of the model 
parameters in bringing it under control requires that the sensitivity analysis be preformed 
about the current state of the system, the steady-state endemic equilibrium. 
We will investigate the impact the mitigation efforts on the relative change in the number of infected people 
as a function of the relative change in the model parameters.  This is best done in terms of 
the nondimensional variables defined by dividing each variable by the 
steady-state zero-infection equilibrium total population for that sex.  
That is, $i_m=I_m/N^o_m$, $i_w=I_w/N^o_w$,
$n_1=N_1/N^o_m$,  $n_2=N_2/N^o_m$,  $n_3=N_3/N^o_w$,  and $n_4=N_4/N^o_w$, 
where,  $N^o_m=N^o_1+N^o_2$, $N^o_w=N^o_3+N^o_4.$ The Table (\ref{table:sie}) shows that the  sensitivity indices for endemic (steady-state) equilibrium infected populations, $i_j$, 
 as a function of the model parameters.  Note that the magnitudes (relative importance) of sensitivity indices have 
the same order as they did for $\mathcal{R}_0$, although the magnitudes are different. 
\begin{table}[H]
  \centering
\begin{tabular}{@{}c d{4} d{4} d{4} d{4} d{4} d{4} @{}}
    \toprule
   
    & \multicolumn{6}{c}{\textbf{Sensitivity of equilibriums for all parameters in the model}}
    \\
    \cmidrule(rl){2-5} \cmidrule(rl){5-7}

    & \mlc{Parameter p} & \mlc{Baseline } &i_1& i_2 &i_3 & i_4\\
    \midrule
     & \beta  & 0.11   & 2.61  &4.05&2.7& 4.02\\
     & A_1  & 0.14  & 1.7   &0.58&0.71&0.66\\
     & A_2 & 0.07 & 0.31 & 1.5 &0.33&1\\
     & A_3   & 0.07   & 0.65   & 0.56&1.13&0.53\\
     & A_4 & 0.04 & 0.42   &1.18&0.37&1.61\\
     & \bar{\rho}_{13} & 0.75&0.36 & -0.21&  0.24  & -0.36 \\
      & \bar{\rho}_{31} & 0.75 &0.41 & -0.15 & 0.56 & 0.03 \\
       & \bar{\rho}_{24}& 0.8 & 0.02& -0.13 & 0.08 & -0.10\\
        & \bar{\rho}_{42} & 0.8 &  0.15&-0.21 & 0.08 & -0.33\\
        &  \bar{p}_1   & 0.14    &0.04 & 0.17 &  0.08 &  0.23 \\
     &  \bar{p}_2 & 0.03  & 0.27 & 0.31 & 0.21 & 0.26 \\
     &  \bar{p}_3 & 0.06&  0.27 & 0.31 & 0.21 & 0.26 \\
     &  \bar{p}_4  &  0.02  &  -0.05 & 0.03 & -0.03 & 0.04 \\
     &  \gamma_1& 0.003  & -1.10&    -0.60 &  -0.73&  -0.68\\
     & \gamma_2& 0.003  & -0.35 & -1.66&  -0.37&  -1.11\\
      &  \gamma_3 & 0.003  &-0.66 &  -0.56 &  -1.13& -0.53 \\
       & \gamma_4 & 0.003  &-0.47 & -1.31 &  -0.41 &  -1.78\\
        & \mu & 0.00  &  -0.13& -0.20 & -0.13 &-0.20\\
   \bottomrule

  \end{tabular}
  \caption[\textbf{Sensitivity index of endemic point with respect to parameters of the model}]{Local sensitivity indices of the endemic equilibrium points. At this baseline $\mathcal{R}_0\ge 1$, so this endemic point is a solution of model at steady state. }
\label{table:sie}
\end{table} 
 The prevalence of infection, $i_j$,  is most sensitive to probability of transmission per act $\beta$, i.e increasing $\beta$ increases $i_j$s more than other parameters. Then $A_j$s and $\gamma_j$s  have the second most effect on $i_j$s in positive and negative direction, correspondingly.

 Prevalence in high-risk men, $i_1$, is sensitive to the total number of acts for the high-risk men $A_1$ and $\gamma_1$ more than the other $A_j$s and $\gamma_j$s for $j\neq 1$. Prevalence in high-risk women, $i_3$, is also sensitive to $A_3$ and $\gamma_3$ more than the other $A_j$s and $\gamma_j$s for $j\neq 3$. It means when high-risk people increase their number of acts, regardless of  what others do, the fraction of infected people between high-risk people increases, because they have many partners.  On the other hand, when infection period for high-risk people increases, the prevalence in high-risk population increases.  

For low-risk men, the prevalence, $i_2$,  has the same sensitivity to $A_2$ and $A_4$. It means when  low-risk people increase their act, the prevalence in low-risk men increases, and we have the same story for low-risk women. It  is reasonable, because  low-risk people do not have  many partner, therefore, more acts for them and their partners plays an important role. 
Prevalence in low-risk men, $i_2$,  has also the same sensitivity to $\gamma_2$ and $\gamma_4$. It means when we decrease $\gamma_2$ and $\gamma_4$-infected  people in low-risk men stay in infection category for a longer time and also  infected women in low-risk group stay in infection category for a longer time- we see  increment in the value of  $i_2$ more than the other parameters. There is a similar  analysis for low-risk group $i_4$: low-risk group $i_4$ is sensitive to $\gamma_2$ and $\gamma_4$ with the same magnitude and more than  the other $\gamma_j$s.

Another interesting result is that the endemic equilibrium points are more sensitive, than $\mathcal{R}_0$, to most of the parameters. This result says,  controlling parameters to have a low fraction of  infected population is easier than adjusting the parameters to have smaller  $\mathcal{R}_0$.

 \section{Screening Scenarios}
 The goal of this Section is to study the impact of  different screening strategies  on prevalence of Ct among different groups. In  all the simulations, the parameters are fixed with the baseline values given in Table (\ref{tab:params_table}), unless specifically defined otherwise. 
In the first simulation, we assume that fraction of people who can be screened each year, $\sigma_y$, is limited by a budget, or other factors.     
We also assume that if an infected person is screened for Ct, then there is a $100\%$  probability that infection  will be detected. 

We will compare the fraction of the population that is infected  as a function of the screening rate  $\sigma_y^k$ people from different subgroups $k$.
We will also optimize the $\sigma_y$, for a fixed budget, that will minimize the fraction of infected people at steady state.  
That is, if $(i^*_1,i^*_2,i^*_3,i^*_4)$ are the fraction  of infected people at steady state, we find the optimal screening rates that  solve the optimization problem:
  \begin{equation*}
\begin{aligned}
& \underset{\sigma_y^k}{\text{minimize}}
& & \sum_{j=1}^4N_ji^*_j(\sigma_y^1,\sigma_y^2,\sigma_y^3,\sigma_y^4),\\\
& \text{subject to}
& & \sum_{j=1}^{4} N_j\sigma_y^j=N\sigma_y=0.2N=400,
\end{aligned}
\end{equation*} 
 where $N_ji^*_j$ is the number of infected people in group j and in steady state. The Figure. (\ref{fig:senario}) shows  the result for six different scenarios defined in Table (\ref{table:R0}).
\begin{table}[htp]
\centering
\begin{tabular}{llll}
\hline
\cline{1-4}
\textbf{Scenario} & $\mathcal{R}_0$&\textbf{Scenario} & $\mathcal{R}_0$\\
\hline
 (1) No screening & $4.01$& (2) Screen high-risk men& $0.86$\\
 \hline
(3) Screen high-risk women&  $0.91$& (4) Screen low-risk men& $1.26$   \\
 \hline
(5) Screen low-risk women&  $1.22$ &(6) Optimized Screening& $0.12$\\
\hline
\cline{1-4}
\end{tabular}
\caption[\textbf{ $\pmb{\mathcal{R}_0}$ for different scenarios}]{Basic reproduction number, $\mathcal{R}_0$, for different scenarios with respect to parameters of the model at the baseline parameter values. Implementing optimized screening decreases $\mathcal{R}_0$ to the order of $-1$.}
\label{table:R0}
\end{table}

\begin{figure}[htp]
\centering
\subfloat[\textbf{Different scenarios}][]{\label{fig:scenario_a}\includegraphics[width=0.45\textwidth]{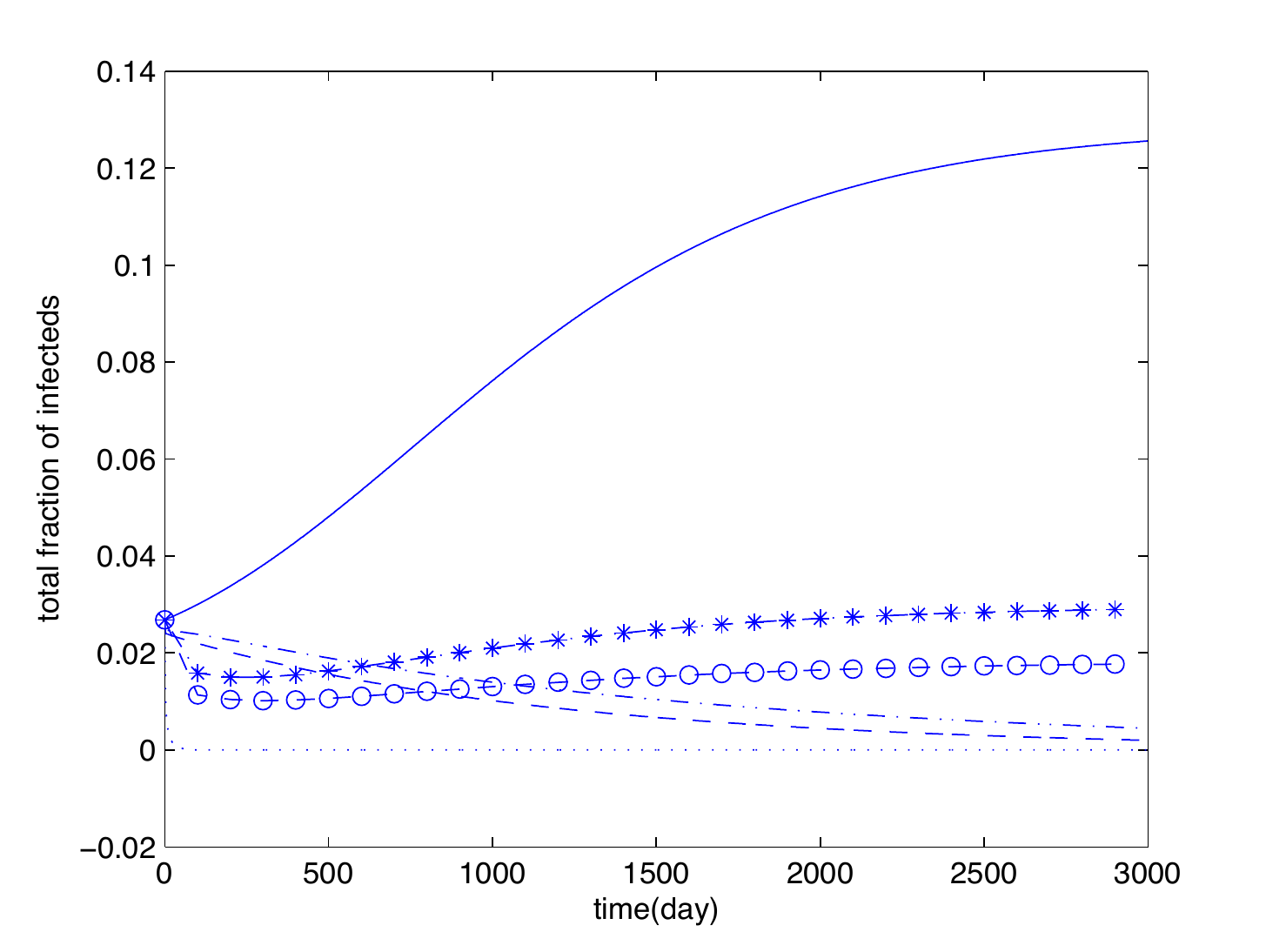}}
\subfloat[\textbf{Optimized scenario}][]{\label{fig:scenario_b}\includegraphics[width=0.45\textwidth]{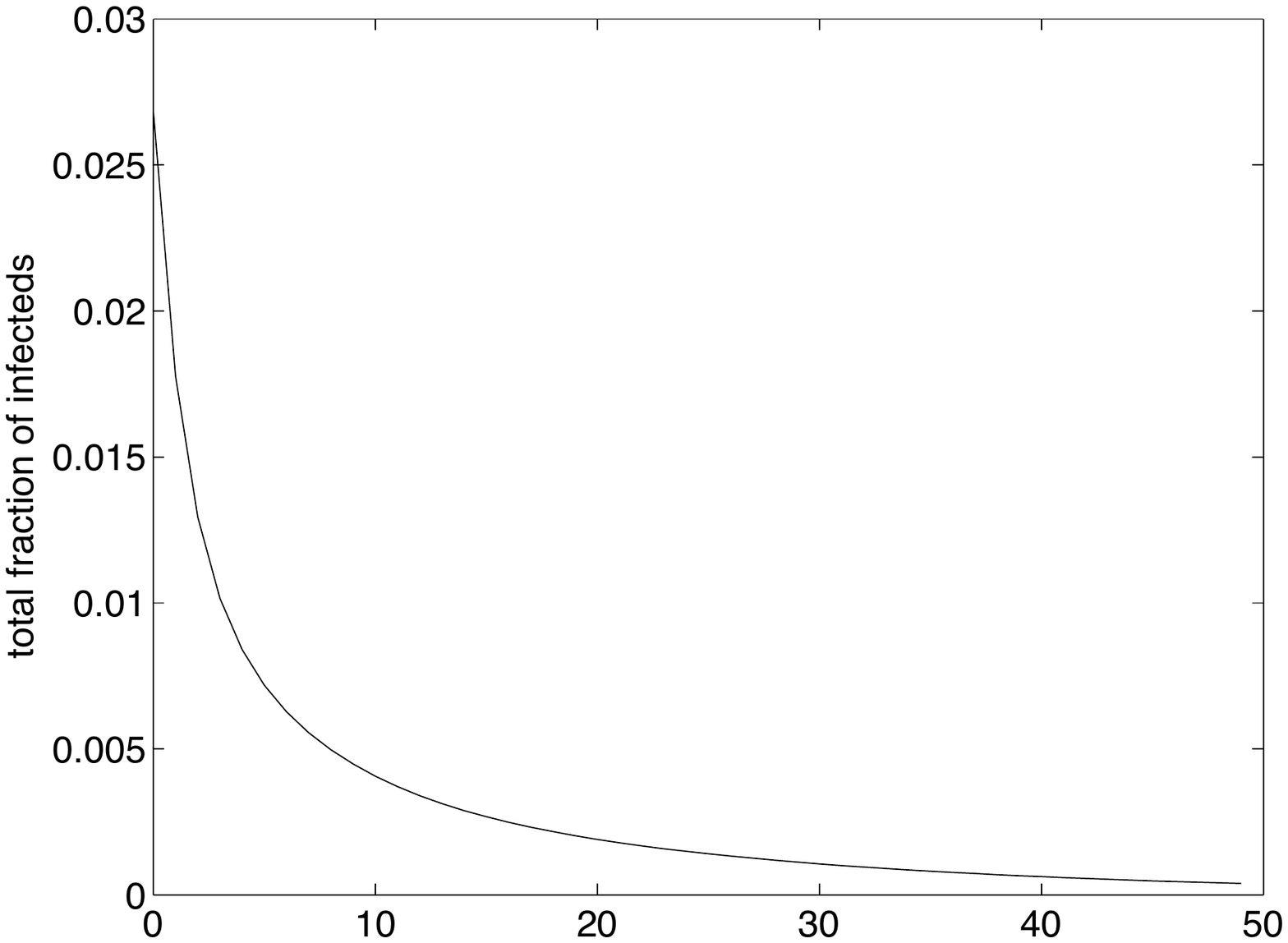}}
 \caption[\textbf{Prevalence of Ct for different screening scenarios}]{\ref{fig:scenario_a}: the fraction of infected people after implementing different scenarios: no screening (solid line), screen $\sigma_yN$ people per year for: high-risk men(dash line), high-risk women(dash-dot line), low-risk men (dash-star), low-risk women (dash-circle), and optimized screening(dotted line). \ref{fig:scenario_b}: zooms on the optimized screening, by optimized screening the infection dies out very fast.}
            \label{fig:senario}
\end{figure} 

We observe that in case of  no screening the epidemic goes up to its original endemic equilibrium point.  
The effectiveness of screening is seen by the dramatic reduction in the fraction of infected people.  However, between all scenarios screening high-risk people and optimized screening cause that epidemic dies out and for  optimal choice it dies out much faster than the other two  cases. 

We also list the value of $\mathcal{R}_0$ for different scenarios in Table (\ref{table:R0}).  The Figure (\ref{fig:senario}) and  Table (\ref{table:R0}) show that  $\mathcal{R}_0>1$ implies a persistent infection, though not a macroscopic outbreak in screening cases,  and when $\mathcal{R}_0 <1$ epidemic goes to DFE. Also, for the optimized scenario, which  its $\mathcal{R}_0$ is the lowest one, the epidemic dies out faster than the other scenarios.
Therefore, optimized screening was the most effective scenario among all six scenarios. 

To push our understanding of  the effects of optimized screening further,  we  do sensitivity analysis of equilibrium points with respect to screening rates at their optimized values.
In this case sensitivity index become a matrix like:
\[ \mathbb{S}= \left( \begin{array}{cccc}
i^*_1& 0 & 0& 0\\
0 & i^*_2& 0& 0\\
0 & 0& i^*_3& 0\\
0 & 0& 0& i^*_4
\end{array} \right)^{-1}
\times J_{i^*}(\sigma_y)\times
\left( \begin{array}{ccc}
\sigma_y^1& 0 &0\\
0 & \sigma_y^2&0\\
0 & 0&\sigma_y^3
\end{array} \right),
\]
where $J_{i^*}(\rho_y)$ is jacobian matrix. Each column $k$ of $\mathbb{S}$ represents sensitivity index of equilibrium points with respect to  screening rate $\sigma_y^k$. Therefore, $(k,j)th$ element of $\mathbb{S}$ is sensitivity index of $i^*_k$ with respect to $\sigma_y^j$. Table (\ref{table:sa_mat})  lists the elements of this matrix: all the values in table are negative, it means there is a inverse pattern between equilibrium points and screening rate: when we increase screening rates the fraction -therefore, the number- of infected people at steady state will decrease. 
Among all, $i^*_1$ is the most sensitive one, it means changing screening rates affects the  high-risk men more than the others. 
\begin{table}[htp]
\centering
\begin{tabular}{lllll}
\hline
\cline{1-5}
 & $i^*_1=0.00$ & $i^*_2=0.00$ & $i^*_3=0.00$ & $i^*_4=0.00$  \\
\hline
$\sigma_y^1=0.06$& $ -0.13$ & $-0.00 $& $-0.01$ & $-0.13$ \\
 \hline
$\sigma_y^2=0.20$& $-0.56$ & $-0.37$ & $-0.27$ & $-0.41$  \\
\hline
 $\sigma_y^3=0.05$& $-0.08$ & $-0.09$ & $-0.09$ & $-0.09$  \\
\hline
\cline{1-5}
\end{tabular}
\caption[\textbf{Sensitivity indices of equilibrium points with respect to screening rates }]{Sensitivity indices of equilibrium points with respect to screening rates at optimized baseline values. The most sensitive output parameter is fraction of infected high-risk men.}
\label{table:sa_mat}
\end{table}

\section{Discussion and Conclusion}
In this Chapter we created a multi-risk  heterosexual SIS transmission model for the spread of Ct with biased mixing partnership 
selection to investigate the impact that screening for the disease can have in controlling its spread. 
We derived the threshold conditions for the early spread of the disease and defined the basic reproductive number, 
$\mathcal{R}_0$, using the next generation matrix approach.  The analysis of $\mathcal{R}_0$ identified a new 
approach to reduce the size of the next generation matrix for a heterosexual Ct model with $n$ risk groups from 
an $2n \times 2n$ nonsymmetric sparse matrix to an $n \times n$ symmetric full matrix. This approach can be used 
in similar heterosexual STI models to greatly simplify the threshold analysis.

We used the  sensitivity analysis of $\mathcal{R}_0$ and endemic 
equilibrium steady-state solutions to quantify the relative effectiveness of different intervention strategies in mitigating 
the disease.  The analysis identified the probability of transmission per act (related to condom-use) is 
the most sensitive parameter in controlling the epidemic.  The second most effective control mechanism 
was the screening, and treating infections, of both  high-risk men and women.  Currently, most mitigation programs 
only target screening high-risk women. The model indicates that it is equally important to identify infections and treat 
high-risk men.  We confirmed that in the model the higher-risk groups are driving the epidemic and that 
$\mathcal{R}_0$ is most sensitive to the behavior of these higher-risk people.

We implemented different screening scenarios consist of screening only high-risk men, only high-risk women, 
only low-risk men, only low-risk women, and optimized screening. 
We then solved  for an optimal screening strategy for infection mitigation when there are limited resources
and then determined the best screening approach to minimize the endemic steady state infection prevalence, optimized screening.
Not surprisingly, we found that this same strategy also  minimizes $\mathcal{R}_0$.
In the next Chapter, we generalize our multi-risk model to continuous-risk, when individuals can take as many number of partners as they want and we focus on impact of condom-use to control the epidemic of Ct.

\chapter{Continuous Risk-based  Model}\label{continuous}

In this chapter, we extend the multi-risk group model in Chapter \ref{multi}  to a continuous risk-based transmission  model  that can be used to understand the spread of Ct in the adolescents and  young adult population. The model predicts the impact of people having different number of  concurrent partners or using prophylactics, such as condoms, on the rate that infection spreads.

We use this risk-based integro-differential  model \cite{azizi2017risk} to study the  impact of  variations in number of partners, mixing patterns in selecting partners, and condom-use to determine optimal Ct prevention policies.  
We study how the number of partners that a person has, and how often they use condoms, will affect the spread of Ct. 
Here,  the risk is defined  based on  the number of partners a person has per year.  The distribution of risk behavior for a population, such as the fraction of the population having multiple partners and also, the number of partners that their partners have (their partner's risk)  affects the spread of Ct and must be accounted for in the model. 
Our model accounts for a broad range of risk behavior, defined as the number of partners per year, that is captured as a continuous variable.
This model could also be used to include separate core high-risk groups, such as sex workers.   However, in the young adult population being modeled, sex-workers are not believed to be a major factor in the spread of highly infectious STIs like Ct.

The risk of contracting Ct is  primarily a function of a person's risk,  the probability that a partner is infected, and the use of prophylactics (e.g. condoms).
We use the selective  mixing   model developed by Busenberg et al. \cite{busenberg1991general} to capture the heterogenous mixing among  people with different number of partners. Our model is closely related to the models for the spread of the HIV/AIDS   in heterosexual networks \cite{hyman1988using,hyman1989effect}  that distribute the population based on their risk, such as the number of partners  \cite{hyman2001initialization,hyman2003modeling,hyman1988using,hyman1989effect}. 

We design the model with a complete explanation of its variables and parameters. For the parameters, we used two different data sources for population distribution and amount of condom-use by people with different risks. We use local sensitivity analysis to identify the relative importance of condom-use  and illustrate how this analysis can be used to prioritize individual-level  behavioral strategies based on their predicted effectiveness.

\section{Ct Transmission Model Overview}

We model a population of $15$-$25$ year-old sexually active individuals and assume that the primary  mechanism for migration is by aging into, and out of, the population. 
 We assume a closed steady-state  population $N(r) = S(t,r) + I(t,r)$  of people with  risk $r\in [r_0,r_{\infty}]$ is divided into $S(t,r)$, and $I(t,r)$, where $S(t,r)$ ($I(t,r)$) is the number of susceptible (infected) people with risk $r$ at time t. 
The susceptible population becomes infected at  the rate of $\lambda$ per year, and infected population recovers  with constant rate $\gamma$
to again become susceptible. 
 We assume both susceptible and infected  people leave the population at the migration rate $\mu$ per day and that people maintain the same risk $r$ while in the modeled population. 
Our integro-differential equation model for the spread of Ct is
\begin{equation}\label{con-model}
\begin{split}
\frac{\partial S(t,r)}{\partial t}&=\mu (N(r)-S(t,r))-\lambda(t,r)S(t,r)+\gamma I(t,r),\\
\frac{\partial I(t,r)}{\partial t}&=\lambda(t,r)S(t,r)-\gamma I(t,r)-\mu I(t,r),\\
S(0,r)&=S_0(r),~~~~I(0,r)=N(r)-S_0(r),
\end{split}
\end{equation}
where initial distributions of the susceptible and infected population are given at time $t=0$.
Note that this model does not distinguish between men and women and is appropriate for homosexual STIs or infections when the distribution of risk  and infection incidence in men and women is approximately the same.  This also requires that the probability of transmitting the infection from an infected man to a susceptible woman is approximately the same as the probability of transmission  from an infected woman to a susceptible man. This is a reasonable assumption for some STIs including Ct.  In the absence of symmetry in the transmission parameters or in the risk behavior in men and women, then the model would need to be extended to a two-sex bipartite model. 

We model a population of $15$-$25$ year-old sexually active individuals and assume that individuals enter and leave the modeled population only through aging, that is, migration rate is defined as $\mu = [(25-15)~\text{years}]^{-1}= 1/(10$ years)$=1/(3650$ days).  We also assume  that everyone aging into the population is susceptible to infection, and that people do not change their risk while in the modeled population. 
To properly account for changes in risk behavior as the population ages would require adding an additional variable (age) and is beyond the scope of this model.
The risk behavior is distributed in a way that number of people with risk $r$ decreases as risk $r$ increases, that is, there are fewer individuals with many partners. We also assume that there is an exponential distribution for the rate the infected population with an average infection period  $1/\gamma$ days.  

\subsection{Transmission rate }
The force of infection, or transmission rate, $\lambda(t,r)$, for susceptible person with risk $r$ at time $t$, is the rate that susceptible people with risk $r$ become infected through sexual act.   The mixing among people with different risks  determines if a susceptible person with  risk $r$ can be infected by someone infected with risk  $r'$.
We define $\lambda(t,r)$ as the integral of the rate of
disease transmission at time $t$ from each infected person with risk $r'$, $I(t,r')$, to the susceptible one by
\begin{eqnarray}
\lambda(t,r)=\int_{r_0}^{r_\infty}\tilde{\lambda}(t,r,r')dr'.
\end{eqnarray}
The rate of disease transmission from the
infected persons  with risk $r'$ to the susceptible individuals with risk $r$, $\tilde{\lambda}(t,r,r')$, 
 is defined as the product of three factors:
 \begin{align*}
\tilde{\lambda}(t,r,r') &= \left( \begin{array}{c}\mbox{Number of $r'-$risk } \\
                                       \mbox{ partners of susceptible  } \\
                                       \mbox{with risk r, per year}
                       \end{array}
                \right)
                \left( \begin{array}{c}\mbox{ Probability of}\\
                                       \mbox{disease transmission} \\
                                       \mbox{ per partner}
                       \end{array}
                \right)
                \left( \begin{array}{c}\mbox{Probability that} \\
                                       \mbox{partner with risk $r'$} \\
                                       \mbox{ is infected}
                       \end{array}
                \right) \\ 
                &= \hspace{2cm}p(r,r')\hspace{4cm}\beta(r,r')\hspace{3cm}P_I(t,r')~~,
\end{align*}
where
\begin{itemize}
\item $p(r,r')$ is the partnership mixing function defined as the number of sexual partners per day that a person with risk $r$   has with a person with risk $r'$, and 
\item $\beta(r,r')$ is the probability of disease transmission per partner to a susceptible person  with risk $r$ from their infected partner with risk  $r'$,  and 
\item $P_I(t,r')$  is the probability  that a person of risk $r'$ is infected.  Here we assume that there is random mixing among individuals with the same risk, $P_I(t,r')=\frac{I(t,r')}{N(r')}$. 
\end{itemize}

The Figure (\ref{chart}) shows a  diagram of components of the transmission rate $\lambda$, and in the following sections, all  components will be explained.

\begin{figure}[h]
\centering
\par\medskip
\begin{tikzpicture}[
    grow=right,
    level 1/.style={sibling distance=4.5cm,level distance=5.2cm},
    level 2/.style={sibling distance=2cm, level distance=5.4cm},
    edge from parent/.style={very thick,draw=blue!40!black!60,
        shorten >=5pt, shorten <=5pt},
    edge from parent path={(\tikzparentnode.east) -- (\tikzchildnode.west)},
    kant/.style={text width=4cm, text centered, sloped},
    every node/.style={text ragged, inner sep=2mm},
    punkt/.style={rectangle, rounded corners, shade, top color=white,
    bottom color=blue!50!black!20, draw=blue!40!black!60, very
    thick }
    ]
\node[punkt, text width=11em] {$\lambda(t,r)=\int_{r_0}^{r_\infty}\tilde{\lambda}(t,r,r')dr'$};
\end{tikzpicture}
\hspace*{-1.5cm}
\begin{tikzpicture}[
    grow=right,
    level 1/.style={sibling distance=4.5cm,level distance=5.2cm},
    level 2/.style={sibling distance=2cm, level distance=5.4cm},
    edge from parent/.style={very thick,draw=blue!40!black!60,
        shorten >=5pt, shorten <=5pt},
    edge from parent path={(\tikzparentnode.east) -- (\tikzchildnode.west)},
    kant/.style={text width=4cm, text centered, sloped},
    every node/.style={text ragged, inner sep=2mm},
    punkt/.style={rectangle, rounded corners, shade, top color=white,
    bottom color=blue!50!black!20, draw=blue!40!black!60, very
    thick }
    ]
    
\node[punkt, text width=4em] {$\tilde{\lambda}(t,r,r')$}
     child { 
     node[punkt, text width=3em] {$P_I(t,r')$}
     edge from parent
             node [kant, above,] {probability that} node [kant, below] {partner is infected}
     }
    child {
        node[punkt, text width=3em] {$\beta(r,r')$}
        child {
        node[punkt] [rectangle split, rectangle split, rectangle split parts=1,text ragged] {$C(r,r')$}
        child {
        node[punkt] [rectangle split, rectangle split, rectangle split parts=1,text ragged] {$c(r)$}
        edge from parent
                         node [kant, above] {\% of acts per} 
                         node [kant, below] {\small partner with condom} 
    }
        edge from parent
             node [kant, above] {~~~~~\% of total acts } node [kant, below] {with condom} 
    }
    child {
        node[punkt] [rectangle split, rectangle split, rectangle split parts=1,text ragged] {$A(r,r')$}
        child {
        node[punkt] [rectangle split, rectangle split, rectangle split parts=1,text ragged] {$a(r)$}
        edge from parent
                         node [kant, above] { \# of acts}
                         node [kant, below] { per partner}
    }
        edge from parent
                         node [kant, above] {total \# of acts} 
    }
    edge from parent
                         node [ above] {probability of  } node [ below] {transmission}
             }
        child {
        node[punkt, text width=3em] {$p(r,r')$}
        child {
        node[punkt, text width=3em] {$\rho(r,r')$}
        child {
           node [punkt,rectangle split, rectangle split,
            rectangle split parts=1] {
              \textbf{$\rho_{bm}(r,r')$} }
            edge from parent
             node[above, kant] {biased mixing}
                     }
        child {
          node [punkt,rectangle split, rectangle split,
           rectangle split parts=1] {
              \textbf{$\rho_{rm}(r,r')$}
                       }
                 edge from parent
              node[above, kant] {\small ~~~~~~random mixing}
                   }
        child {
            node [punkt,rectangle split, rectangle split,
            rectangle split parts=1] {
                \textbf{$\epsilon$}
                        }
                edge from parent
              node[above, kant] {preference level}
                    }
        edge from parent
                        node [ above] {mixing function}}
                    edge from parent{
                node [kant, above] {\# of partners} node [kant, below] {with risk $r'$} }
               };
\end{tikzpicture}
\caption[\textbf{Components of the transmission rate} $\pmb{\lambda(t,r)}$ ]{Components of the transmission rate  {$\lambda(t,r)$} for a susceptible person with risk {$r$}.}
\label{chart}
\end{figure}
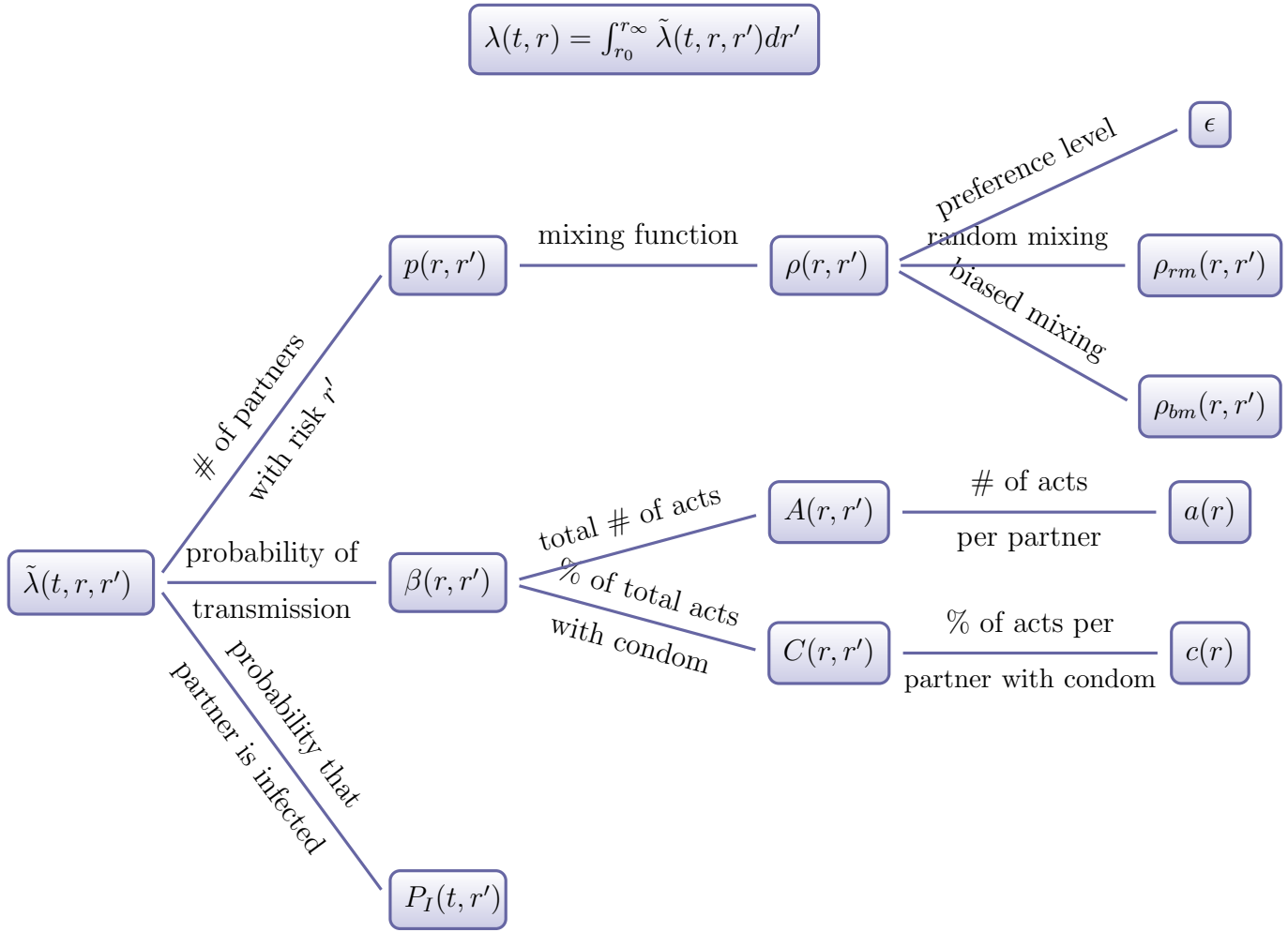

\subsection{Partnership formation }
The mixing distribution,  $\rho(r,r')$, captures the mixing between people of different risks. This distribution is  defined as the fraction  of  partners of a person with risk $r$  who have risk $r'$. 
The distribution function  $\rho(r,r')$ is the expected distribution of partners and is typically estimated based on inaccurate survey data or other assumptions.  It cannot be as the actual mixing function $p(r,r')$ since it usually will  not satisfy the balance condition:
$$N(r)p(r,r')= N(r')p(r',r), $$
that means the total number of people with risk $r$ with partners of risk $r'$ must be equal to the total number of people with risk $r'$ with partners of risk $r$.  

We assume that $\rho(r,r')$ is a linear combination of randomly selected partners, with the random mixing distribution $\rho_{rm}(r,r')$, and partners based on their preference, with the biased mixing distribution $\rho_{bm}(r,r')$. These mixing distribution functions $\rho_{rm}$ and $\rho_{bm}$  are normalized to have unit integral.
Feng et al. \cite{feng2016evaluating} used a similar model to account for multi-level mixing of people within a specified group and among the general population. 

\underline{\textbf{Random mixing distribution}}:
When the mixing is random (sometimes called proportional mixing), then individuals with risk $r$ 
 do not show any preference  for their partners based on risk. 
The random mixing function for the probability that a person of risk $r$ picks a partner with risk $r'$ is defined by the ratio of  total number of partners for all people with risk $r'$, $r' N(r')$,  to  total number of partnerships, $\int_{r_0}^{r_\infty}uN(u)du$.  Thus, the random mixing distribution 
\begin{equation} \label{randommixing}
\rho_{rm}(r,r')=\frac{r'N(r')}{\int_{r_0}^{r_\infty}uN(u)du}~~,
\end{equation}
is independent of the risk $r$ of the person seeking a partnership. 

\underline{\textbf{Biased mixing distribution}}:
In our biased (associative or preferential) mixing model, we assume homophily (love of the same) where people with risk $r$ prefer to have partners with similar risk. 
We also assume that  people at high risk  have partners with a broader range of risk than people at low risk.
That is, the standard deviation, $\sigma(r)$,  for the distribution of risk of partners of a person with risk $r$ is an increasing function of $r$.   This is in agreement with the study by 
Lescano et al. \cite{lescano2006condom} that observed the partners 
of people with many partners are mostly casual partners with few acts (sexual acts) per partnership. 
They also observed that the partners of people with few partners are more often longer term partners  with more acts per partnership.

We  define the  biased mixing distribution $\rho_{bm}(r,r')$ for the probability that a person with risk $r$ prefers to have a partner with risk $r'$ from the range $r' \in [r-\sigma(r), r+\sigma(r)]$ as
\begin{equation} \label{biasedmixing}
\rho_{bm}(r,r') = \left\{
        \begin{array}{ll}
            \frac{-|r'-r|+\sigma(r)}{\sigma(r)^2} & \quad  |r'-r| \leq \sigma(r) \\
            0 & \quad elsewhere, 
        \end{array}
    \right.
\end{equation}
which satisfies the condition $\int_{-\infty}^{\infty}\rho_{bm}(r,r')dr'=1.$ The Figure (\ref{delta}) shows how the biased function $\rho_{bm}$ is wider for the higher risk groups. 
\begin{figure}[htp]
\centering
\includegraphics[scale=.5]{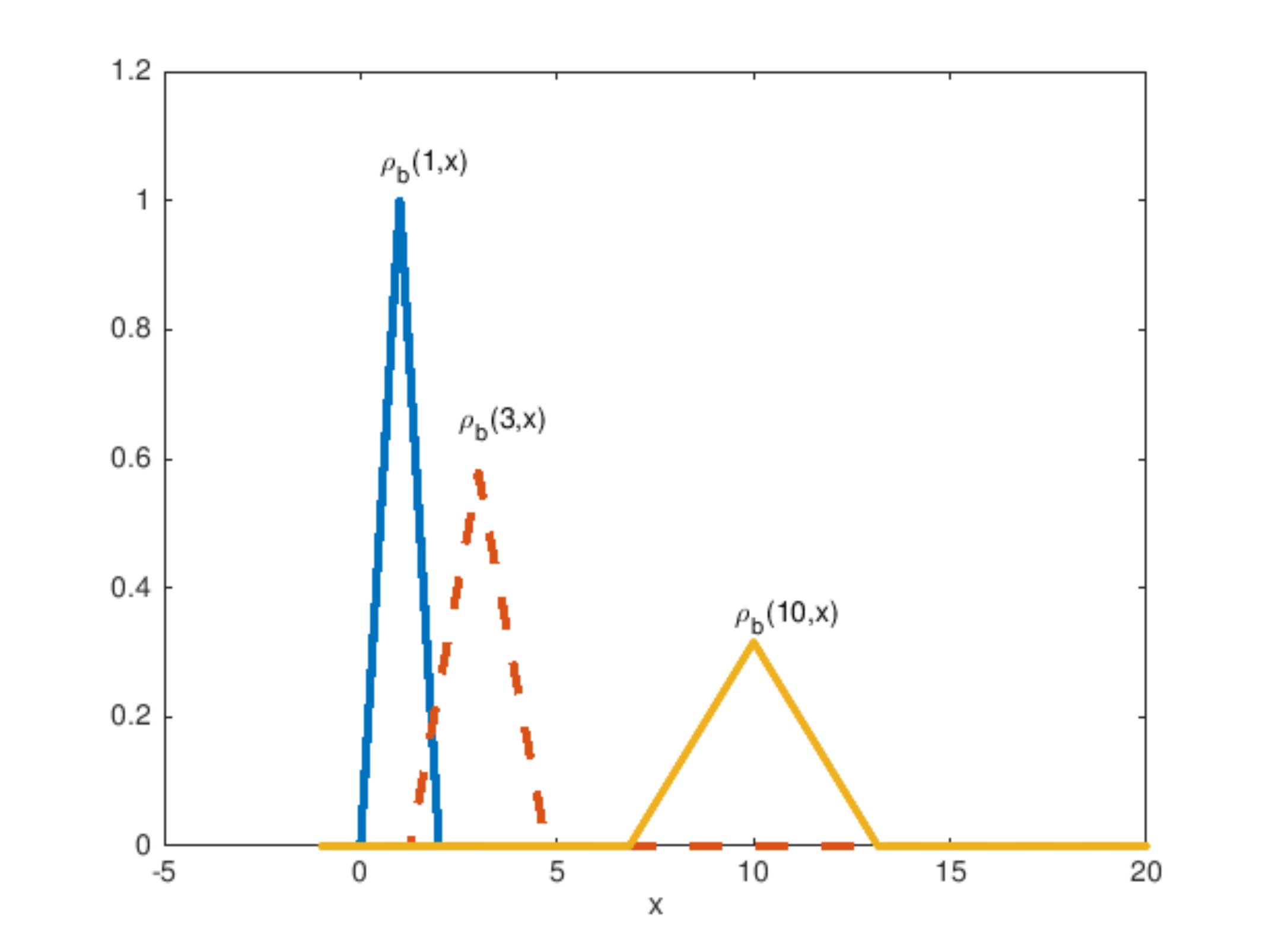}
\caption[\textbf{Biased mixing function}]{Plot of the triangle (hat) biased mixing function $\rho_{bm}(r,x)$ for $r=1,3,10$. As the risk $r$ increases, the mixing function becomes fatter and shorter to capture the effect that partners of higher-risk people have a broader range of risk than that of lower-risk people. This is similar to the mixing function used by Hyman et al. \cite{hyman1988using}.
}
\label{delta}
\end{figure}

\underline{\textbf{Combination of random and biased mixing distributions}}:
We assume people choose some of their partners based on their preference (biased mixing) and that they have other partners chosen randomly from the whole population (random mixing). 
We define the preference level $\epsilon$ as fraction of partners of a person with risk $r$ are selected preferentially and the rest are selected randomly, then we can express the mixing distribution as a convex combination of $\rho_{rm}$ and $\rho_{bm}$:
\begin{eqnarray}\label{E:risk}
\rho(r,r')=\epsilon\rho_{bm}(r,r')+(1-\epsilon)\rho_{rm}(r,r').
\end{eqnarray}
When $\epsilon=0$ the mixing is random, and when $\epsilon=1$ it is purely biased mixing. 
Otherwise, a person with risk $r$ chooses an $\epsilon$ fraction of his/her partners with a hat distribution of people with risk $r'\in[r-\sigma(r),r+\sigma(r)]$, and chooses the other partners randomly from all risk value groups. 
 
\underline{\textbf{Partnership mixing function}}: 
The partnership function $p(r,r')$ is the number of partners a person with risk $r$ has with someone of risk $r'$ per year.  
A person with risk $r$ wants to have $r\rho(r,r')$ partners with risk $r'$, therefore, all individuals with risk $r$ want to  have $r\rho(r,r')N(r)$ partners with risk $r'$. On the other hand, all individuals with risk $r'$ want to  have $r'\rho(r',r)N(r')$ partners with risk $r$. 
The balance condition states that if  people of risk $r$ have $P(r,r')$ partners with risk $r'$, then the people with risk $r'$ must have $P(r',r)=P(r,r')$ partners with risk $r$. 
Therefore, we define \textbf{actual} number of partnership between people with risk $r$ and people with risk $r'$ as harmonic average of $r\rho(r,r')N(r)$ and $r'\rho(r',r)N(r')$:
\begin{eqnarray}
P(r,r')\defeq2\frac{r\rho(r,r')N(r)\times r'\rho(r',r)N(r')}{r\rho(r,r')N(r)+r'\rho(r',r)N(r')}.
\end{eqnarray}
The distribution $P(r,r')$ is a compromise for the actual number of partnerships between all people with risk $r$ and all people with risk $r'$. Therefore, the actual number of  partners  that a  person with risk $r$ has with people of risk $r'$ is 
\begin{eqnarray}
p(r,r')=\frac{P(r,r')}{N(r)}.
\end{eqnarray}
\textit{Remark:} Harmonic average of two values is closer to the smaller one and this compromise weights the decision on forming a sexual partnership towards the person who is less interested in making partnership.

 \subsection{Probability of transmission per partner}
The probability  per partner, $\beta(r,r')$,  that a susceptible person of risk $r$ becomes infected by an infected partner of risk $r'$ depends upon the number of acts (sexual acts) between the two risk groups, $A(r,r')$, and how often condoms are used in their acts, $C(r,r')$.  

\underline{\textbf{Sexual acts per partnership between risk groups}}:\label{contacts}
We define $A(r,r')$ as the total number of sexual acts per person per day between a person with risk $r$ and a partner with risk $r'$.  Since there must be the same as the number of sexual acts between person of risk $r'$ with partner of risk $r$, the balance condition, $A(r',r)=A(r,r')$ must hold.
Suppose a person with risk $r$  desires to have, on average,  $a(r)$ sexual acts per partner per day.   We assume that $a(r)$ is a decreasing function of $r$:
\begin{equation}
a(r)=\frac{A}{r}, 
\end{equation}
where $A$ is the total number of sexual acts  per day.
Because the number of desired sexual acts per partner for people of risk $r$ is not necessarily the same as the number of desired sexual acts per partnership for people of risk $r'$, $a(r) \ne a(r')$, then there must be a compromise for the balance condition to hold. 
We define the actual number $A(r,r')$ of sexual acts per person between the people in risk groups $r$ and $r'$ as
\begin{equation}\label{E:Ar}
 A(r,r')\defeq 2\frac{a(r)a(r')}{a(r)+a(r')}  ~~.
\end{equation}
The Equation (\ref{E:Ar}) satisfies the balance condition, and when there is a conflict, the harmonic average results in the actual number of sexual acts to be closer to the smaller number desired by the two individuals. 

\underline{\textbf{Condom-use as a function of risk}}: 
We assume that person with risk $r$ desires to use a male-latex condom in $c(r)$ fraction of their sexual acts. We acknowledge that increased condom-use might have an effect on the risk behavior, however, this is not  investigated in this work. We assume  that higher-risk people are more likely to use condoms than the lower-risk people \cite{beadnell2005condom, lescano2006condom}.  Therefore, we define $c(r)$ as increasing function of $r$.  We observed that the function
\begin{eqnarray}\label{E:cr}
c(r)\defeq \alpha \left(\frac{r}{c_0+  r}\right),
\end{eqnarray}
is a good approximation to survey data and interpolates between the case where people have no partners (hence no condom-use), $\lim_{r\to0} c(r)=c(0)=0$, and the limit where people have many partners and use condoms $\alpha=\lim_{r\to\infty} c(r)$ fraction of acts.

In  Section \ref{contacts}, we described how the actual number of sexual acts between people of different risk groups had to be compromised to satisfy a balance condition.  The same is true for condom-use.  We define the actual fraction of times that a person of risk $r$ uses a condom when having sex with a person of risk $r'$ as $C(r,r')=C(r',r)$ and define this by an appropriate average of $c(r)$ and $c(r')$. The average will depend if the preference (final decision) is closer to the desired condom-use of the person who prefers to use condoms fewer times, or the person who prefers to use condoms more often.

\underline{Preference to low condom-use}:
In this case, we assume that a person who is less likely to use condom is more likely to convince the other not to use condom.  We approximate this situation for partners with risk $r$ and $r'$ to use a condom in 
\begin{eqnarray}\label{har}
C_l(r,r')=\frac{2c(r)c(r')}{c(r)+c(r')},
\end{eqnarray}
fraction of their acts.

\underline{Preference to high condom-use}:
In this case, a person who is more likely to use condom is more probable to convince the other one to use condom.
We approximate this situation by taking the harmonic average of the fraction of acts people do not use condom ($1-c(r)$ and $1-c(r')$), and therefore, they use condoms in   
\begin{eqnarray}\label{1-har}
C_h(r,r')=1-\frac{2(1-c(r))(1-c(r'))}{2-c(r)-c(r')}~~,
\end{eqnarray}
fraction of their acts. 

\underline{\textbf{The probability of transmission with condom-use}}:
We define $\beta_{nc}$ and $\beta_c$ as the  probabilities of transmission per act for not using and using a condom, and we assume these probabilities are gender-independent, because unlike the heterosexual transmission of HIV/AIDS, the probability of highly infectious STIs (like chlamydia and gonorrhea) transmission  from an infected man to a woman is approximately the same as from an infected woman to a man \cite{althaus2011transmission,quinn1996epidemiologic, turner2006developing}.  
If the condom is $90\%$ effective in preventing the infection from being transmitted, then probability of transmission when using a  condom-use is  $\beta_c=0.1\beta_{nc}$.  

To determine the probability of a susceptible person with risk $r$  being infected by their infected partner with risk $r'$ depends on the number of acts, $A(r,r')$, and how often they use condoms. 
If someone uses a  condom in $C(r,r')$ fraction of acts, then they have a total of $C(r,r')A(r,r')$ acts with condoms and $(1-C(r,r'))A(r,r')$ acts without  condom per unit time.  The person with risk $r$ does not catch infection from their partner during a condom act with probability $(1-\beta_c)^{C(r,r')A(r,r')}$, and for when not using a condom this probability  is $(1-\beta_{nc})^{(1-C(r,r'))A(r,r')}$. Combining these,  the probability of a susceptible being infected after one act by infected partner with risk $r'$ is
\begin{eqnarray}
\beta(r,r')=1-(1-\beta_c)^{C(r,r')A(r,r')}(1-\beta_{nc})^{(1-C(r,r'))A(r,r')}.
\end{eqnarray}

\section{Parameter Estimation}\label{p_est}
The model parameters in Table (\ref{table:parameter}), the distribution of risk in the population, and the condom-use were estimated from recent studies on sexual behavior.

\begin{table}[htbp]
\centering
\resizebox{\columnwidth}{!}{
\begin{tabular}{ lp{9cm}llll }
\toprule[1.5pt]
 \normal{\head{Parameter}} & \normal{\head{Description}}
  & \normal{\head{Unit}} & \head{Baseline}& \head{Ref.}\\
  \cmidrule(lr){1-3}\cmidrule(l){3-5}
\centering$\int N(r) dr$ & Total population.  &   people & $10000$  &  Assumed \\
\centering\vspace{.2mm}$A$& Total (max) number of acts  per time. &   1/day   & $0.57$  & Assumed \\
 $1/\gamma$& Average time to recover without treatment.   & days&  $365$   & \cite{kretzschmar2001comparative}\\
$\mu$ & Migration rate. & 1/days& $0.00$& Assumed\\

  $\beta_{nc}$ & Probability of transmission per no-condom  act. & 1/act & $0.11$& \cite{kretzschmar2001comparative}\\
  \centering\vspace{.7mm}$\alpha$ &  Fraction of acts condom used by risky people.&--&  $0.70$ & Estimated\\
$\beta_{c}$ & Probability of transmission per condom  act. & 1/act & $0.01$& Assumed\\       
\vspace{.4mm}\centering $r_0(r_\infty)$& Minimum(maximum) number of partners per time. & people/days & $0.00$$($0.14$)$ & Assumed  \\
\centering$\epsilon$ & Preference level. & -- & $0.60$ & Assumed \\
        \bottomrule[1.5pt]
\end{tabular}}
\caption[\textbf{Model parameters}]{Model parameters: parameter values are chosen for all simulations unless indicated otherwise.}
\label{table:parameter}
\end{table}

\subsection{Population distribution}
A sample of  $616$ people  ages $15$-$25$ years old resident in Orleans Parish were asked about their number of concurrent partners \cite{kissinger2014expedited}\footnote{We will explain these data in  Chapter. (\ref{abm})}. This data was in agreement with other recent studies \cite{colgate1989risk}, that show that the partner distortion often follows an inverse cubic power law, $N(r) \propto r^{-3}$, for $r > r_0$.  The value of $N(r)$ is chosen for the function to agree with the total population size, $\int N(r) dr$, being modeled.

\subsection{Condom-use}
  The distribution of risk and condom-use  were estimated based on surveys for the 
sexually active adolescents and young adult populations \cite{beadnell2005condom,lescano2006condom,reece2010condom}. 
Reece et al. \cite{reece2010condom}  studied  rates of condom-use among sexually active individuals in the U.S. population and observed that adolescents reported condom-use during $79.1\%$ of the past $10$ vaginal intercourse events.
Similar studies \cite{WinNT} in sexually active  high school students in the U.S. reported that during $1991$, $46\%$,  during $2003$, $63\%$, and  in $2013$, $59\%$  of the students used condoms at their most recent sexual intercourse. 

Beadnell et al. \cite{beadnell2005condom}  surveyed $8-12$th grade students in a large urban northwest school district  annually for seven years. They observed that the younger students were  more likely to use condoms and also the students with more partners were more likely to use condoms:  the students with many partners used condoms, on average,  in  $68\%$ of their sexual acts, while the students with  few partners used condoms in $49\%$ of their sexual acts. 
The condom-use function, Equation  \ref{E:cr},  is in close agreement with their observations (Table (\ref{table:data})) with the parameters $\alpha=0.69$ and $c_0=1.35$:
\begin{equation}\label{cdata}
c(r)=0.69\frac{r}{1.35+ r}~~.
\end{equation}
A simple check shows this function is in close agreement with the survey data: $c(2)=0.41$, $c(4.7)=0.50$, $c(5.4)=0.58$, and $c(7.4)=0.60$.

\section{Numerical Simulations}

\begin{table}[tb]
\centering
\hspace*{-1.0cm}
\scalebox{0.8}{
\begin{tabular}{l|l|l|l|l|l}
\toprule
 \specialrule{.1em}{.1em}{.1em} 
 \multicolumn{2}{l|}{\hspace{1cm}{$15-16$ years old}}&\multicolumn{2}{l|}{\hspace{1cm}{$16-17$ years old}}& \multicolumn{2}{l}{\hspace{1cm}{$17-18$ years old}}\\
\cline{1-6}
 {Risk}~ $r$&{Fraction of condom-use}&{Risk}~ $r$&{Fraction of condom-use}&{Risk}~ $r$&{Fraction of condom-use}\\
\hline
~~~$1.2$&\hspace{1cm}$0.42$&~~~$1$&\hspace{1cm}$0.42$&~~~$1$&\hspace{1cm}$0.39$\\
~~~$5.4$&\hspace{1cm}$0.58$&~~~$2$&\hspace{1cm}$0.42$&~~~$2$&\hspace{1cm}$0.38$\\
&&~~~$7.4$&\hspace{1cm}$0.60$&~~~$4.7$&\hspace{1cm}$0.50$\\
\hline
\end{tabular}}
    \caption[\textbf{Condom-use fraction based on age and risk}]{The average fraction of condom-use by high school students with different risks and different ages, the result of survey conducted in a large urban northwest high school \cite{beadnell2005condom}.}
  \label{table:data}%
\end{table}

Because the equations are homogeneous in the total population, our results scale with the total population size.  We display our numerical simulations in terms of the nondimensional variables defined by dividing each variable by the steady-state zero-infection equilibrium the total population of individuals with the risk $r$, $N(r)$. That is, we present the numerical simulations in terms of the fraction of the population at risk $r$, i.e susceptible $s(t,r)\defeq\frac{S(t,r)}{N(r)}$ or infected $i(t,r)\defeq\frac{I(t,r)}{N(r)}$.   We define $I^*(r)$ as the number of and $i^*(r)$ as the fraction of the population that is infected at the endemic steady state.
In the numerical simulations, all the parameters are fixed with the baseline values given in Table (\ref{table:parameter}), unless specifically defined otherwise. 

\subsection{Basic reproduction number $\mathcal{R}_0$}
  When the population is distributed as a function of risk, then it is possible to define a basic reproduction number for each value of risk, or a single  $\mathcal{R}_0$ for the entire population based on the  dominant eigenvalue of next generation operator.  Using a single $\mathcal{R}_0$ is useful when studying the impact that changes in the biased mixing and condom-use parameters have on the early growth of an epidemic.  

We follow Diekmann et al. \cite{diekmann1990definition} and 
define  $\mathcal{R}_0$ as the spectral radius of the next generation operator defined as
\begin{equation}
K(r)=S(t,r)\int_{r_0}^{r_\infty}\tau {p(r,r')\beta(r,r')}I(t,r')dr',
\end{equation}
where $\tau = \frac{1}{\mu+\gamma}$ is average time that a person is infected and  $\tau {p(r,r')\beta(r,r')}$
is the expected number of people with risk $r$ will be infected by  a single infected person with risk $r'$. 
Thus, the next generation operator, $K(r)$,  is number of secondary cases for over all the infected people with risk $r'$, $I(t,r')$, and is found by integrating over all possible risk groups.  That is,  $K(r)$ is the number of secondary cases with risk $r$ that arises from all the infected people $I(t,r')$. 
The basic reproduction number $\mathcal{R}_0$ is the dominant eigenvalue of $K(r)$.

We first partition our integro-differential equation model (\ref{con-model}) into subdomains for different risk groups, 
$[r_0,r_\infty]=\cup_{i=1}^{n}[r_{i-1},r_i]$, where $r_n=r_\infty$ and define  the populations on for each risk group as $I_i(t)=\int_{r_{i-1}}^{r_i} I(t,r') dr'$ and $S_i(t)=\int_{r_{i-1}}^{r_i} S(t,r)dr$.
The equations can then be expressed as
\begin{align*} \label{partmodel}
\frac{\partial S_i(t)}{\partial t}=&\mu (N_i-S_i(t))-\lambda_i(t)S_i(t)+\gamma I_i(t),\\
\frac{\partial I_i(t)}{\partial t}=&\lambda_i(t)S_i(t)-\gamma I_i(t)-\mu I_i(t),
\end{align*}
where $\lambda_i=\sum_j \int_{r_{i-1}}^{r_i} p(r_i,r_j)\beta(r_i,r_j)I_jdr_j$.

We divide the equations by $N(r)$ and approximate the next generation operator $K(r)$ with the $n$-by-$n$ next generation matrix $\mathcal{K}$ based on assuming the populations are approximately constant within each risk group and that the population is at the zero-infection equilibrium, $s_i=1$. The entries of $\mathcal{K}$  are defined by
\begin{equation}
k_{ij}=\int_{r_{i-1}}^{r_i}  \tau p(r_i,r_j)\beta(r_i,r_j) dr_j~~.
\end{equation}
The basic reproduction number $\mathcal{R}_0$  defined as the dominant eigenvalue of $\mathcal{K}$, is calculated numerically. 

The Figure (\ref{r0})  illustrates how $\mathcal{R}_0$ increases as the amount of biased mixing $\epsilon$ increases. 
When a new infection is introduced into the population, if there is even a slight amount of random mixing, someone in the high-risk population will quickly become infected \cite{hyman1988using}. Once this happens, then if the mixing is highly-biased (large $\epsilon$)  these  infected high-risk people will infect other high-risk people and the epidemic will grow rapidly (large $\mathcal{R}_0$).  If the mixing is close to random mixing (small $\epsilon$), then many of the secondary infections from the early high-risk infected people will have low-risk and the epidemic will grow slower (smaller $\mathcal{R}_0$).
The extreme sensitivity of $\mathcal{R}_0$ to  $\alpha$ also is an indication of the importance of educating high-risk individuals in consistent condom-use to prevent infecting others, and the need of the low-risk population in using condoms to protect themselves from infection.

\begin{figure}[t]
\centering
\includegraphics[width=.35\paperwidth]{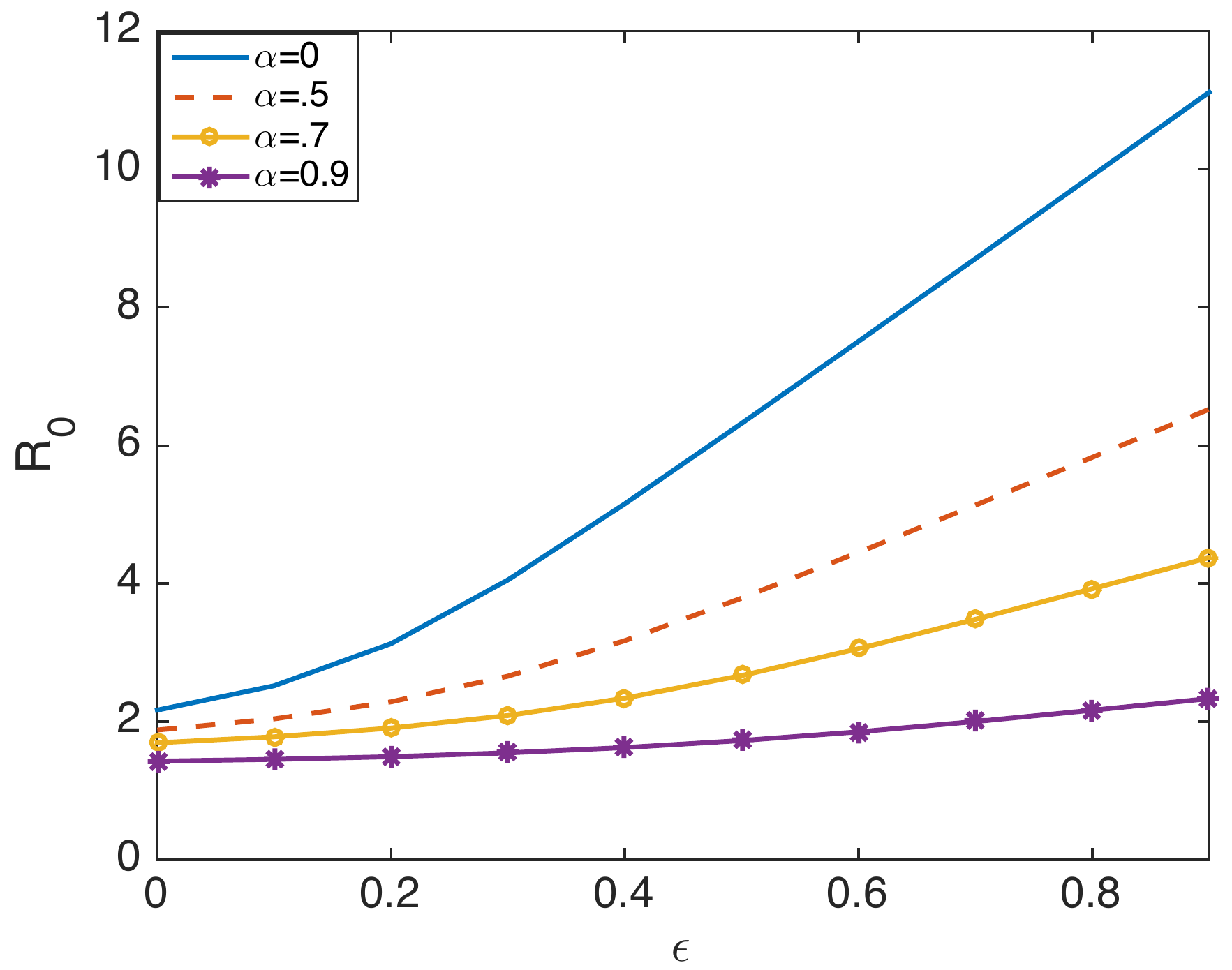}
\caption[$\pmb{\mathcal{R}_0}$ \textbf{vs preference level} $\pmb{\epsilon}$]{Basic reproduction $\mathcal{R}_0$ versus preference level $\epsilon$ for different condom-uses: the impact of $\alpha$ on $\mathcal{R}_0$ depends on mixing, for more biased mixing $\alpha$ has more impact on preventing the infection, however for less biased mixing, the impact of $\alpha$ decreases. As $\alpha$ decreases $\mathcal{R}_0$ increases much faster at bigger $\epsilon$s than smaller ones.}
\label{r0}
\end{figure}

\subsection{Endemic equilibrium}
The fraction of the population that are infected at the endemic equilibrium infection, {$i^*$}, depends upon the distribution of risk, $N(r)$, the mixing between people of different risk behaviors, as measured by $\epsilon$ in Equation (\ref{E:risk}), and the fraction of the acts condom used by high-risk people, as measured by $\alpha$ in Equation (\ref{E:cr}).  

The Figures \ref{fig:eps_a} - \ref{fig:eps_f} plot  the endemic infection distribution as a function of  risk $r$ for
\begin{enumerate}
\item Random mixing where $90\%$ of the partners are chosen randomly form the population, i.e $\epsilon=0.1$,
\item Balanced mixing  where all but $60\%$ of the partners have similar risk behavior, i.e $\epsilon=0.6$, and
\item Highly biased mixing  where all but $90\%$  of the partners have similar risk behavior, i.e $\epsilon=0.9$.
\end{enumerate}    
For all values of risk, the fraction of infected population  at steady state, $i^*$, decreases as  condom-use, $\alpha$, increases.  
In the Figures \ref{fig:eps_a}, \ref{fig:eps_c}, and \ref{fig:eps_e}, the $\alpha$ axis is between $\alpha=1$ where the high-risk population uses condoms all the time, to $\alpha=0$ where condoms are never used.  The fitted value $\alpha=0.69$ agrees with Beadnell et al. \cite{beadnell2005condom} studies. 
For low condom-use (small values of $\alpha$), $i^*$  increases with $r$ indicating that a higher percentage of the high-risk people are infected than the low-risk people.   
For most values of condom-use, $\alpha< 0.95$, having more partners (increase one's risk $r$), increases the likelihood of being infected.
\begin{figure}[htp]
\centering
\subfloat[$\pmb{3D~ \epsilon=0.1}$][]{\label{fig:eps_a}\includegraphics[width=0.45\textwidth]{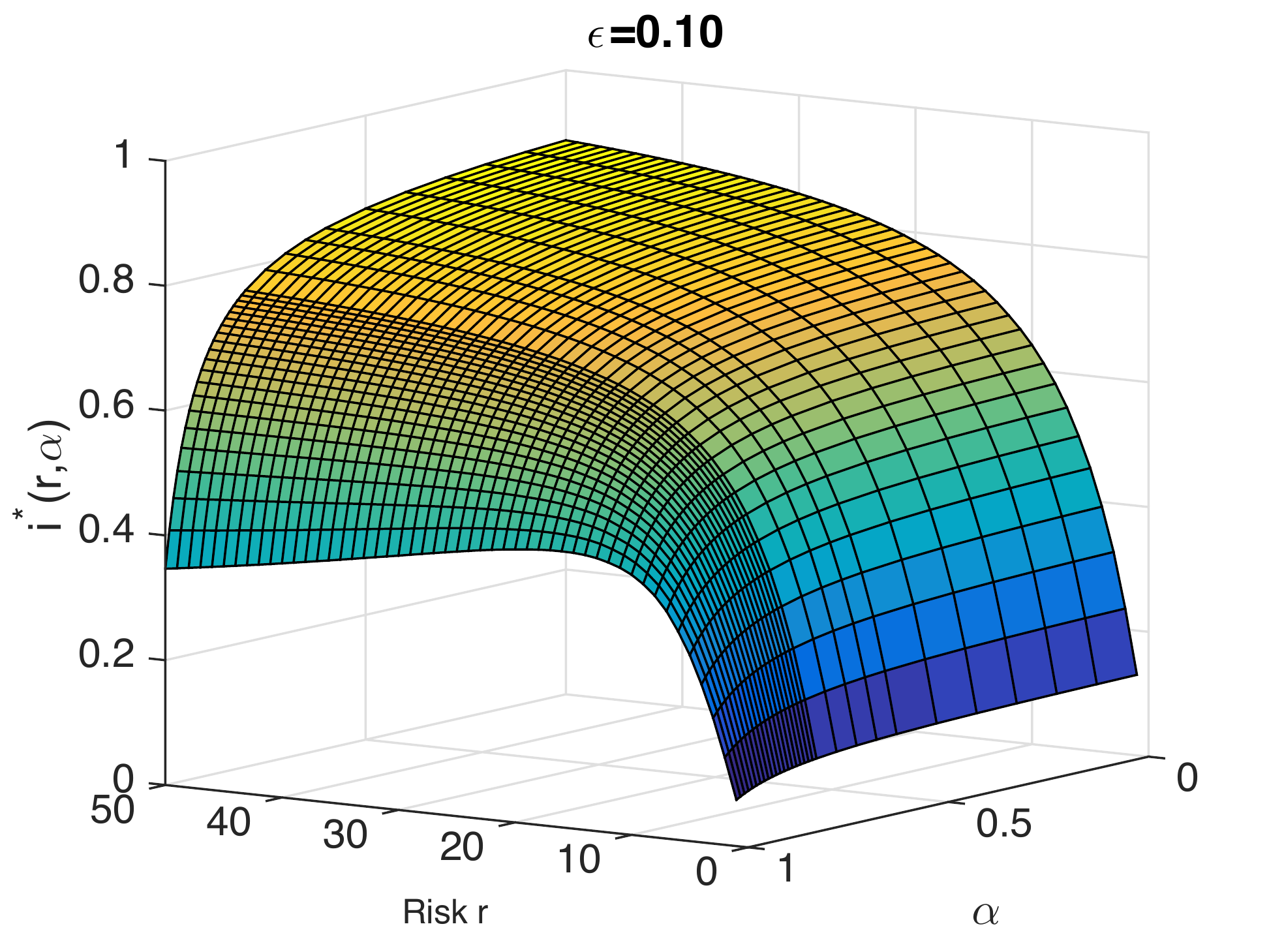}}
\subfloat[$\pmb{2D~\epsilon=0.1}$][]{\label{fig:eps_b}\includegraphics[width=0.45\textwidth]{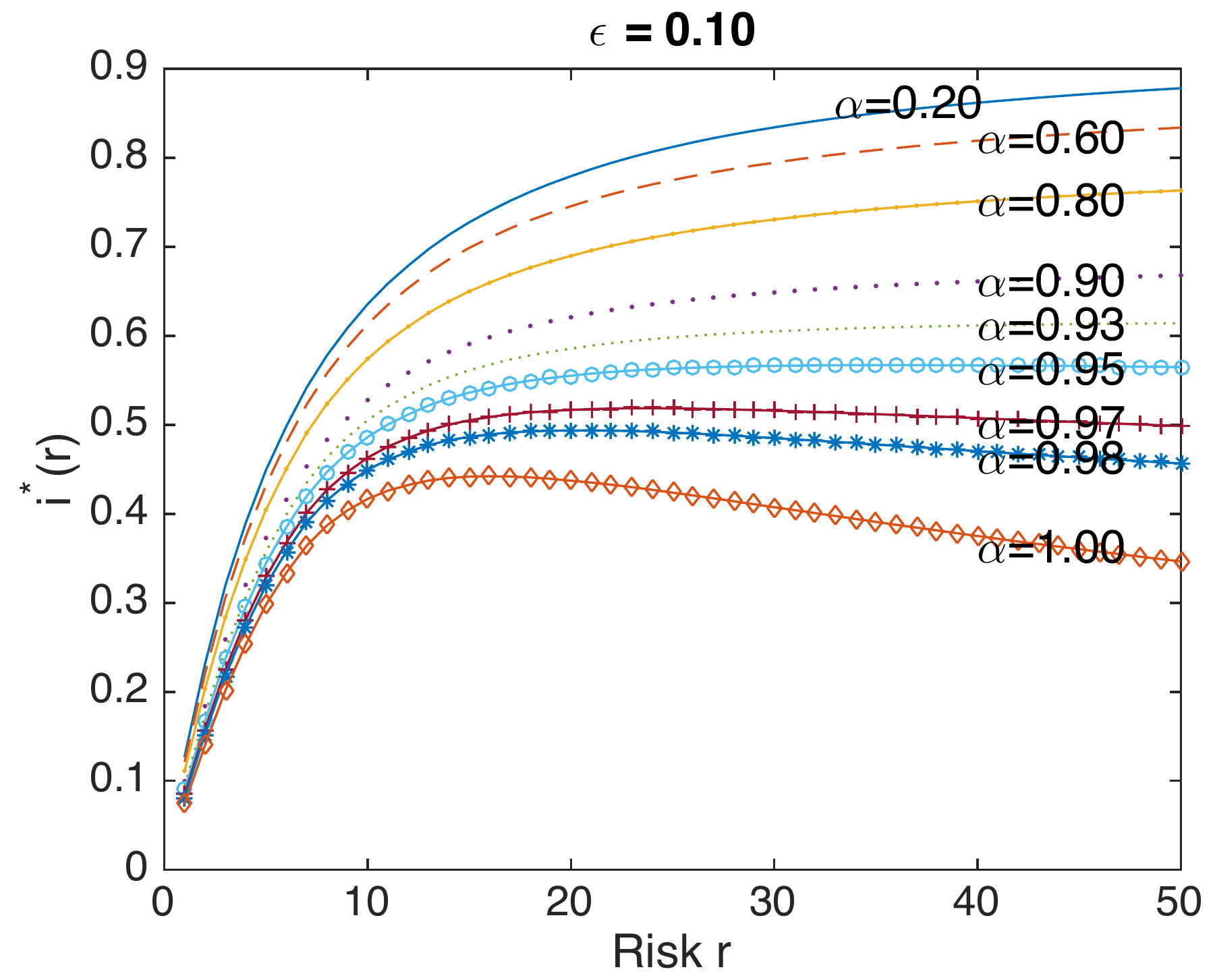}}\\
\subfloat[$\pmb{3D~\epsilon=0.6}$][]{\label{fig:eps_c}\includegraphics[width=0.45\textwidth]{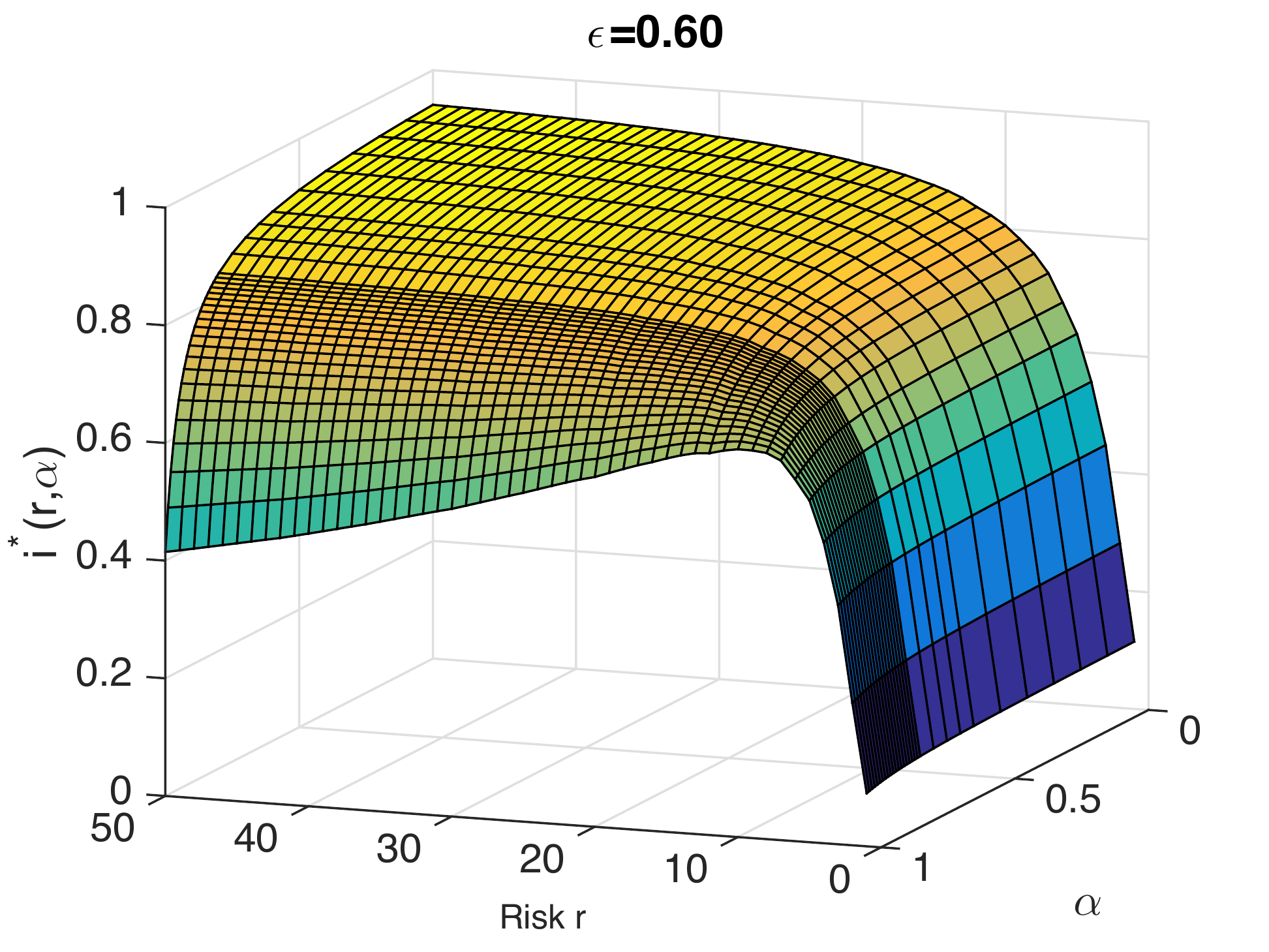}}
\subfloat[$\pmb{2D~\epsilon=0.6}$][]{\label{fig:eps_d}\includegraphics[width=0.45\textwidth]{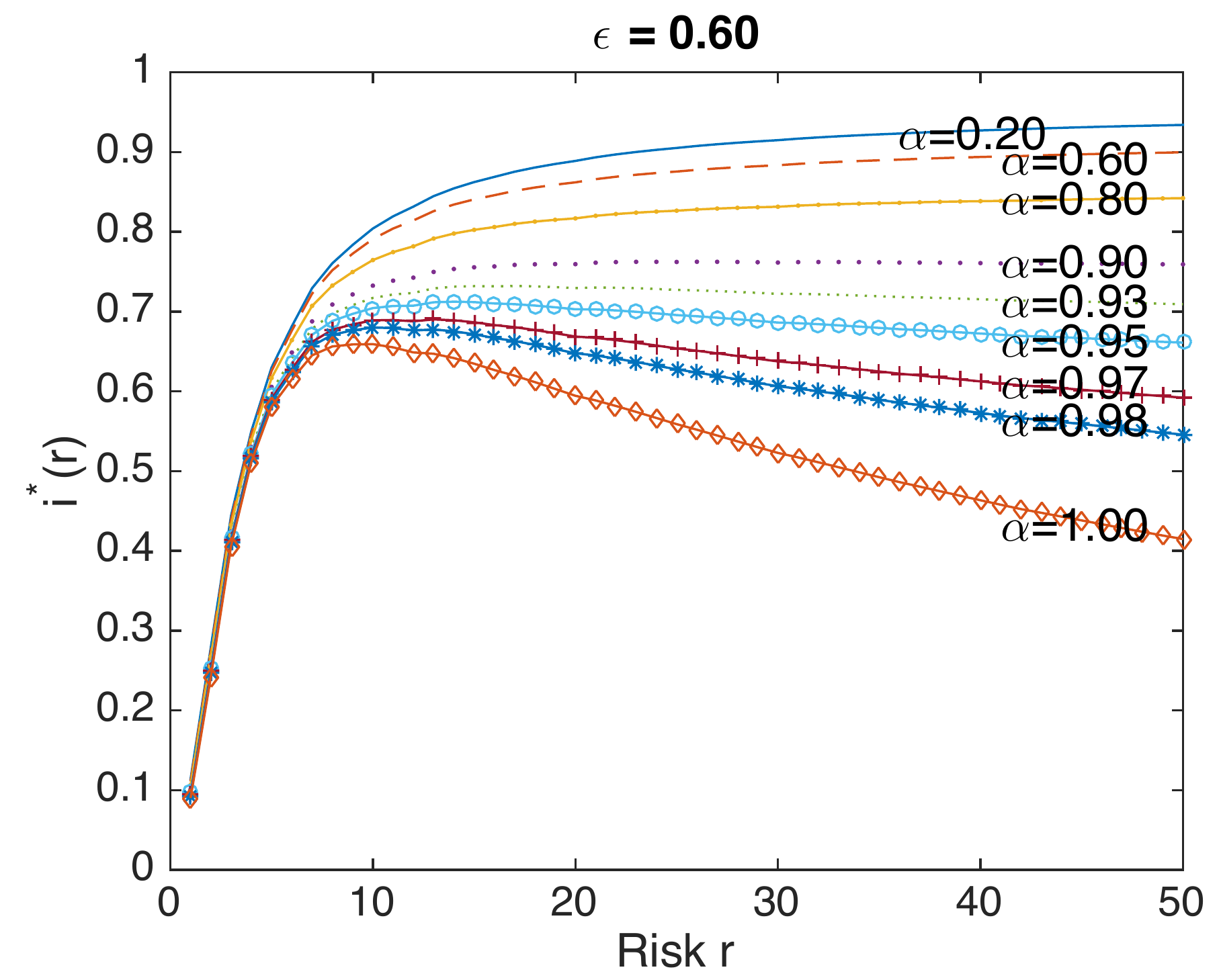}}\\
\subfloat[$\pmb{3D~\epsilon=0.9}$][]{\label{fig:eps_e}\includegraphics[width=0.45\textwidth]{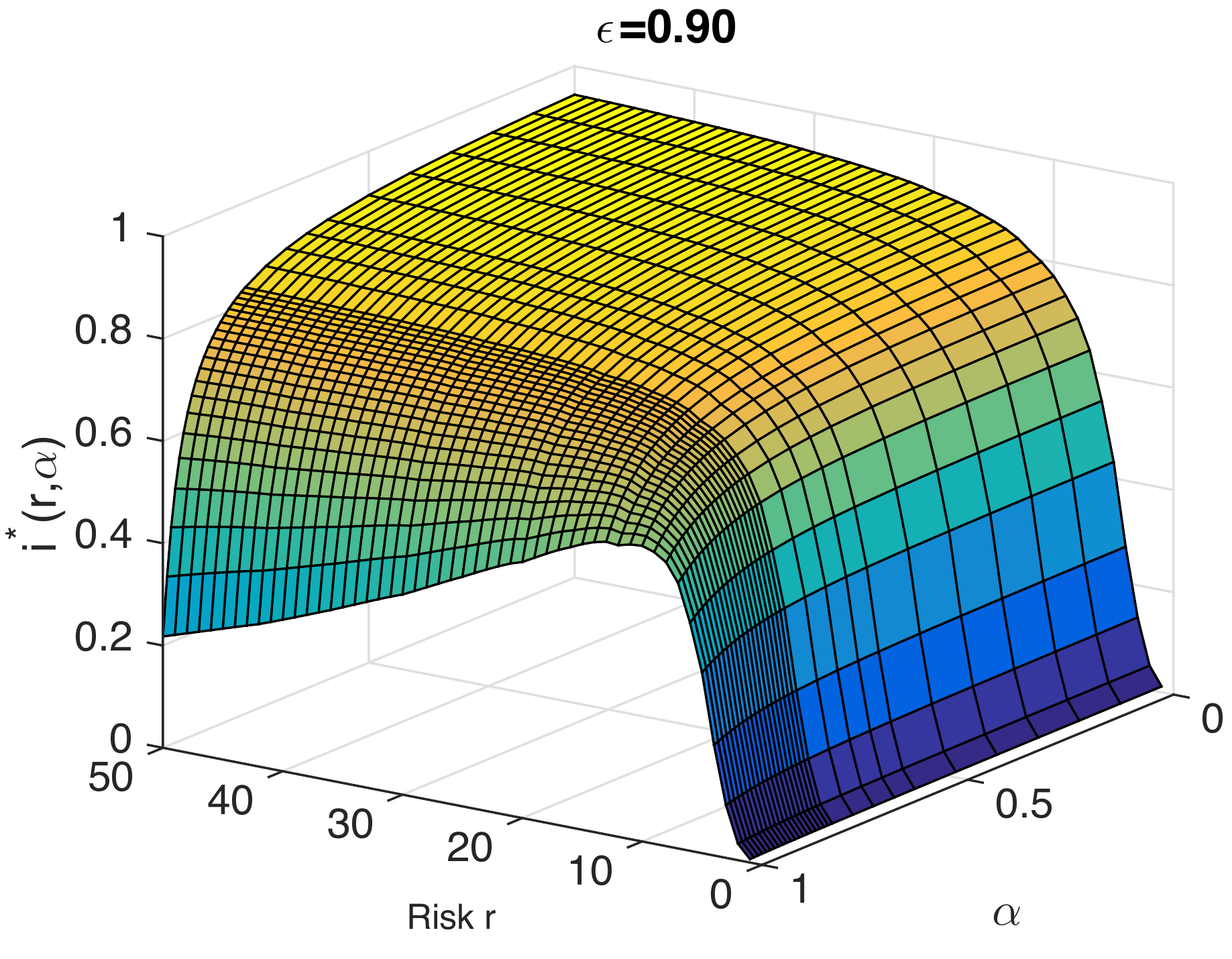}}
\subfloat[$\pmb{2D~\epsilon=0.9}$][]{\label{fig:eps_f}\includegraphics[width=0.45\textwidth]{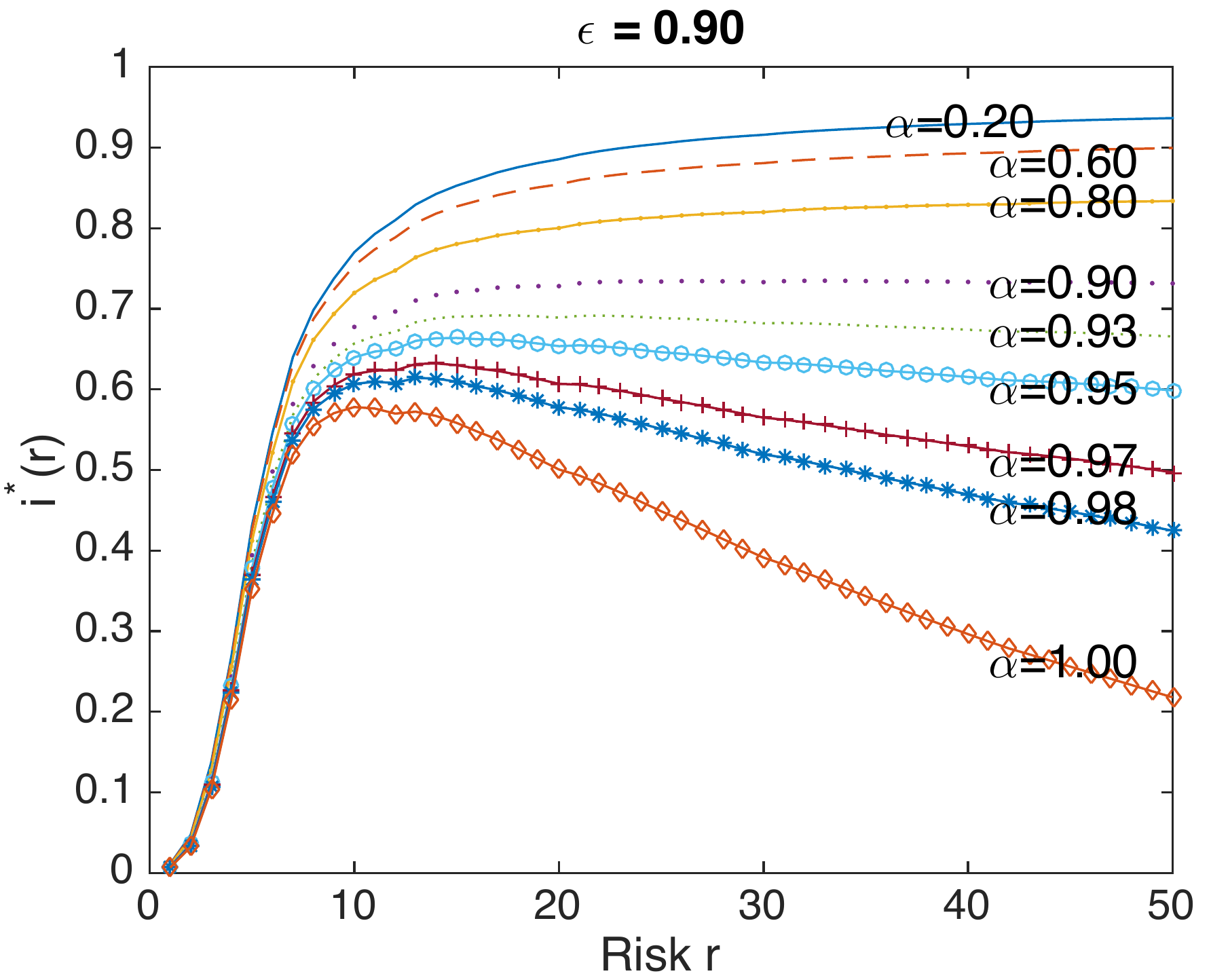}}
\caption[\textbf{Infected fraction vs risk r and preference level} $\pmb\epsilon$]{Surface plots of fraction of the infected population, $i^*(r,\alpha)$, at steady state versus $r$ and $\alpha$, for preference levels \subref{fig:eps_a} $\epsilon=0.1$, \subref{fig:eps_c} $\epsilon=0.6$, and \subref{fig:eps_e} $\epsilon=0.9$, and slices of the 3D surfaces versus $r$, $i^*(r)$, for different $\alpha$ values and preference levels  \subref{fig:eps_b} $\epsilon=0.1$,  \subref{fig:eps_d}, $\epsilon=0.6$, and  \subref{fig:eps_f} $\epsilon=0.9$: when $\alpha<0.95$,  the  $i^*$ increases with risk $r$,  when  high-risk people use condoms most of the time, $\alpha>0.95$, then $i^*$  decreases  in the higher-risk groups as a function of $r$.}\label{fig:eps}
\end{figure}

When the high-risk people use condoms most of the time, $\alpha \geq 0.95$, while the lower-risk population only uses condoms occasionally, this trend is reversed.   This effect is strongest when the mixing is highly biased ($\epsilon=0.9$) i.e when most of a person's partners have very similar risk.  We note that although this is mathematically consistent with our model, it is in an unrealistic parameter range for the population. 

To quantify the effectiveness of condoms at reducing the prevalence, in  Figure (\ref{fig:sens}) we show fraction of the total infected  population as a function of 
$\alpha$, $i^*_T=\int I^*(r)dr/ \int N(r)dr$ 
for different preference levels $\epsilon$. 
There is a threshold for $\alpha$ to drops the epidemic down, and  this threshold increases as mixing level $\epsilon$ increases. For example  when level of mixing is $\epsilon=0.1$ (Random mixing), to drop the prevalence drastically, $\alpha$ needs to be around $70\%$, however, for when $\epsilon=0.6$ (Balanced mixing) this threshold is $\alpha=0.9$, but for $\epsilon =0.9$ (Highly biased mixing) threshold disappears  which means condom-use by high-risk  individuals does not have impact on controlling the prevalence. The reason is when people mix more randomly, then high-risk  people have many partners with different risks, therefore, using more condom by them save this many partners with different risks, however, when mixing tends to be more biased, $\epsilon=0.9$, most of the partners of high-risk  people are themselves high-risk, which this case this group does not take heavy toll on the prevalence, no mater what fraction of their acts they use condom.

\begin{figure}[tb]
\centering
\vspace{10pt}
 \includegraphics[width=3in]{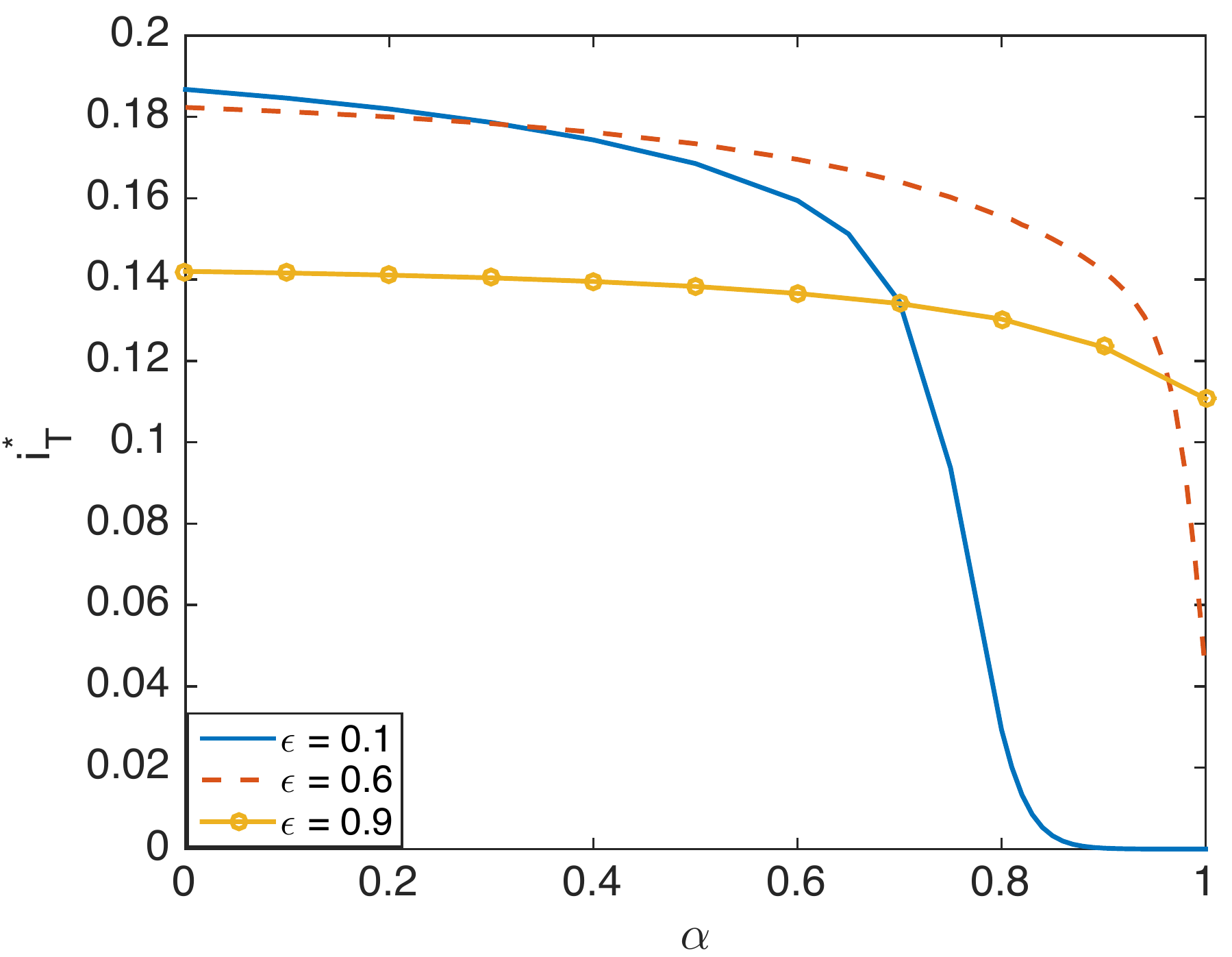}
\caption[\textbf{Total infected fraction vs condom-use}]{ Total fraction of the  population that is infection $i^*_T=\int I^*(r)dr/ \int N(r)dr$  decreases as condom-use $\alpha$ increases for random mixing $\epsilon=0.1$, combined mixing $\epsilon=0.6$, and highly biased mixing  $\epsilon=0.9$ in partnership selection. 
Note that when people tend to pick partners randomly,  $\epsilon=0.1$, and the population uses condoms most of the time,  $\alpha > 0.8$, then condom-use is an effective way to control the epidemic.  
  }\label{fig:sens}
\end{figure}

\subsection{Condom-use scenarios}
We compare three condom-use scenarios--explained in Table (\ref{table:condom_scenario})-- to quantify their impact on reducing the prevalence of the STI at the endemic equilibrium. 

\begin{table}[htbp]
\centering
\begin{tabular}{p{2cm}p{12cm}}
\hline
\cline{1-2}
\textbf{Scenario} & \textbf{Description} \\
\hline
 \vspace{.1mm}\textbf{NCU} & No Condom-Use: the unrealistic case where condoms are never used is included as a reference case. \\
 \hline
 \vspace{.3mm} \textbf{SCU} &Some Condom User: the population is divided into condom users and non-users where  in each risk group, $\hat c$ fraction of $N(r)$ of the people use condom all the time, while $(1-\hat c)$ fraction of them never use a condom.\\
 \hline
  \vspace{.1mm}\textbf{FCU} & Fraction Condom Users: everyone uses a condom with probability $\bar c$ in each act, that is,  $c(r)=\bar c$ is constant.\\
 \hline
 
    \vspace{.2mm}\textbf{RCU} &  Risk-based Condom-Use: the condom-use is a function of risk based on the function $c(r, \alpha)$ in Equation (\ref{E:cr}) and the scaling parameter  $\alpha$ is chosen so the average condom-use  $<c(r,\alpha)> = \bar c.$ \\
\hline
\cline{1-2}
\end{tabular}
\caption[\textbf{Condom-use scenarios}]{Different condom-use scenarios. }
\label{table:condom_scenario}
\end{table}

To study the influence of different scenarios on the total prevalence, we recorded prevalence at time $t$ for each scenario. In Figure (\ref{scenarios}), the prevalence for all scenarios are shown as a function of time $t$ for $\hat c=\bar c=0.37$. When condoms are never used (NCU),  the prevalence tends to $i(t) \rightarrow i_T^*=0.18$.  The prevalence is reduced the most for SCU when $\hat c=37\%$ of population uses condoms all  times. In this case, we observe a reduction of $7\%$ of prevalence at steady state. The reason is that condom-use comes by act, and when $\hat c=37\%$ of population use condoms in all their acts, then $37\%$ of population are rarely infected. On the other hand, for scenario {FCU}, i.e when all people use condom $\bar c=0.37$ of the  acts, the reduction of prevalence is very weak, almost  $0.5\%$, and this is because the model is applied for Ct as a highly infectious STI, that is the chance of catching or transmitting the infection by one act is high, therefore, even if all people use condom partially, there is a high chance of infection transmission in the acts which condom is not used.

In the scenario RCU, i.e using Equation (\ref{E:cr}) as a condom-use function for when $\bar c=0.37$ which results $\alpha=0.75$, the prevalence at steady state reduces by $2\%$. In this scenario, people on average use a condom in $37\%$ of their acts, however, high-risk people are more likely to us a condom. 
As we observe, for this scenario, the growth of infection is slower than the other scenarios and it takes more time (around $10$ years) to reach steady state. This is because, high-risk  individuals, who are mostly responsible of spreading infection, use condom more and then transmit or catch infection less than the other scenarios, therefore, it takes time for them to transmit or catch infection.
\begin{figure}[tb]
\centering
\includegraphics[scale=.5]{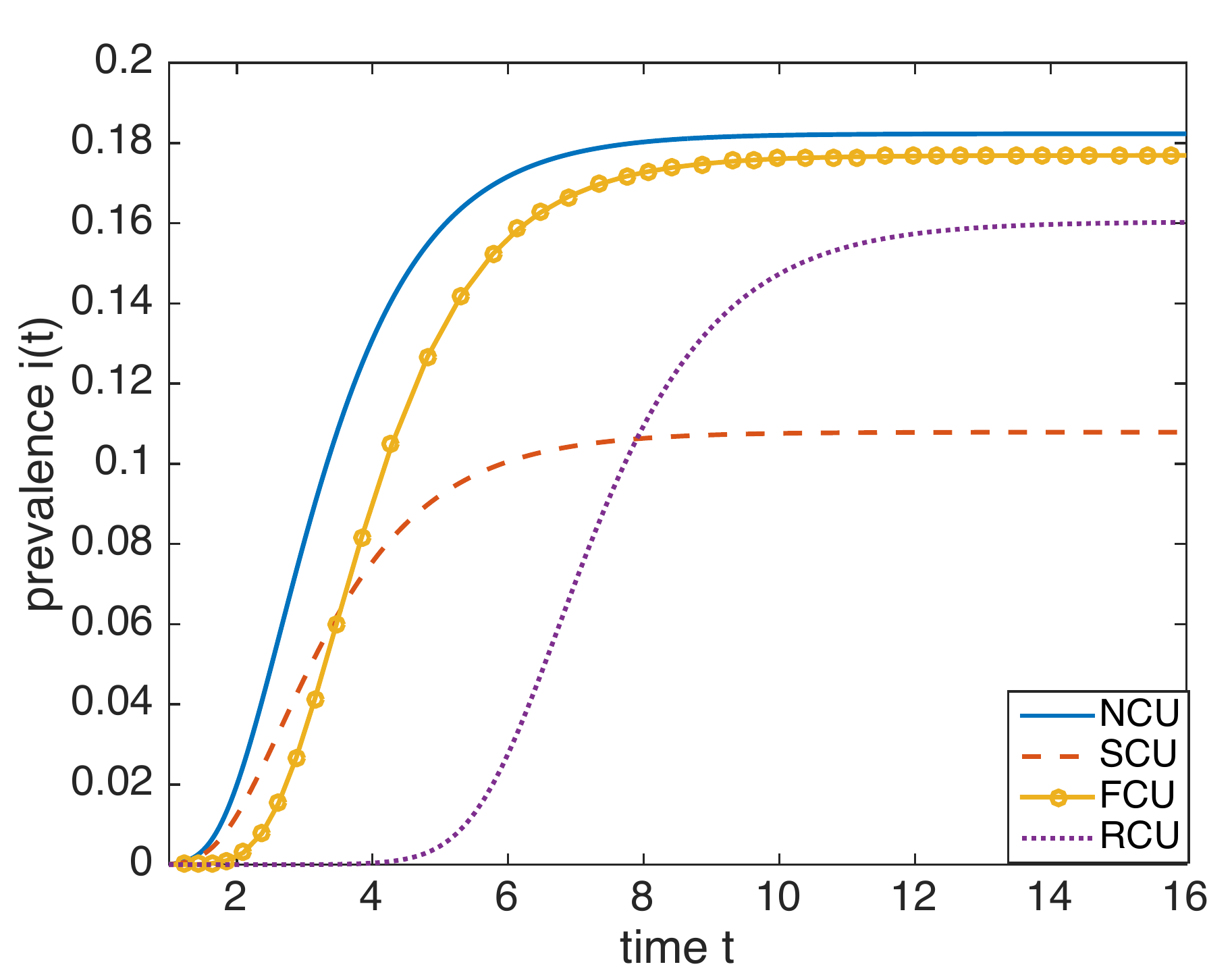}
\caption[\textbf{Ct prevalence vs time for different condom-use scenarios}]{The prevalence of Ct as a function of time for different Scenarios: NCU=no condom-use, SCU=sometime condom user, FCU=fraction condom user, RCU = risk-based condom-use where $\hat c = \bar c=0.37$ and $\alpha=0.74$ and $\epsilon=0.8$. }
\label{scenarios}
\end{figure}

\section{Discussion and Conclusions}
In this chapter we created a continuous-risk SIS transmission model for the spread of Ct with biased mixing partnership selection to investigate the impact that condoms  can have in controlling their spread. The model incorporates functions describing mixing patterns as well as condom-use by individuals based on their risk. 
The  mixing between people of different risks was modeled as a combination of random mixing and biased mixing, where people prefer partners of similar risk \cite{hyman1989effect}. 
 Our model includes the observed correlation between condom-use and the number of  partners among adolescents and young adults \cite{beadnell2005condom,kabiru2009correlates,lescano2006condom} where people with higher number of partners are more likely to use condoms. Based on that, we fitted an increasing function of risk for condom-use to the information provided in \cite{beadnell2005condom}.
 We assumed  that people with more partners (higher risk $r$)  were less picky about the risk of their partners than people with fewer partners. We modeled this  increased acceptance of the risk of the partners by increasing the standard deviation of risk of the partners as the square root of risk.  

The endemic infection equilibrium is more  sensitive to the rate that the people with bigger risk $r$ -where there are fewer acts per partnership- use condoms than it is for people with smaller risk $r$ -where there are more acts per partnership. 
When the probability  of infection is high for a single act, as it is in our simulations, then the number of people an infected person infects is more correlated to the number of partners that he/she has unprotected sex with, than the number of acts they have.    
Our model assumes that people with fewer partners have more acts per partnership than people with more partners.  The risk of infection is high for a single act where condoms are not used, then even failing to use condoms a few times in a partnership is enough to pass on the infection.   
That is, the model indicates increasing the fraction of times that  people with many partners use condoms could be an effective strategy in mitigating Ct.

The simulations quantified  the rate that Ct spreads through a population based on different distributions of condom-use as a function of the population risk. We estimated the impact of condom-use by higher risk individuals  on the distribution of endemic equilibrium. We found that  for almost  all amount of  condom-use, having more partner   increases the likelihood of being infected,  the infection prevalence is greatest in the higher risk populations and it is always a good mitigation strategy to increase condom-use in these populations to mitigate an epidemic. This effect is  stronger for when people select most of their partners preferentially.  

 We also observed that the total prevalence does on drop drastically unless the mixing tends to more random and  high-risk individuals  use condom  in at least $70\%$ of their acts. However, when the mixing tends more toward  biased mixing, prevalence at steady state looses its sensitivity to condom-use. 
Our simulations, also,  demonstrate that when level of biased mixing is low, then it is also an effective mitigation strategy to increase condom-use in the the lower risk populations, as shown in Figure (\ref{fig:sens}).

We derived the  basic reproduction number $\mathcal{R}_0$  using the  next generation approach \cite{diekmann1990definition} and used simulations to show the early growth of the epidemic depends on mixing pattern and condom-use.  For very biased mixing, when people pick their partners to have similar risk, condoms are an effective approach to mitigate the spread of Ct. However, when the population  mixed more randomly, then condom-use is less effective in controlling the epidemic. 

The model investigates the role of the risk-structure and importance of homophily in the mixing between people with different risk on the spread of the epidemic.  
We formulated this simplified model because it is easier to analyze and can provide insight into the dynamics of the more complex models that also account situations where these assumptions do not hold.

The current model does not distinguish between men and women.   In heterosexual populations, this approximation  is only appropriate when 
the mixing between men and women is symmetric and the infection prevalence is approximately the same in both men and women.  
We are extending the model to a heterosexual mixing model, similar to our previous model in Chapter. \ref{multi} where we only included two risk groups .
The heterosexual model can be used to more closely match partnership studies that show, on average, a sexually active man will have more partners 
than the sexually active women in the adolescents and young adult population.  It can also be used to study the relative effectiveness of increasing the screening for men, women, or both sexes for Ct when there are limited resources. 

We recognize that a more realistic approach is needed for guiding public health policy.  This realistic model would track behavior change and mixing based on a person's age. For example, when an individual is infected and treated, then they are more likely to change their behavior to prevent being infected again. Behavior change is an important assumption which could be added in this model by including risk-based partial derivative terms in the model. 
This extension would make the model significantly more complex and would not be a good  model as using an agent-based model that can follow the infection status of each individual.

The analysis and simulations of our continuous-risk model has led us in creating a more appropriate model for studying the impact of screening, partner notification, partner treatment, condom-use, and behavior change in controlling the spread of Ct.  In  Chapter. \ref{abm} we will formulate a stochastic Monte Carlo - Markov Chain (MCMC) agent-based bipartite disease-transmission network-model where the men and women are the network nodes and sexual acts are represented by edges between the nodes.  
The network captures the distributions for number of partners that men and women have, and the correlations between the number of partners that a person has and the number or partners their partners have.  These partnership distributions, and the transmission parameters, are based on survey data for the $15-25$ year-old  AA community in New Orleans.  

Unlike the continuous-risk model, the network model can track an individual's behavior change, such as  condom-use  after being treated for infection, the affect of aging on number of partners a person has, or the differences in condom-use between primary and casual partners.  
\chapter{Network Model}\label{abm}

Up to now we have introduced   an ordinary differential equation  Ct transmission model  that captures the most essential transitions through an infection with Ct to assess the impact of Ct infection screening programs, Chapter \ref{multi}.
We also have provided a selective sexual mixing hybrid differential/integral equation Ct model  to capture the heterogeneous mixing among people with different number of partners, Chapter \ref{continuous}. 
An alternative, and more realistic model, is to represent the sexual network by a graph where each individual within a population is a node.  The connecting edges between the nodes denote sexual relationships that could lead to the transmission of infection.
These sexual mixing networks can capture the heterogeneity of whom an infected person can, or cannot, infect.

In this Chapter we create an agent-based heterosexual network model of Ct transmission to evaluate potential intervention strategies for reducing the Ct prevalence in urban cities, such as New Orleans \cite{azizi2018using}.
We construct a  network  model that mimics the heterosexual behavior obtained from a sexual behavior survey of the young adult AA population in New Orleans and model Ct transmission as a discrete time Monte Carlo stochastic event on this network.  
The model is initialized to agree with the current New Orleans Ct prevalence.
We use sensitivity analysis to quantify the effectiveness of different prevention and intervention scenarios, including screening,  partner notification -which includes partner treatment, and partner screening (contact tracing)- and social friend notification, and rescreening \cite{hyman1997disease,azizi2018using}. This model structure allows the sexual partnership dynamics, such as partner concurrency, sexual histories of each person, and complex sexual networks, to be governed at the individual level.

In this Chapter we first review the data used for generating network, then we explain how we generated our networks, and the last Section would be transmission model on the networks and testing different interventions.

\section{The New Orleans Sexual Activity Survey Data}
Two types of studies were conduced to estimate the Ct infection among local people and  to assess the effectiveness of biomedical and behavioral intervention programs  in the general heterosexual population reside in New Orleans.

A community-based pilot study,  called "Check-it", was performed   among African American, AA, men ages $15-25$ years old in a typical three-month period. The overall $n=202$ men participant were asked about their age, number partners in the past three months, history of Ct test results, as well as living habits.
 Meanwhile, their partners information  have been collected by asking questions referring to the status of each relationship such as, the partner's age, strength of relationship, first and last time of intercourse, and the possibility that their partner have intercourse with others \cite{kissinger2014check}. 
 
An internet pregnancy/STI prevention study, called  "You Geaux Girl", was conduced among AA  women ages $18-21$ years old. A total $n=414$ participants have  been asked the similar questions. Additionally,  more partners information such as total number of sexual act for each relationship and  number of partners their partner might have were asked from the participants \cite{green2014influence}.

\subsection{Distributions for  number of partners}
The first and most vital information for constructing network is distribution of number of partners for men and women. Therefore, individuals were asked about how many partners they have had during the past three months. The Figure (\ref{fig:number of partners})
shows the result  for both men and women participants.
\begin{figure}[H] 
\centering
\includegraphics[width=13cm, height=6cm]{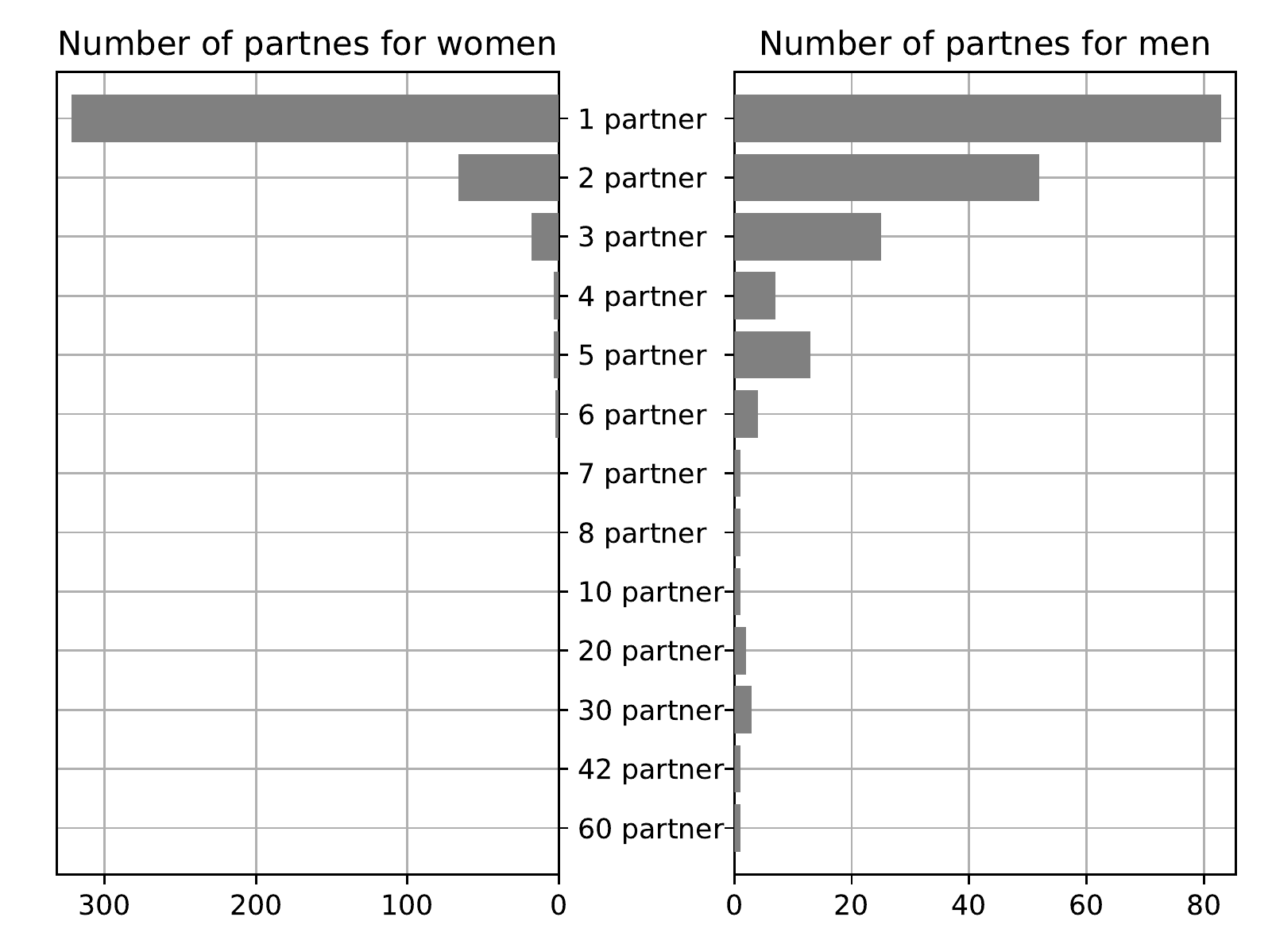}
\caption[\textbf{Bar plot for number of partners}]{Bar plot of number of partners for men and women for the last three months.}
\label{fig:number of partners}
\end{figure}

We fitted a Generalized Pareto distribution to  the data and fill the gaps in data.
The Figure (\ref{fig:part_men})  shows the probability of having $x$ partners for  men and women and also their  fitted distributions.

\begin{figure}[htp]
\centering
\subfloat[\textbf{Number of partners for men}][]{\label{fig:m}\includegraphics[width=0.5\textwidth]{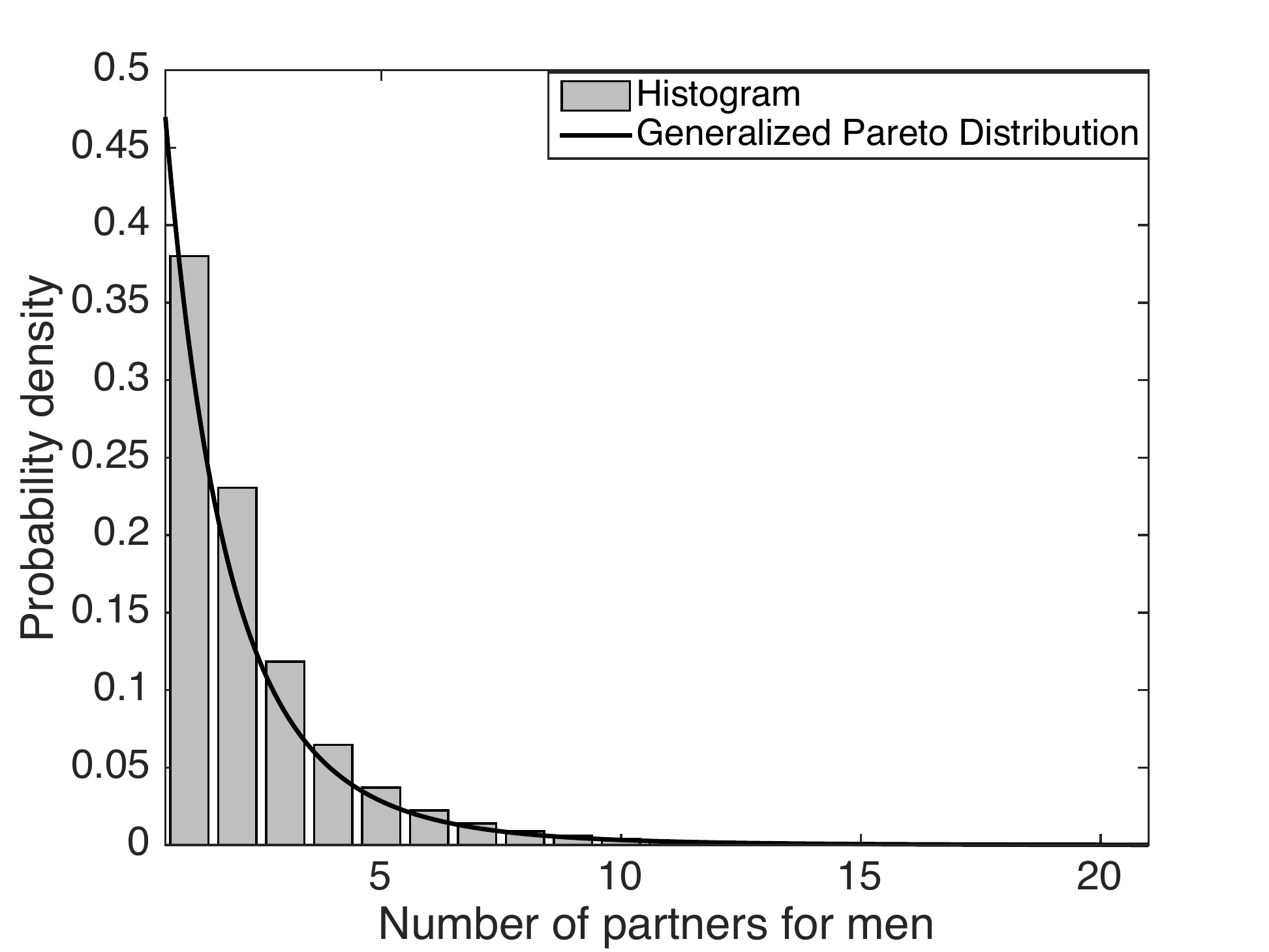}}
\subfloat[\textbf{Number of partners for women}][]{\label{fig:w}\includegraphics[width=0.5\textwidth]{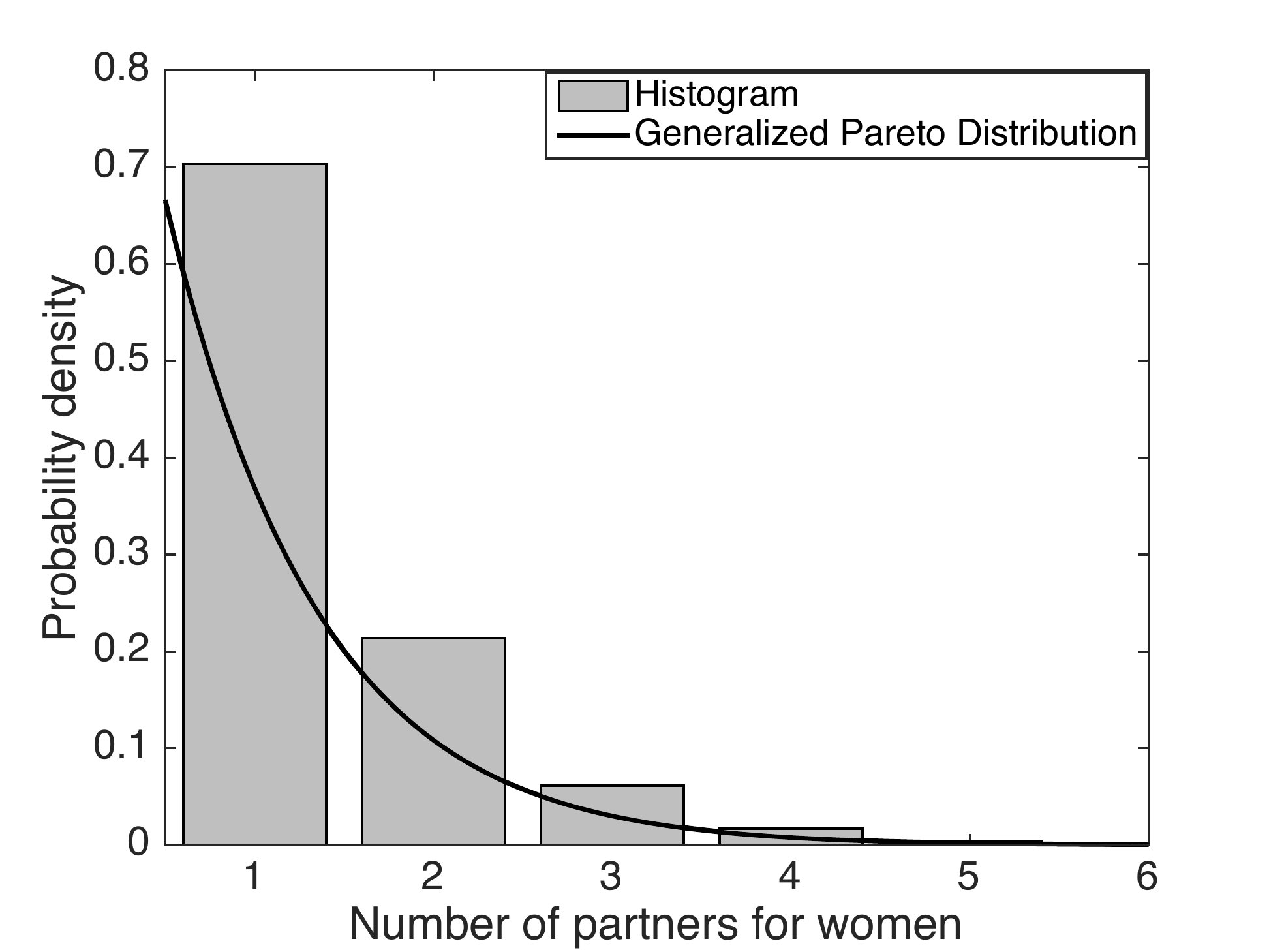}}\\
\caption[\textbf{Bar plot of }]{Reference number of partners for men (\ref{fig:m}) and women 
(\ref{fig:w}) participant with the probability of having x partners.}\label{fig:part_men}
\end{figure}

To generate a consistent  network, the  number of men and women should be selected properly: if  $N_m$ men in the population have an average $\bar p_m$ partnerships with women, and  $N_w$ women have an average $\bar p_w$ partnerships with men, then  for  consistency, we should have a total of $N_m\bar p_m = N_w \bar p_w$ partnerships in the population. In other word the number of men and women in the network should follow consistency condition  $$\frac{N_m}{N_w}=\frac{\bar{p}_w}{\bar{p}_m}=\rho.$$

Using the fitted distribution shown in Figure (\ref{fig:part_men}) we can find the data $f_i$  as the fraction of individuals having $i$ partners for both men and women and also expected values
$\bar{p}_m$ and $\bar{p}_w$, and then we can find the fraction of active men and women in  population with arbitrary size.

\subsection{Distribution for number of partners of partners}
 In the "You Geaux Girl" survey,  women 
participants  were asked about the number of partners their partners have. The Table (\ref{tab:degdeg1}) is the extracted result  from survey: each element $(i,j)$ in the Table is the number of  men with $j$ partners which women with $i$ partners have.
\begin{table}
\begin {center}
\begin{tabular}{c c c c c c c c c c c c c c c c c c c}
\hline
& & \multicolumn{15}{c}{Number of Partners for  Partners of  Women Participants}\\  
\hline
& & 1 & 2  & 3  & 4  & 5 & 6 & 7 &8  & 9 &10 &11 &12 &13 &16 &21\\  
\hline
Number & 1 & 197 & 56 & 28 & 9  & 2 & 3 & 0 & 1 & 0 & 0 & 0 & 0 & 0 & 1 &0\\
of & 2 & 46  & 34 & 23 & 3  & 2 & 5 & 2 & 0 & 1 & 1 & 0 & 0 & 1 & 0 &0\\
Partners & 3 & 16  & 11 & 3  & 10 & 2 & 1 & 1 & 0 & 0 & 0 & 0 & 0 & 0 & 0 &1\\ 
for & 4 & 4   & 2  & 4  & 0  & 0 & 1 & 0 & 0 & 0 & 0 & 1 & 0 & 0 & 0 &0\\ 
Women & 5 & 0   & 2  & 3  & 0  & 0 & 4 & 0 & 0 & 0 & 0 & 0 & 1 & 0 & 0 &0\\ 
Participants & 6 & 3   & 0  & 5  & 1  & 0 & 0 & 1 & 0 & 0 & 0 & 0 & 1& 0 & 1 &0\\ 
\hline
\end{tabular}
\captionof{table}[\textbf{Joint-degree table from data }]{Joint-degree Table of sample data for women participant in the last three months.}
\label{tab:degdeg1}
\end{center}
\end{table}
 In network  terminology each element $(i,j)$ counts the number of edges between women with degree $i$ and men with degree $j$, therefore, it is trivial to see that for a consistent Table, if men with degree $j$ were asked about the number of their partners with degree $i$, all of them would provide the same answer $(i,j)$.  However, this is not always true for sample data, thus  we have to implement this consistency condition for joint-degree Table for the population. To do that, at first we generate the joint-degree Table for the population, and then we make it consistent. 
 
The Table (\ref{tab:degdeg1}) is an sparse table, for most of women we have no high-degree partners, which  is not correct,  because from degree distribution for men in the previous subsection  we have men with more than $10$ partners. This can happen because of lack of data or information. Therefore, we have to fill the incorrect zeros in the table. In order to do that, we divide each row of table to sum of its elements, that is, we divide row $i$ by $\sum_{j}(i,j)$. The new elements are fraction of edges between women with  $i$ partners and men with  $j$ partners. Then  we fit these fractions to Generalized Pareto distribution to find probability of an edge between a woman with $i$ partners and a man with $j$ partners, Table (\ref{tab:prob-digdig}).
 \begin{table}[htp]
 \begin {center}
\scalebox{0.95}{\begin{tabular}{c c c c c  c}
\hline
&  \multicolumn{5}{c}{Probability  of Partner's Partner for Women Participants}\\  
\hline
& & 1 & 2  & ...  &n\\  
\hline
Probability  & 1 & $p_{11}$ & $p_{12}$ & ... & $p_{1n}$\\

of Partners & 2 & $p_{21}$  & $p_{22}$ & ... & $p_{2n}$\\

for  & $\vdots$ & $\vdots$  & $\vdots$ & $\vdots$  & $\vdots$\\ 

Women & m & $p_{m1}$   & $p_{m2}$  & ...   & $p_{mn}$\\ 
\hline
\end{tabular}}
\caption[\textbf{Joint-degree probability}]{Joint-degree probability for women participant in the last three months.}
\label{tab:prob-digdig}
\end{center}
\end{table}

Now our goal is to find the number of edges between women with i partners and men with  j partners: we have $f^w_iN_w$ women with i partners, where $f_i^w$ is Generalized Pareto probability of having $i$ partners for women and was computed in the previous subsection, and $N_w$ is the population size of women. These women make total $if^w_iN_w$ edges which $p_{ij}$ fraction of these edges are between them and  men with j partners. Therefore, we will have total $if^w_iN_wp_{ij}$ edges between women with i partners and men with j partners. The value $if^w_iN_wp_{ij}$ defines the  $(i,j)th$
 element of this  fitted  joint-degree Table. 

If men were asked about this question, we should have consistent table, but it is not consistent, therefore,  we are using constrained optimization to refine elements $p_{ij}$ to make the table consistent. 
In fact, we solve minimization problem
  \begin{equation*}
\begin{aligned}
& \underset{}{\text{minimize}}
& & \|P-\tilde{P}\|^2,\\
& \text{subject to}
& & \sum_{j}{if^w_iN_w\tilde{p}_{ij}}=if_i^wN_w,~~i=1,...,m,\\ & & and &~~~~~~\sum_{i}{if^w_iN_w\tilde{p}_{ij}}= j f_j^mN_m,~~j=1,...,n.
\end{aligned}
\end{equation*} 
where $P=(p_{ij})$, $\tilde{P}=(\tilde{p}_{ij})$, $f_j^m$ is Generalized Pareto probability of having $j$ partners for men, and $N_m$ is population size of men. By definition of $\rho=\frac{N_m}{N_w}$ we can simplify the above optimization problem to  
  \begin{equation*}
\begin{aligned}
& \underset{}{\text{minimize}}
& & \|P-\tilde{P}\|^2,\\
& \text{subject to}
& & \sum_{j}{\tilde{p}_{ij}}=1,~~i=1,...,m,\\ & & and &~~~~~~\sum_{i}{if^w_i\tilde{p}_{ij}}= \rho j f_j^m,~~j=1,...,n.
\end{aligned}
\end{equation*} Using least square method  we come up with the new Table of consistent probability of existing  edge between women with $i$ partners and men with $j$ partners, Table (\ref{tab:degdeg2}).
\begin{table}[htp]
\centering
\resizebox{.79\columnwidth}{!}{\begin{tabular}{ccccccc}
\toprule[1.5pt]
\diaghead{\theadfont Diag ColumnmnHead II}{\textbf{Degree}\\ \textbf{of men}}{\textbf{Degree of} \\\textbf{ women}}
& $1$ & $2$   & $3$ & $4$   & $5$   & $6$   \\[0.50ex]
\hline
 $1$ & $0.3775$   & $0$    & $0$     & $0$  &    $0$  &  $0$  \\[0.50ex]
   \hline
 $2$ & $0.1880$   & $0.0205$    & $0$     & $0$  &    $0$  &  $0$  \\[0.50ex]
   \hline
  $3$ & $0.0988$   & $0.0396$    & $0.0162$     & $0.0033$  &    $0$  &  $0$  \\[0.50ex]
   \hline 
   $4 $ & $0.0833$   & $0.0734$    & $0.0285$     & $0.0087$  &    $0.0034$  &  $0.0010$  \\[0.50ex]
   \hline
   $5 $ & $0.0313$   & $0.0264$    & $0.0162$     & $0.0083$  &    $0.0037$  &  $0.0014$  \\[0.50ex]
   \hline 
   $6$ & $0.0207$   & $0.0209$    & $0.0134$     & $0.0066$  &    $0.0025$  &  $0.0003$  \\[0.50ex]
   \hline  $7 $ & $0.0150$   & $0.0167$    & $0.0107$     & $0.0046$  &    $0.0008$  &  $0$  \\[0.50ex]
   \hline 
  $8 $ & $0.0116$   & $0.0135$    & $0.0082$     & $0.0026$  &    $0$  &  $0$  \\[0.50ex]
   \hline
  $9 $ & $0.0093$   & $0.0110$    & $0.0061$     & $0.0008$  &    $0$  &  $0$   \\[0.50ex]
   \hline
  $10 $ & $0.0075$   & $0.0090$    & $0.0043$     & $0$  &    $0$  &  $0$   \\[0.50ex]
   \hline
   $11 $ & $0.0061$   & $0.0074$    & $0.0027$     & $0$  &    $0$  &  $0$ \\[0.50ex]
   \hline
   $12 $ & $0.0050$   & $0.0062$    & $0.0014$     & $0$  &    $0$  &  $0$  \\[0.50ex]
   \hline
   $13 $ & $0.0042$   & $0.0053$    & $0.0005$     & $0$  &    $0$  &  $0$  \\[0.50ex]
   \hline
 $14 $ & $0.0034$   & $0.0045$    & $0$     & $0$  &    $0$  &  $0$  \\[0.50ex]
   \hline
   $15 $ & $0.0027$   & $0.0037$    & $0$     & $0$  &    $0$  &  $0$  \\[0.50ex]
   \hline
   $16 $ & $0.0021$   & $0.0031$    & $0$     & $0$  &    $0$  &  $0$  \\[0.50ex]
   \hline   $17 $ & $0.0016$   & $0.0027$    & $0$     & $0$  &    $0$  &  $0$  \\[0.50ex]
   \hline
  $18 $ & $0.0012$   & $0.0023$    & $0$     & $0$  &    $0$  &  $0$  \\[0.50ex]
   \hline 
   $19 $ & $0.0009$   & $0.0020$    & $0$     & $0$  &    $0$  &  $0$  \\[0.50ex]
   \hline 
   $20 $ & $0.0007$   & $0.0017$    & $0$     & $0$  &    $0$  &  $0$  \\[0.50ex]
   \hline 
   $21$ & $0.0059$   & $0.0070$    & $0.020$     & $0$  &    $0$  &  $0$  \\[0.50ex]
\bottomrule[1.5pt]
\end{tabular}}
\caption[\textbf{Joint-degree probability distribution}]{Joint-degree probability distribution for heterosexual partnerships used in the computer simulations.  Men and women are assumed to have fewer than $21$ and $6$ partners respectively.  The entry in the $i^{th}$ row and $j^{th}$ column is the fraction of partnership (edges) between  men who have i partners (degree i) and women who  have j partners (degree j).}
\label{tab:degdeg2}
\end{table}
\subsection{The number of sexual acts per partner}
To better understand and predict the  spread of Ct, we  need to know the number of acts per unit time between two typical partners. In the "You Geaux Girl" survey data women were asked about their total number of acts per partner during the last three months. Each one reported two numbers which refer to their number of acts per primary and casual partners.

The Table (\ref{tab:p-c})  shows the number of  women with $k$ partners who engaged in  $n$ sexual acts with their primary and casual partners in the last three months. 
 \begin{table}[H]
\centering 
\scalebox{.8}{\begin{tabular}{c rrrrrrrrrrrr} 
\toprule[1.5pt]
Number of acts & 1 & 2 & 3&4&5&6&7&8&9&10&11 \\ 
\hline
$\#$ of degree $1$ women   
&  53(23) & 27(6) & 24(0)&19(0)&18(3)&10(1)&8(0)&12(0)&5(0)&20(0)&68(0) \\ 
$\#$ of degree $2$ women    
&  18(25) & 7(10) & 8(2)&3(7)&4(5)&1(1)&2(1)&3(0)&2(0)&3(0)&9(3) \\
$\#$ of degree $3$ women  
&  2(13) & 1(8) & 3(1)&1(2)&2(0)&4(1)&1(1)&1(0)&1(0)&2(1)&1(0) \\
$\#$ of degree $4$ women    &  1(5) & 2(1) & 0(1)&0(1)&0(0)&0(0)&0(1)&0(0)&0(0)&0(0)&0(0)\\
$\#$ of degree $5$ women   &  1(4) & 2(0) & 0(0)&0(0)&0(1)&0(0)&0(0)&0(0)&0(0)&1(1)&0(0)\\
$\#$ of degree $6$ women  &  1(3) & 2(1) & 0(1)&0(1)&0(0)&0(0)&0(0)&0(0)&0(0)&0(0)&0(0) \\
\bottomrule[1.5pt]
\end{tabular}}
\caption[\textbf{The number of sexual acts based on number of partners for women}]{Element $(k,n)$  in the table is the number of women with $k$ partners who have $n$ sexual act per primary (casual) partners within three months.}
\label{tab:p-c}
\end{table}
Using this Tables, we can find average number of acts per primary (casual) partner for women with  $k$ partners as 
 $$a^{k}_{wp}=\frac{\sum_{i=1}^{11} i(k,i)}{\sum_{i=1}^{11}(k,i)}~~(a^{k}_{wc}=\frac{\sum_{i=1}^{11} i(k,i)}{\sum_{i=1}^{11}(k,i)}).$$
Our goal is to find one common value for  sexual acts for women with  $k$ partners defined as a weighted average of $a_{wp}^k$ and $a_{wc}^k$ i.e

\be \label{E:womancontactss}
a_w^k=\alpha a_{wp}^k+(1-\alpha)a_{wc}^k,
\ee
where $\alpha$ is fraction of primary partners for women with degree $k$.

If men were asked the same question about the number of their acts with different partners, then we would have similar information: for a man with  $k'$ partners, we can find $a^{k'}_{m}$ with the same manner. 
Then, if a $k-degree$ woman  is partner of a $k'-degree$ man, so the numbers $a^{k}_{w}$ and $a^{k'}_{m}$ should be the same, but usually this does not happen, therefore,  we use the average idea:  our compromise, or resolution, function will be an average of $a^{k}_{w}$ and $a^{k'}_{m}$.  As in resolving conflicts in the partnership data, we choose the harmonic average that weights the smaller number more:
\be\label{E:totalcontactss}
a_{kk'}=\frac{2a^{k}_{w}a^{k'}_{m}}{a^{k}_{w}+a^{k'}_{m}}. 
\ee
  By these definitions we have a consistent definition for our weighted network or symmetric adjacency matrix of network.
However, we do not have any information about the number of acts reported by men. 
Therefore, we only use one-sided sexual act number: for any edge that its relevant woman node has degree $k$, we put weight $a_w^k$ defined in $\eqref{E:womancontactss}$ on the correspondent edge. The Table (\ref{act_value}) shows the number of acts per day per partner for a women with different number of partners.

 \begin{table}[H]
\centering 
\scalebox{0.9}{\begin{tabular}{ccccccc} 
\toprule[1.5pt]
Number $i$ of partners for women & 1 & 2 & 3&4&5&6\\ 
\hline
Number of act per partners per day $a(i)$ & $0.1104$ & $0.0563$ & $0.0442$&$0.0241$&$0.0503$&$0.0222$\\ 
\bottomrule[1.5pt]
\end{tabular}}
\caption[\textbf{Number of acts per partner per unit time}]{The average number of sexual act per partne per day for women with different number of partners.}
\label{act_value}

\end{table}

\subsection{Age distribution}
Another factor in generating a sexual network is the age of individuals. Therefore, in Check-it survey men  participants were asked about their age. The Figure (\ref{men_age}) shows the frequencies of each age in the survey data. 
\begin{figure}[H]
\centering
\includegraphics[scale=.45]{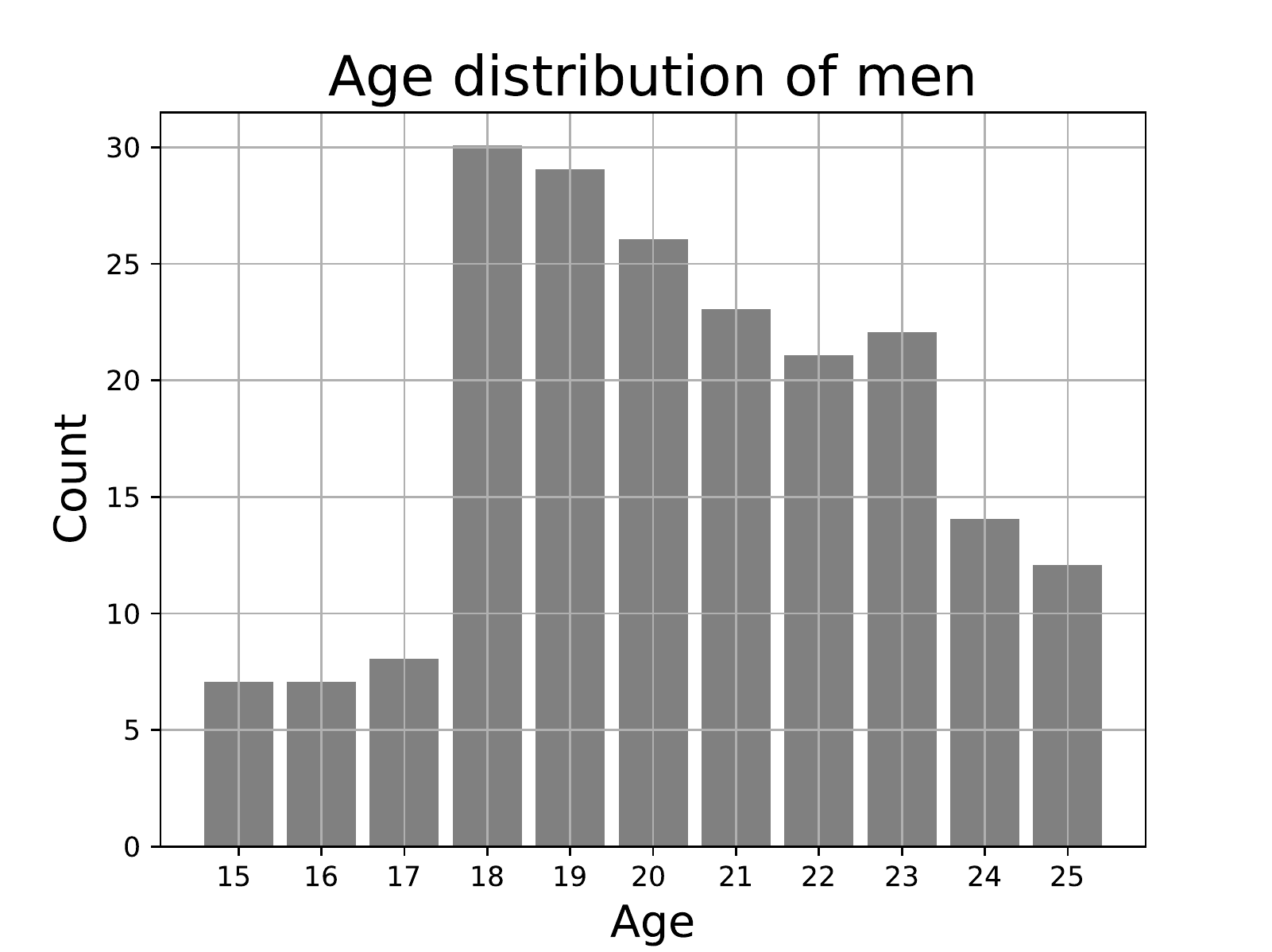}
\caption[\textbf{ Bar plot of  age distribution of men}]{Bar plot of age distribution for men.}
\label{men_age}
\end{figure}
Because in Check-it survey, population under study was  men ages  $15-25$ (people younger than $15$ or older than $25$ years old were not eligible in taking survey)  we did not fit any distribution to this data, instead, we used  empirical distribution to define probability distribution function for age of men. The Table (\ref{tab:age_pdf}) shows this function.
\begin{table}[H]
\centering 
\scalebox{.8}{\begin{tabular}{p{3.5cm} rrrrrrrrrrrrr} 
\toprule[2pt] 
Age of men & 15 & 16 & 17&18&19&20&21&22&23&24&25 \\ \hline
pdf of  age for men
& $0.035$ & $0.035$ & $0.04$& $0.15$ &$0.15$& $0.13$ &$0.115$&$0.105$& $0.11$&$0.07$&$0.06$\\ 
\bottomrule[1.5pt]
\end{tabular}}
\caption[\textbf{Probability distribution for men's age}]{Probability distribution for men's age.}
\label{tab:age_pdf}
\end{table}
Men also were asked about their partners's age.  The Figure \ref{fig:pm} is the box plot of partners age versus men's age, and Figure \ref{fig:ppm} is the scatter plot of primary partners age versus men's age. These plots show, age is one factor in  selecting primary partners  for men: there is a strong linear trend with slope $1$ between men's age and the average age of  their primary partners. However,  there is no strong age bias for selecting casual partners, though partners age is close to men's age.

\begin{figure}[htp]
\centering
\subfloat[\textbf{For partners of men}][]{\label{fig:pm}\includegraphics[width=0.5\textwidth]{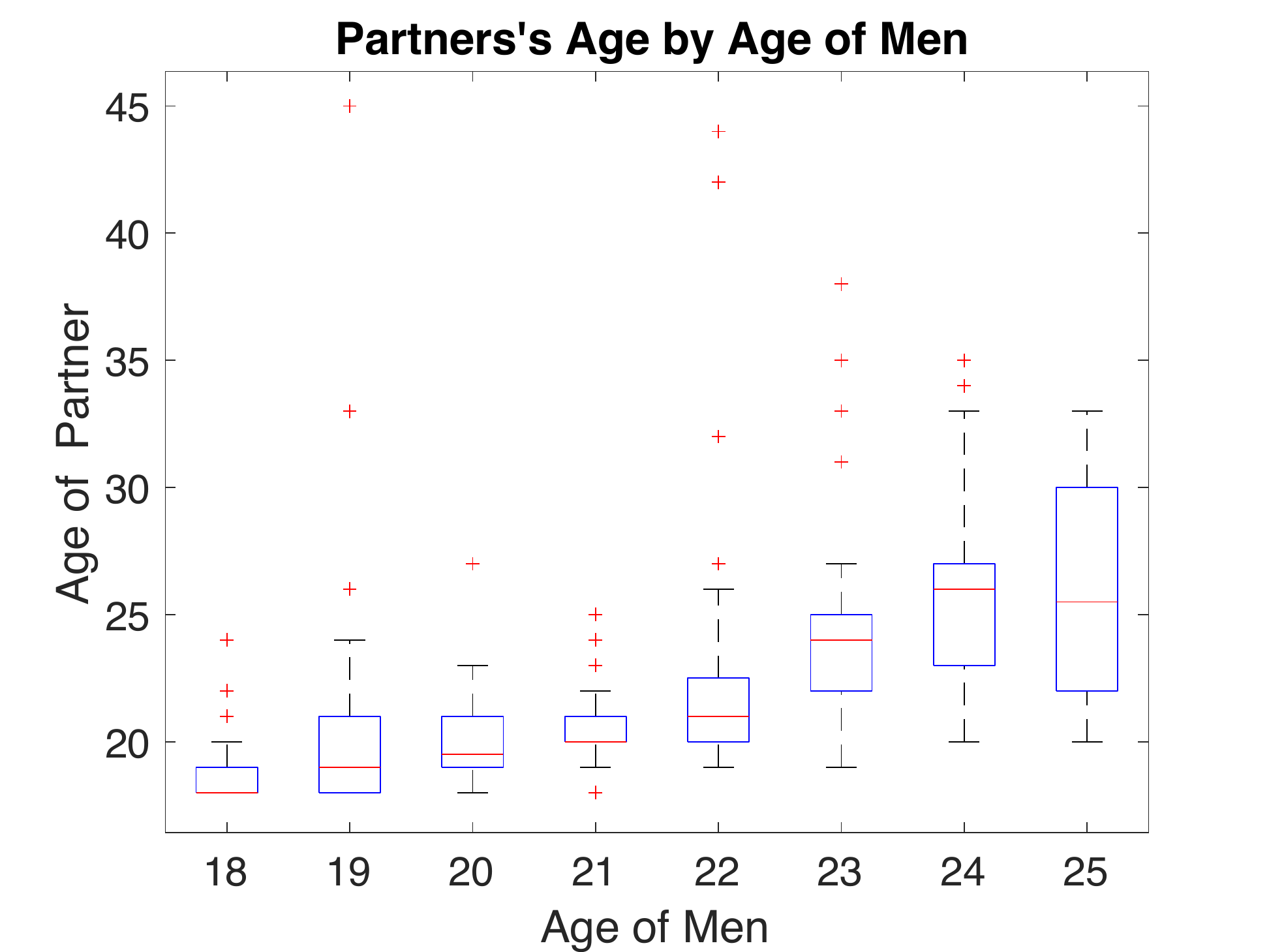}}
\subfloat[\textbf{For primary partners of men}][]{\label{fig:ppm}\includegraphics[width=0.5\textwidth]{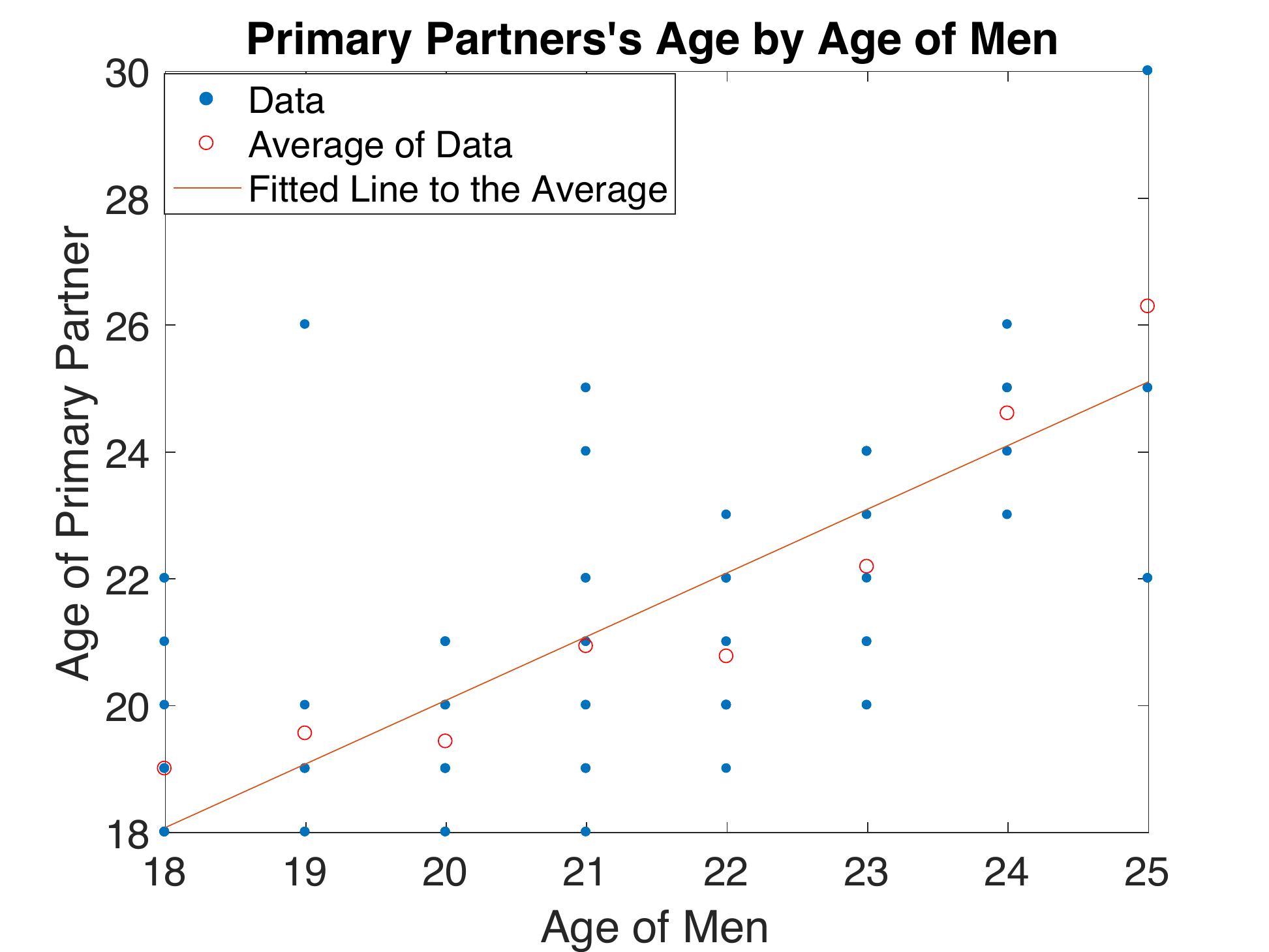}}\\
\caption[\textbf{Bar plot of age distribution}]{ Box plot of age for  all partners of men with different age: partners age is close to men's age \ref{fig:pm}. Scatter plot of age for primary partners: there is perfect linear correlation between men's age and average age of their primary partners, that is primary partners of men  have the same age as them \ref{fig:ppm}.}\label{age_women}
\end{figure}

\section{The New Orleans Social  Activity Data} 
For social activity of people in New Orleans
 we used the synthetic data produced using Simfrastructure \cite{eubank2010detail,eubank2008synthetic} which is  a high-performance, service-oriented, agent-based modeling and simulation system for representing and analyzing interdependent infrastructures.  This data is shown as two tables: first includes people information including their age and gender and second is a contact file such as\footnote{We should mention we did not use  Simdemic software to generate Table. (\ref{cfile}), but providing information about New Orleans population demographics we asked Network Dynamics and Simulation and Science Laboratory (NDSSL) to generate social contact file for synthetic population of New Orleans \cite{eubank2008synthetic}.}
  \begin{table}[H]
\centering 
\scalebox{.8}{\begin{tabular}{c rrrr} 
\toprule[1.5pt]
\textbf{PID} & \textbf{FID}& \textbf{A}& \textbf{T} \\ 
\hline
43722 & 16981& H &$0.3$ \\ 
 & 11462& W& $0.2$ \\ 
 & 37790& Sh& $0.01$ \\ 
  & \vdots & \vdots & \vdots \\
  \hline
51981 & 23476& Sc&$0.16$ \\ 
 & 18462& O& $0.02$ \\ 
 & 10790& H& $0.5$ \\ 
  & \vdots & \vdots & \vdots \\ \hline
 \vdots & \vdots & \vdots & \vdots \\ 
\bottomrule[2pt]
\end{tabular}}
\caption[\textbf{The Contact file of social network}]{This table is another representation of social contact network of 130,000 synthetic people reside in New Orleans.}
\label{cfile}
\end{table}
In Table (\ref{cfile}) \textbf{PID} is personal ID and \textbf{FID} is their social friend ID, \textbf{A} is activity in which PID meet FID, and \textbf{T} is fraction of  time in day that two friends meet with each other through activity A. We have $5$ different activities including H as home, W as work, Sc as school, Sh as shopping, and O as others. For example person 43722 lives with person 16981 at the same home for $8$ hours  a day.
\section{Network Generation}
Our goal of this Section is to generate a sexual network that represents the sexual activity of adolescent and young adult AAs reside in New Orleans. In this heterosexual network, each person is a distinct identity represented by a node in the network. Each node \textbf{i} in the network -representing a person- is denoted by the index \textbf{i}, and each edge $\textbf{ij}$ represents sexual partnership between two nodes \textbf{i} and \textbf{j}.
The network is weighted, where the weight $0<w_\textbf{ij}\leq 1$ for edge $\textbf{ij}$ is the probability that there will be a sexual act between two partners \textbf{i} and \textbf{j} on an average day.  
In the model, each day and through a stochastic process, the edge  $\textbf{ij}$ will exist (turn on) with probability $w_\textbf{ij}$, the probability that two nodes \textbf{i} and \textbf{j} have sexual act in that day, or not exist (turn off)  with probability of $1-w_\textbf{ij}$, which is equivalent to not having an  edge (sexual act) between partners \textbf{i} and \textbf{j} on that day.
To generate such a network we have sexual activity and social activity data and the tools (algorithms) defined in Appendices. (\ref{ap1}) and (\ref{ap2}).

The survey results explained in the previous Sections  were used to 
construct the joint-degree distribution or $BJD$ matrix (explained in appendix \ref{ap1}) of a heterosexual network of individuals in New Orleans. For a population $P=15,000$ sexually active adolescent and young adult men and their women partners ($5250$ men and $9750$ women) residing in New Orleans we have

\[ 
    BJD=\begin{pmatrix} 
1663 & 1588     &  1225 & 896    & \dotsm & 14\\
474 &  452      & 350 & 255    & \dotsm & 4\\
\multicolumn{6}{c}{$\vdots$}     \\
3 & \dotsm  &  2 & 1   & 1      & 0
    \end{pmatrix}. 
\]
The dimension of this $BJD$ matrix is $6\times21$, that is, the maximum number of partners for women is $6$ and for men is $21$.
We used this $BJD$ matrix and algorithm explained in Appendices (\ref{ap1}) and (\ref{ap2}) to construct the bipartite heterosexual network. 
These generated  networks statistically agree with the distribution for the number of partners men and women have had in the past three months, and the distribution for the number of partners of their partners (the joint-degree distribution). 
\subsection{Analysis of generated networks}\label{NA2}
We generated $150$  sexual networks  of $15000$ people, $30$ of them $20\%$, $30$ others $40\%$, $30$ $60\%$, $30$  $80\%$, and  the last $30$  are  $100\%$ subgraph of social network. We then compared some descriptive measures of  this ensemble of random networks such as  $Sg$, $Nc$, and  $Rc$  that were not imposed when generating the networks. These measurements are defined in Appendix (\ref{ap1}).

First, we evaluated and compared the $Sg$ for giant component and bi-component (the first and second biggest connected components of network) for each group of the networks. The Figure (\ref{ssg}) shows the box plot of these sizes: there is an increment in the size of giant components when people select most of their  sexual partners from their social friends. Because in that case, sexually active people are more tight together within the social contact network. However, there is not a significant difference in the size of giant bi-component.
\begin{figure}[htp]
\centering
\includegraphics[width=.7\paperwidth]{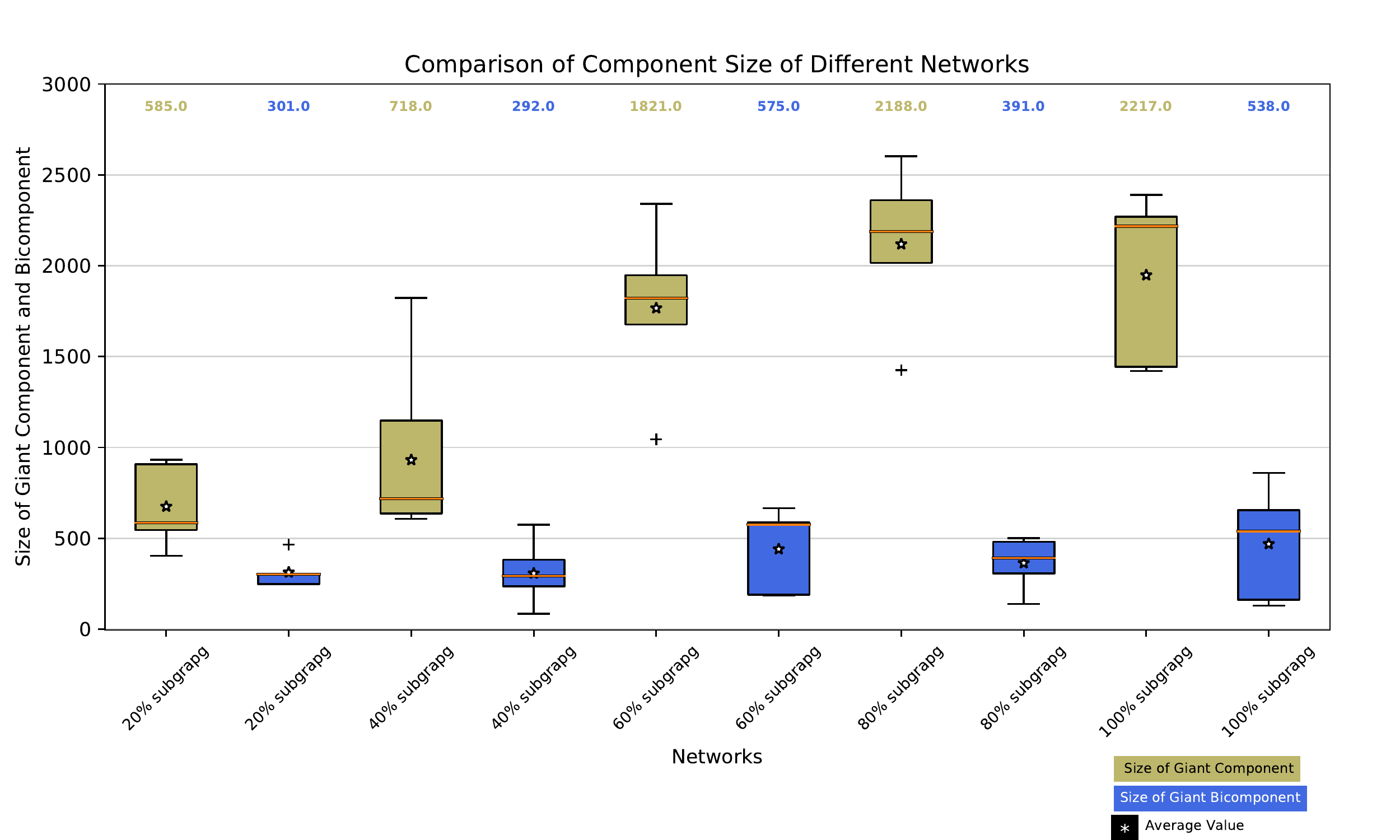}
\caption[\textbf{Box plot for size of giant component and bi-component}]{Box plot representing size of giant component and bi-component for each group of networks: the size of giant component becomes bigger when being subgraph of social network becomes stronger, however, social network does not have impact on the size of giant bi-component. }
\label{ssg}
\end{figure}
\begin{figure}[htp]
\centering
\includegraphics[width=.7\paperwidth]{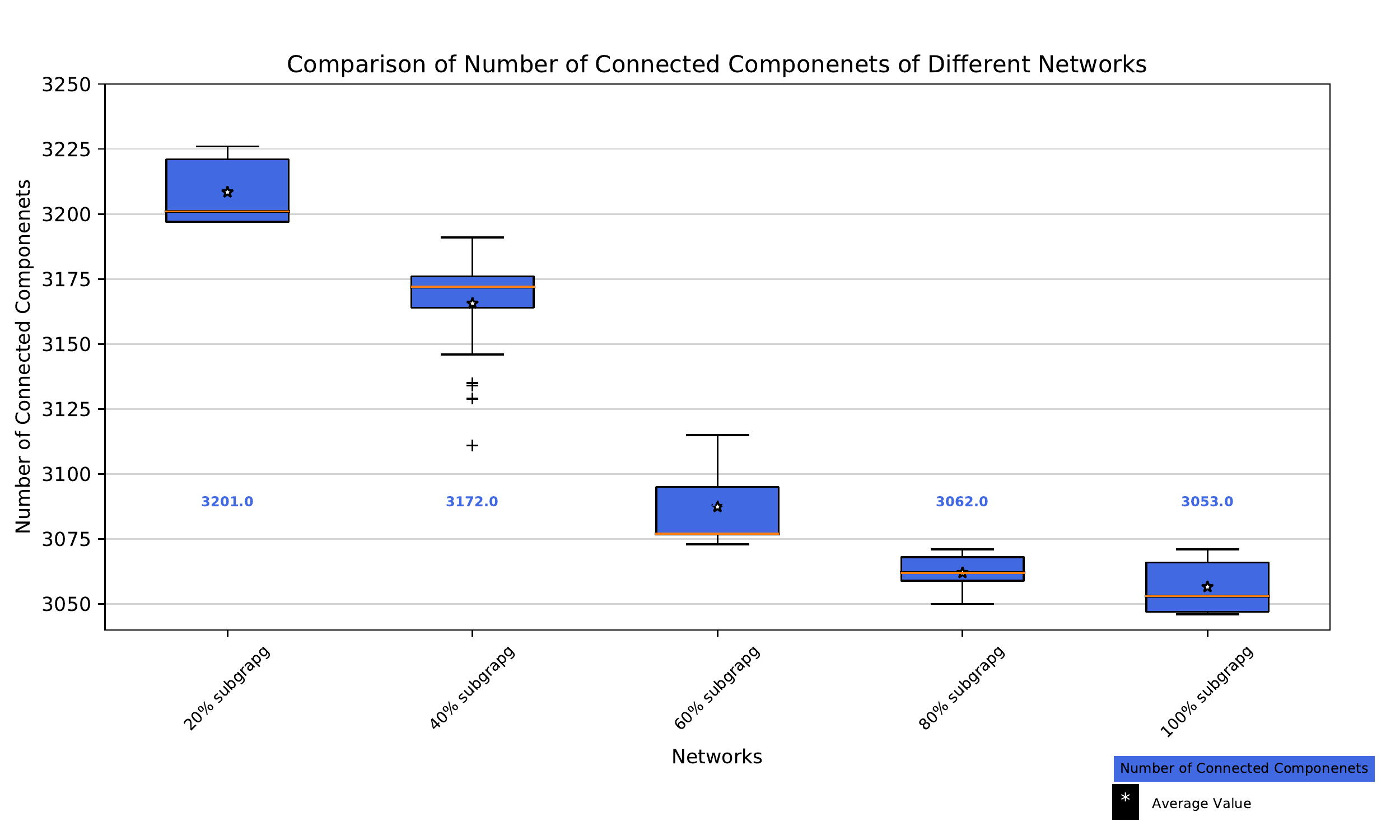}
\caption[\textbf{Box plot for  number of  connected components}]{Box plot representing number of  connected components, $Nc$, for each group of networks: a significant difference is observed  in $Nc$ between each group, $Nc$ is lower in subgraph of social networks.}
\label{snc}
\end{figure}
\begin{figure}[htp]
\centering
\includegraphics[width=.7\paperwidth,height=.3\paperheight]{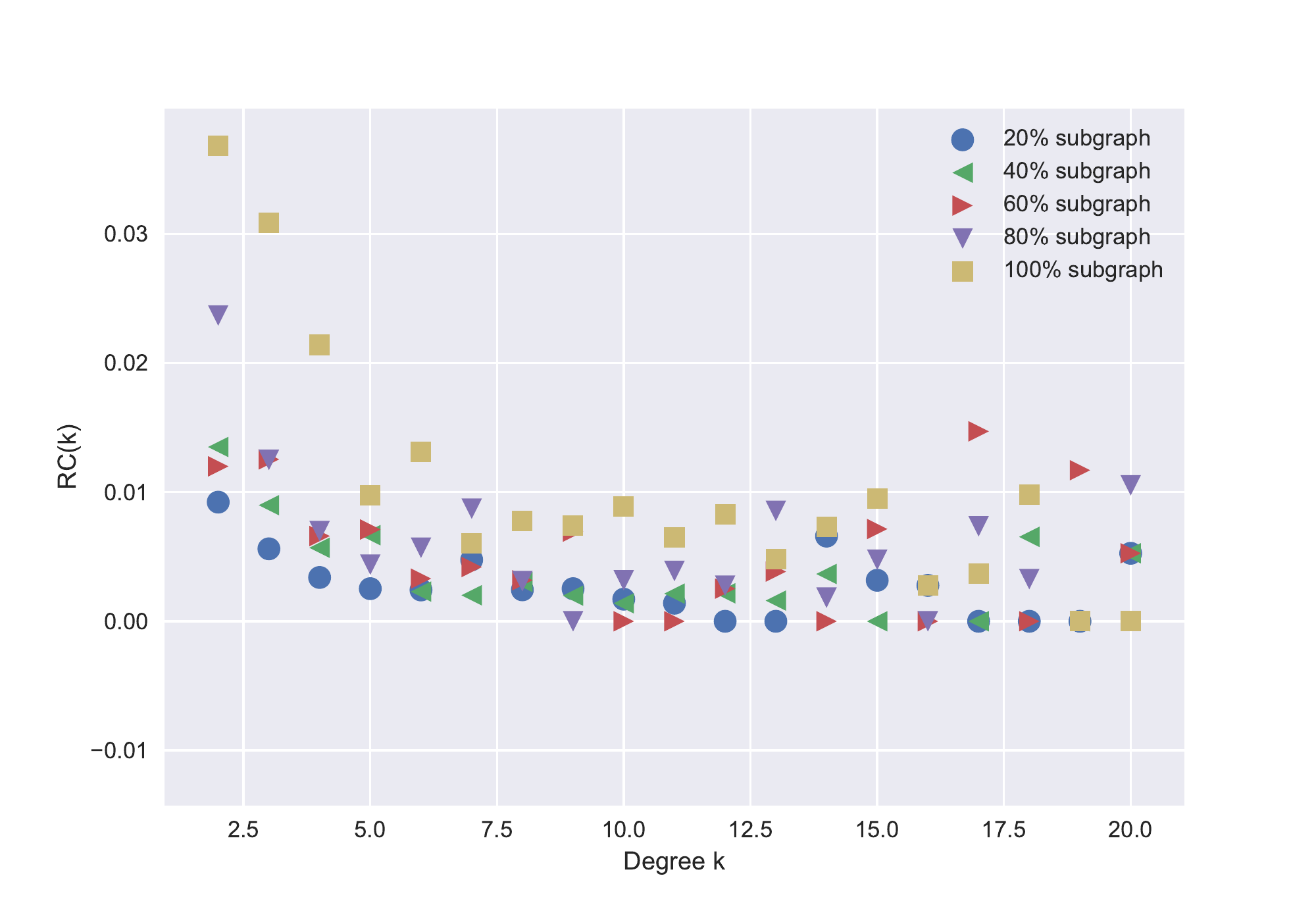}
\caption[\textbf{Redundancy coefficient vs degree}]{Scatter plot of $Rc$  versus degree for five different networks:  $Rc$ for sexual network  which is  strong subgraph of social network is higher, because clustering coefficient for social network is high and therefore, sexual network inherits this property by having higher $Rc$ than the ones which are weak subgraph of social network. }
\label{cl_rc}
\end{figure}

The number of connected components, $Nc$, is another measure characterizing  network toughness. This measure can be, but not necessarily, correlated to size of components of network.  The Figure (\ref{snc})  display descriptive statistics for $Nc$ in each network group. Note that  data distributions are approximately symmetrical, and measures of $Nc$ are similar across groups, but, they change  by changing the way of selecting sexual partners.

 We also compare $Rc$ for the networks in Figure (\ref{cl_rc}):  each data point $Rc(k)$ for degree $k$ is obtained by averaging redundancy coefficient over the group of  people with $k$ partners. In most of the networks these  values decrease with $k$ \cite{newman2010networks}.  Redundancy coefficient $Rc$ is affected by social network: when people select more partners from their social friends the value for $Rc$ increases which this fact is because of Phase. 1 of the algorithm in Appendix (\ref{ap2})- the extension of social network- when we connect friends of a person in social network it means we increase its clustering coefficient. Therefore, when sexual network is subgraph of social network it carries this property by increasing $Rc$ of the network.

\subsection{Age distribution of the network}
When generating sexual network, we tried to select the partners for each person related to his/her age such that we can capture age data reported in Figure (\ref{age_women}).  The left panel of Figure (\ref{age_age_net}) is the box plot of age for all social friends of men with different age and right panel is the box plot for the sexual partners. As it is observed while the age distribution for social friends is uniformly distributed, the age of sexual partners is correlated to the age of index men.

\begin{figure}
\centering
\includegraphics[width=.75\paperwidth,height=.3\paperheight]{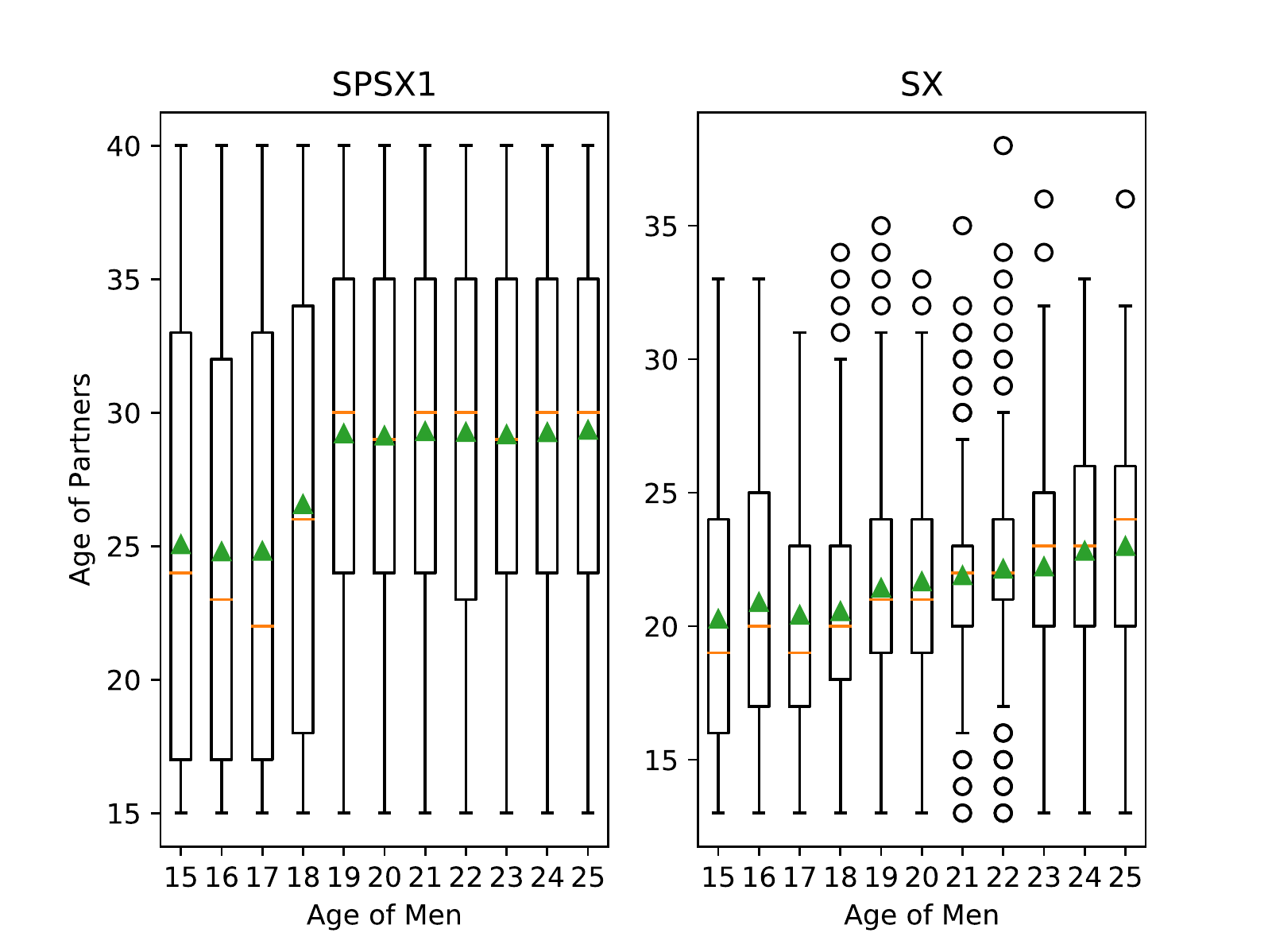}
\caption[\textbf{Box plot of age distribution for social friends and sexual partners of men}]{ Box plot of age for  all social friends (left panel) and sexual partners (right panel) of men with different age: though in social activities partner's age are distributed uniformly, in the generated sexual network $100\%$ embedded in social contact one  age of partners are correlated to age of men.}
\label{age_age_net}
\end{figure}

\section{Ct Transmission Model Overview}

In our heterosexual network model, each person is a distinct identity represented by a node in the network.
This model structure allows the sexual partnership dynamics, such as partner concurrency, sexual histories of each person, and complex sexual networks, to be governed at the individual level.

Each node \textbf{i} in the network represents a person, denoted by the index \textbf{i}, and each edge $\textbf{ij}$ represents sexual partnership between two nodes \textbf{i} and \textbf{j}.
The network is weighted, where the weight $0<w_\textbf{ij}\leq 1$ for edge $\textbf{ij}$ is the probability that there will be a sexual act between two partners \textbf{i} and \textbf{j} on an average day.  
In the model, each day and through a stochastic process, the edge  $\textbf{ij}$ will exist (turn on) with probability $w_\textbf{ij}$, the probability that two nodes \textbf{i} and \textbf{j} have sexual act in that day, or not exist (turn off)  with probability of $1-w_\textbf{ij}$, which is equivalent to not having an  edge (sexual act) between individuals \textbf{i} and \textbf{j} on that day. For the values for $w_\textbf{ij}$ we use the values in Table (\ref{act_value}) in which if \textbf{i} is woman then $w_\textbf{ij}=a(\textbf{deg(i)})$
otherwise $w_\textbf{ij}=a(\textbf{deg(j)})$

In our stochastic Susceptible--Infectious--Susceptible (SIS) model, a person \textbf{i} is either infected with Ct, $I_\textbf{i}(t)$, or susceptible to being infected, $S_\textbf{i}(t)$.
During the day $t$, an infected person, $I_\textbf{j}(t)$, can infect any of the susceptible partners, $S_\textbf{i}(t)$, they have sexual act with. 
We define $\lambda_{ij}$ as the  probability that $S_\textbf{i}(t)$ will be infected by $I_\textbf{j}(t)$ by the end of the day, $S_i(t) \overset{\lambda_{ij}}{\rightarrow} I_i(t+1)$.
  Similarly, we define $\gamma_j$  as the probability that an infected person, $I_\textbf{j}(t)$, will recover by the end of the day,   $I_j (t) \overset{\gamma_j}{\rightarrow} S_j(t+1)$.

\subsection{The force of infection }

The force of infection, $\lambda_{ij}(t)$, is the probability that a susceptible person $S_i$ is infected on day $t$ by infected partner $I_j$.  This depends on  probability of a sexual act between persons $i$ and $j$ on a typical day, as defined by edge weight, $w_{ij}$, in the model. 
We define $\beta_{nc}$ as the probability of transmission per act when a condom is not used, and $\beta_{c}$ as the reduced probability of transmission per act when a condom is used.
The forces of infection between \textbf{i} and \textbf{j} for when condom is not used, $\lambda^{nc}_{ij}$, and for when condom is used, $\lambda^{c}_{ij}$, are defined by
\begin{equation}
 \lambda^{nc}_{ij}=\begin{cases} 
      \beta_{nc} & \textnormal{with probability}~w_{ij} \\
      0 & \textnormal{with probability}~1-w_{ij} 
   \end{cases}
~~~,~~~
\lambda^c_{ij}=\begin{cases} 
      \beta_{c} &  \textnormal{with~probability}~w_{ij} \\
      0 & \textnormal{with~probability}~1-w_{ij}~~
   \end{cases}.
\end{equation}
We assume that couples use a condom in $\kappa$ fraction of their acts correctly and that  condom is $90\%$ effective in preventing the infection from being transmitted, that is, $\beta_c=0.1\beta_{nc}$.  
\subsection{Recovery from infection}

The model accounts for infected people recovering  through  natural recovery or after being treated with antibiotics. We assume that a fraction of the people treated for infection will return later to be retested for infection.

\underline{\textbf{Natural recovery}}:  We assume that all infected people will eventually recover and return to susceptible status, even if they are not treated for infection. 
In the model, the time for  natural (untreated) recovery has an exponential distribution with an average time of infection of $\frac{1}{\gamma^n}$ days, and the duration of infection for an individual is a random number from this distribution.

\underline{\textbf{Recovery through treatment}}:
We assume that the time for infected person to recover after treatment is a log-normal distribution with an average of $\frac{1}{\gamma^t}$ days. 
That is, the duration of infection for a treated infected person \textbf{k} would be set to a random number that follows log-normal distribution, $\log \mathcal{N}(\frac{1}{\gamma^t}, 0.25)$, rounded to the nearest day.  
In the model, if that number of days is smaller than the duration remaining for naturally clearing the disease, then the shorter time is use for the recovery period. 

Each year, a fraction of the population is tested for Ct infection, e.g. through a routine medical exam (random screening) or after being notified that one of their previous partners or social friends  was infected. Here we define all biomedical interventions implemented on the network.

\emph{\textbf{Random Screening}}: We define random screening as testing for infection when there are no compelling reasons to suspect a person is infected.  For example, random screening might be part of a routine physical exam and is an effective mitigation policy to identify  asymptomatic infections.
In our model, we assume that the fraction $\sigma_y\%$ of people are randomly screened each year, and an individual is screened with probability $\sigma_d=1-(1-\sigma_y)^{\frac{1}{365}}\%$ each day.

\emph{\textbf{ Partner Notification }}: 
We assume that an infected person notifies $\theta_n^p$ fraction of their partners about their exposure to infection.
Some of the notified partners will do nothing, some will test for Ct infection, and some will seek treatment for Ct without first testing because it is simpler. The notified partners of a tested and treated person are divided into three classes: 
\begin{enumerate}
\item  Partner treatment: $\theta_t^p$ fraction of the notified partners are treated, without first testing for infection.
\item Partner screening: $\theta_{s}^p$ fraction of the notified partners engage in screening test and then start treatment if infected.
\item Do nothing: ($1-\theta_t^p-\theta_{s}^p$) fraction are not tested or treated.
\end{enumerate}
For simplicity, we assume notified partners are the ones who are notified and do something, that is, we assume  $1-\theta_t^p-\theta_{s}^p=0$. Then we define  $\theta_n^p\theta_t^p$ as fraction of partners of an infected person who are treated without first testing for infection (Partner treatment), and $\theta_n^p\theta_{s}^p$ as fraction of the partners follow screening test (Partner screening).

Partner notification spreads across the same network that originally spread the disease.  The partner screening approach is more effective since every time the partner of an infected person is found to be infected, then the cycle repeats itself and their partners are notified.

\emph{\textbf{ Social Friends Notification \footnote{When we say social friends we mean friends who are not sexual partners that is, if a person is both partner and social friend of an index case, we consider him/her as sexual partner not social friend.}}}: 
We also assume that an infected person notifies $\theta_n^f$ fraction of their social friends about screening test.
Some of the notified friends will do nothing, and some will test for Ct infection. The notified  friends of a treated person are divided into two classes: 
\begin{enumerate}

\item Social Friend Screening: $\theta_{s}^f$ fraction of the notified friends engage in screening test and then start treatment if infected.
\item Do nothing: ($1-\theta_{s}^f$) fraction are not tested or treated.
\end{enumerate}
For simplicity, we assume notified friends are the ones who are notified and do screening test, that is, we assume  $1-\theta_{s}^f=0$. Then we define $\theta_n^f\theta_{s}^f$ as fraction of the friends follow screening test (Social friend screening).

The model includes a time-lag of $\tau_N$ days between the day a person is found to be infected and the day their partners or friends are notified.

\textit{\textbf{Rescreening}}:
A common practice in disease control is rescreening.
People found to be infected are given a treatment and asked to return after a short period to be tested again.
In our model, we assume that a fraction, $\sigma_r$, of the treated people return for retesting $\tau_R$ days after treatment.

The Figure (\ref{fig:intervention_flow}) shows the diagram of above explained biomedical interventions.

\begin{figure}[htp]
\centering
\tikzstyle{block} = [rectangle, draw, fill=white!20, 
    text width=7em, text centered, rounded corners, minimum height=4em,node distance=3cm]
\tikzstyle{line} = [draw, -latex']
 \begin{tikzpicture}[scale=0.8, every node/.style={scale=0.8}]
    \node [block] (all) {All Population};
    \node [block, below of=all] (scr) {Screened Population};
    \node [block, below of=scr] (ifct) {Infected People};
     \node [rectangle, draw, fill=white!20, 
    text width=5em, text centered, rounded corners, minimum height=4em,node distance=4cm, left of=ifct] (reifct) {Rescreened people};
     \node [block,node distance=4cm, below left of=ifct] (pn) {Partner Notification};
      \node [block,node distance=4cm, below right of=ifct ] (fn) {Social Friend Notification};
    \node [block,node distance=4cm, below right of=fn] (fs) {Friend Screening};
      \node [block,node distance=4cm, below right of=pn] (ept) {Partner Treatment};
      \node [block,node distance=4cm, below left of=pn] (eps) {Partner Screening};
   \path [line] (all) -- (scr);
    \path [line] (scr) -- node[right]{}(ifct);
   \path [line] (ifct) -- node [sloped, anchor=center, below] {$\theta_n^p$} (pn);
    \path [line] (ifct) -- node [sloped, anchor=center, below] {$\theta_n^f$} (fn);
    \path [line] (pn) -- node [sloped, anchor=center, below] {$\theta_t^p$} (ept);
    \path [line] (pn) -- node[sloped, anchor=center, below]{$\theta_s^p$} (eps);
     \path [line] (fn) -- node[sloped, anchor=center, below]{$\theta_s^f$} (fs);
     \path [line] (ifct) -- node[sloped, anchor=center, below]{$\sigma_r$} (reifct);
     \path [line] (reifct) -- node[ sloped,above]{$\tau_r$ days}node[ sloped,below]{ later} (scr);
    \path [line] (fs) |- node [near end, above]{$\tau_N$ days later} (scr);
    \path [line] (eps) |- node [near end, above]{$\tau_N$ days later} (scr);
   \path [line] (all) -- node[right] {$\sigma_y$}(scr);
\end{tikzpicture}
\caption[\textbf{Biomedical intervention flow diagram}]{Biomedical intervention flow diagram.}
\label{fig:intervention_flow}
\end{figure}
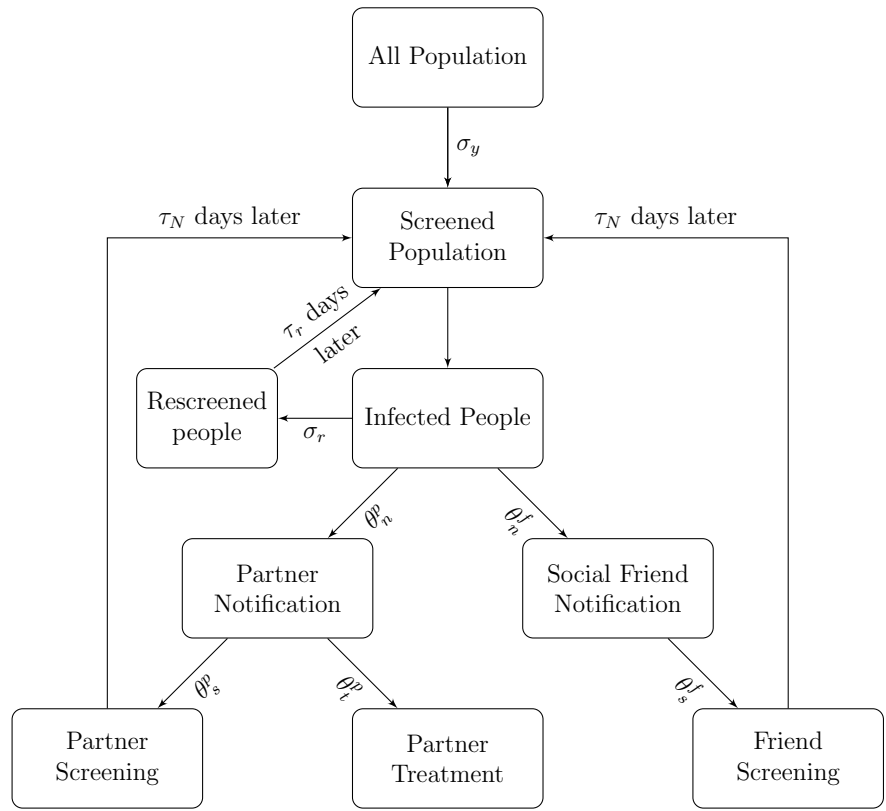

 \subsection{Model initialization}
Our goal is to model the current Ct epidemic in New Orleans with an initial prevalence of $i_0$.
Infected people are not dropped into an otherwise susceptible population, instead they are distributed as they would be as part of an emerging epidemic, one that started some time in the past.
We call these initial conditions \textit{balanced} because when the simulation starts the infected and susceptible populations, along with durations of infection, are in balance as an emerging epidemic would on average have.
When the initial conditions are not balanced, then there is usually a rapid (nonphysical) initial transient of infections that quickly dies out as the infected and susceptible populations relax to a realistic infection network.

To define the balanced initial conditions, we start an epidemic in the past by randomly infecting a few high degree individuals.
We then advance the simulation until the epidemic grows to the prevalence $i_0$.
We then reset the time clock to zero and use this distribution of infected people, complete with their current infection timetable, as our initial conditions. 
Because these are stochastic simulations, when doing an ensemble of runs we reinitialize each simulation by seeding different initial infected individuals. 
The Figure (\ref{forbeta}) illustrates the typical progression of the epidemic to reach the current  Ct prevalence of $9\%$ in men and $14\%$ women in the $15-25$ year-old New Orleans AA community. The numerical simulations comparing the different mitigation strategies all start at this endemic stochastic equilibrium. 

\begin{figure}[H]
\centering
\includegraphics[width=100mm,scale=0.5]{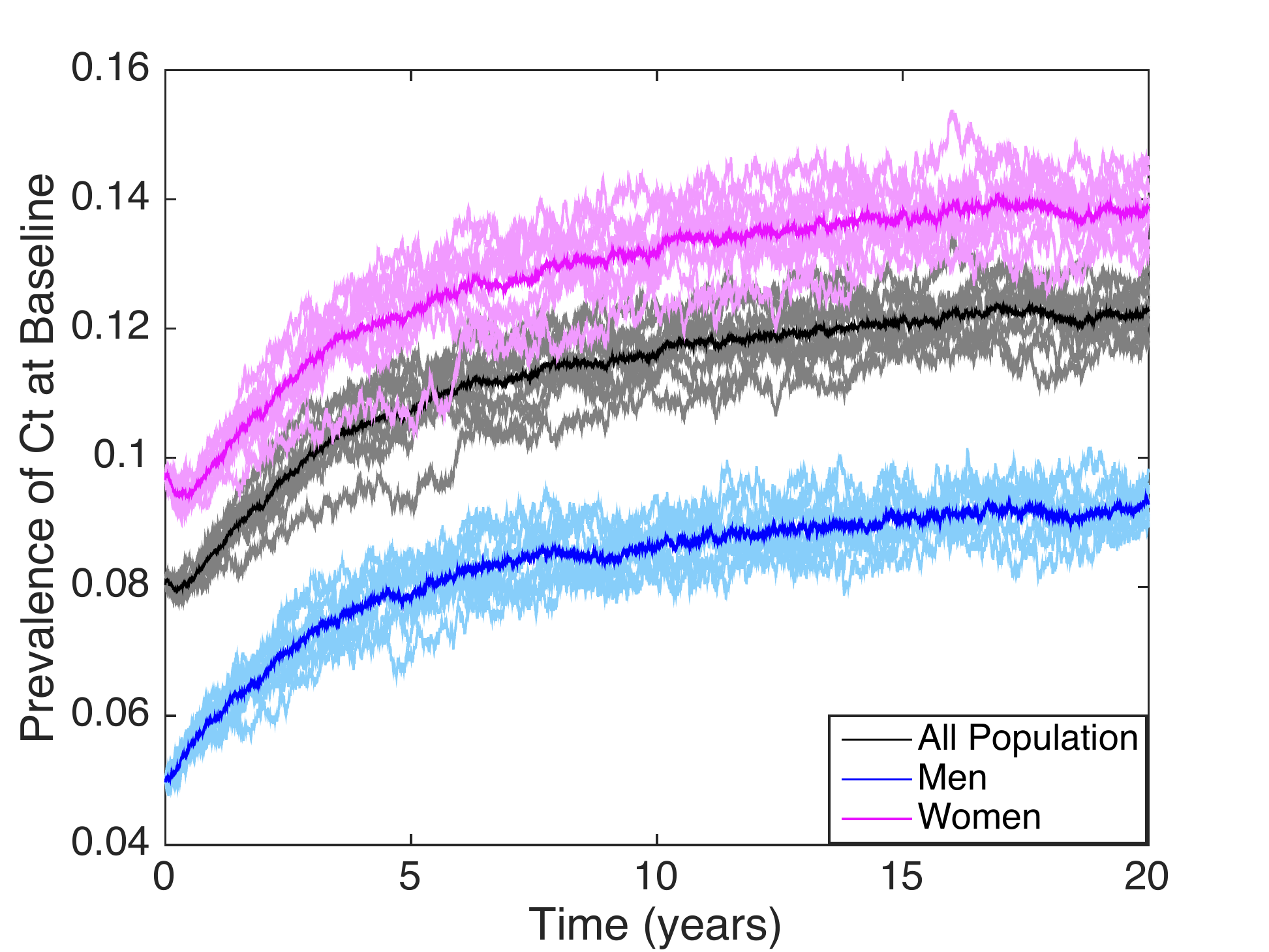}
\caption[\textbf{Prevalence of Ct vs time}]{Prevalence increases to reach the current quasi-steady state. About $9\%$ of men  (blue lower curve) and $14\%$ of women  (pink upper curve) are infected at the equilibrium for the baseline model parameters.  This is approximately the current prevalence in New Orleans 15-25 year-old AA population.
The light areas around the dark mean values show the range of the solutions after $10$ simulations. }
\label{forbeta}
\end{figure}

\section{Numerical Simulations}

We compare the model-projected impact of increased random screening, partner notification-  which includes partner screening and partner treatment- social friend notification and rescreening on the prevalence of Ct infection.
All of the simulations start at a balanced equilibrium obtained with the model baseline parameters in Table (\ref{parameters}). For probability of transmission per act, there is a wide range of reported values from $0.04$ to $0.16$ in different studies \cite{low2007epidemiological,welte2005costs,adams2007cost,andersen2006prediction,gillespie2012cost,roberts2007cost}. We calibrated this parameter to the current prevalence of Ct among adolescents and young adult AAs in New Orleans \cite{kissinger2014check}, and found out our estimated value is close to corresponding parameter in \cite{hui2013potential,low2007epidemiological}- in which probability of transmission from man to woman is $0.16$ and from woman to man is $0.12$ and prevalence is $12\%$.

\begin{table}[H]
\centering
\resizebox{\columnwidth}{!}{
\begin{tabular}{ llp{10cm}ll }
\toprule[1.5pt]
  & \multicolumn{4}{c}{\head{}}\\
  & \normal{\head{Parameter}} & \normal{\head{Description}}
  & \normal{\head{Unit}} & \head{Baseline}\\
  \cmidrule(lr){2-3}\cmidrule(l){3-5}
  \multirow{2}{2.5cm}{Network Parameters} &  P& Number of nodes in \textbf{SexNet}. & people & $15000$\\
 & $\alpha_m$ & Sexually active age range for men. & years &  $[15,25]$\\
  & $\alpha_w$ & Sexually active age range for women. & years &  $[15,40]$\\

  \cmidrule(lr){2-3}\cmidrule(lr){3-5}
  \multirow{2}{1.1cm}{Disease Parameters} & $\beta^{m2w}$ & Probability of transmission per act from men to women. &--
  & $0.16$ \\
 & $\beta^{w2m}$ & Probability of transmission per act from women to men. & -- & $0.16$\\
        &$1/\gamma^n$& Average time to recover without treatment.   & days&  $365$   \\
\cmidrule(lr){2-3}\cmidrule(lr){3-5}
\multirow{4}{1.1cm}{Intervention Parameters} 
&$1/\gamma^s$& Average time to recover with treatment.   & days&  $7$   \\ 
&$\kappa$& Fraction of times that condoms are  used during sex.  & --& $0.6$  \\
&$\epsilon$& Condom effectiveness.  & --& $0.90$    \\
& $\sigma^m_y(\sigma_y^w)$ & Fraction of men(women) randomly screened per year. & -- & $0.05(0.40)$ \\
        & $\sigma_r$ & Fraction of infected people return for rescreening. & - & $0.10$\\
        & $\theta_n^p(\theta_n^f)$ & Fraction of the partners (friends) of an infected person who are notified and do test or treated for infection. & - & $0.26 (0)$ \\
        & $\theta_t^p$ & Fraction of notified partners  of an infected person who are treated without testing. & - & $0.75(0)$\\
        & $\theta_s^p (\theta_s^f)$ & Fraction of notified partners (friends) of an infected person who are tested and treated for infection.  & - & $0.25(0)$\\
      & $\tau_N$ & Time lag of partner notification.  & days & $5$\\
        & $\tau_R$ & Time lag of rescreening. & days & $100$\\
\bottomrule[1.5pt]
\end{tabular}}
\caption[\textbf{Parameters of the model and their baseline values}]{Parameters and their baseline values: the model parameters  describing the transmission of Ct infection, as well as recovery associated with natural recovery, and interventions  were obtained from the literature \cite{kissinger2014check,althaus2010transmission,morre1998monitoring}, but other parameters are calibrated to biological, behavioral, and epidemiological data from general heterosexual population resides in New Orleans.}
\label{parameters}
\end{table}

\subsection{Dynamic of Ct on networks}
To determine the impact of social network in  sexual partner selection on spread of Ct, we compared the prevalence versus time for networks with different structures: networks in which people select their sexual partners from different sources. We want to observe  whether different network properties mentioned in Section \ref{NA2} causes a drastic difference on prevalence of Ct, and we hope not. Otherwise, it means that we cannot predict Ct prevalence, because there exist some  parameters in generating network that are not  preserved but has  key  role in the spread of infection. 
Thus, we introduced  infection over different networks and let them spread  till they converge to quasi-steady state\footnote{In stochastic SIS model the steady state is disease free equilibrium point i.e if we run the model we eventually reach zero infection point even if $\mathcal{R}_0\geq 1$, the amount of $\mathcal{R}_0$  affect the time  to reach 0 infection point. Here saying quasi-steady state we mean that we run the simulations for a big enough time and we stop it if the average of simulations for the next day after final time is relatively close to average of prevalence for the final day.}.  Figure \ref{fig:com_pre} is Ct spread over time for five different sexual networks:  $20\%, 40\%, 60\%, 80\%$ and $100\%$ subgraph of social network. Each curve correspondent to different network is the average of $50$  runs where each run initialized by seeding same initial balanced  infected individuals. The progression of Ct over time for these networks is not exactly the same, however they are slightly close to each other, the maximum difference of prevalence at quasi-steady state is less than one percent. Therefore, we can conclude that Ct prevalence has a mild  dependence on sexual partner source. For this work we can ignore this dependence, however to have a better estimation on network structure, the question about source of partner selection from participants is needed.
From now on for all the simulations, we use the sexual network which is $80\%$ subgraph of social network, unless stated otherwise.

To find probability of transmission per sexual act $\beta$, we calibrated it to current Ct prevalence  of $9\%$ in men and $14\%$ women in the $15-25$ year-old New Orleans AA community. The Figure \ref{fig:base} illustrates the typical progression of the epidemic to reach this current  Ct prevalence.

\begin{figure}[htp]
\centering
\subfloat[\textbf{Prevalence of Ct vs time for different networks}][]{\label{fig:com_pre}\includegraphics[width=0.5\textwidth]{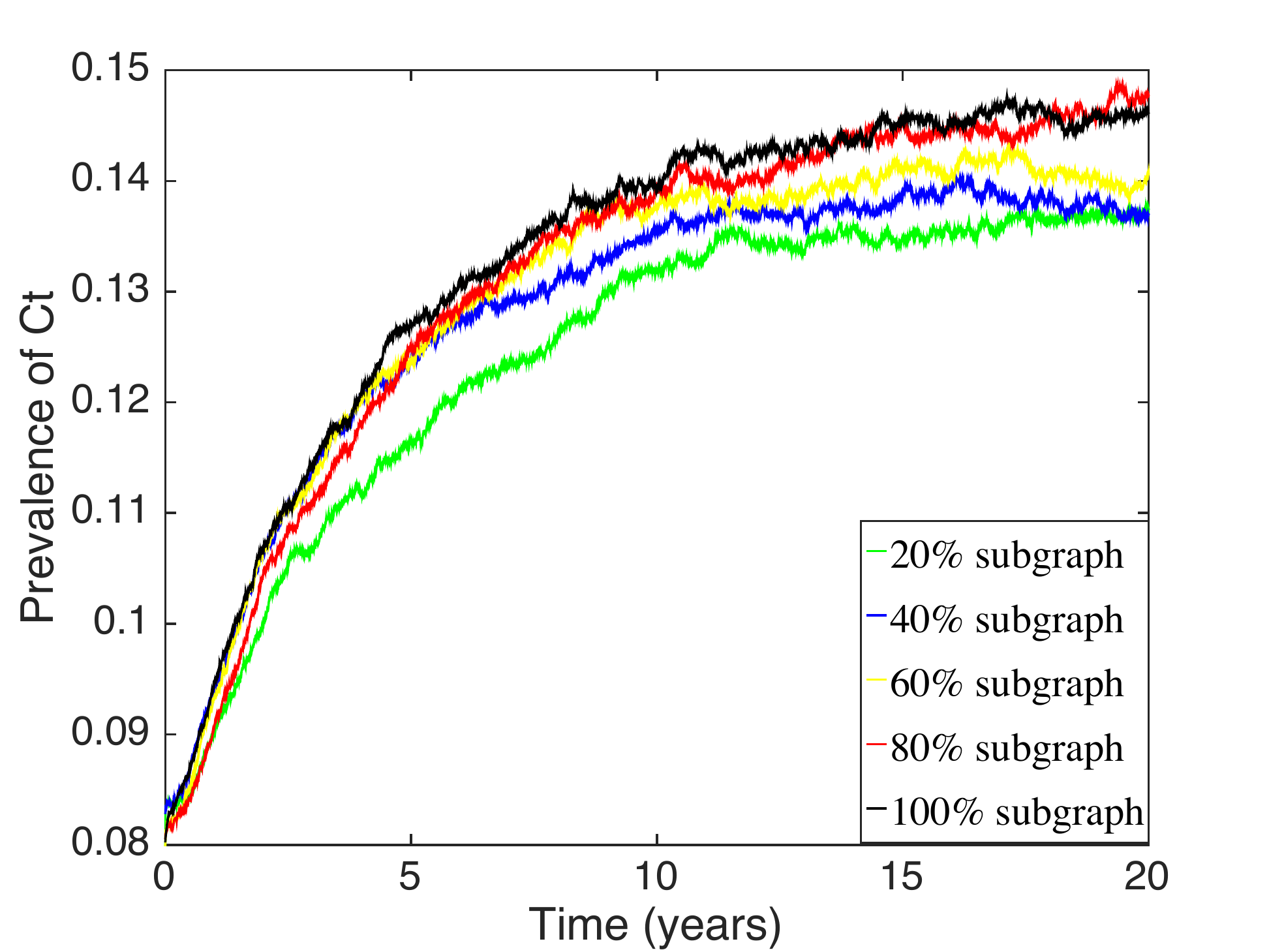}}
\subfloat[\textbf{Prevalence of Ct vs time at baseline}][]{\label{fig:base}\includegraphics[width=0.5\textwidth]{Figures/baseline.pdf}}\\
\begin{minipage}{.9\textwidth}
\caption[\textbf{Ct dynamic on network}]{(a) Comparison of effect of different networks on Ct prevalence: Ct at quasi-steady state is mildly dependent on sexual partner source. There is less than one percent difference on Ct prevalence on different networks. (b) Ct prevalence at baseline values: about $9\%$ of
men (blue lower curve) and $14\%$ of women (pink upper curve) are infected at the
converging point for the baseline model parameters.  The light areas around the
dark mean values (mean is average of $50$ different stochastic simulations) show the range of the solutions for only  $10$ simulations.}
\label{centrality}
\end{minipage}
\end{figure}

\subsection{ Analysis of infected population at quasi-steady state}
The structure of sexual network  plays an important role on the spread of infection and the status of infected people at quasi-steady state. To implement a proper intervention strategy, we look at the properties of infected people at quasi-steady state. These properties tell us about the highest risk people in network. We compare degree, betweenness and closeness centrality\footnote{These centralities are defined in Appendix (\ref{ap1})} of infected people at quasi-steady state.

 Each panel of  Figure (\ref{centrality}) compares different centrality score of all people and infected people at quasi-steady state.
The panels \ref{fig:deg_cen} and \ref{fig:bet_cen} suggest that people with higher degree score (ones with many partners), or with higher betweenness score (people who are in the shortest path between many other people) are not necessarily at higher risk of infection, because the distribution of degree and betweenness centrality for all people and infected people are the same. In panel \ref{fig:clo_cen} we observe  different distribution shape for all people and infected people at quasi-steady state: most of infected people at quasi-steady state are from people with higher closeness centrality scores (individuals  who are reachable for many other people in  network by a path).

\begin{figure}[htp]
\centering
\subfloat[\textbf{Degree Centrality}][Degree Centrality]{\label{fig:deg_cen}\includegraphics[width=0.5\textwidth]{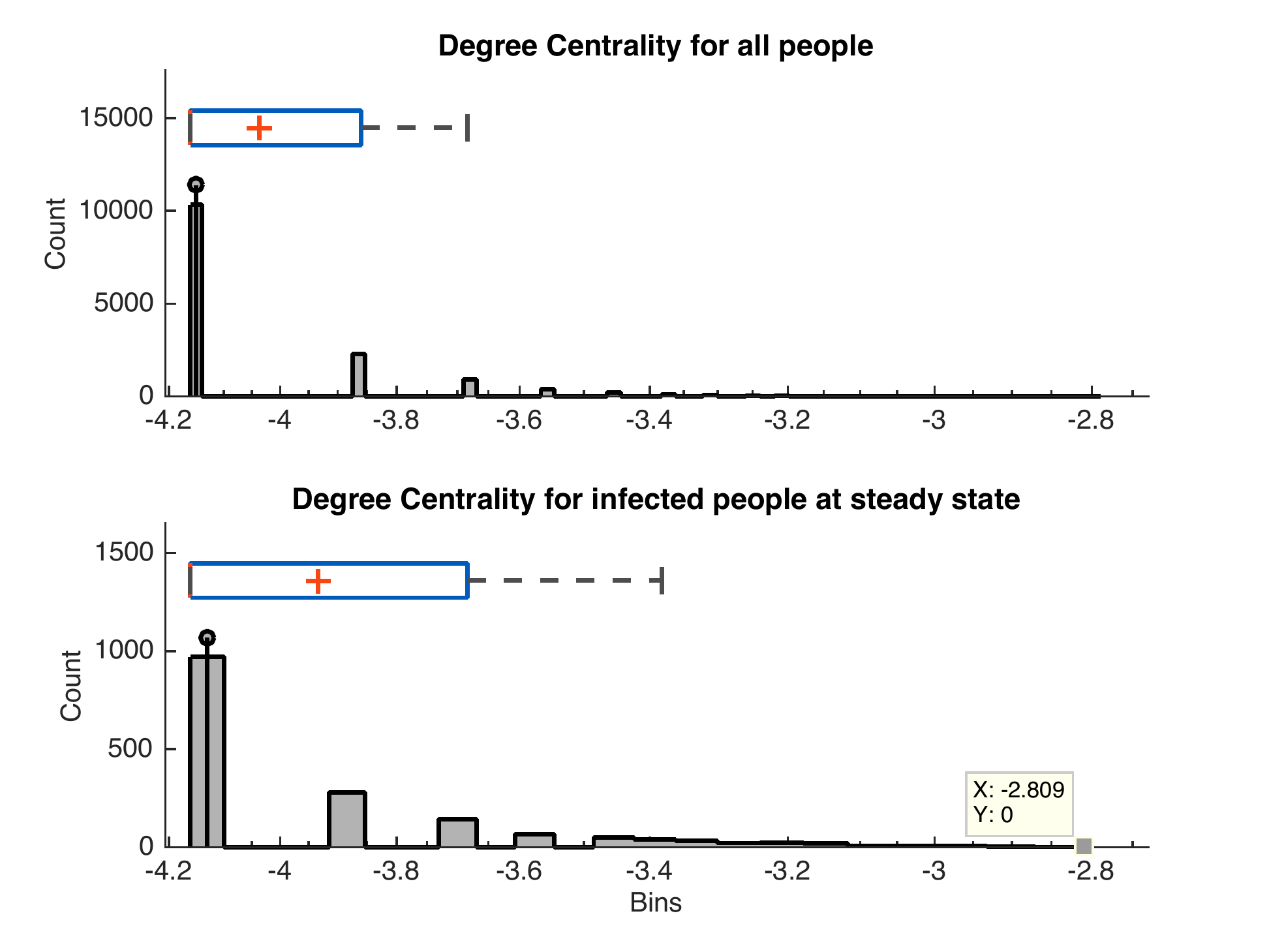}}
\subfloat[\textbf{Betweenness Centrality}][Betweenness Centrality]{\label{fig:bet_cen}\includegraphics[width=0.5\textwidth]{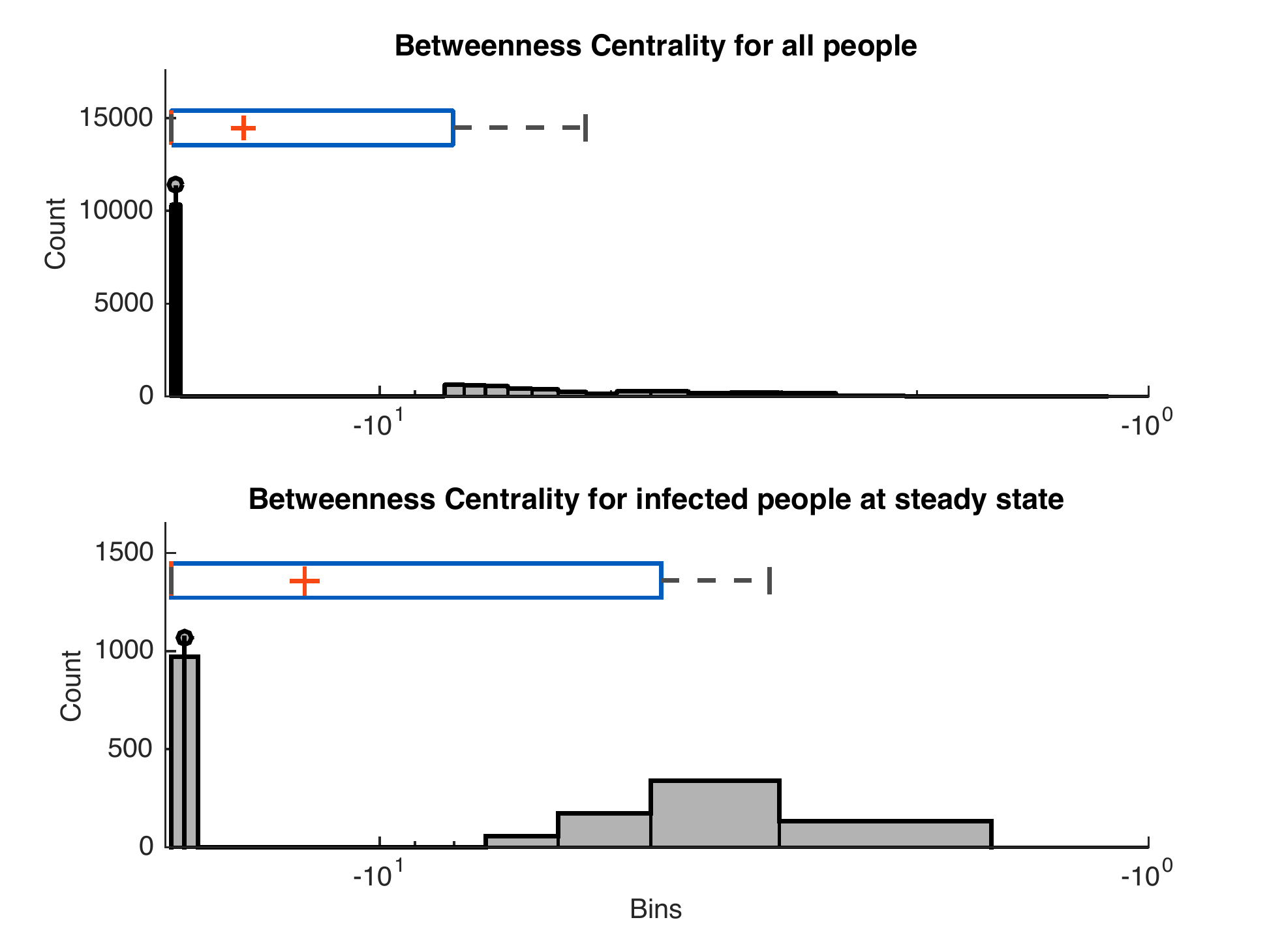}}\\
\subfloat[\textbf{Closeness Centrality}][Closeness Centrality]{\label{fig:clo_cen}\includegraphics[width=0.55\textwidth]{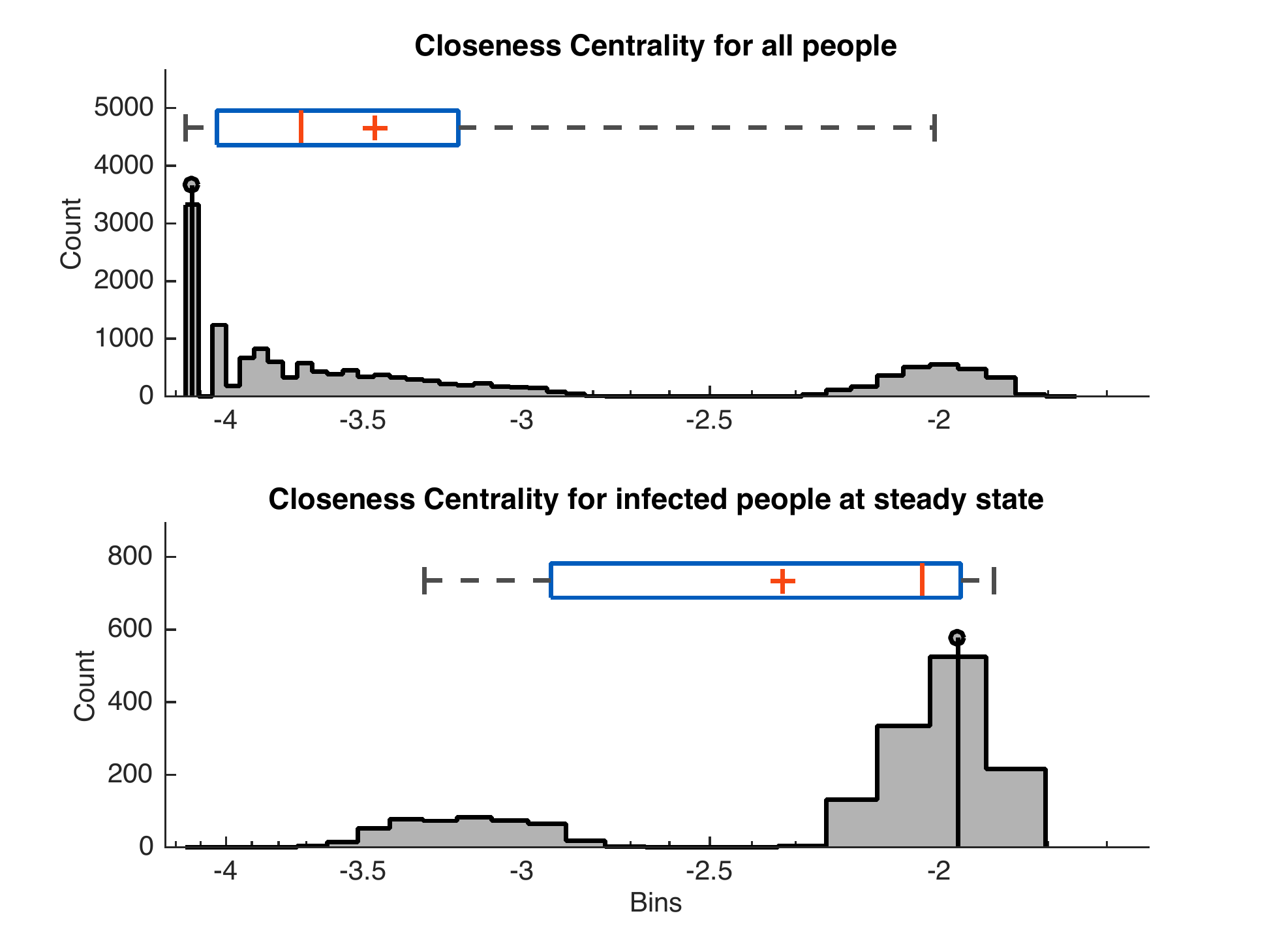}}
\begin{minipage}{.43\textwidth}
\vspace*{-5cm}
\caption[\textbf{Bar plot of distributions for }]{(a) \textbf{Degree Centrality:} the degree distribution for all people and  infected ones have the same trend, thus, degree centrality score is not  key parameter in infection spread. (b) \textbf{Betweenness Centrality:} the betweenness  distribution for all people and  infected ones have the same trend, thus, betweenness centrality score is not  key parameter in infection spread. (c) \textbf{Closeness Centrality:} most of infection at quasi-steady state is clustered on  people with high closeness score, that is, people with high reachability are at higher risk of infection. }
\label{centrality}
\end{minipage}
\end{figure}

 This information suggests us that people with higher and faster reachability are at higher risk of infection. We explain this result in schematic Figure (\ref{case_deg_clo}): in this network node \textbf{i} has $5$ neighbors and reaches total $5$ nodes in the network, but node \textbf{j}, in spite of having fewer neighbors, reaches $9$ other nodes in the network and therefore, is at higher risk for catching or transmitting the infection.

 \begin{figure}[htp]
\centering
  \begin{tikzpicture}[
      mycircle/.style={
         circle,
         draw=black,
         fill=gray,
         fill opacity = 0.3,
         text opacity=1,
         inner sep=0pt,
         minimum size=20pt,
         font=\small},
         myrec/.style={
         rectangle,
         draw=black,
         fill=gray,
         fill opacity = 0.3,
         text opacity=1,
         inner sep=0pt,
         minimum size=18pt,
         font=\small},
      myarrow/.style={-},
      node distance=1.cm and 1.3cm
      ]
      \node[mycircle] (i) {$\textbf{i}$};
      \node[myrec, right=of i] (k1) {};
      \node[myrec,above =of k1] (k0) {};
       \node[myrec,below =of k1] (k2) {};
       \node[myrec, left=of i] (k3) {};
      \node[myrec,above =of k3] (k4) {};
       \node[myrec,below =of k3] (k5) {};
       \node[mycircle,right =of k1] (j) {\textbf{j}};
       \node[myrec, right=of j] (k6) {};
       \node[mycircle, right=of k6] (l) {};
      \node[myrec,above =of k6] (k7) {};
      \node[mycircle, right=of k7] (w) {};
       \node[mycircle,below =of k6] (k8) {};
      \node[mycircle,above =of j] (k10) {};
       \node[mycircle,below =of j] (k11) {};
      \node[mycircle,below =of l] (k) {};
      \draw [myarrow] (i) -- node[sloped,above] {} (k0);
     \draw [myarrow] (i) -- node[sloped,above] {} (k1);
     \draw [myarrow] (i) -- node[sloped,above] {} (k3);
     \draw [myarrow] (i) -- node[sloped,above] {} (k4);
     \draw [myarrow] (i) -- node[sloped,above] {} (k5);
     \draw [myarrow] (j) -- node[sloped,above] {} (k2);
     \draw [myarrow] (j) -- node[sloped,above] {} (k7);
     \draw [myarrow] (k6) -- node[sloped,above] {} (k8);
     \draw [myarrow] (k6) -- node[sloped,above] {} (l);
     \draw [myarrow] (k6) -- node[sloped,above] {} (k11);
     \draw [myarrow] (k2) -- node[sloped,above] {} (k11);
     \draw [myarrow] (k7) -- node[sloped,above] {} (k10);
     \draw [myarrow] (k7) -- node[sloped,above] {} (w);
     \draw [myarrow] (k) -- node[sloped,above] {} (k6);
     \draw [myarrow] (j) -- node[sloped,above] {} (k6);
    \end{tikzpicture}
    \caption[\textbf{Difference between degree and closeness scores}]{An example to compare degree and closeness scores for two typical nodes: node \textbf{i} has higher degree ($de(\textbf{i})=0.33$) than node \textbf{j} ($de(\textbf{j})=0.2$) but reaches fewer nodes ($5$) than \textbf{j} ($9$), therefore, \textbf{j} is at higher risk of infection than \textbf{i}.}
    \label{case_deg_clo}
\end{figure}
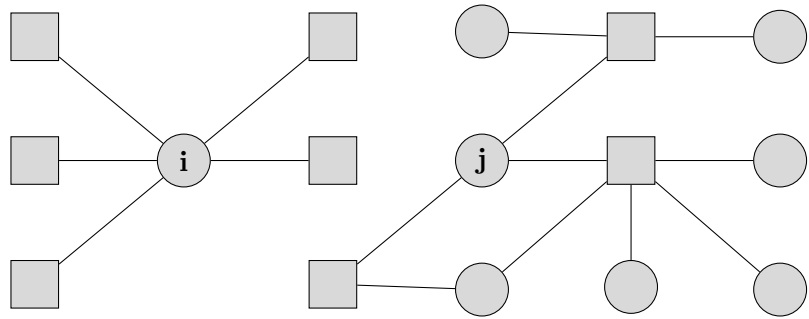

That means individuals who are close to too many other people in the network are good candidates to be tracked by public health staff. But there is no  clue about reachability of a typical person. The only information we can ask individuals is about their number of partners (their degree) or the number of partners of partners. Thus, the first question may come in mind is that if there is any correlation between degree and closeness score of a node. The Figure \ref{fig:cor} is the scatter plot of degree and closeness score of people in our sexual network. The Figure does not show any correlation between degree and closeness of people in sexual network. 

To  have a deeper look into relation between degree and closeness scores, we plot the distance-reachability cumulative distribution for each group of people with $k$ partners in Figure \ref{fig:dis_reach1}: the point $(x,y)$ in each curve shows the average probability of reaching $y$ fraction of population within at most $x$ steps. For example on average a person with $15$ partners reaches to $12.5\%$ of whole population within at most $20$ steps or through people who are at most in distance $20$ of him/her. For the networks  with homophily in the degree, it is obvious that this distance-reachability cumulative distribution for higher degree individuals should move faster and reaches more people. In sexual network, for most of the degree groups we observe this pattern, however, for some high degree groups we see opposite pattern. For example curve for  people with $15$ partners grows faster and reaches more  in compared with the curve for  people with $16$ partners. 

This plot is for just one sexual network, but we should not fool ourselves based on one network. 
We generated $30$ sexual networks and then for each of these networks we evaluated the fraction of population  in each degree group can reach within $20$ steps. The Figure \ref{fig:dis_reach2} shows the result: each point $(x,y)$ corresponds to one sexual network   and tells that on average a person with $x$ partners reaches $y$ fraction of population within at most $20$ steps. The red curve is smoothing spline fitted to the circle data. The large fluctuation  for large degree x  is because of paucity of people with high number of partners. The Figure tells that for lower risk people (people with few or moderate amount of partners)   there is a linear correlation between degree and reachability, that is, when people increase their  number of partners they can reach higher fraction of sexually active population. However, increasing the number of partners to   $10$ or more, the reachability fraction converges to a constant value $0.03$ meaning that no matter how many partners a person has, he/she cannot reach more than $3\%$ of whole population within at most $20$ steps. That is, to find proper person for screening- person with a high chance of carrying infection- the number of partner he/she may have matters if they have less than $10$ partners. In other words, for people with degrees in range $[1,10]$, more partner means reaching more people in the network which causes higher risk of catching  Ct infection. But  people with more than $10$ partners reach the same fraction of population, therefore, there would not be any screening liability  among them. 
This observation is similar to the result in Chapter (\ref{continuous}) that studied the impact of condom-use in controlling Ct. In that works, we observed that there is a threshold for the impact of number of partners on risk of catching infection: individuals with number of partners more than a threshold value have the same risk of infection no matter how many times they use condom.

\begin{figure}[htp]
\centering
\subfloat[\textbf{Degree vs Closeness Centrality}][Degree vs Closeness Centrality]{\label{fig:cor}\includegraphics[width=0.475\textwidth,height=.225\paperheight]{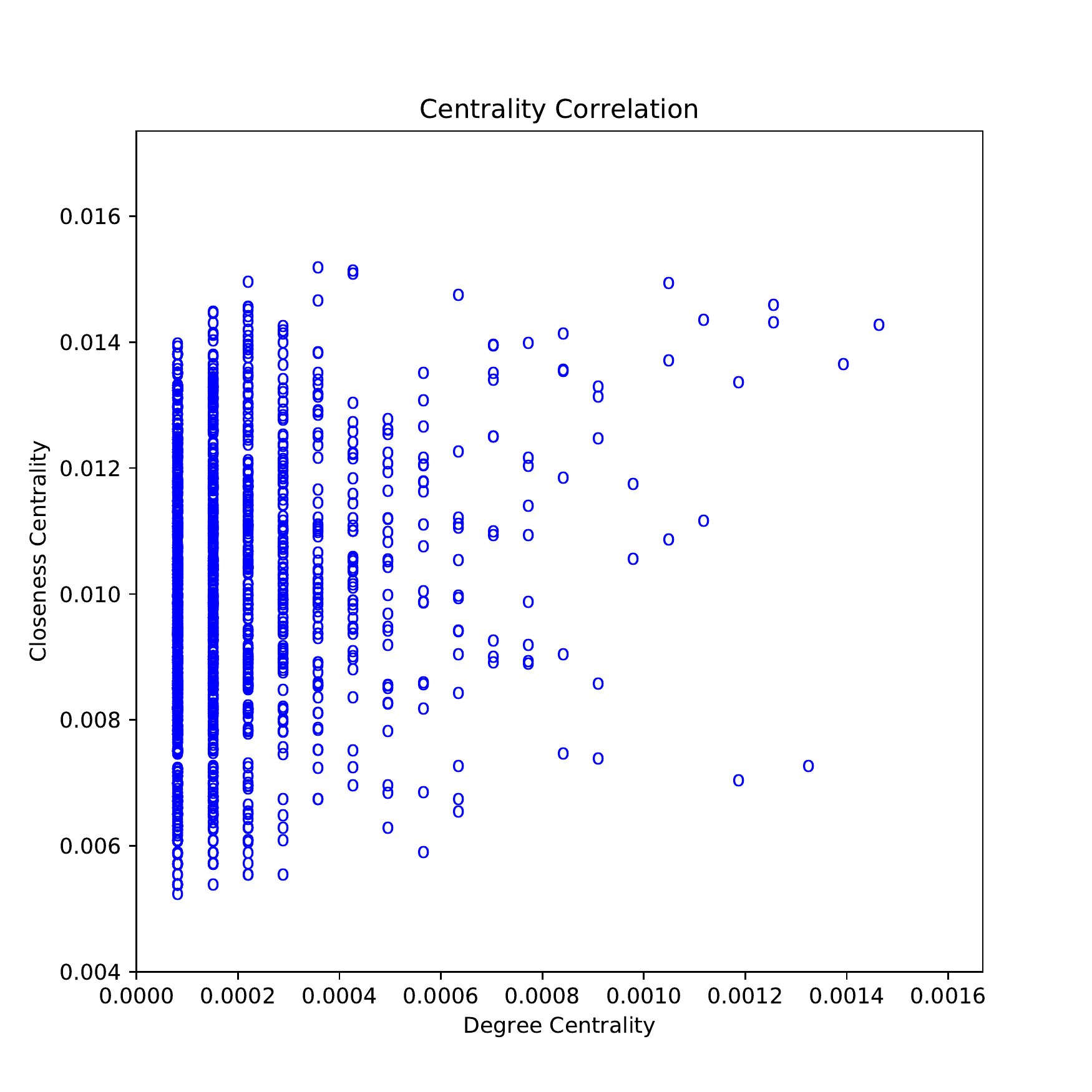}}
\subfloat[\textbf{Distance-Reach distribution}][Distance-Reach Distribution]{\label{fig:dis_reach1}\includegraphics[width=0.54\textwidth,height=.225\paperheight]{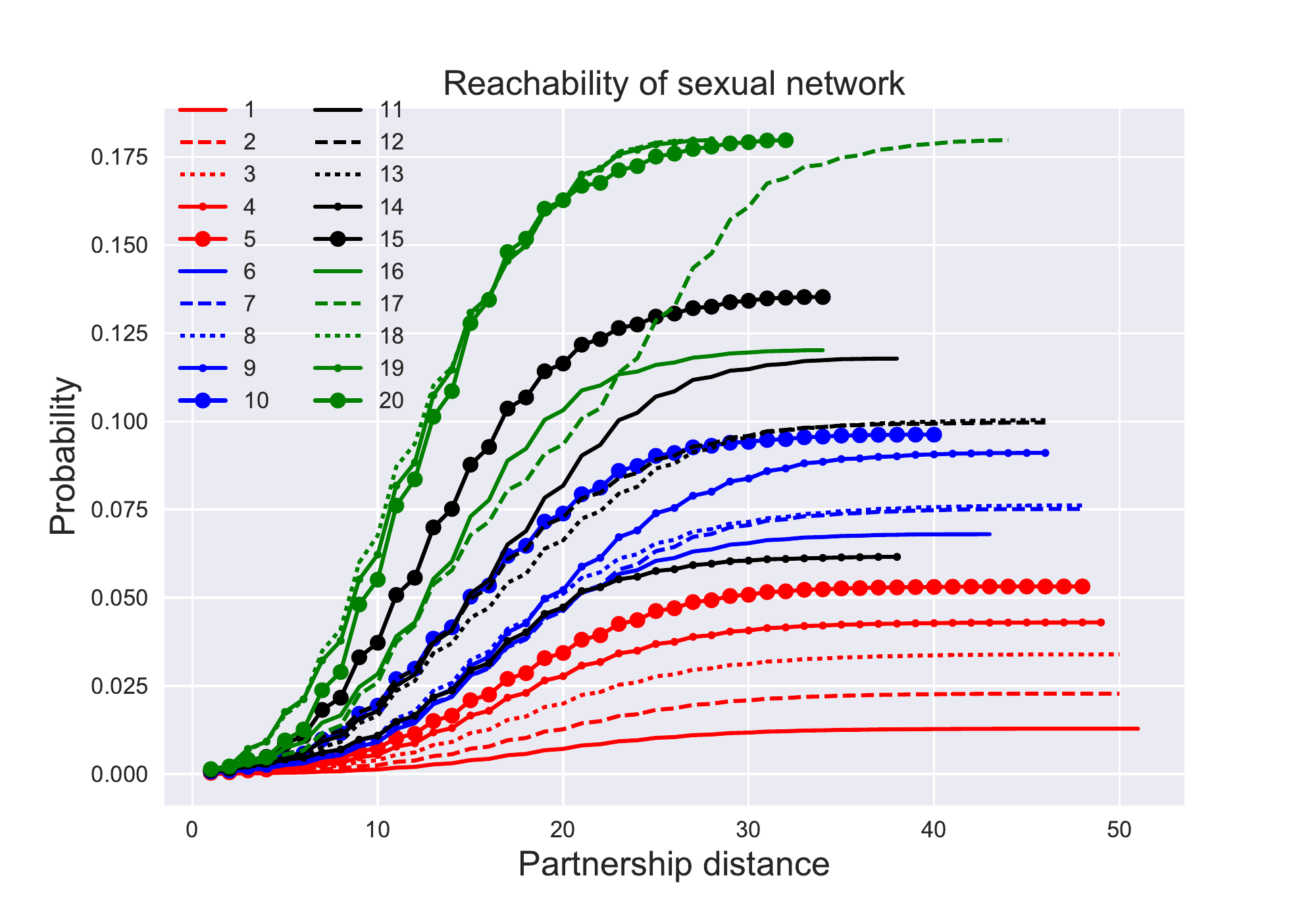}}\\
\subfloat[\textbf{Reachability Probability within 20 Distance}][{Reachability probability within 20 distance}]{\label{fig:dis_reach2}\includegraphics[width=0.42\textwidth,height=.225\paperheight]{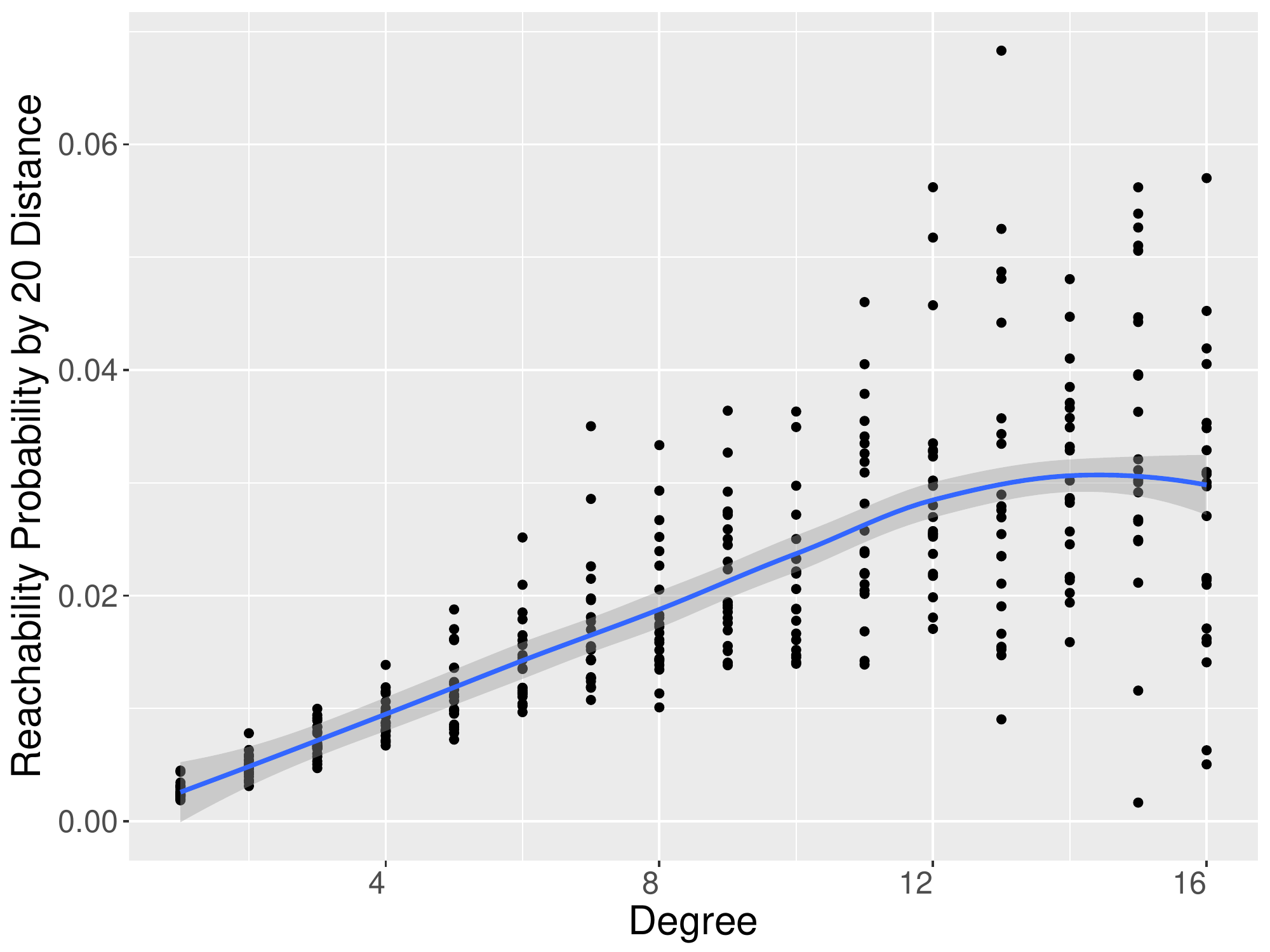}}
\hspace*{1cm}
\begin{minipage}{.5\textwidth}
\vspace*{-5cm}
\caption[\textbf{Distance reachability for nodes with different degree and closeness scores}]{(a) Degree versus closeness centrality:  the closeness score of people does not correlate with their degree. (b) \textbf{Distance-Reach distribution:} for degrees less than or equal to $10$ the graph move faster and reaches more people  as degree increases, for higher degrees we cannot see any trend which is because of paucity of people with that number of partners. (c) \textbf{Reachability probability within 20 distance:} for small and moderate degree individuals reach more people as they increase their number of partners, for high degree the fraction of reachable population tends to a constant value that is, when the degree is very high ($10$ or more for this set of data) that risk of Ct infection loses its dependence on the number of partners.}
\label{centrality}
\end{minipage}
\end{figure}

This result may not   provide an intervention strategy by  prioritizing  individuals  based on some criteria other number of partners for screening. However, it can be used in consulting individuals about their risky sexual behavior like number of partners: to reduce their risk of infection they have to keep their number of concurrent sexual partners less then some threshold value.

\subsection{ Mitigation efforts for controlling chlamydia}
We compare the model-projected impact of increased random screening, partner notification -- which includes partner screening and partner treatment -- social friend notification, and rescreening on the prevalence of Ct infection.
All of the simulations start at a balanced equilibrium obtained with the model baseline parameters in Table (\ref{parameters}).

\emph{\textbf{Random Screening}}:
the fundamental component of our bundled intervention is screening men for Ct to reduce infections in population. An expert panel, convened in $2006$
by the CDC, concluded that evidence is insufficient to
recommend routine screening for Ct in sexually active young men because of several factors such as  feasibility, efficacy, and cost-effectiveness, however, since then, evidence of the benefit of screening young
men for Ct in high prevalence areas has been mounting including that it can be cost-effective  and can make an impact on rates among women \cite{gift2008program,gopalappa2013cost}. Therefore, we consider  screening the other part of the sexual network (i.e. men).
Our model  can provide information on how much of the intervention for men is needed for impact on  Ct rates. To find men for screening we follow a  venue-based screening approach:
since most Ct infections are asymptomatic and young men are unlikely to seek traditional health care, a community rather than a clinic based approach is likely to reach more at risk AA men \cite{kissinger2014check,reagan2012randomized,stein2008screening}. 

To determine the effectiveness of increasing the number of men screened for Ct per year, we compare the quasi-stationary  state prevalence by varying the fraction of men who are screened randomly each year, $\sigma^m_y$.
The current screening rate for young men for Ct in high prevalence areas, like New Orleans, is low. This scenario can estimate the cost effectiveness of increased screening of young men on the Ct prevalence in women \cite{gift2008program, gopalappa2013cost}.
The Figure \ref{fig:scr1} shows a reduction in the overall Ct prevalence and the Figure. \ref{fig:scr2} shows a reduction in  Ct prevalence for different genders as the number of men randomly screened for Ct increases from $0$ to $50\%$, $0\leq \sigma^m_y\leq 0.5$.
The filled circles are the mean of $50$ different stochastic simulations and error bar are the $95\%$ confidence intervals. 
The least-square linear fit suggests that the steady-state Ct prevalence will decrease by $0.01$ for every additional $10\%$ of the men screened during a year.
Though a drop of five percent in prevalence is an admirable decrease, increased screening alone would not be sufficient to control Ct.

\begin{figure}[htp]
\centering
\subfloat[\textbf{Prevalence of Ct vs time for different networks}][Population prevalence]{\label{fig:scr1}\includegraphics[width=0.5\textwidth]{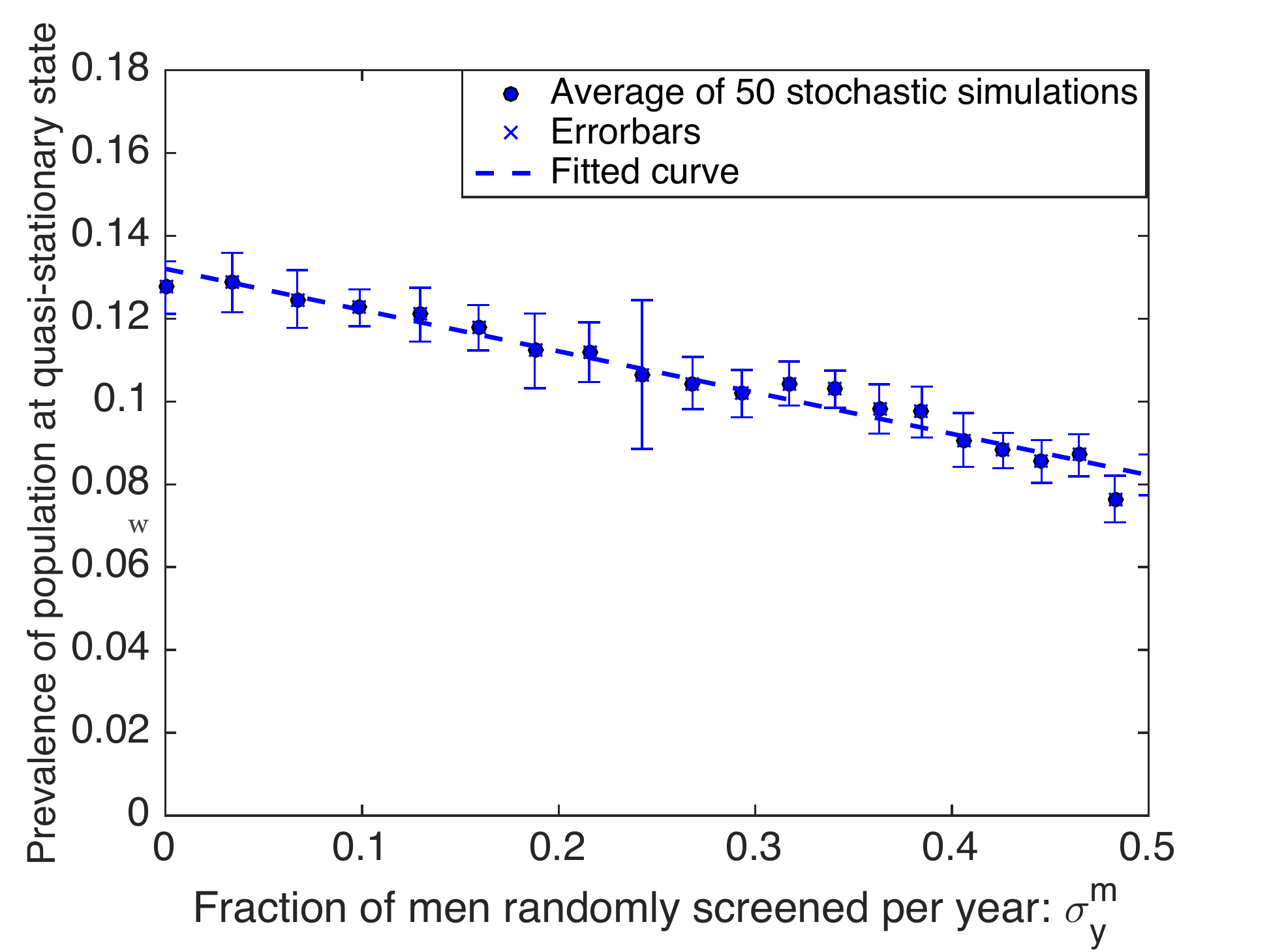}}
\subfloat[\textbf{Prevalence of Ct vs time at baseline}][Gender-based prevalence]{\label{fig:scr2}\includegraphics[width=0.5\textwidth]{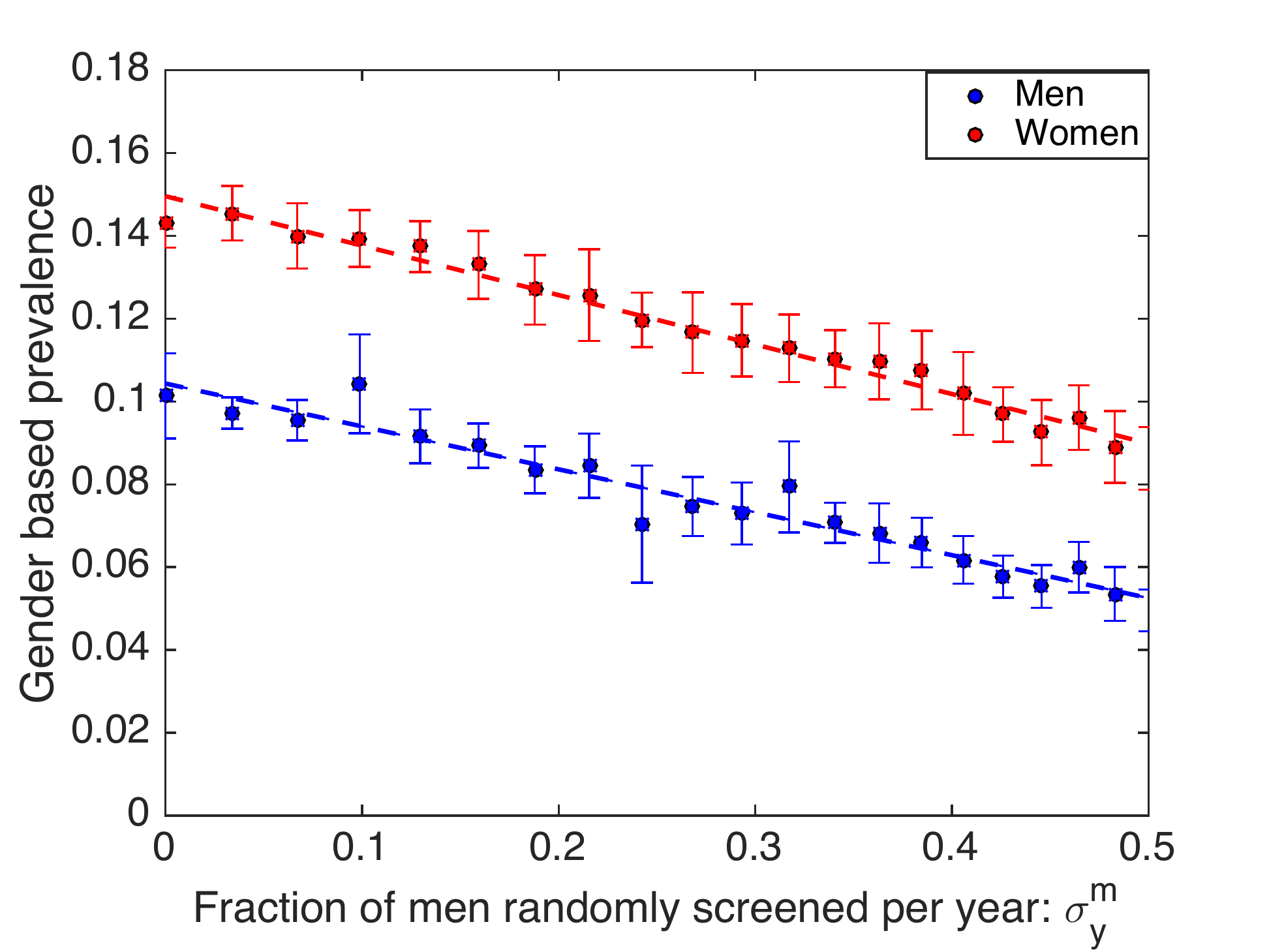}}\\
\begin{minipage}{.9\textwidth}
\caption[\textbf{Prevalence vs screening rate for men}]{Quasi-stationary state prevalence decreases as more men are screened each year. ${\sigma_y^m}$: a low negative correlation, screening men randomly by $50\%$ reduces prevalence by $5\%$ which is not effective enough to implement as a sole intervention. }
\label{just_screening}
\end{minipage}
\end{figure} 

\emph{\textbf{Rescreening}}:  
The rescreening scenario targets two goals: first finding the time-lag for resceening, and second quantifying the impact of rescreening on prevalence of Ct: 
\begin{enumerate}
\item \textbf{Interval for rescreening}\\
People who are found to be infected are more likely to be reinfected in the future.
Repeated Ct infection can be the result of treatment failure, sexual activity with a new partner, or being reinfected from an existing infected partner.
It makes sense to ask the infected people who were treated to return in a few months for retesting.
We will use the model to compare the rates of reinfection to help optimize the time, $\tau_r$, from treatment to rescreening.

The time $\tau_r$ for rescreening should be long enough so it is likely that the person will be reinfected, but not so long that such a reinfected person could infect significantly more people.
We start by identifying the rescreening time when the prevalence for the treated population exceeds the prevalence for the whole population.
That is, it becomes cost effective to rescreen when over $15\%$ of previously screened people are again infected.
To find this optimal time, we compute the time taken between screening time and next reinfection time for all individuals in the network assuming there is no rescreening i.e $\sigma_r=0$.  
On average at baseline $12\%$ of individuals are infected, therefore, through random screening, $12\%$ of infected individuals are found.
The current CDC guidelines recommend that people are rescreened for infection three months after treatment \cite{peterman2006high}. 
 
In our model, a person may be screened and be reinfected multiple times.  Therefore, we count the number of tests and reinfection events rather than the number of individuals with a test and infection. 

We plot cumulative distribution of time between screening and reinfection events in Figure (\ref{cdf-rescreening}).
Past studies have observed that about $25\%$ of the rescreened people are again found to be infected by three months.
The Figure demonstrates that in our model also predicts that about $25\%$ of treated individuals are again infected after $100$ days.
Although the model supports the CDC guideline as reasonable, the time between treatment and rescreening could be shortened to two months with an improved impact. 

 \begin{figure}[htp]
\centering
\includegraphics[scale=0.35]{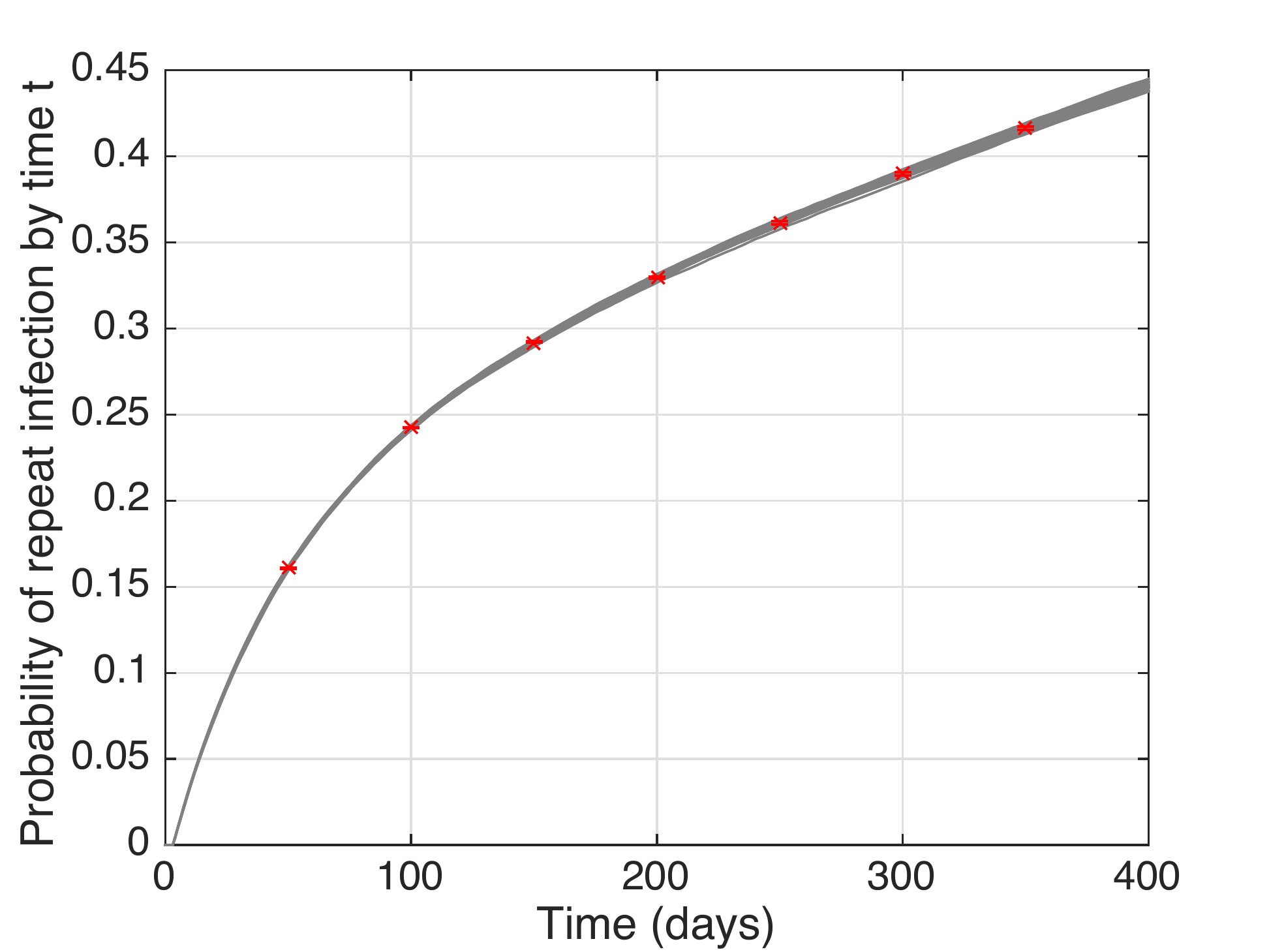}
\caption[\textbf{Interval for rescreening}]{Truncated cumulative probability distribution of time between treatment and reinfection with Ct: fifty different curves from $50$ stochastic simulations and $95\%$ confidence interval are shown in this figure.
About $25\%$ of the treated people are again infected after $100$ days.  This increases to about $45\%$ are reinfected after almost a year. }
\label{cdf-rescreening}
\end{figure}

\item \textbf{Rescreening rate}\\
The secondary goal of rescreening scenario is to determine if rescreening for Ct infection at a larger rate would be successful in reducing its prevalence.
At the baseline case only $10\%$ of screened individuals return for rescreening.
We assume a $0\leq \sigma_r\leq 1$ fraction of screened individuals participate in a rescreening plan $100$ days after their current screening day.
The Figure (\ref{rescreening}) quantifies the prevalence of Ct at quasi-stationary state dependent on rescreening rate $\sigma_r$: there is a  negative correlation between prevalence at quasi-stationary state and $\sigma_r$ when $\sigma_r$ fraction of screened individuals returns for rescreening,   if $\sigma_r$ fraction of screened individuals follow screening again then the prevalence reduces roughly by $0.02\sigma_r$.

\begin{figure}[htp]
\centering
\hspace{1cm}
\includegraphics[scale=0.3]{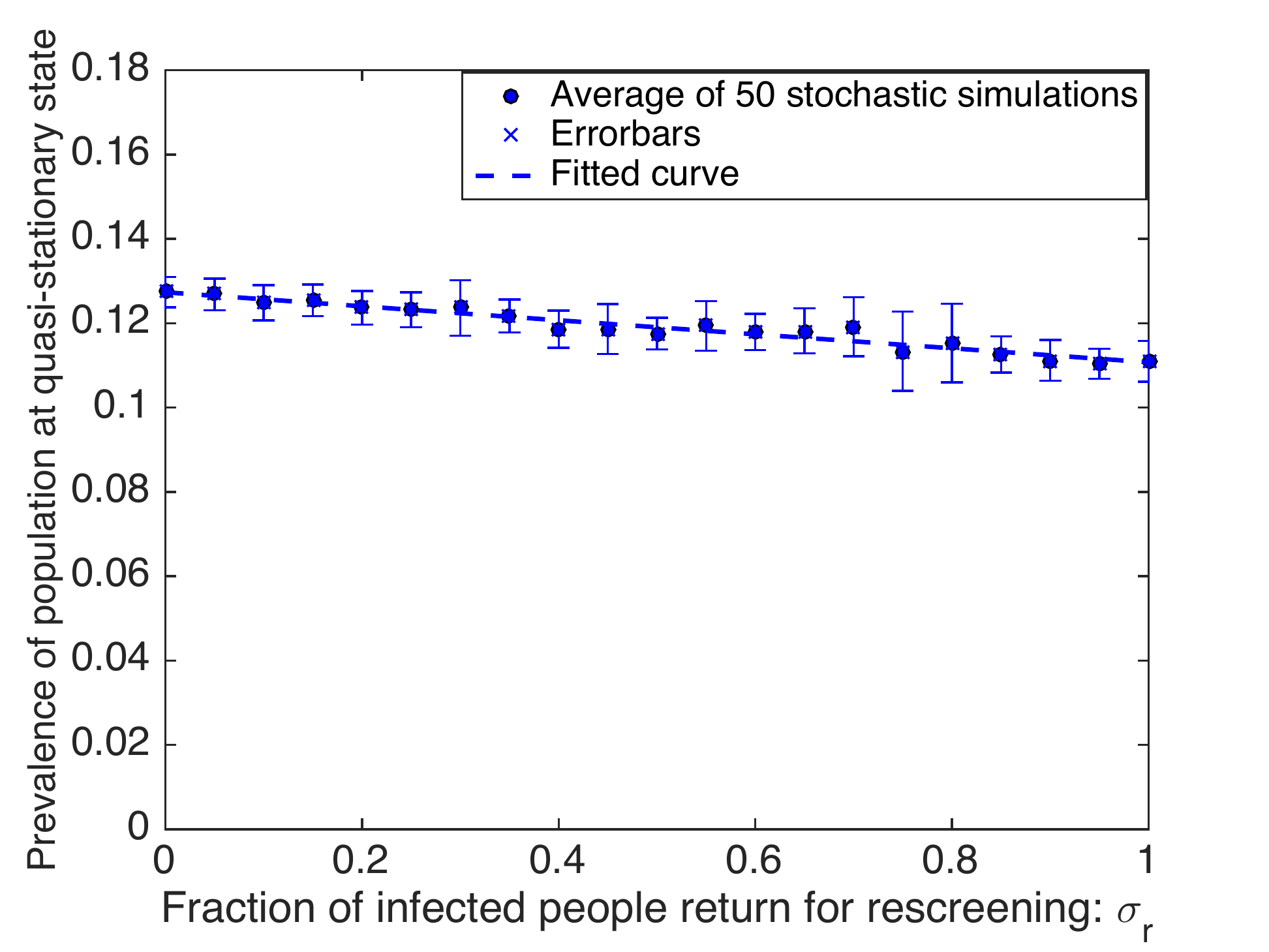}
\caption[\textbf{Quasi-steady state vs rescreening rate}]{Quasi-stationary state prevalence of population versus the fraction of treated people who come back for screening, $\pmb{\sigma_r}$: the circles are the mean of $50$ different stochastic simulations and error bars are $95\%$ confidence intervals. Rescreening all the infected people reduces prevalence by  $2\%$.}
\label{rescreening}
\end{figure} 
\end{enumerate}
\emph{\textbf{Partner Notification}}:
this scenario of partner notification quantifies the impact of giving an infected person's partners a chance to be tested and treated.
We define $\theta_n^p$ as the fraction of infected person's partners who are notified that they might be infected.
We then assume that only $\theta_t^p$ fraction of those notified partners are treated, without testing (partner treatment), and $\theta_s^p$ fraction are tested and if necessary, treated (partner screening).
Note that the fraction $1-\theta_{n}^p$ fraction of the partners are not notified.
\begin{enumerate}
\item \textbf{Partner treatment} \\
In partner treatment we assume when someone is found to be infected the fraction $\theta_n^p$ of their partners are notified and then all of the notified partners will seek treatment without testing, in other words, we define $\theta_t^p=1$.
The Figure (\ref{pt}) shows the impact of partner treatment ranging from no notified partners treated,  $\theta_n^p=0$, to all notified partners treated, $\theta_n^p=1$.
The filled circles are the mean of $50$ different stochastic simulations and error bars are $95\%$ confidence intervals. 
The least-square linear fit suggests that the quasi-stationary state Ct prevalence will decrease by $0.07$ for every $10\%$ increment in the fraction of notified partners seeking treatment.
This practice, although common in disease control today, is not as effective as partner screening as we will see.

\begin{figure}[htp]
\centering
\subfloat[\textbf{Prevalence of Ct at quasi-stationary state vs fraction of partners notified}][Population prevalence]{\label{fig:scr1}\includegraphics[width=0.5\textwidth]{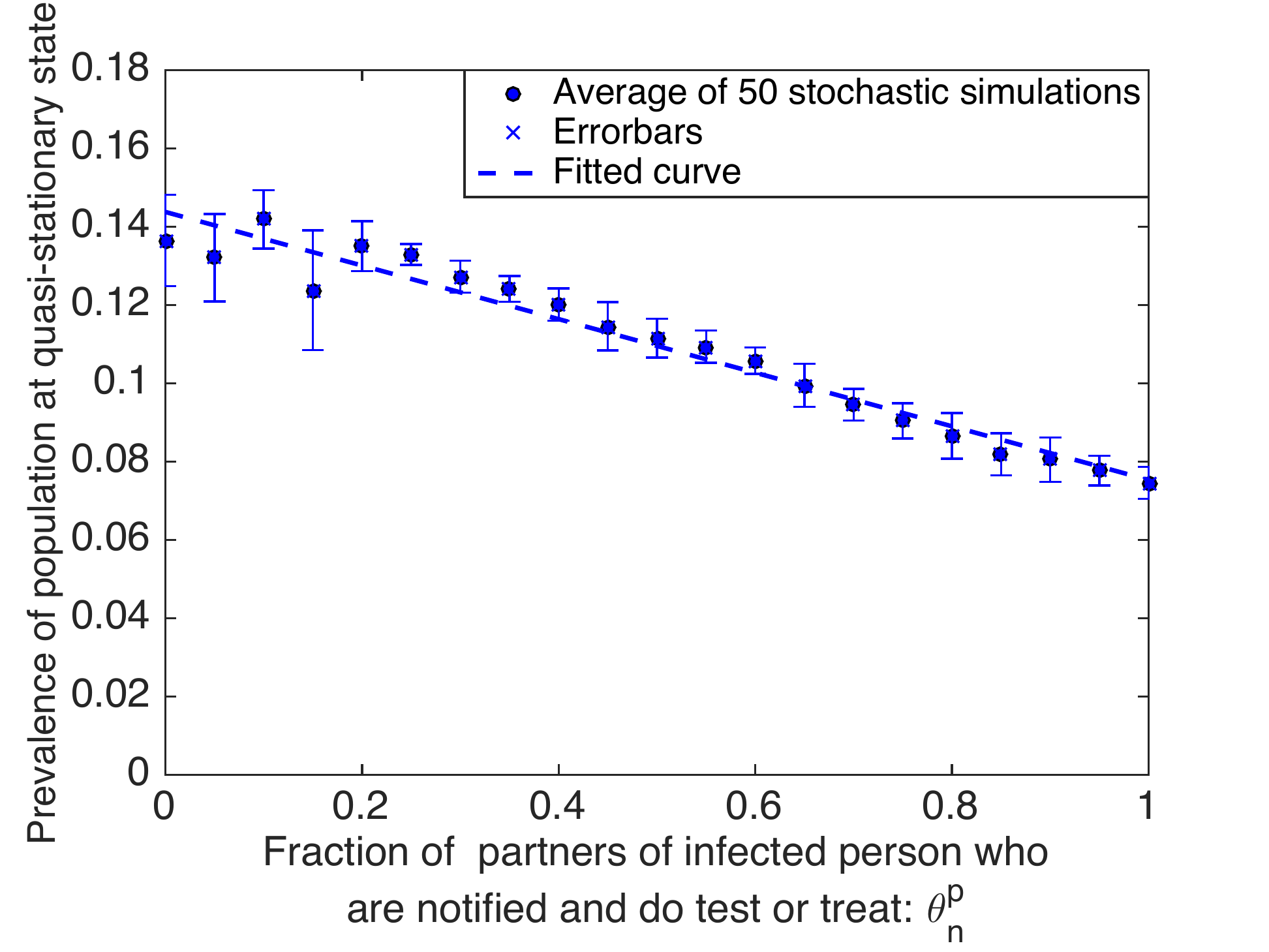}}
\subfloat[\textbf{Prevalence of Ct vs time at baseline}][Gender-based prevalence]{\label{fig:scr2}\includegraphics[width=0.5\textwidth]{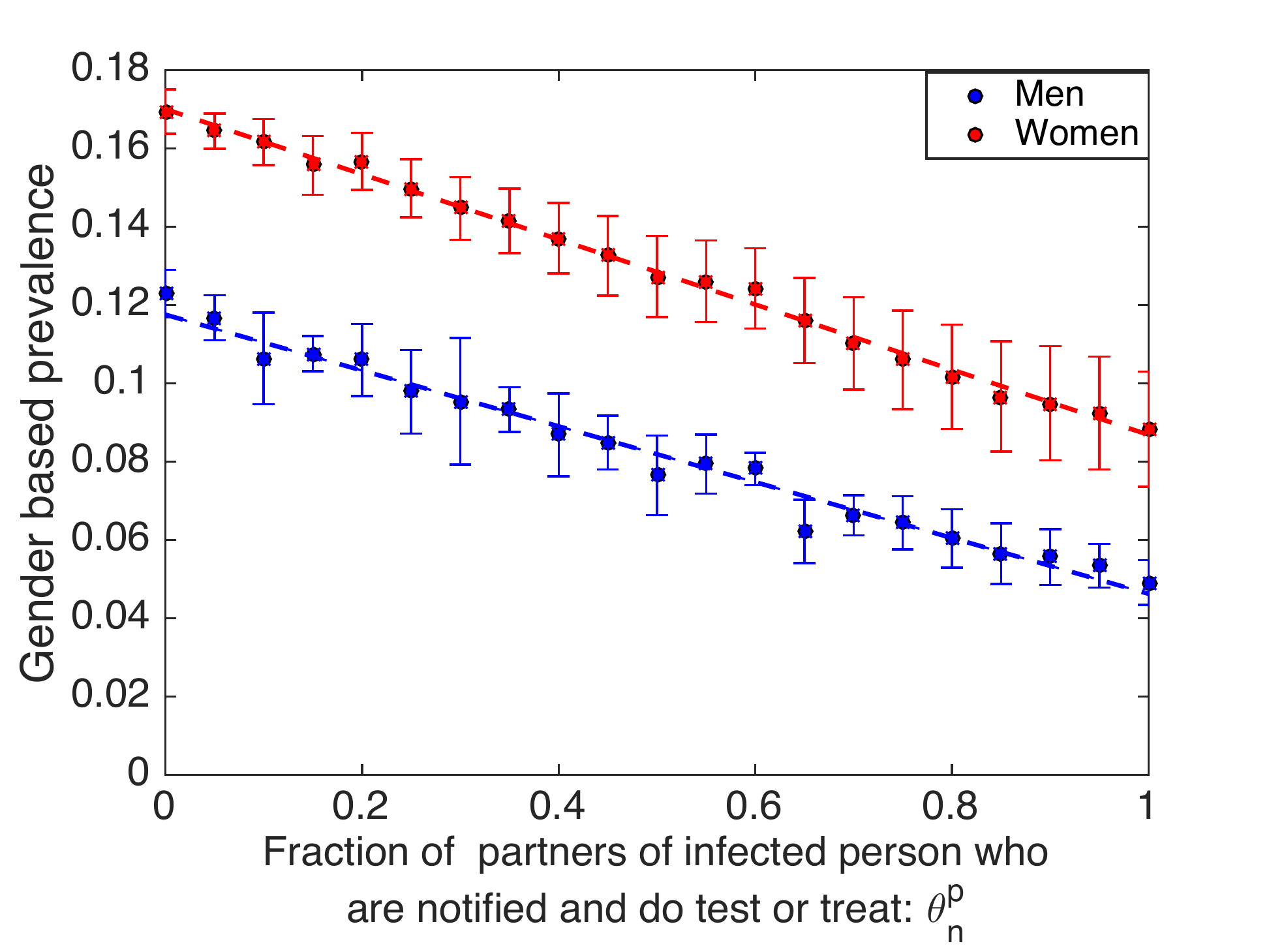}}\\
\begin{minipage}{.9\textwidth}
\caption[\textbf{Prevalence vs partner treatment}]{Prevalence decreases as the more partners are treated after being notified that they might be infected.  In these simulations, we assume that all the notified partners are treated, without testing $\theta_t^p=1$.  This approach is only mildly effective and the prevalence  remains high ($8\%$), even when all the partners of treated people are treated. }
\label{pt}
\end{minipage}
\end{figure} 

\item \textbf{Partner screening}\\
To quantify the impact of screening the partners of an infected person, where partners are tested then treated if found to be infected,  we assume that all notified partners of an infected person are screened, that is, we assume $\theta_s^p=1$.
The Figure (\ref{ptt}) shows the impact as  $\theta_{n}^p$ varies from $0$ to $1$. 
The filled circles are the mean of $50$ different stochastic simulations and error bars are $95\%$ confidence intervals.
The logistic curve fit suggests that there is a threshold effect (tipping point) at $\theta_{n}^p \approx 0.4$, when the approach becomes extremely effective. This happens when the partner screening percolates through the sexual network to identify the infected individuals.
Our model indicates that this is by far the most effective approach for bringing the epidemic under control.

\begin{figure}[htp]
\centering
\subfloat[\textbf{Prevalence of Ct at quasi-stationary state vs fraction of partners notified}][Population prevalence]{\label{fig:scr1}\includegraphics[width=0.5\textwidth]{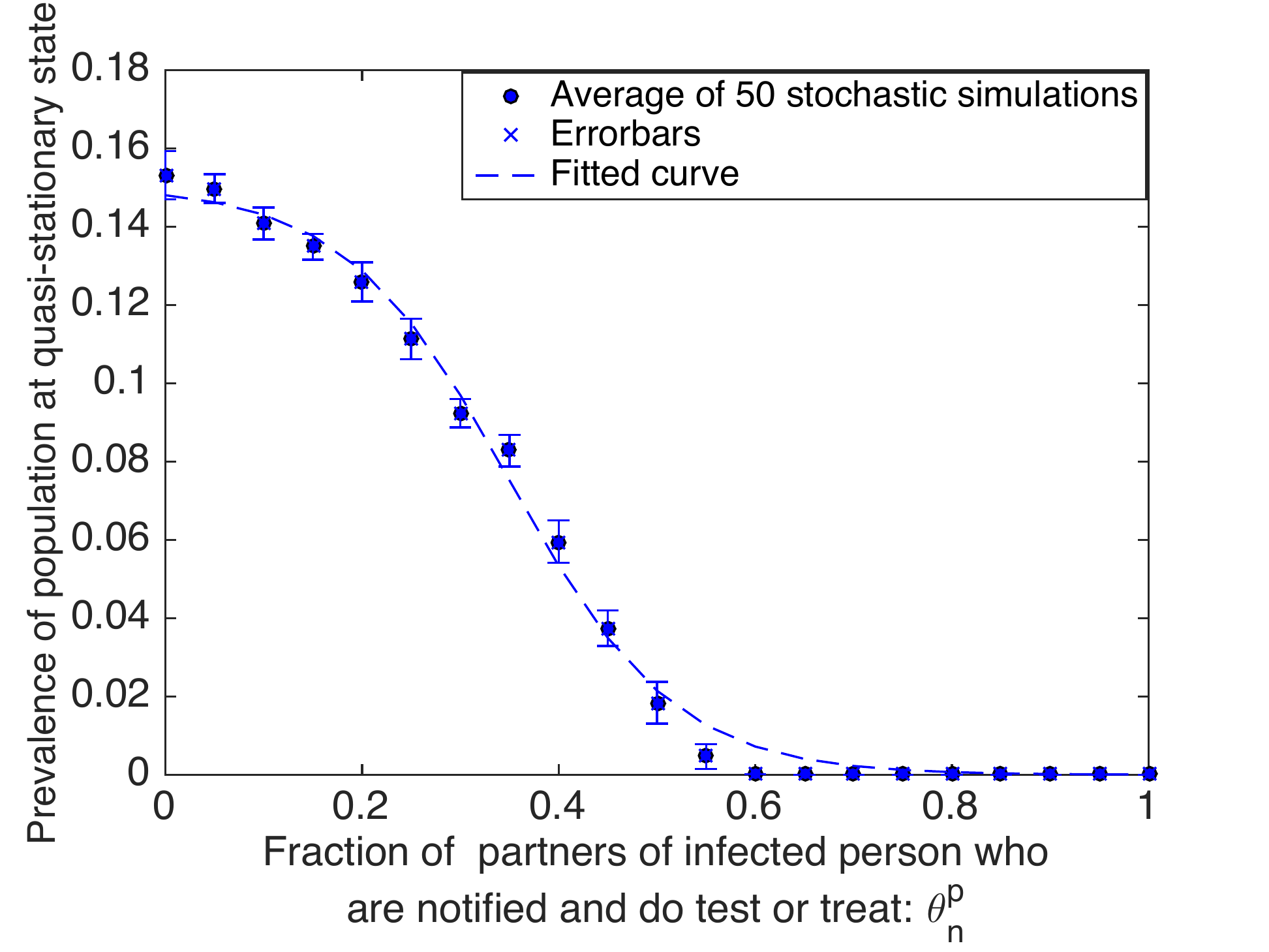}}
\subfloat[\textbf{Prevalence of Ct vs time at baseline}][Gender-based prevalence]{\label{fig:scr2}\includegraphics[width=0.5\textwidth]{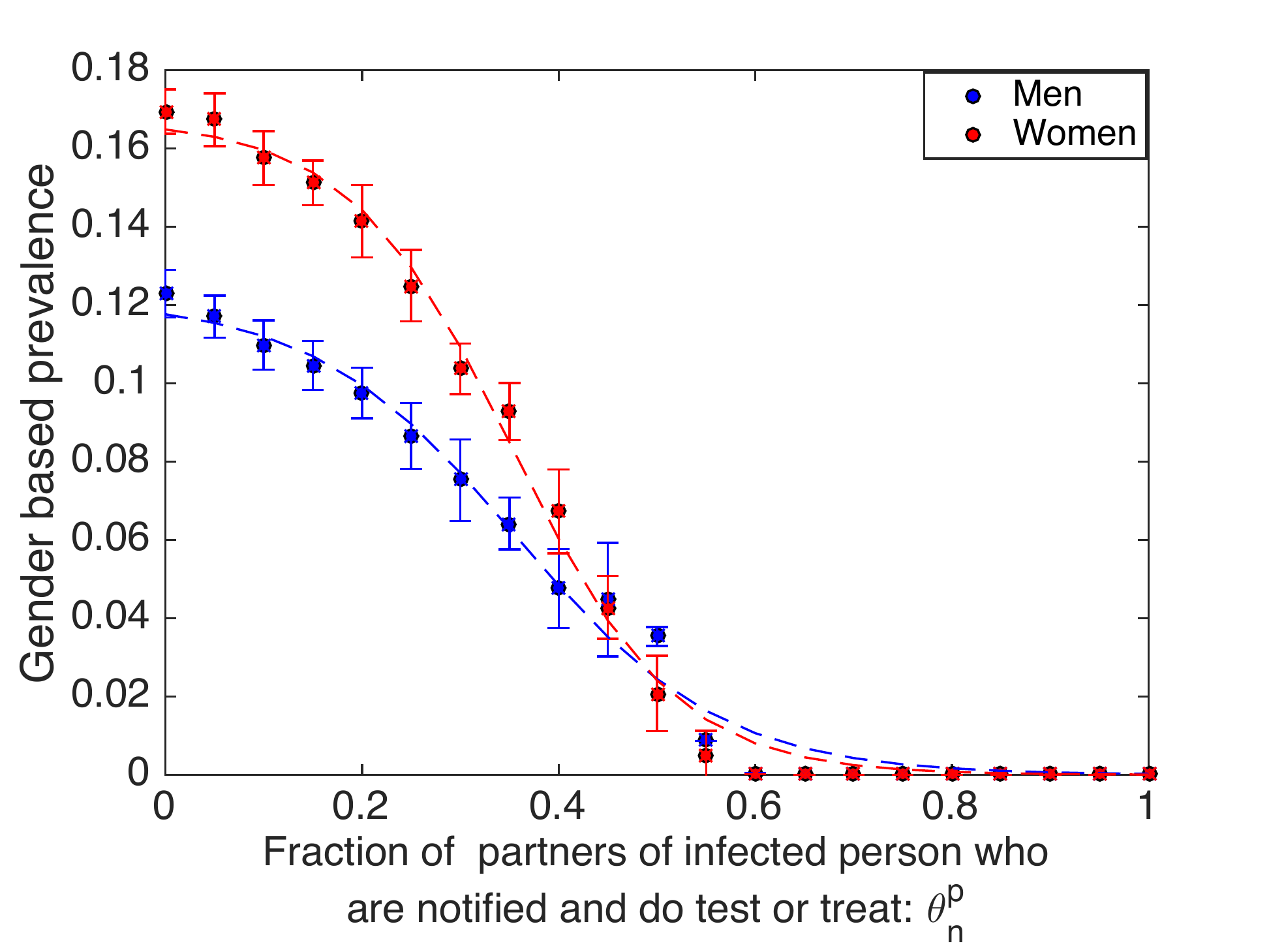}}\\
\begin{minipage}{.9\textwidth}
\caption[\textbf{Prevalence vs partner screening}]{Prevalence drops to zero as the fraction of the partners of treated people are tested before possible treatment. In these simulations we assume that all of the notified partners are tested for infection, ${\theta_s^p=1}$.  This partner screening approach is highly effective if the fraction of tested partners, $\theta_n^p$, exceeds its critical value $\theta_n^*=0.4$. That is, when $\theta_n^p\geq \theta_n^*$ and $\theta_{s}^p=1$, the Ct prevalence rapidly decays to zero.}
\label{ptt}
\end{minipage}
\end{figure} 

\item \textbf{Partner treatment and screening}\\
In reality, some of the notified partners will seek treatment without testing, and some will allow themselves to be tested before being treated.
We quantify the effectiveness of this mixture of the two  previous scenarios by varying fraction of the partners taking action ($\theta_n^p = 0.10, 0.20, 0.50, 0.65$ and $0.80$), along with the fraction of these notified partners that seek just treatment ($\theta_{t}^p$) and the fraction being screened for infection ($\theta_{s}^p=1-\theta_t^p$).

When few partners are notified and take action ($\theta_n^p$ is small), then partner treatment and partner screening have almost the same impact on controlling the prevalence.
For example, for $\theta_n^p=0.10$, the prevalence versus $\theta_t^p=1-\theta_s^p$ is almost flat, that is, there is no difference between cases if partners follow treatment without testing or first test and then treat if infected.

As $\theta_n^p$ increases the partner screening becomes a highly successful mitigation policy.
Consider the case when half of the partners are notified and take action,  $\theta_n^p =0.5$, and half of them are screened for infection, $\theta_{s}^p=0.5$, and the other half are treated without testing, $\theta_{t}^p=0.5$.
That is, half of an infected person's partners do nothing, the fraction $\theta_n^p\theta_t^p=0.5\times0.5=0.25$ are treated without testing for infection, and the fraction $\theta_n^p\theta_s^p=0.5\times0.5=0.25$ are tested and treated if found infected.
If any of the tested notified partners of the infected person are found to be infected, their partners are then notified and the cycle repeats to spread out and identify more infected people. 
This conditional percolation of screening through the sexual network is why this policy is so effective.
For this case, the prevalence reduction is $7\%$. Thus, compared with if all the notified partners follow treatment without testing, $\theta_{t}^p=1$, which reduces the prevalence by only $1\%$, it works better. But compared with if all the notified partners follow test and treat if necessary, $\theta_{s}^p=1$, which reduces the prevalence by  $11\%$, this combined scenario is not the one to select, Figure (\ref{pt_ptt}). 

\begin{figure}[H]
\centering
\includegraphics[width=.5\linewidth]{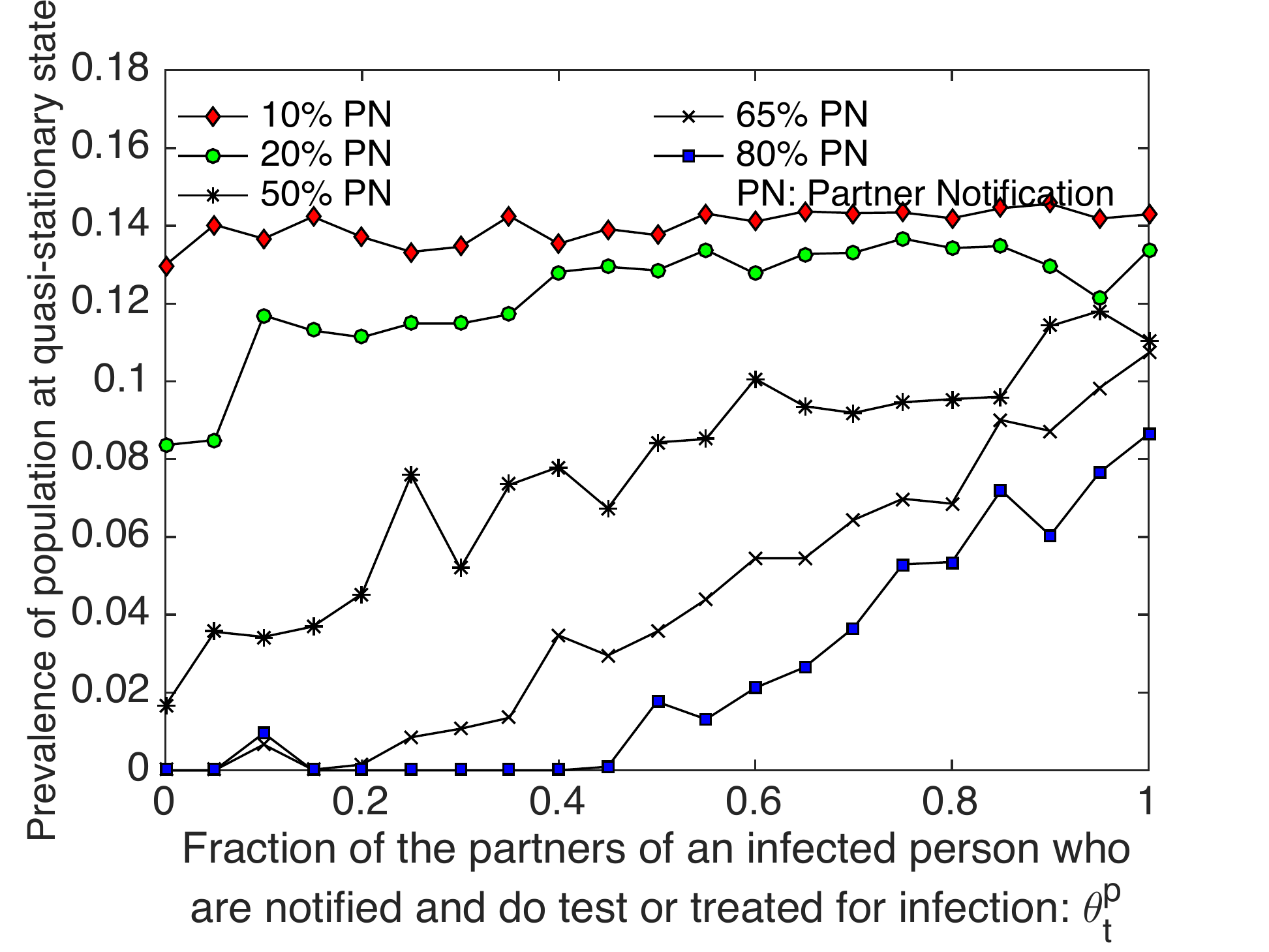}
\caption[\textbf{Prevalence vs partner treatment and screening}]{Prevalence at quasi-steady state increases when the fraction of partners notified are treated and not tested: 
when only a few partners of an infected person are notified, $\theta_n^p$ is small, then partner treatment and partner screening have similar small impact on Ct prevalence. When more partners of infected people take action,  $\theta_n^p$ increases, then the partner screening strategy is more effective in controlling the infection.}
\label{pt_ptt}
\end{figure}

It is important to note that partner screening is more expensive than partner treatment.
Given this last scenario, this suggests that when the fraction of partners we are able to notify, $\theta_n^p$, is small then partner screening may not be a good strategy compared to partner treatment. 
However, if a large enough fraction of partners are notified then it is better to test and treat (partner screening) to control the spread of Ct effectively.  
\end{enumerate}

\emph{\textbf{Social Friend Notification}}:
up to now we have implemented several different Ct interventions and concluded that partner screening along with random screening was the effective approach in controlling Ct epidemic.  Here we compare both their  partner screening and partner treatment with Social Friend Screening and a combination of partner and social friend notification. We have several scenarios explained in Table (\ref{scenario}).

\begin{table}[htbp]
\centering
\begin{tabular}{lp{10cm}}
\hline
\cline{1-2}
\textbf{Scenario} & \textbf{Description} \\
\hline
 \textbf{Scenario. 0}& There is no notification, i.e $\theta_n^p=\theta_n^f=0$.\\
 \hline
 \textbf{Scenario. 1}& $30\%$ of sexual partners are notified and follow screening, i.e  $\theta_n^p\times\theta_s^p=0.3$ and $\theta_t^p=0$.\\
 \hline
 \textbf{Scenario. 2}& $10\%$ of social friends are notified and follow screening, i.e  $\theta_n^f\times\theta_s^f=0.1$. \\
 \hline
  \textbf{Scenario. 3}& \textbf{Scenario. 1} and \textbf{Scenario. 2}. \\
  \hline
   \textbf{Scenario. 4}&  \textbf{Scenario. 1} and \textbf{Scenario. 2}, such that the social friends who are notified and follow screening are from  individuals  with high number of sexual partners. \\
\hline
  \textbf{Scenario. 5}&  $30\%$ of sexual partners are notified and follow treatment, i.e  $\theta_n^p\times\theta_t^p=0.3$ and $\theta_s^p=0$.\\
\hline
  \textbf{Scenario. 6}&  \textbf{Scenario. 5} and $10\%$ of social friends are notified and follow screening, i.e  $\theta_n^f\times\theta_s^f=0.1$. \\
\hline
  \textbf{Scenario. 7}&  \textbf{Scenario. 6}, such that the social friends who are notified and follow screening are from  individuals  with high number of sexual partners. \\
\hline
\cline{1-2}
\end{tabular}
\caption[\textbf{Notification scenarios}]{Different notification scenarios: in all these scenarios all other parameters are defined as in Table. (\ref{parameters}), unless stated otherwise. }
\label{scenario}
\end{table}

\begin{figure}[htp]
\centering
\subfloat[\textbf{Partner screening}][Partner screening]{\label{fig:not1}\includegraphics[width=1\textwidth,height=.33\paperheight]{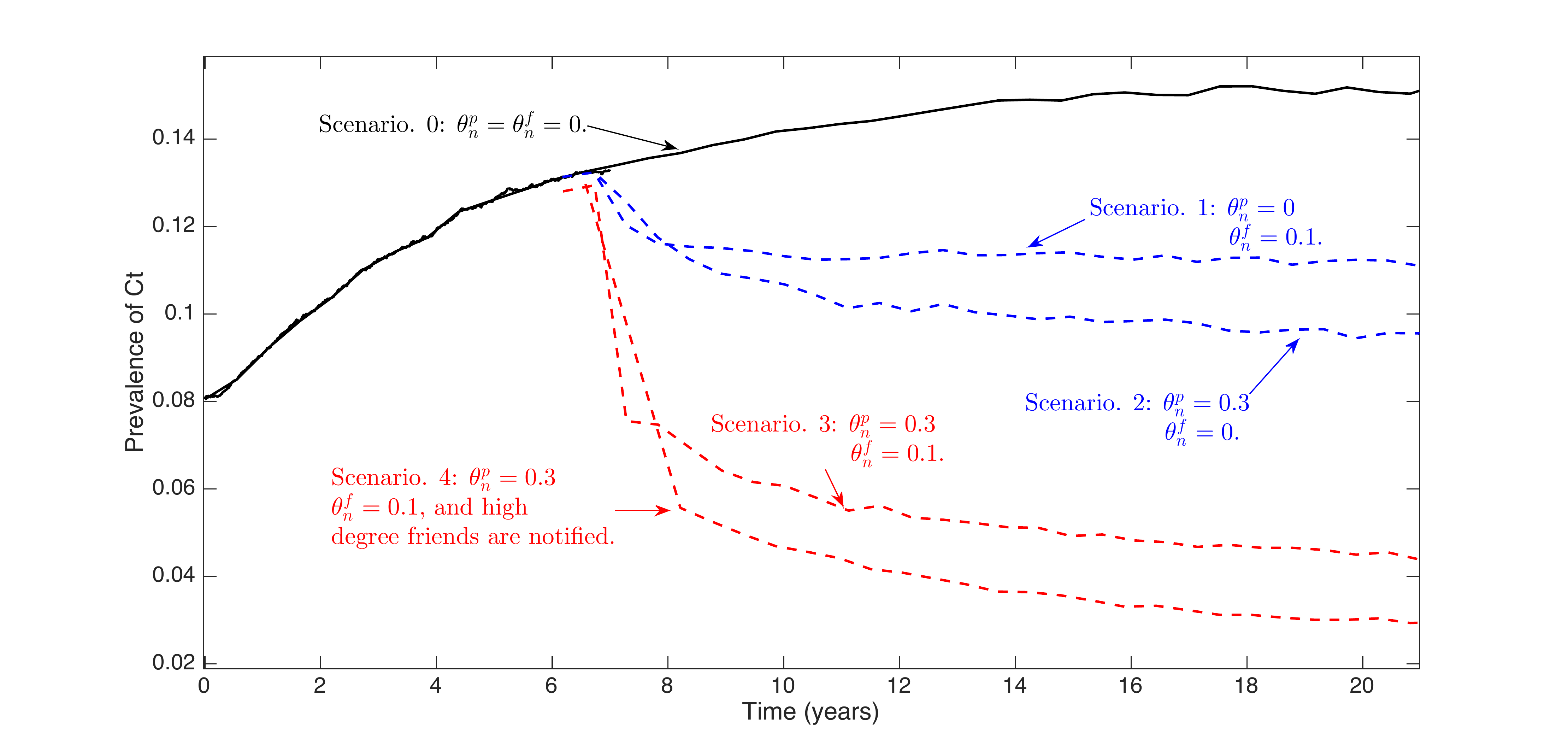}}\\
\subfloat[\textbf{Partner treatment}][Partner treatment]{\label{fig:not2}\includegraphics[width=1\textwidth,height=.33\paperheight]{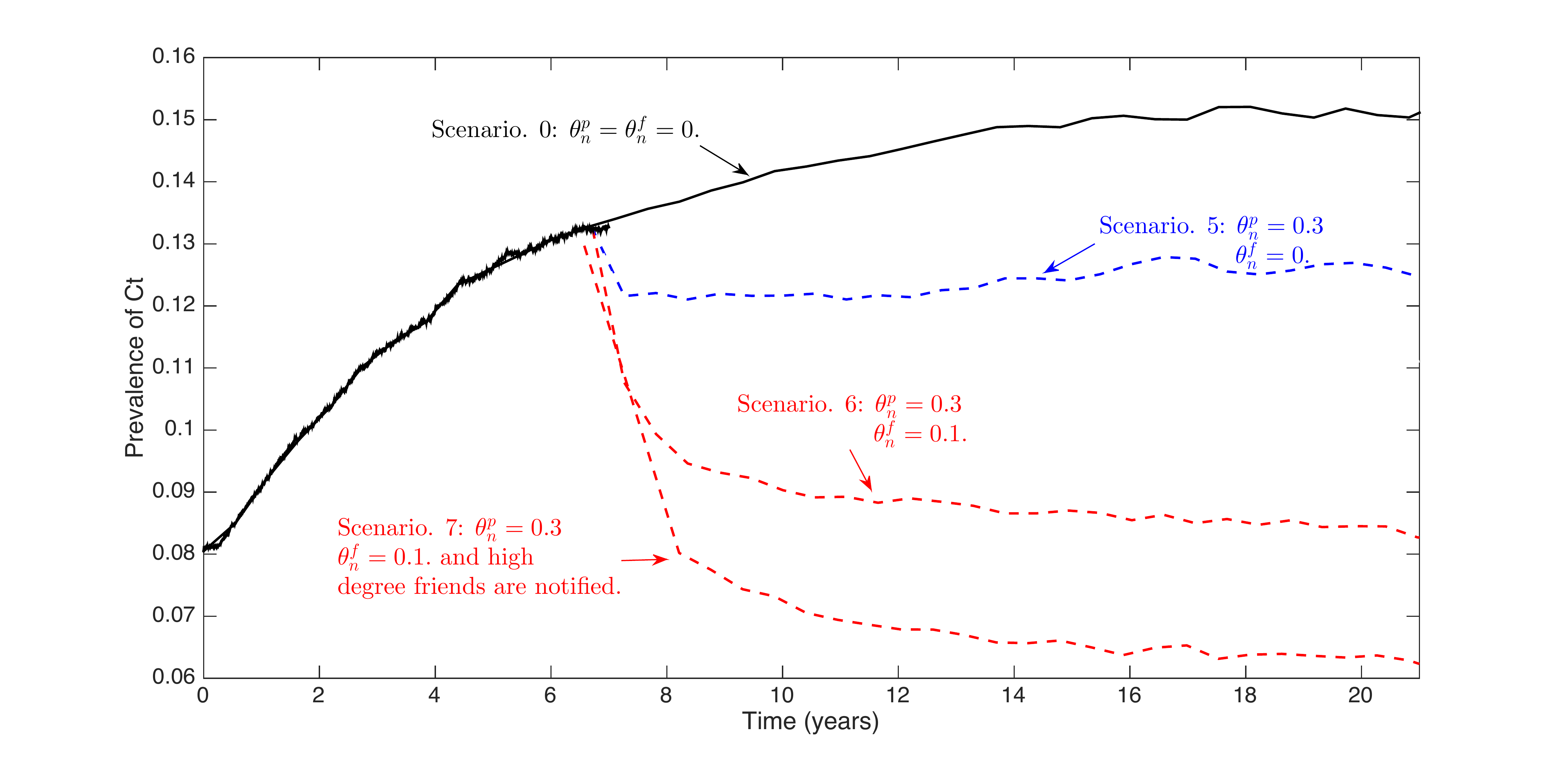}}\\
\begin{minipage}{.9\textwidth}
\caption[\textbf{Prevalence of Ct vs time for scenarios $1$-$7$}]{ Prevalence of Ct v.s time for scenarios $1$-$7$: (a)  the impact of combination of sexual partner notification  and social friend  notification along with screening is the summation of impact of partner notification and friends notification when notified individuals follow screening. 
Revised scenario  by notifying friends with high number of partners has $2\%$ improvement in compare with non-modified version (red  curves). (b)  The impact of notification along with combination of sexual partner treatment  and social friend  screening is also  the summation of impact of partner treatment and friends screening. 
Revised scenario  by notifying friends with high number of partners has $2\%$ improvement in compare with non-modified version (red  curve)s.}
\label{notify}
\end{minipage}
\end{figure} 
The  Figure \ref{fig:not1} shows prevalence of Ct in time for  scenarios $0$-$4$, and the  Figure \ref{fig:not2}  corresponds to scenarios $5$-$7$. Currently at baseline case $26\%$ of sexual partners of an screened infected person are notified, which some fraction of them do screening test and the rest treat themselves without testing \cite{kissinger2014check}. However, in our first scenario, we assume nobody is notified, neither sexual partners nor social friends that is,  $\theta_n^p=\theta_n^f=0$, the  black  curve is the prevalence in absence of notification which end up with around $16\%$ prevalence. Seven years after starting the infection epidemic, we implement different notification scenarios: first, infected individual found through screening  notifies only some fraction of their sexual partners, and these notified partners follow screening, that is $\theta_n^p\theta_s^p=0.3$. This scenario decreases  the prevalence to the half which is highly effective. What if infected individual found through screening notifies their social friends instead. We assume they notify some fraction of their social friends and these social friends follow screening, that is $\theta_n^f\theta_s^f=0.1$, which this reduces the prevalence to $10\%$. When we combine these two previous scenarios, that is, when infected individual found through screening notifies some of their sexual partners and some of social friends, the prevalence reduction becomes the summation of two sole notification scenarios, the red stared curve. The forth scenario is a modification of social friend notification: usually the sexual partners of infected person are more probable to follow screening than social friends, because their partner who had sex with is infected thus, they have a chance of carrying infection, the notified social friends may follow screening if they have too many sexual partners themselves. Therefore, in the forth scenario we assume notified social friends who follow screening are the ones with highest number of sexual partners. Doing so the prevalence reduces to $3\%$ which means there would be $13\%$ improvement in compare with no notification or $9\%$ improvement in compare with the current Ct prevalence in New Orleans.

Sexual partners of an infected individual are at a high risk of being infected, therefore, they may be  advised to follow treatment without being tested. This is the reason behind the scenarios $5$-$7$. In the fifth scenario  we assume infected individual found through screening  notifies only some fraction of their sexual partners, and these notified partners follow treatment that is, $\theta_n^p\theta_t^p=0.3$. By this intervention prevalence at steady state becomes $12\%$.
In scenario. 6 beside sexual partners, an infected person found through screening notifies his/her social friends i.e $\theta_n^f\theta_s^p=0.1$ which reduces prevalence to $8\%$, and even by revised social friend notification we can reduce it $2$ more percent, scenario. 7.

All these improvements in scenarios $1-7$ can be seen after almost five years of implementing them.

\section{Discussion and Conclusion}

In this chapter after reviewing sexual activity data from a pilot study, we used the algorithms in Appendices (\ref{ap1}) and (\ref{ap2}) to generate an ensemble of  heterosexual network with a prescribed degree and joint-degree distribution in and out of  social context. We used  heterosexual behavior survey and Ct prevalence data for adolescents and  young adult AA population in New Orleans to create a stochastic, Monte Carlo - Markov Chain, agent-based bipartite sexually transmitted disease-transmission network model.
In the model,  men and women are represented by the network nodes and sexual partners are characterized by edges between the nodes.   
The edges between partners in the network dynamically appear and disappear each day depending if the individuals have sexual act on that day.
The joint-degree distribution of the network captures the correlation of an individual's risk (their number of partners) with their partner's risk (number of partners of their partners).
Our network model is updated each day to account for sexual acts as a dynamic variable. We use this model to quantify the impact of increasing screening of men for infection, partners notification, social friend notification , and rescreening of treated individuals on reducing Ct prevalence.

In  analysis of  the properties of  ensemble of generated heterosexual networks, we observed a tight distribution in the number of connected components and size of giant component and bi-component for all the networks which have the same joint-degree distribution and have the same percentage of edges in social network. Preserving joint-degree distribution, when the property of being subgraph of social network becomes stronger the size of giant component increases, and consequently the number of connected components decreases, which it is because of reducing the mixing in generating sexual network: when people select their sexual partners from their social friends they stand in a tight group within social network. Redundancy coefficients for networks increases as dependence of sexual  network  on social one rises, which is because of high clustering coefficient of social network.
We studied these measures of networks because they may affect the spread of infection through network. However, none of the mentioned measures affects the spread of Ct over the network: the prevalence of Ct over networks in different groups are close to each other and therefore, ignorable.

Spreading infection over the network in absence of any type of intervention, we studied some properties of infected population at quasi-stationary state such as their degree,  betweenness, and closeness scores. Our result show that people who are closer to more  many other individuals in population are at higher risk of catching infection, even though he/she has a few number of partners. In network science terminology if there is a path between an individual  and many other people in the network then shorter the path, higher his/her closeness score, and therefore, higher risk of Ct infection he/she has. This information can help us to identify more qualified people for random screening by relating individual's degree and closeness score. At the first glance, we did not observe any correlation between degree and closeness score, that means a person with high number of partners may not be necessarily at high risk of Ct infection. But,  studying the relation between degree and reachability (how much they can reach other people in population) of individuals for an ensemble of heterosexual  networks, we observed that reachability probability increases to a fixed point as degree increases.
 That is, up to some degree value, increasing degree causes that reachability of a typical  person and therefore, its  closeness increases, which this puts him/her at higher risk of  Ct infection. However, when this correlation converges to its fixed point, the impact of degree as a risk  of Ct infection for a person disappears.

In intervention strategies the first approach was screening men. We observed that increasing Ct screening of men has  a modest impact on reducing Ct prevalence in the young adult AAs in New Orleans, the Figure  (\ref{just_screening}).
Starting at a baseline of $13\%$ prevalence under the assumption that $45\%$ of the women are being screened each year for Ct, then increasing the screening of men from $0\%$ to $50\%$ would only reduce the overall Ct prevalence to $8\%$.
Linking our result with \cite{clarke2012exploring} that found partner positivity is insensitive to screening, we found out that screening men alone cannot control epidemic in population and consequently among women drastically.


In evaluating the effectiveness of partner notification  we assumed that a fraction of the partners of an infected person will seek treatment (without testing) or be screened (tested and treated) for infection.
We observed that if most of the notified partners are treated, without testing, then this mitigation has only a modest impact on Ct prevalence.
This practice, although common in disease control today, is not as effective as partner screening.
When the partners of an infected person were tested before treatment, there was a tipping point where partner screening would bring the epidemic under control.
That is, when over $40\%$ of notified partners of all the infected people are screened for infection, then the Ct prevalence rapidly decreased to very low levels, the Figure (\ref{pt_ptt}).
This critical threshold represents the partner screening level where a contact tracing tree can spread through the heterosexual network to identify and to treat most of the infected people.  
Our model indicates that this is by far the most effective approach for bringing the epidemic under control.

However, partner screening is more expensive than partner treatment.
The partner treatment and screening  suggests that when the fraction of partners took action ($\theta_n$ is small), then partner screening may not be a good strategy compared to partner treatment. 
But if a large enough fraction of partners are notified then it is better to test and treat (partner screening) to control the spread of Ct effectively. 
These results of  impact of partner notification   are close to results from \cite{kretzschmar1996modeling} who found for
Ct, contact tracing is  less effective at lower
percentages when partners are treated, but with increasing levels of contact tracing it will be 
 a highly effective intervention strategy.
 
Using social network in generating sexual networks and in studying the spread of STIs not only enable us to construct a more realistic sexual network but also helps us to improve interventions by spreading information through social network, social friend notification.
Our result shows a combination of social friend notification and sexual partner notification has a significant reduction on prevalence of Ct compared with  when there is no notification, or only sexual partner are notified. We studied two different approaches for notification. In the first case, infected person notifies some fraction of their sexual partners and social friends, and we assume both notified partners and social friends who take action test and treat if infected, partner and social friend screening. In the second case, infected person notifies some fraction of their sexual partners and social friends, and we assume  notified  social friends who take action test and treat if infected, social friend screening, and notified sexual partners who take action treat themselves without testing, partner treatment. In both cases, there is a relative $40\%$
reduction on prevalence compared with the case only sexual partners receive notification and follow up test or treatment.


In rescreening, an infected individual returns for testing a few months after they are treated.
We used the model to estimate the probability that a treated person would be reinfected as a function of the time since they were treated. 
The CDC guidelines recommend that treated people return for screening three months after treatment. 
We observed that for the baseline case of $13\%$ infected population, about $25\%$ of the treated population were reinfected three months after treatment. 
We observed that although the rescreening is a cost effective approach to identify infected people, it has only a small impact on the overall Ct prevalence.
Even though there is a high chance of reinfection when the individual's behavior does not change, we do not observe an effective impact on prevalence of Ct by monitoring infected individuals.
Rescreening program has a trend similar to screening, and none of them are effective as sole intervention because they are not able to find the chain of infection like partner screening.
On the other hand sensitivity of prevalence to rescreening is less than that of screening, indicating the fact that for a limited budget the idea of finding more people to screen, random screening, is more effective than frequent screening for less people.

 The existence of  heterosexual network with a prescribed joint-degree distribution in the context of social network is the first concern when generating the sexual network. One of our limitation is that we cannot test if a sexual network with a particular joint-degree distribution within a social network exist or not, therefore, we have to find it by trial and error. On the other hand, source of partner selection can be correlated to degree of individuals, but, in this work we ignored this correlation.

The uncertainty in the model parameters will require an extended sensitivity analysis to quantify the robustness of the predictions in the presence of uncertainty. 
Our future work will focus on validating the model predictions and identifying which trends and quantities can, and cannot, be predicted within limits of the model uncertainty.

Although our model includes-condom use, it does not account for behavior changes, such as increased condom-use after being treated for infection or the differences in condom-use between primary and casual partners. When we assign the type of partners for each individual in the network, we will  change casual partners more frequently and will implement condom-use for contacting with casual partners. Also notification  of partner strategy may be affected when people are biased about notifying partners. Collecting more data regarding partnership level will help us improve our model by distinguishing between partners.
Our future research will improve the model so we can better quantify the impact of counseling and behavioral changes such as increasing condom-use or partner notification rates.   We are also expanding our data analysis to include a cost-benefit analysis and estimate the averted PID cases in women.

\appendix
\chapter{Generating Bipartite Networks with a Prescribed Joint Degree Distribution}\label{ap1}

Bipartite networks can provide an  insightful representation of the interactions between two disjoint groups, with applications ranging from  ecological networks \cite{dormann2009}, social interactions, the spread of sexually transmitted infections,  and  citation/collaboration networks \cite{cimini2014scientific}.
The accuracy of a mathematical model to understand the interactions of these networks depends on generating an ensemble of random graphs that faithfully captures the structure of the known real-world networks needed to  reproduce  the dynamics of the underlying problem.   

When simulating a real-world problem on a network, the graph properties, such as the degree and joint-degree distributions, for the generated random graphs must be analyzed to see if they are consistent with the original problem.  
If any of these properties affect the questions being asked of the model, such as how fast a disease will spread among a population, then these properties must be preserved in the mathematical model.  
For example, social networks often exhibit homophily where there is a tendency of individuals to associate with others having similar characteristics.

This homophily is captured in the network by the joint-degree distribution, sometimes called the degree correlation or degree-degree distribution. 
Although there are several methods, called 2K network generation algorithms,  for generating simple graphs that preserve both a given degree and joint-degree distributions \cite{boguna2003class,mahadevan2006systematic,newman2003mixing,eubank2004modelling,d2012robustness,williams2014degree}, there are few results for bipartite networks.

Typical  network generation algorithms that preserve the degree distributions are based on \textit{stochastic}, \textit{rewiring}, or \textit{reconfiguration} approaches.
The \ErdosRenyi random graph generation algorithm \cite{erdHos1959random} is an example of stochastic approach where every two nodes are connected with probability $p$ defined by the average degree of nodes in the network divided by their size. 
This approach is easily generalized to match a given degree \cite{chung2002connected} or joint-degree distribution \cite{boguna2003class}. 
The rewiring approach rewires two random edges to preserve the average degree or  degree distribution.
The rewiring approach converges, although there is little analysis on the convergence rate  \cite{mihail2003markov}.
The pseudograph reconfiguration algorithm \cite{molloy1995critical} reproduces the given degree distribution exactly, however it may end up with self-loop or multiple edges between two nodes. 

The joint-degree distribution is correlated with the structural and dynamical properties of networks \cite{newman2003mixing,eubank2004modelling,d2012robustness,williams2014degree}. 
This information is quantified in the symmetric joint-degree matrix (JDM) whose $(i,j)$ element is the number of edges between nodes of degree $i$ and nodes of degree $j$  \cite{stanton2012constructing}. 
The necessary and sufficient condition for a simple network to exist for a given $JDM$ is given by the \ErdosGallai type theorem \cite{amanatidis2015graphic,czabarka2015realizations,stanton2012constructing}:
\begin{theorem} 
(\ErdosGallai Type Theorem for JDM)
Consider a network where $M$ is the largest degree of the nodes in the network, then there is a simple network that has an $M\times M$ symmetric JDM if and only if
\begin{enumerate}
\item $n_i=\frac{1}{i} \sum_{j=1}^M JDM(i,j)$ is the number of nodes with degree $i$ for $i=1\cdots M.$
\item $JDM(i,i) \leq \binom {n_i}{2},$ for $i=1\cdots M.$
\item $JDM(i,j)\leq n_in_j,$ for $i\neq j.$
\end{enumerate} 
\end{theorem}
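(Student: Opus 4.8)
The plan is to prove the two directions separately, with the forward (necessity) direction being a short counting exercise and the reverse (sufficiency) direction requiring an explicit construction.

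\textbf{Necessity.} Suppose a simple graph $G$ realizing the matrix $JDM$ exists. First I would observe that every edge incident to a degree-$i$ node contributes to the row sum $\sum_{j=1}^M JDM(i,j)$, so this row sum equals the total number of edge-endpoints attached to degree-$i$ vertices, namely $i\,n_i$; dividing by $i$ gives Condition 1 and forces $n_i$ to be a nonnegative integer. For Condition 3, the $JDM(i,j)$ edges between the $n_i$ degree-$i$ vertices and the $n_j$ degree-$j$ vertices are distinct pairs drawn from the $n_i n_j$ available cross pairs, so $JDM(i,j)\le n_i n_j$; since $G$ is simple, no pair is used twice. Condition 2 is the same pigeonhole applied inside a single class: the $JDM(i,i)$ internal edges are distinct pairs among the $\binom{n_i}{2}$ unordered pairs of degree-$i$ vertices.

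\textbf{Sufficiency.} This is the substantial direction, and I would attack it constructively. Partition a vertex set into classes $V_1,\dots,V_M$ with $|V_i|=n_i$, the eventual goal being that every vertex of $V_i$ has degree exactly $i$. I would then build the graph block by block: for each unordered pair $i\ne j$ place exactly $JDM(i,j)$ edges in the complete bipartite ``slot'' $V_i\times V_j$, and for each $i$ place $JDM(i,i)$ edges among the pairs inside $V_i$. Conditions 2 and 3 guarantee there is enough room in each slot for the required number of distinct edges. The key device is to distribute the edges of each block \emph{as evenly as possible}, so that within the $(i,j)$ block every vertex of $V_i$ receives either $\lfloor JDM(i,j)/n_i\rfloor$ or $\lceil JDM(i,j)/n_i\rceil$ neighbors (a nearly biregular bipartite graph, whose existence follows from the Gale--Ryser conditions), and similarly a nearly regular graph inside each diagonal block.

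\textbf{Main obstacle.} The difficulty is not the existence of any single block but the global degree bookkeeping: the matrix $JDM$ prescribes only the \emph{total} number of edges between two classes, not how many each vertex receives, and I must arrange the blocks so that the contributions summed over $j$ give exactly $i$ at every vertex of $V_i$. Condition 1 guarantees the correct \emph{average}, since $\sum_j JDM(i,j)=i\,n_i$, so the even distribution leaves each vertex with total degree within one unit of $i$; the remaining task is to cancel the floor/ceiling discrepancies. I expect the cleanest route is to index the vertices of each class cyclically and apply a rotation (shift) when assigning the ``ceiling'' surplus in each block, so that a vertex absorbing an extra neighbor in one block is compensated by taking a floor in another. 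I would then prove that any residual mismatch can be repaired by a degree-preserving edge swap, replacing a pair of edges $\{u,x\},\{v,y\}$ by $\{u,y\},\{v,x\}$, which moves one unit of degree from an over-filled vertex to an under-filled one without disturbing the prescribed block edge counts. Verifying that such swaps can always be carried out while keeping the graph simple is the technical heart of the argument.
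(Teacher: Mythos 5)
Your necessity argument is complete and correct: double counting edge endpoints of degree-$i$ vertices yields $\sum_j JDM(i,j)=i\,n_i$, and simplicity gives Conditions 2 and 3 by pigeonhole within and across classes. Note, however, that the paper itself never proves this theorem; it quotes it from the literature \cite{stanton2012constructing,czabarka2015realizations,amanatidis2015graphic}, and the only existence proof actually carried out in the paper is for the bipartite ($BJD$) analog, by a different method: a realization is built allowing multiple edges, and each double edge is then removed by a $2$- or $3$-edge rewiring whose existence is extracted from the bound $e_{ij}\le d^u_i d^l_j$. Your block-by-block construction is therefore closer in spirit to the balanced-degree-invariant construction of Stanton and Pinar than to anything in this paper, which is a legitimate route, but your sufficiency half has two concrete defects.

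First, the claim that distributing each block as evenly as possible leaves every vertex of $V_i$ ``with total degree within one unit of $i$'' is false: the floor/ceiling discrepancy is incurred \emph{per block}, a vertex of $V_i$ participates in up to $M$ blocks, and Condition 1 only pins down the class average, so the accumulated deviation of a single vertex can be on the order of $M-1$, not $1$. Second, the repair move you propose, replacing $\{u,x\},\{v,y\}$ by $\{u,y\},\{v,x\}$, preserves the degree of all four vertices, so it is incapable of transferring degree from an over-filled vertex to an under-filled one; what you actually need is to detach an edge $\{u,x\}$ and reattach it as $\{v,x\}$ with $u,v$ in the same class $V_i$ (this preserves all block edge counts while shifting one unit of degree), and proving that a suitable $x$ --- a neighbor of $u$ in the relevant block not already adjacent to $v$ --- always exists while the graph stays simple is exactly where Conditions 2 and 3 must be invoked. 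Your sketch uses those conditions only to show each block has ``enough room'' and never in the repair phase, so the heart of sufficiency is asserted rather than proved. The cyclic-rotation idea is the right instinct and can be pushed through (it is essentially how the cited constructive proofs maintain near-balance edge by edge rather than repairing after the fact), but as written the sufficiency direction does not go through.
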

Mahavedan et al. \cite{mahadevan2006systematic} have extended the rewiring approach to generate random networks using joint-degree distribution.
They use the term $2K-$series to introduce joint-degree distribution, and they compare  stochastic, pseudograph, matching and rewiring and the extended pseudograph algorithms to construct networks using $2K-$series.
They compare the topology of networks made based on different algorithms and suggest that $2K-$ series or joint-degree distribution is enough to reproduce most metrics of interest for the network.
They then use a configuration model to generate a 2K-network with the prescribed $JDM$, however their network may end up with multiple edges between two nodes. 

A balanced degree invariant algorithm is provided by \cite{stanton2012constructing} for constructing simple networks from a given $JDM$, and a Monte Carlo Markov Chain method is used for sampling the networks.
Gjoka et al. \cite{gjoka2015construction} design a new algorithm for constructing simple networks with a target $JDM$.
Bassler et al. at \cite{bassler2015exact} use JDM and develop an exact algorithm to find all pairwise degree correlations and the degree sequences.

We extend these methodologies for generating bipartite networks using prescribed joint-degree distribution.
Note that the {bipartite joint degree} ($BJD$) \cite{boroojeni2017generating} matrix of a bipartite network can be nonsymmetric and is not even a square matrix if the maximum degree in two groups are not the same.
We find and prove a similar necessary and sufficient condition as \ErdosGallai Type Theorem on $BJD$ for constructing simple bipartite network and then use  $BJD$ matrix as an input to construct network.
We then describe new bipartite algorithms for generating these random networks and investigating how well they reproduce other properties, such as the bipartite clustering, observed in real-world networks.
This family of algorithms are called $B2K$ algorithms and they preserve a given degree and joint-degree distributions of the network.

\section{Bipartite Network}\label{NModel}

A {\it bipartite network}, sometimes called {\it two-mode network} or {\it affiliation network}, is a network whose nodes can be divided into two disjoint sets $\textbf{v}^u$ and $\textbf{v}^l$  such that every edge connects a node in $\textbf{v}^u$ to one in $\textbf{v}^l$, there is no edge between nodes in $\textbf{v}^u$, and no edges between nodes in $\textbf{v}^l$.
This network is shown like $\textbf{G}=(\textbf{v}^u,\textbf{v}^l,\textbf{E})$ consisting of a set of $P^u = |\textbf{v}^u|$ upper nodes, $\textbf{v}^u = \{\textbf{v}^u_i| i = 1, 2, 3,  \dots P^u\}$, a set of $P^l = |\textbf{v}^l|$ lower nodes, $\textbf{v}^l= \{\textbf{v}^l_i| i = 1, 2, 3,  \dots P^l\}$, together with a binary adjacency relation defining the set of edges $\textbf{E} = \{\textbf{v}^u_i\textbf{v}^l_j|  i\in \{1,2,3, \dots P^u\}, j \in \{ 1, 2, 3,  \dots P^l\}\}$, where $\textbf{v}^u_i\textbf{v}^l_j$ denotes the edge between node $\textbf{v}^u_i$ and node $\textbf{v}^l_j$.

The {\it degree}  of a node $\textbf{v}_i$, $\textbf{deg}(\textbf{v}_i)$, is defined as the number of neighboring nodes connected to the node by an edge.
The {\it degree distribution} $d_k$ defines the number of nodes with degree $k$.
The {\it joint-degree distribution} or sometimes called {\it degree-degree distribution} or {\it degree correlation} $(k,j)$ is the number of nodes with degree $j$ that are connected to nodes with degree $k$.
A bipartite network $\textbf{G}$ can be represented by the {\it Bipartite Joint Degree} or $BJD$ matrix:

\[BJD_G = \parenMatrixstack{
    e_{11} & e_{12} & e_{13} & \dots  & e_{1l} \\
    e_{21} & e_{22} & e_{23} & \dots  & e_{2l} \\
    \vdots & \vdots & \vdots & \ddots & \vdots \\
    e_{u1} & e_{u2} & e_{u3} & \dots  & e_{ul} },\]
where, $u$ is the maximum degree in upper nodes, and $l$ is the maximum degree in lower nodes, each element $e_{ij}$ is the number of edges between upper nodes with degree $i$ and lower nodes with degree $j$. 
The degree distribution of network \textbf{G} is defined by the number of upper nodes, $d^u_k$, and lower nodes, $d^l_k$, with degree $k$:
\begin{equation*}
d^u_k=\frac{\sum_{j=1}^l e_{kj}}{k}~,\text{~~and~~~}~~~d^l_k=\frac{\sum_{i=1}^u e_{ik}}{k}~.
\end{equation*}
The number of nodes with degree $k$ is $d_k=d^u_k+d^l_k.$

A $BJD$ matrix is consistent with a bipartite network if there exist at least one bipartite network with this joint degree distribution. 
For an example, consider the $BJD$ matrix \[BJD = \parenMatrixstack{
    2 & 2 \\
    4 & 0 },\]
each entry in the matrix, $(i,j)$, is an edge between an upper node with degree $i$ and lower node with degree $j$. There are $\frac{2+2}{1}=4$ upper nodes with degree $1$, and $\frac{4+0}{2}=2$ upper node with degree $2$, $\frac{2+4}{1}=6$ lower nodes with degree 1, and $\frac{2+0}{2}=1$ lower node with degree 2. On the other hand,  $e_{21}=4$ means that four of the edges of the  graph will connect an upper node of degree $2$ to a lower node of degree $1$, or $e_{22}=0$ means there is no edge between upper and lower nodes with degree $2$. 

\begin{theorem}\label{simple}
Let $BJD$ be a matrix, 
\[BJD = \parenMatrixstack{
    e_{11} & e_{12} & e_{13} & \dots  & e_{1l} \\
    e_{21} & e_{22} & e_{23} & \dots  & e_{2l} \\
    \vdots & \vdots & \vdots & \ddots & \vdots \\
    e_{u1} & e_{u2} & e_{u3} & \dots  & e_{ul} },\]
where 
$d^u_i=\frac{\sum_{j} e_{ij}}{i}$ is the number of upper nodes with degree $i$, and  $d^l_j=\frac{\sum_{i} e_{ij}}{j}$ is the number of lower nodes with degree $j$.
If we have $e_{ij}\leq d^u_id^l_j$ for $i=1,...,u$ and $j=1,...,l$, 
then, there exist at least one simple network, a network without self-loops or multiple edges, as defined by the $BJD$ matrix.
\end{theorem}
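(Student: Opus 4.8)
The plan is to recast the statement as the existence of a $0$--$1$ matrix $A=(a_{km})$, indexed by the upper nodes $k$ and the lower nodes $m$, whose entries are the edge indicators of the desired simple bipartite graph. Writing $U_i$ for the set of the $d^u_i$ upper nodes that are to have degree $i$ and $L_j$ for the set of the $d^l_j$ lower nodes that are to have degree $j$, realizing the $BJD$ matrix is equivalent to producing $A\in\{0,1\}^{U\times L}$ with row sums $\sum_m a_{km}=i$ for $k\in U_i$, column sums $\sum_k a_{km}=j$ for $m\in L_j$, and block sums $\sum_{k\in U_i,\,m\in L_j}a_{km}=e_{ij}$. I would build $A$ one block at a time, placing inside each $U_i\times L_j$ rectangle a near-regular (balanced) simple bipartite graph with exactly $e_{ij}$ edges, and then reconcile the per-block degrees so that every node attains its prescribed total degree.

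First I would record the feasibility certificate that pins down where the hypothesis is used. Setting $a_{km}=e_{ij}/(d^u_i d^l_j)$ for every $k\in U_i$ and $m\in L_j$ gives a \emph{fractional} matrix satisfying all three families of constraints: the block sum over $U_i\times L_j$ is $d^u_i d^l_j\cdot e_{ij}/(d^u_i d^l_j)=e_{ij}$, the row sum for $k\in U_i$ is $\sum_j d^l_j\, e_{ij}/(d^u_i d^l_j)=\sum_j e_{ij}/d^u_i=i$, and symmetrically the column sums equal $j$. The assumption $e_{ij}\le d^u_i d^l_j$ is exactly what forces $0\le a_{km}\le 1$, so the constraint region is a nonempty polytope inside the unit cube. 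This confirms the data are consistent and motivates filling each block to fraction $e_{ij}/(d^u_i d^l_j)\le 1$ of its capacity.

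Next I would make the construction integral. A standard fact is that for $e_{ij}\le d^u_i d^l_j$ there is a simple bipartite graph on $U_i\cup L_j$ with $e_{ij}$ edges whose upper degrees all lie in $\{\lfloor e_{ij}/d^u_i\rfloor,\lceil e_{ij}/d^u_i\rceil\}$ and whose lower degrees lie in $\{\lfloor e_{ij}/d^l_j\rfloor,\lceil e_{ij}/d^l_j\rceil\}$. Since $\sum_j e_{ij}/d^u_i=i$ is an integer, the per-block upper degrees of a fixed node $k\in U_i$ can be made to sum to exactly $i$ provided the ceiling (``extra'') edges are distributed so that each of the $d^u_i$ nodes receives the same number $t_i:=i-\sum_j\lfloor e_{ij}/d^u_i\rfloor$ of them. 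Because in block $j$ exactly $s_{ij}:=e_{ij}-d^u_i\lfloor e_{ij}/d^u_i\rfloor$ nodes carry a ceiling edge and $\sum_j s_{ij}=d^u_i t_i$, this assignment is itself a small balanced bipartite degree problem (nodes of $U_i$ versus blocks, with row sums $t_i$ and column sums $s_{ij}$), solvable by the Gale--Ryser theorem; the identical reduction applies on the lower side.

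The hard part, which I would treat as the crux, is the \emph{simultaneous} coordination of the two sides inside a single block: once near-regular upper margins and near-regular lower margins have been fixed on the $U_i\times L_j$ rectangle, one must certify that this prescribed \emph{pair} of degree sequences is realizable by one simple bipartite graph. Independently chosen margins need not be jointly realizable, so the argument must exploit that both vectors are near-constant; for such almost-regular sequences the Gale--Ryser inequality $\sum_{k\le p} r_k\le\sum_m\min(c_m,p)$ holds for every $p$, and verifying this cleanly is the step I expect to demand the most care. An alternative that sidesteps the block coupling is an inductive edge-peeling argument in the spirit of the \ErdosGallai Type Theorem for JDM stated above: repeatedly delete one suitably chosen edge, check that the reduced $BJD$ still satisfies $e_{ij}\le d^u_i d^l_j$, and invoke the induction hypothesis. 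I would fall back on this route if the direct Gale--Ryser verification proves unwieldy.
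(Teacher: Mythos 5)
Your plan is sound, and it reaches the theorem by a genuinely different route from the paper. The paper argues by local rewiring: given a realization of the $BJD$ matrix that is simple except for a doubled edge between an upper node $u$ of degree $i$ and a lower node $v$ of degree $j$, the hypothesis $e_{ij}\le d^u_id^l_j$ forces the existence of a non-adjacent pair $u'$, $v'$ with those same degrees (otherwise the doubled edge would give $e_{ij}=d^u_id^l_j+1$), and a two- or three-edge swap then removes the multiplicity while preserving every node degree and every entry of the matrix; iterating this from a multigraph realization yields a simple graph, and the identical swaps double as the paper's rewiring algorithm for repairing the multi-edges its $B2K$ generators may emit. You instead build the graph outright, block by block, never passing through a multigraph, and you isolate the role of the hypothesis in the fractional certificate $0\le e_{ij}/(d^u_id^l_j)\le 1$. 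Moreover, the crux you flagged goes through with a short computation, so you need not retreat to your fallback: in the block $U_i\times L_j$ the row margins are $s_{ij}$ copies of $f+1$ and $d^u_i-s_{ij}$ copies of $f$ with $f=\lfloor e_{ij}/d^u_i\rfloor$, the column margins are $s'_{ij}$ copies of $g+1$ and $d^l_j-s'_{ij}$ copies of $g$ with $g=\lfloor e_{ij}/d^l_j\rfloor$, and $e_{ij}\le d^u_id^l_j$ gives $\lceil e_{ij}/d^u_i\rceil\le d^l_j$ and $\lceil e_{ij}/d^l_j\rceil\le d^u_i$; then for $p\le g$ the Gale--Ryser inequality reads $pf+\min(p,s_{ij})\le p\,\lceil e_{ij}/d^u_i\rceil\le p\,d^l_j=\sum_m\min(c_m,p)$, while for $p\ge g+1$ it reads $pf+\min(p,s_{ij})\le d^u_if+s_{ij}=e_{ij}=d^l_jg+s'_{ij}=\sum_m\min(c_m,p)$, so the two near-regular margins are always jointly bigraphic; and since bigraphic realizability depends only on the degree multisets, whatever labeling your two ``extras'' assignments produce is realizable, which closes the argument. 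As to what each approach buys: the paper's swap argument is precisely what certifies its rewiring procedure (which it needs anyway for network generation), whereas yours is a self-contained constructive existence proof. One caution: your fallback edge-peeling induction is the shakiest part of the proposal, since deleting an edge moves its endpoints from degrees $i,j$ to $i-1,j-1$, simultaneously changing $d^u_i$, $d^u_{i-1}$, $d^l_j$, $d^l_{j-1}$ and several entries of the matrix, so preservation of $e_{ij}\le d^u_id^l_j$ after a deletion is not automatic and would require its own lemma; stick with the direct Gale--Ryser route.
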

\begin{proof}
Suppose the $BJD$ matrix satisfies the assumption of theorem, and the network $\textbf{G}$ corresponding to $BJD$ matrix has at most one edge between all nodes, except for two nodes, where there are two edges between the upper node \textbf{u} of degree $i$ and the lower node \textbf{v} of degree $j$. There are $d^u_i$ upper nodes with degree $i$ and $d^l_j$ lower nodes with degree $j$.

For the trivial case when $d^u_i=1$ and $d^l_j=1$, we have $e_{ij}\leq 1$ and there can not be a multiple edge between the sole node \textbf{u} with degree $i$ and sole node \textbf{v} with degree $j$.

When either $d^u_i>1$  or $d^l_j>1$, then there is an upper node, \textbf{u'}, with degree $i$ and a lower node, \textbf{v'}, with degree $j$ that are not connected.
This follows from the contradiction argument where if all nodes with degree $i$ are connected to all nodes with degree $j$ and \textbf{u} is connected to \textbf{v} with two edges, then $e_{ij}=d^u_id^l_j+1\nleq d^u_id^l_j$, which contradicts the assumption of theorem. Therefore, such \textbf{u'} and \textbf{v'} exist.
These \textbf{u'} may be the same as \textbf{u} or \textbf{v'} may be the same as \textbf{v}, but both can not happen: 
\begin{itemize}
\item Case 1: we consider the case that \textbf{u'} is different from \textbf{u} and  \textbf{v'} is different from \textbf{v}.
Because \textbf{u} is connected to \textbf{v} within two edges, therefore, it  has $i-1$ disjoint neighbors, however, because all edges connected to \textbf{u'} are simple, therefore, \textbf{u'} has $i$ disjoint neighbors, thus \textbf{u'} has a neighbor, like \textbf{w}, which is not a neighbor of \textbf{u}.
Also, for the same reason \textbf{v} has $j-1$ disjoint neighbors and \textbf{v'} has $j$ disjoint neighbors, thus \textbf{v'} has a neighbor, like \textbf{w'}, which is not a neighbor of \textbf{v}.
When this happens, then we rewire the network by first removing one of the the double edges \textbf{u}\textbf{v}, as well as edges \textbf{u'}\textbf{w} and \textbf{v'}\textbf{w'}, then adding the edges \textbf{u}\textbf{w}, \textbf{v}\textbf{w'}, and \textbf{u'}\textbf{v'}.
Therefore, we have a simple network \textbf{G}. The Figure (\ref{fig:case1}) illustrates this rewiring process.
\begin{figure}[hpt]
\begin{center}
  \begin{tikzpicture}
    \begin{scope}
    \node[style={rectangle,fill=gray!20,draw,minimum size=.5cm,inner sep=0pt}] (u) {$\textbf{u}$};
    \node[style={rectangle,fill=gray!20,draw,minimum size=.5cm,inner sep=0pt}] (wp) [right = 1cm  of u]  {$\textbf{w}$'$$};
    \node[style={rectangle,fill=gray!20,draw,minimum size=.5cm,inner sep=0pt}] (up)[left = 1cm  of u]{$\textbf{u}$'$$};
    \node[main node,fill=gray!20] (v) [below = 1cm  of u] {$\textbf{v}$};
    \node[main node,fill=gray!20] (vp) [right = 1cm  of v] {$\textbf{v}$'$$};
    \node[main node,fill=gray!20] (w) [left = 1cm  of v] {$\textbf{w}$};
    \node[] (art1) [below = .5cm  of wp] {};
    
    \path[draw,thick]
    (u) edge[bend right] node {} (v)
    (up) edge node {} (w)
    (vp) edge node {} (wp)
     (v) edge[bend right] node [left] {} (u);
    ;
    \end{scope}
    
    \begin{scope}[xshift=6cm]
    \node[style={rectangle,fill=gray!20,draw,minimum size=.5cm,inner sep=0pt}] (ru) {$\textbf{u}$};
    \node[style={rectangle,fill=gray!20,draw,minimum size=.5cm,inner sep=0pt}] (rwp) [right = 1cm  of ru]  {$\textbf{w}$'$$};
    \node[style={rectangle,fill=gray!20,draw,minimum size=.5cm,inner sep=0pt}] (rup)[left = 1cm  of ru]{$\textbf{u}$'$$};
    \node[main node,fill=gray!20] (rv) [below = 1cm  of ru] {$\textbf{v}$};
    \node[main node,fill=gray!20] (rvp) [right = 1cm  of rv] {$\textbf{v}$'$$};
    \node[main node,fill=gray!20] (rw) [left = 1cm  of rv] {$\textbf{w}$};
    \node[] (art2) [below = .5cm  of up] {};
    
    \path[draw,thick]
    (rup) edge node {} (rvp)
    (ru) edge node {} (rw)
    (rv) edge node {} (rwp)
     (rv) edge node  {} (ru);
    ;
    \end{scope}
    
    \draw [-to,thick,snake=snake,segment amplitude=.4mm,
         segment length=2mm,line after snake=1mm]
    ([xshift=5mm]art1 -| vp) -- ([xshift=-5mm]art2 -| rup)
    node [above=1mm,midway,text width=3cm,text centered]
      { Rewiring };
\end{tikzpicture}
\end{center}
\caption[\textbf{Rewiring with 3 swaps}]{Rewiring with 3 swaps: node \textbf{u}  is connected to  node \textbf{v}  two times, there are nodes \textbf{u'} with the same degree as \textbf{u}, and \textbf{v'} with the same degree as \textbf{v} which are not connected. There are nodes \textbf{w} (neighbor of \textbf{u'} but not a neighbor of \textbf{u}), and \textbf{w'} (neighbor of \textbf{v'} but not a neighbor of \textbf{v}). We remove edges \textbf{u}\textbf{v}, \textbf{u'}\textbf{w} and \textbf{v'}\textbf{w'}, add edges \textbf{u}\textbf{w}, \textbf{v}\textbf{w'}, and \textbf{u'}\textbf{v'}.}
\label{fig:case1}
\end{figure}
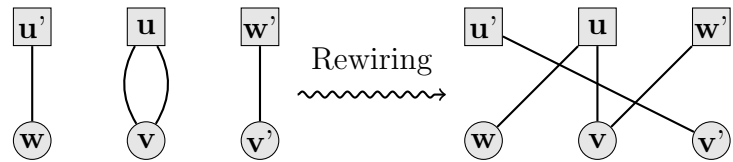
\item Case 2: If \textbf{u'} is the same as \textbf{u}, then \textbf{v'} has to be different from \textbf{v}, in that case, because \textbf{v} has $j-1$ disjoint neighbors and \textbf{v'} has $j$ disjoint neighbors, then \textbf{v'} has a neighbor, \textbf{w'}, which is not a neighbor of \textbf{v}.
Therefore, we rewire the network by removing edges \textbf{u}\textbf{v} and \textbf{w'}\textbf{v'} and adding the edges \textbf{u}\textbf{v'} and \textbf{w'}\textbf{v}.
The Figure (\ref{fig:case2})  illustrates this process. For the case when \textbf{v'} is the same as \textbf{v} we have similar approach. 

\begin{figure}[htp]
\begin{center}
\begin{tikzpicture}
    \begin{scope}
    \node[style={rectangle,fill=gray!20,draw,minimum size=.5cm,inner sep=0pt}] (u) {$\textbf{u}=\textbf{u'}$};
     \node[style={rectangle,fill=gray!20,draw,minimum size=.5cm,inner sep=0pt}] (wp) [right = 1cm  of u]  {$\textbf{w}$'$$};
    \node[main node,fill=gray!20] (v) [below = 1cm  of u] {$\textbf{v}$};
    \node[main node,fill=gray!20] (vp) [right = 1.17cm  of v] {$\textbf{v}$'$$};
    \node[] (art1) [below = .5cm  of wp] {};
    
    \path[draw,thick]
    (u) edge[bend right] node {} (v)
     (vp) edge node {} (wp)
     (v) edge[bend right] node [left] {} (u);
    ;
    \end{scope}
    
    \begin{scope}[xshift=6cm]
    \node[style={rectangle,fill=gray!20,draw,minimum size=.5cm,inner sep=0pt}] (ru) {$\textbf{u}=\textbf{u'}$};
     \node[style={rectangle,fill=gray!20,draw,minimum size=.5cm,inner sep=0pt}] (rwp) [right = 1cm  of ru]  {$\textbf{w}$'$$};
    \node[main node,fill=gray!20] (rv) [below = 1cm  of ru] {$\textbf{v}$};
    \node[main node,fill=gray!20] (rvp) [right = 1cm  of rv] {$\textbf{v}$'$$};
    \node[] (art2) [below = .5cm  of up] {};
    
    \path[draw,thick]
    (ru) edge node {} (rvp)
    (ru) edge node {} (rv)
     (rv) edge node  {} (rwp);
    ;
    \end{scope}
    
    \draw [-to,thick,snake=snake,segment amplitude=.4mm,
         segment length=2mm,line after snake=1mm]
    ([xshift=5mm]art1 -| vp) -- ([xshift=-5mm]art2 -| ru)
    node [above=1mm,midway,text width=3cm,text centered]
      { Rewiring };
\end{tikzpicture}
\end{center}
\caption[\textbf{Rewiring with 2 swaps}]{Rewiring with 2 swaps: \textbf{v'} is not neighbor of \textbf{u} and has a neighbor like \textbf{w'} which is not neighbor of \textbf{v}, we remove edges \textbf{u}\textbf{v} and \textbf{w'}\textbf{v'} and add edges \textbf{u}\textbf{v'} and \textbf{w'}\textbf{v}.}
\label{fig:case2}
\end{figure}
\end{itemize}

\end{proof}
\section{Some Definitions Related to Bipartite  Network}
Before designing algorithms to generate bipartite networks  we define some properties for bipartite networks which are used in our network analysis.

\begin{definition} We define $Nc$ as  the number of connected components of the network.
\end{definition}
 
 \begin{definition} We define $Sg$ as the size of giant component (the biggest connected component) of the network.
\end{definition}
 \begin{definition} 
The clustering coefficient for a network is the average of clustering coefficient for all nodes. 
The bipartite clustering coefficient for a node  is a measure of local density of connections defined as \cite{latapy2008basic}
$$cl(\textbf{v})=\frac{\sum_{\textbf{u} \in N(N(\textbf{v}))} c_{\textbf{uv}}}{|N(N(\textbf{v}))|}$$
where $N(N(\textbf{v}))$ are the second order neighbors of \textbf{v} in network excluding \textbf{v}, and $c_{\textbf{uv}}$ is the pairwise clustering coefficient between nodes \textbf{u} and \textbf{v} and defined by
$$c_{\textbf{uv}}=\frac{|N(\textbf{u})\cap N(\textbf{v})|}{|N(\textbf{u})\cup N(\textbf{v})|}.$$
\end{definition}

\begin{definition} Redundancy coefficients $Rc$ are  measure of the degree to which nodes in a bipartite graph tend to cluster together. For a bipartite network,  redundancy for a node  is the ratio  of its overlap  to its maximum possible overlap  according to its degree.
The overlap of a node is the number of pairs of neighbors that have mutual neighbors themselves, other than that node \cite{latapy2008basic}. For a typical node \textbf{v}, the redundancy coefficient of \textbf{v} is defined as 

$$Rc(\textbf{v})=\frac{|\{\{\textbf{u,\textbf{w}}\}\subseteq N(\textbf{v}), \exists \textbf{v'}\neq \textbf{v}~s.t~\textbf{uv'}\in\textbf{E}, \textbf{wv'}\in \textbf{E} \}|}{\frac{|N(\textbf{v})|(|N(\textbf{v})|-1)}{2}},$$

where, $N(\textbf{v})$ is set of all neighbors of node \textbf{v}, and \textbf{E} is set of all edges in the network.
\end{definition}  
In graph theory and network analysis, we can  identify the most important nodes  within a network using centrality score of nodes. For example the most influential person(s) in a sexual network, such as  super-spreaders of disease. Through centrality scores we can seek to quantify the influence of every node in the network. We define some of the most important centralities which are used widely in network analysis: degree, betweenness and closeness centrality scores.
\begin{definition}{} The first and conceptually simplest centrality is degree centrality which is defined by the number of links to a  node: {degree centrality} of node \textbf{v} is given by $$de(\textbf{v}) = \frac{\textbf{deg}(\textbf{v})}{N-1},$$
where $\textbf{deg(v)}$ is the number of neighbors of \textbf{v} and $N$ is the total number of nodes in the network. This centrality  interprets  the immediate risk of a node for catching or transmitting whatever is flowing through the network. 
\end{definition}
\begin{definition}{Betweenness Centrality}
 of a node \textbf{v} is given by 
$$bet(\textbf{v})=\frac{2}{(N-1)(N-2)}\sum_{\textbf{u}\neq \textbf{w}\neq \textbf{v}}\frac{\sigma_{\textbf{uw}}(\textbf{v})}{\sigma_{\textbf{uw}}},$$
where $\sigma_{\textbf{uw}}$  is the total number of shortest paths from node \textbf{u} to node \textbf{w}, $\sigma_{\textbf{uw}}(\textbf{v})$ is the number of those paths that pass through \textbf{v}, and $N$ is the number of nodes in the graph. This centrality quantifies the number of times a node acts as a bridge along the shortest path between two other nodes. Therefore, the nodes  that have a high probability to occur on a randomly chosen shortest path between two randomly chosen node have a high betweenness  score \cite{brandes2008variants}.
\end{definition}
\begin{definition}{Closeness Centrality}  of a node \textbf{v} -in a not necessarily connected network- is sum of the reciprocal of   the shortest path distances from \textbf{v} to all $N-1$ other nodes. Since the sum of distances depends on the number of nodes in the graph, closeness is normalized by the sum of minimum possible distances $N-1$:
$$Clo(\textbf{v}) = \frac{\sum_{\textbf{u}=1}^{N-1} \frac{1}{d(\textbf{v},\textbf{u})}}{N-1},$$
where $d(\textbf{v},\textbf{u})$ is the shortest-path distance between \textbf{v} and \textbf{u}, and $N$ is the number of nodes in the graph \cite{freeman1978centrality}. In this concept the more central a node is, the closer it is to all other nodes.

\end{definition}
\section{Generating Bipartite Network}

We introduce five $B2K$ algorithms to construct simple bipartite networks for a given $BJD$ matrix satisfying the assumptions of Theorem \ref{simple}.
These algorithms are categorized as either an \emph{edge} or \emph{node} algorithm, depending on the network generation  process.
Both approaches can be used to generate an ensemble of B2K networks.
However, as we will show, the statistical properties of the networks for  different algorithms differ.
That is, the different approaches have different biases in sampling the space of all feasible networks.

The algorithms begin by grouping nodes into upper and lower nodes and assigning each node a desired degree based on the $BJD$ matrix.
In the edge algorithms, we first choose an entry in the BJD matrix - a tuple of the desired degrees of the upper node and lower node.
We find the list of pair of nodes that satisfies the conditions of the tuple.
We choose one pair of nodes randomly from this list and attach the edge if one does not exist.
If we cannot find a pair of nodes in the list that do not have an edge between them, we choose one at random and add a double-edge.
We repeat adding edges until all edges are placed.
If we attached a double-edge during the generation, we rewire the graph as described in subsection \ref{rewire1}.
In practice, we found that the node based algorithms were more computationally expensive than the edge based algorithms.

The edge algorithms start with the unconnected list of upper and lower nodes and iteratively add new edges guided by the current state of the edges in the network:
\begin{itemize}
\item \textbf{Random Edge (RE)}:  Choose one (i,j) randomly from the BJD.  We find a pair of upper and lower nodes with degrees i, j respectively and add an edge between them, then update $(i,j)\rightarrow (i,j)-1$.  This process continues for each edge until the $BJD$ becomes zero matrix.
\begin{algorithm}
 \SetAlgoLined
  \While{$BJD>0$}{
  Randomly select an element $(i,j)$\;
  Randomly select an upper node \textbf{v} with degree $i$ and $stub(\textbf{v})>0$\;
       Randomly select a lower node \textbf{v} with degree $j$ and $stub(\textbf{v})>0$\;
       Make edge \textbf{u}\textbf{v}.  $stub(\textbf{u})\leftarrow stub(\textbf{u})-1,$ and $stub(\textbf{v})\leftarrow stub(\textbf{v})-1$\;
      $(i,j) \leftarrow (i,j)-1.$
 }
 \caption{ Random Edge (\textbf{RE})}\label{RE}
\end{algorithm}
\item \textbf{Maximum Edge-Degree (\ED/)}: Of the remaining edges choose an edge to add from those with $max(d_{\text{max}})$ where $d_{\text{max}}(i,j)\defeq\max (i,j)$ at random until there are no remaining edges.
\begin{algorithm}
 \SetAlgoLined
  \While{$BJD>0$}{
  For remaining $(i,j) > 0$, find $m$ where $m \defeq \max\limits_{(i,j)>0} \left\{i,j\right\}$\;
    Randomly select an element $(i,j) > 0$ where $i=m$ or $j=m$\;
     Randomly select an upper node \textbf{u} with degree $i$ and $stub(\textbf{u})>0$\;
       Randomly select a lower node \textbf{v} with degree $j$ and $stub(\textbf{v})>0$\;
      Make edge \textbf{u}\textbf{v}, $stub(\textbf{u})\leftarrow stub(\textbf{u})-1,$ and $stub(\textbf{v})\leftarrow stub(\textbf{v})-1$\;
      $(i,j) \leftarrow (i,j)-1.$
 }
 \caption{ Maximum Edge Degree (\ED/)}\label{ED+}
\end{algorithm}
\item \textbf{Total Edge-Degree (TED)}: Of the remaining edges choose an edge to add from those with $max(d_{\text{total}})$ where $d_{\text{total}}(i,j)\defeq i + j$ at random until there are no remaining edges.  (\textit{Note:} we observed no statistical differences between the TED and \ED/ approaches and will only present  results for the \ED/ algorithm.)

\begin{algorithm}[H]
 \SetAlgoLined
  \While{$BJD>0$}{
   For remaining $(i,j) > 0$, find $m$ where $m\defeq \max\limits_{(i,j)} \left(i+j\right)$\;
     Randomly select an element $(i,j) > 0$ where $i+j=m$\;
   Randomly select an upper node \textbf{u} with degree $i$ and $stub(\textbf{u})>0$\;
      Randomly select a lower node \textbf{v} with degree $j$ and $stub(\textbf{v})>0$\;
     Make edge \textbf{u}\textbf{v}, $stub(\textbf{u})\leftarrow stub(\textbf{u})-1,$ and $stub(\textbf{v})\leftarrow stub(\textbf{v})-1$\;
       $(i,j) \leftarrow (i,j)-1.$
 }
  \caption{ Total Edge Degree (\textbf{TED})}\label{TED}
\end{algorithm}
\end{itemize}
The  node algorithms start with the unconnected list of upper and lower nodes and iteratively add new edges  
based on current state of the nodes in the network:
\begin{itemize}
\item \textbf{Maximum Node-Degree (\ND/)}: Choose from the nodes with the highest desired degree.
Choose possible edges that the chosen node could have and select appropriate neighbors.
Select all of the neighbors for the chosen node, then choose another node from those with the highest desired degree.
Repeat until all edges are added.
\begin{algorithm}[H]
 \SetAlgoLined
  \While{$BJD>0$}{
  From remaining nodes with positive stub, find the one with highest degree: $\textbf{u}= \max\limits_{deg(\textbf{v})} \{\textbf{v}\}$\;
  \While {$stub(\textbf{u})>0$}{
  \eIf{\textbf{u} is upper node with degree $i$ }{
    From the row i of $BJD$ matrix randomly select an element $(i,j)>0$\;
       Randomly select a lower node \textbf{v} with desired degree $j$    and $stub(\textbf{v}) >0$\;
       Make edge \textbf{u}\textbf{v}, $stub(\textbf{u})\leftarrow stub(\textbf{u})-1,$ and $stub(\textbf{v})\leftarrow stub(\textbf{v})-1,$ and $(i,j) \leftarrow (i,j)-1$\;
   }{
    From the column j of $BJD$ matrix randomly select an element $(i,j)>0$\;
      Randomly select an upper  node \textbf{v} with desired degree $i$    and $stub(\textbf{v}) >0$\;
   Make edge \textbf{u}\textbf{v}, $stub(\textbf{u})\leftarrow stub(\textbf{u})-1,$ and $stub(\textbf{v})\leftarrow stub(\textbf{v})-1,$ and $(i,j) \leftarrow (i,j)-1$\; 
  }}
  }
   \caption{ Maximum Node Degree (\textbf{ND+})}\label{ND+}
\end{algorithm}
\item \textbf{Maximum Stub Minimum Node-Degree (\SND/)}: Find the nodes with the fewest placed edges -- the most stubs -- and sort them by their desired degree, and choose the one with the maximum desired degree. Choose possible edges that the chosen node could have and select appropriate neighbors and make the edge.
Continue until all edges are added.

\begin{algorithm}[H]
 \SetAlgoLined
  \While{$BJD>0$}{
    From remaining nodes, find the node  \textbf{u} with $\textbf{u}=\min\limits_{deg(\textbf{v})} \{\max\limits_{stub(\textbf{v})} \{\textbf{v}\} \}$\;
  
  \eIf{\textbf{u} is upper node with degree $i$ }{
    From the row i of $BJD$ matrix randomly select an element $(i,j)>0$\;
       Randomly select a lower node \textbf{v} with desired degree $j$    and $stub(\textbf{v}) >0$\;
       Make edge \textbf{u}\textbf{v}, $stub(\textbf{u})\leftarrow stub(\textbf{u})-1,$ and $stub(\textbf{v})\leftarrow stub(\textbf{v})-1,$ and $(i,j) \leftarrow (i,j)-1$\;
   }{
    From the column j of $BJD$ matrix randomly select an element $(i,j)>0$\;
      Randomly select an upper  node \textbf{v} with desired degree $i$    and $stub(\textbf{v}) >0$\;
   Make edge \textbf{u}\textbf{v}, $stub(\textbf{u})\leftarrow stub(\textbf{u})-1,$ and $stub(\textbf{v})\leftarrow stub(\textbf{v})-1,$ and $(i,j) \leftarrow (i,j)-1$\; 
  }
  }
    \caption{ Maximum Stub Minimum Degree (\SND/)}\label{S+ND-}
\end{algorithm}
\end{itemize}
\section{Rewiring Approach}\label{rewire1}
During the construction of our network, it is possible that there is not two valid nodes, \textbf{u} and \textbf{v}, each of valid desired degrees, $i$ and $j$ that do not already have an edge between them.
Our options are to increase a node beyond its desired degree, or attach a multiple edge.
Because the final network satisfies a certain BJD, we choose to simplify the generation by allowing multiple edges between nodes \textbf{u} and \textbf{v} as long as the network still satisfies $(i,j)$.
Once all of the edges are attached, we use rewiring to remove multiple edges, which maintains the proper edge count for each $(i,j)$.

Our approach follows the proof of theorem \ref{simple}.
Suppose \textbf{G} is the bipartite network generated by one of the approaches defined in the last subsection, and suppose \textbf{G} has at least one multiple edge.
We randomly start with one of the multiple edges, say edge attached to the nodes \textbf{u} and \textbf{v} and then we follow the rewiring process explained in the theorem \ref{simple}. Here is the  algorithm of the rewiring process:
\vspace*{5mm}

\begin{algorithm}[H]
 \SetAlgoLined
  \While{Network \textbf{G} ~is not simple}{
    Select upper node \textbf{u} and lower node \textbf{v} with more than one edge between them\;
    \uIf{There is lower node \textbf{v'} with $deg(\textbf{v'})=deg(\textbf{v})$ not connected to \textbf{u}}{
     Find a neighbor of \textbf{v'} which is not neighbor of \textbf{v}: upper node \textbf{w'}\;
   Remove edges \textbf{u}\textbf{v} and \textbf{v'}\textbf{w'}\;
 Add edges \textbf{u}\textbf{v'}, \textbf{v}\textbf{w'}.
  }
    \uElseIf{There is lower node \textbf{u'} with $deg(\textbf{u'})=deg(\textbf{u})$ not connected to \textbf{v}}{
   Find a neighbor of \textbf{v'} which is not neighbor of \textbf{v}: upper node \textbf{w'}\;
   Remove edges \textbf{u}\textbf{v} and \textbf{v'}\textbf{w'}\;
 Add edges \textbf{u}\textbf{v'}, \textbf{v}\textbf{w'}.
  }
  \Else{
     Find upper node \textbf{u'} with $deg(\textbf{u'})=deg(\textbf{u})$, a
     lower node \textbf{v'} disconnected to \textbf{u'} with $deg(\textbf{v'})=deg(\textbf{v})$\;
    Find a neighbor of \textbf{u'} which is not neighbor of \textbf{u}: upper node \textbf{w}\;
    Find a neighbor of \textbf{v'} which is not neighbor of \textbf{v}: upper node \textbf{w'}\;
   Remove edges \textbf{u}\textbf{v}, \textbf{u'}\textbf{w}, and \textbf{w'}\textbf{v'}\;
   Add edges \textbf{u}\textbf{w}, \textbf{w'}\textbf{v}, and \textbf{u'}\textbf{v'}.
      }
    }
\caption{ Rewiring Process}\label{rewire}
\end{algorithm}
\section{Romance Network}  \label{numerical_result}
In order to verify the accuracy of presented $B2K$ algorithms, we conduct some simulations by generating random bipartite networks, using RE, \ED/, \ND/, and \SND/ algorithms.
We test the algorithms on an special
real-world network called Romance network by computing several properties of bipartite networks. We compare  $Nc$, $Sg$, $Cl$, and $Rc$.

The network of sexual contact depicted in Figure (\ref{fig:romanc})
describes the structure of the adolescent romantic and sexual network in a population of $573$ students at Jefferson High \cite{bearman2004chains}.
The original network is not a bipartite network: there are two edges that links two men and two women, representing homosexual relationships.
We remove these two edges so that we have a bipartite network. 
\begin{figure}[H]
\centering
\includegraphics[width=4in]{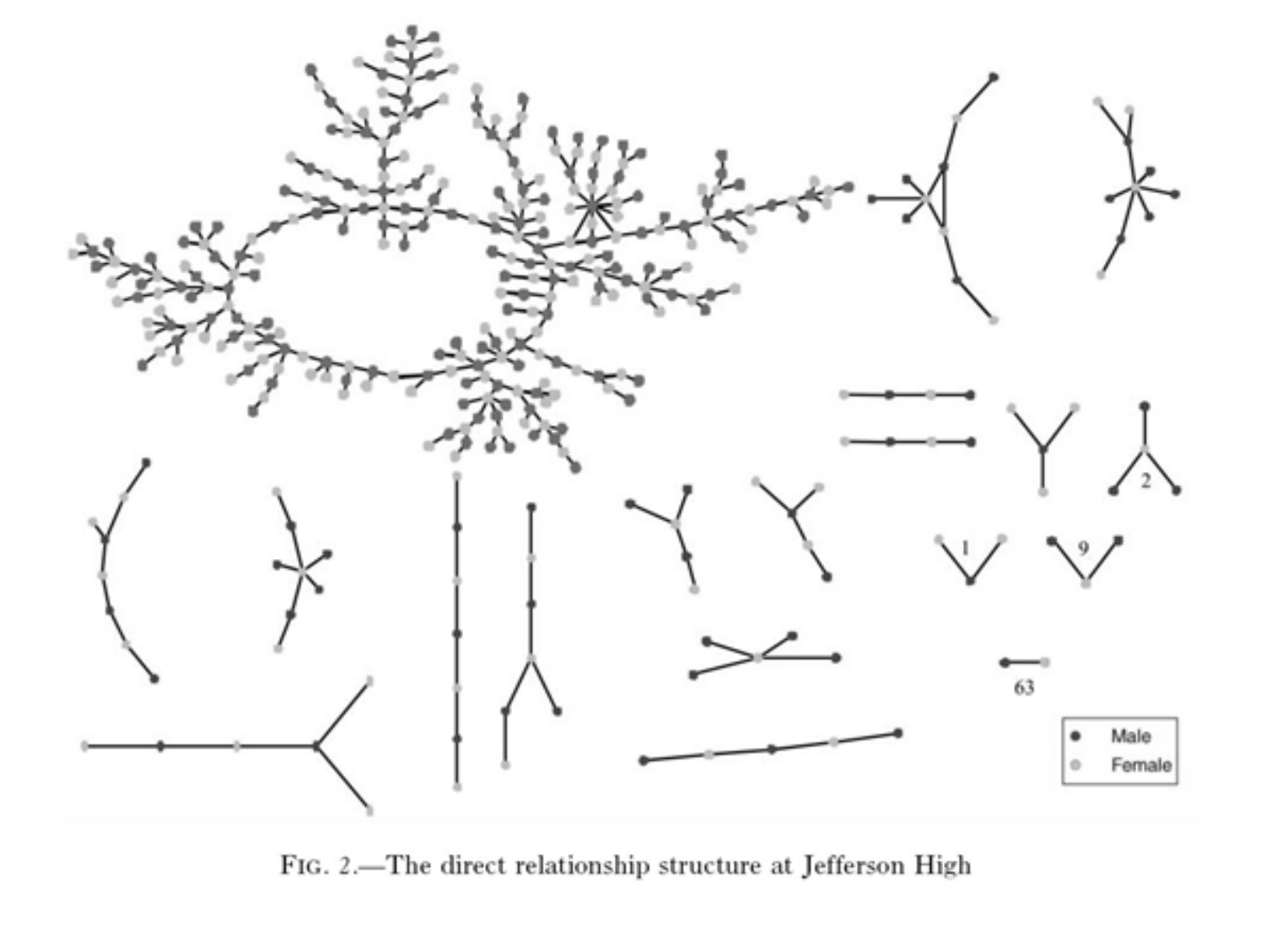}
\caption[\textbf{Romance contact network }]{The romance contact network at Jefferson High \cite{bearman2004chains} consists of a single large connected component and several smaller romance groups. }
\label{fig:romanc}
\end{figure}
We begin extracting the $BJD$ matrix of this network, called $BJD_{R}$.
The matrix has $k_w=6$ rows, and $k_m=8$ columns, where $k_w$ is the maximum degree for women  and $k_m$ would be maximum degree for men.

Each element $(i,j)$ is the number of edges between women with  $i$ partners and men with  $j$ partners. By $BJD_R$, we are able to find degree distribution for women and men: the number of women with  $i$ partners is summation of elements in $i-th$ row divided by $i$, and the number of men with  $j$ partners is summation of elements in $j-th$ column divided by $j$. 
Here is the matrix extracted from the network:
\[ BJD_{R}= \left( \begin{array}{cccccccc}
 $63$ & $56$ & $30$ & $16$ & $2$ & $0$ & $0$ & $2$ \\
 $46$ & $40$ & $25$ & $14$ & $1$ & $0$ & $0$ & $4$\\
 $23$ & $20$ & $18$ &  $8$ & $1$ & $0$ & $0$ & $2$\\
 $26$ & $24$ & $21$ &  $4$ & $1$ & $0$ & $0$ & $0$\\ 
  $8$ &  $9$ &  $1$ &  $2$ & $0$ & $0$ & $0$ & $0$ \\ 
  $4$ &  $1$ &  $1$ &  $0$ & $0$ & $0$ & $0$ & $0$ \\  \end{array} \right).\] 
As the matrix shows, we have big numbers for low degrees (upper left corner of matrix) and the rest are small or zero, making the average degree very low.
Using our algorithms we compare the properties of $BJD_{R}$ and the original romance network.
In our numerical tests of generating an ensemble of $10000$ networks for each algorithm, the algorithms succeeded in generating networks that preserved both the degree and joint-degree distributions in every simulation. 
We observed that the statistical properties of the networks for the $B2K$ algorithms were different, as shown in Table (\ref{compare_paras_RN}). This Table lists the $Sg$, $Nc$, $Cl$, and the $Rc$ for the ensemble of generated networks.  
Note that the average size of giant component for the real network is noticeably above the mean size of giant components in the randomly generated networks, especially \SND/. 

\begin{table}[htp]
  \centering
  \resizebox{.99\textwidth}{!}{
\begin{tabular}{@{}c d{4} d{4} d{4} d{4} d{4} d{4} @{}}
    \toprule
&  & \mlc{Real Network } & RE
    & $\ED/$ &  $\ND/$& $\SND/$\\
    \midrule
&<Cl>&$0.3300$& $0.3338$ & $ 0.3638$ &   $0.3617$ & $0.3009$  \\ [1ex] 
     & SD(Cl)
&--& $0.0052$ & $0.0053$ &  $0.0058$ & $0.0032$  \\ [1ex]
     & <Rc> 
&$0.0040$& $0.0039$ & $0.0071$ &   $0.0117$ & $ 0.0015$  \\ [1ex]
     & SD(Rc)
&--& $0.0057$ & $0.0074$ &   $0.0091$ & $0.0033$  \\ [1ex]
&<Sg>  
&$287$& $217.57$ & $256.35 $ &   $244.79$ & $87.0603$  \\ [1ex]
&SD(Sg)  
&--& $43.90 $ & $19.03$ &   $21.88$ & $31.4488$  \\ [1ex]
&<Nc>  
&$101$& $102.59$ & $110.09$ &   $109.47$ & $99.5267$  \\ [1ex]
&SD(Nc) 
&--& $ 1.7910$& $2.51$ &   $2.4992$ & $0.7099$  \\ [1ex]
   \bottomrule
\end{tabular}}
\caption[\textbf{Properties of the real and randomly generated Romance networks}]{Properties of the real and randomly generated Romance networks.  
Note that the  giant component size $Sg$ of the real network is larger than the average 
size of the giant component in the randomly generated networks.}
\label{compare_paras_RN}
\end{table}

The Figure (\ref{romance-bars}) plots the distribution of properties of the $10000$ simulated networks  from $B2K$ algorithms. For $Cl$, subfigure \ref{cl}, we observe all the algorithms have a normal trend with an small variance, however, for \SND/ algorithm the mean value is smaller than the others, depicting the fact that joint-degree distribution may not be enough to capture clustering coefficient of the network. The subfigure \ref{rc} is the distribution of $Rc$ for the networks which all are  right skewed.
An interesting result from $Sg$ is that \SND/ underestimates this value compared to other algorithms and has a weak right skew unless the others which are almost normal, subfigure \ref{sg}. Finally for $Nc$, subfigure \ref{nc}, all the algorithms but \SND/ follow a normal distribution,  \SND/.

\begin{figure}[htp]
\centering
\subfloat[\textbf{Clustering coefficient}][Clustering Coefficient]{\label{cl}\includegraphics[width=0.85\textwidth]
{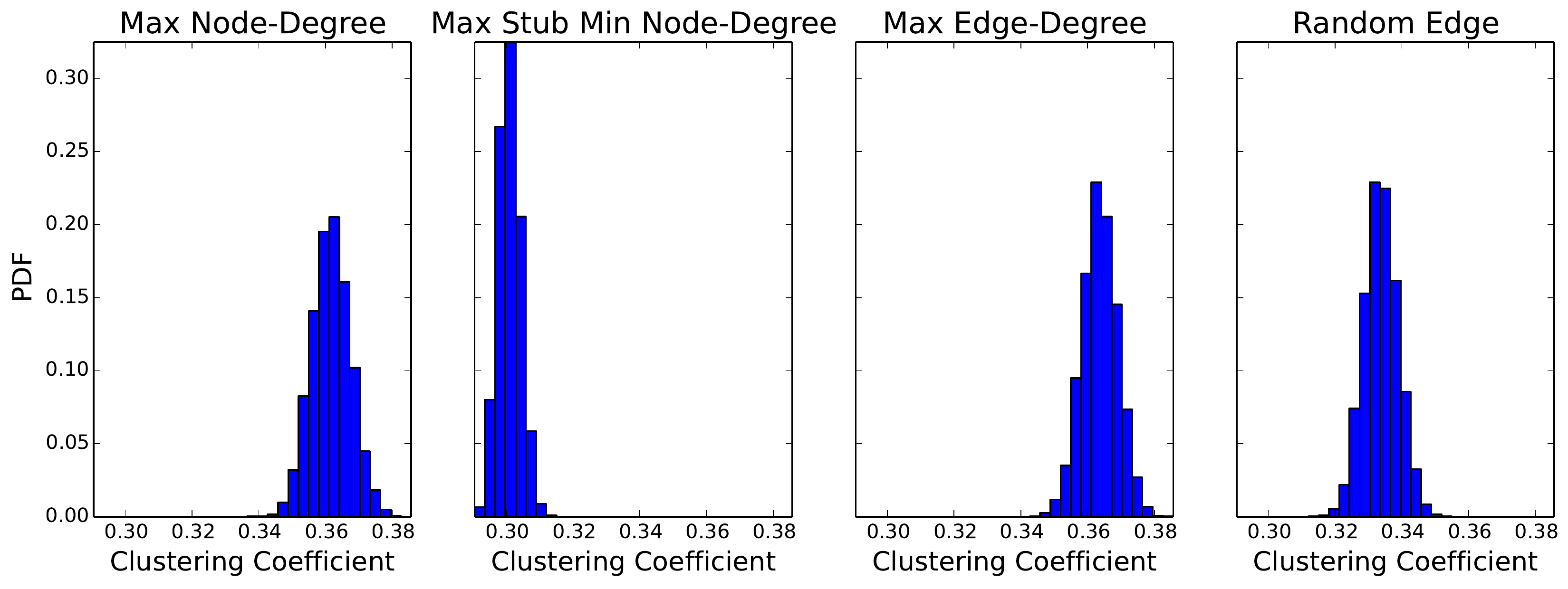}}\\
\subfloat[\textbf{Redundency coefficient}][Redundency Coefficient]{\label{rc}\includegraphics[width=0.85\textwidth]
{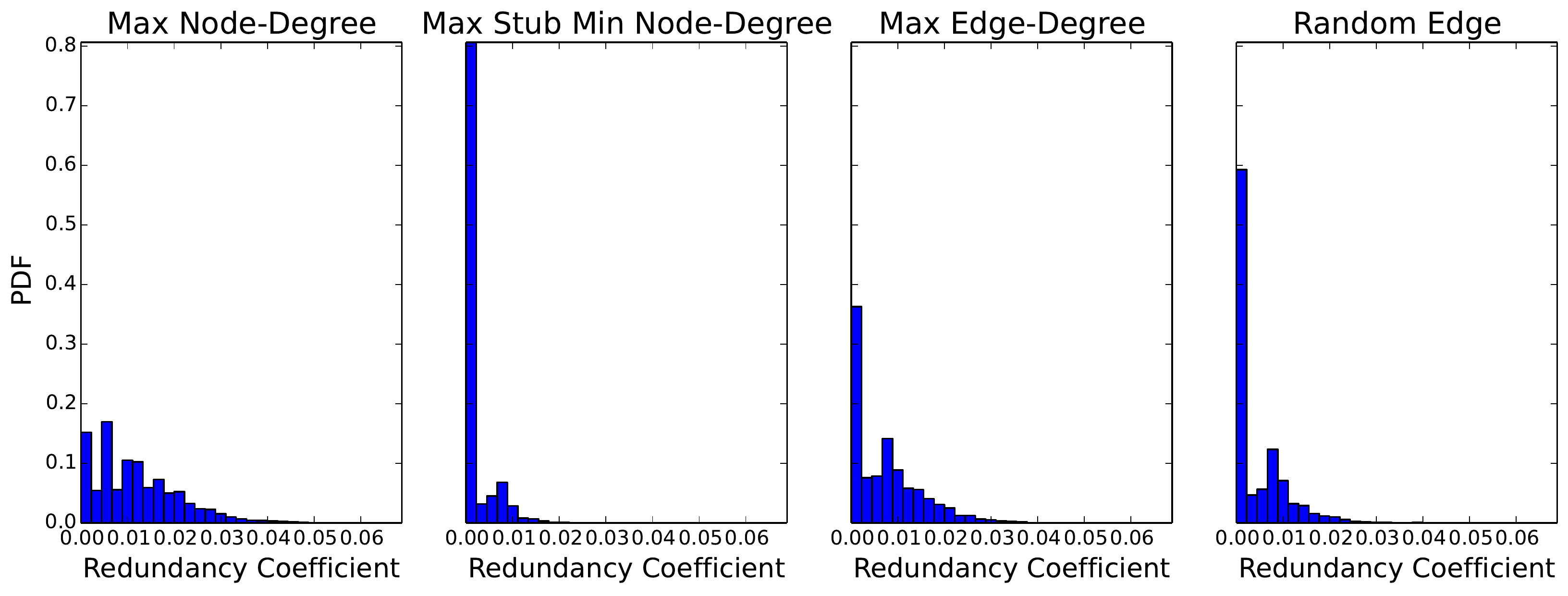}}\\
\subfloat[\textbf{Size of giant component}][Size of Giant Component]{\label{sg}\includegraphics[width=0.85\textwidth]{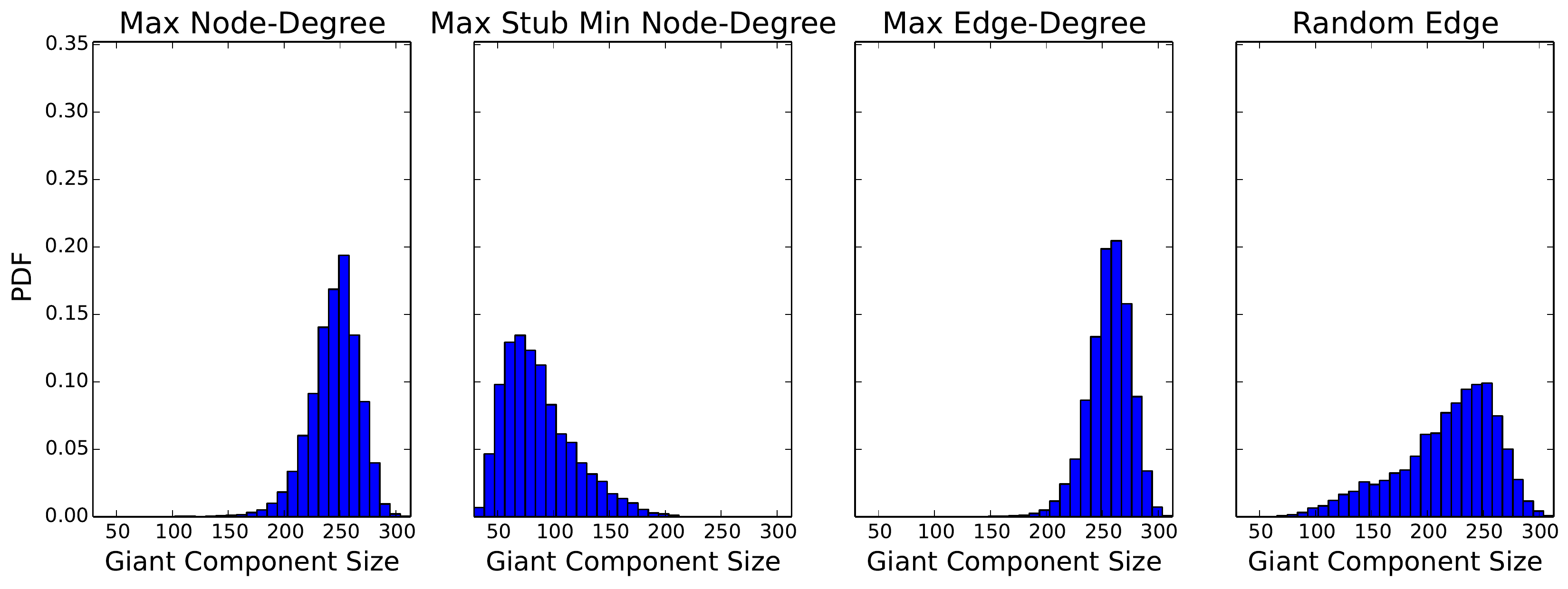}}\\
\subfloat[\textbf{Number of connected components}][Number of Connected Components]{\label{nc}\includegraphics[width=0.85\textwidth]{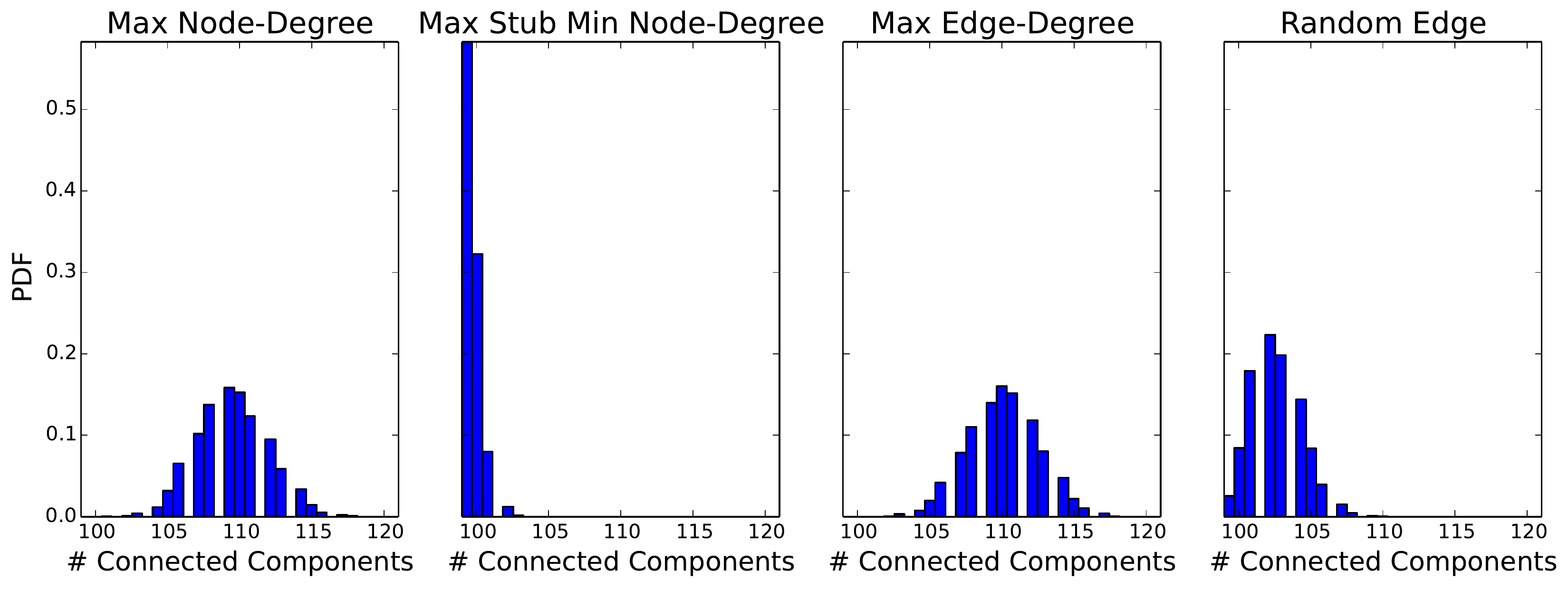}}
\caption[\textbf{Bar plot of distribution for properties of generated networks}]{Bar plot of distribution of properties of $10,000$ generated networks using $B2K$ algorithms.}
\label{romance-bars}
\end{figure}

We also compared \ED/ generating network algorithm with existing algorithms in NetworkX \cite{schult2008exploring} using the degree distribution based on the configuration model \cite{newman2002structure} and Havel Hakimi graph \cite{hakimi1962realizability}.
Currently, there are no NetworkX algorithms to use joint-degree distribution for bipartite networks.
The Table (\ref{T:network}) lists some properties of real Romance network, \ED/ network and all other algorithms in NetworkX generated using Romance data.

\begin{table}[htp]
  \centering
  \resizebox{.99\textwidth}{!}{
\begin{tabular}{@{}c d{4} d{4} d{4} d{4} d{4} d{4} @{}}
    \toprule
  Network Model& $Nc$ & $Sg$& $Cl$ &   $Rc$ \\ [0.5ex] 
    \midrule
 Real Romance Network & $101$& $287$  & $0.3395$ &   $0.0044 $\\ [1ex]
\ED/&$100$& $259$ & $ 0.3245$ &    $0.0049$ \\ [1ex] 
Configuration Model
&$101$& $139$ & $0.3693$ & $0.0000$  \\ [1ex]
Havel-Hakimi Network
&$178$& $82$ & $0.1827$ & $0.5046$  \\ [1ex]
Alternative Havel-Hakimi
&$120$& $50$ & $0.3834$ & $0.0683$   \\ [1ex]
Reverse Havel-Hakimi 
&$132$& $9$ & $0.8210$ & $0.5637$\\
 \bottomrule
\end{tabular}}
\caption[\textbf{Properties of real Romance network and of the networks generated by the \ED/ algorithm, configuration model and Havel Hakimi algorithm.}]{Properties of real Romance network and of the networks generated by the \ED/ algorithm, configuration model and Havel Hakimi algorithm. Note that the \ED/
accurately approximates the size of the giant component, the network clustering, and average redundancy coefficient in this low average degree network.}
\label{T:network}
\end{table}

As we see in the Table (\ref{T:network}), \ED/ algorithm is in better agreement with the real Romance network than other existing algorithms, though, it uses more information than these algorithms.
The giant component of generated network using these algorithms are shown in Figure (\ref{nets}). This Figure shows the network generated by \ED/ is in agreement with the real network more than other existing networks in NetworkX.
As expected, this example shows that the joint-degree distributions preserves more properties of the original network than the bipartite algorithms that just preserve the degree distribution.

\begin{figure}[htp]
\centering
\subfloat[\textbf{Romance Network} ][Romance Network ]{\label{fig:1a}\includegraphics[width=0.5\textwidth]
{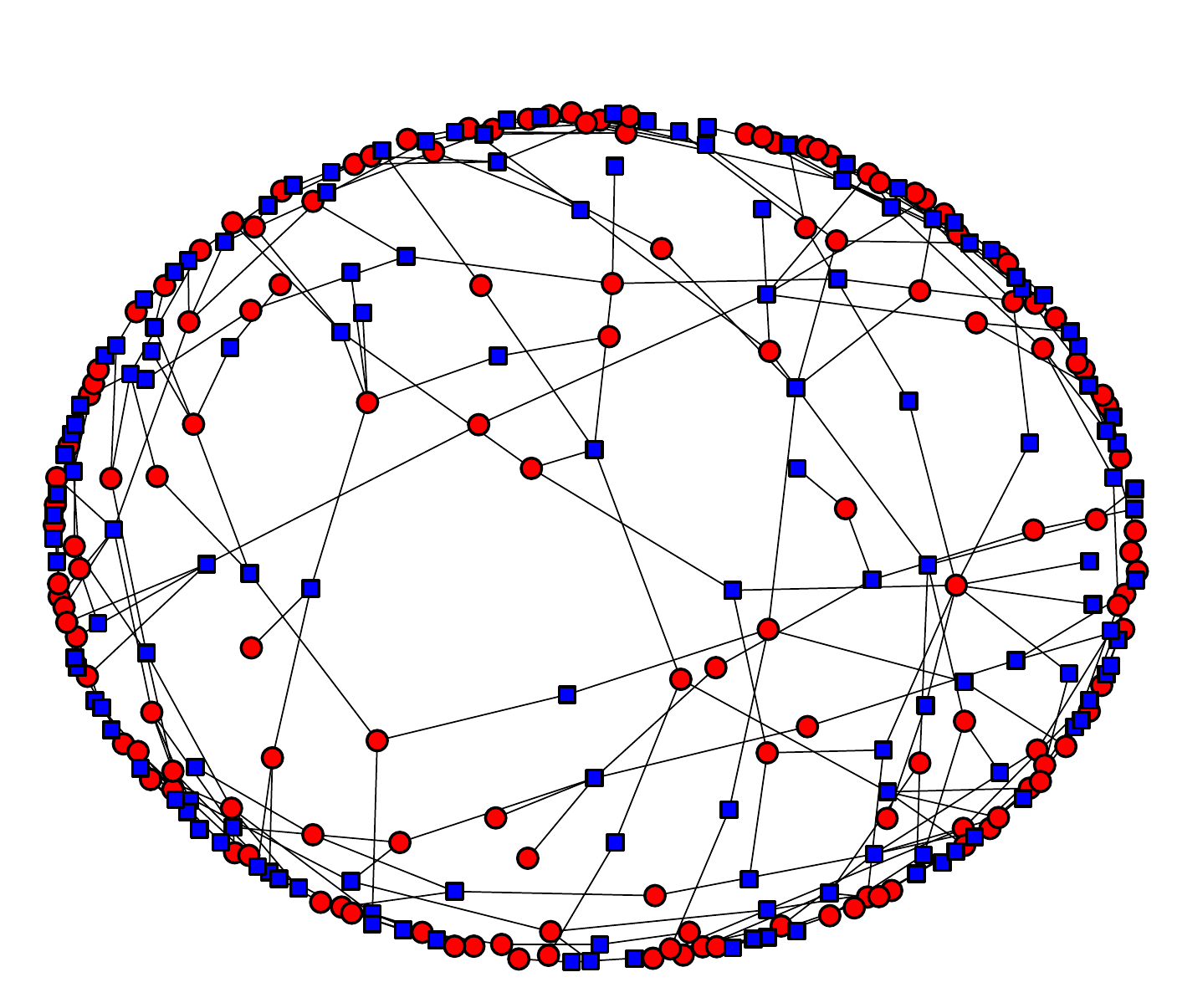}}
\subfloat[\textbf{\ED/ Algorithm}][\ED/ Algorithm]{\label{fig:1b}\includegraphics[width=0.5\textwidth]{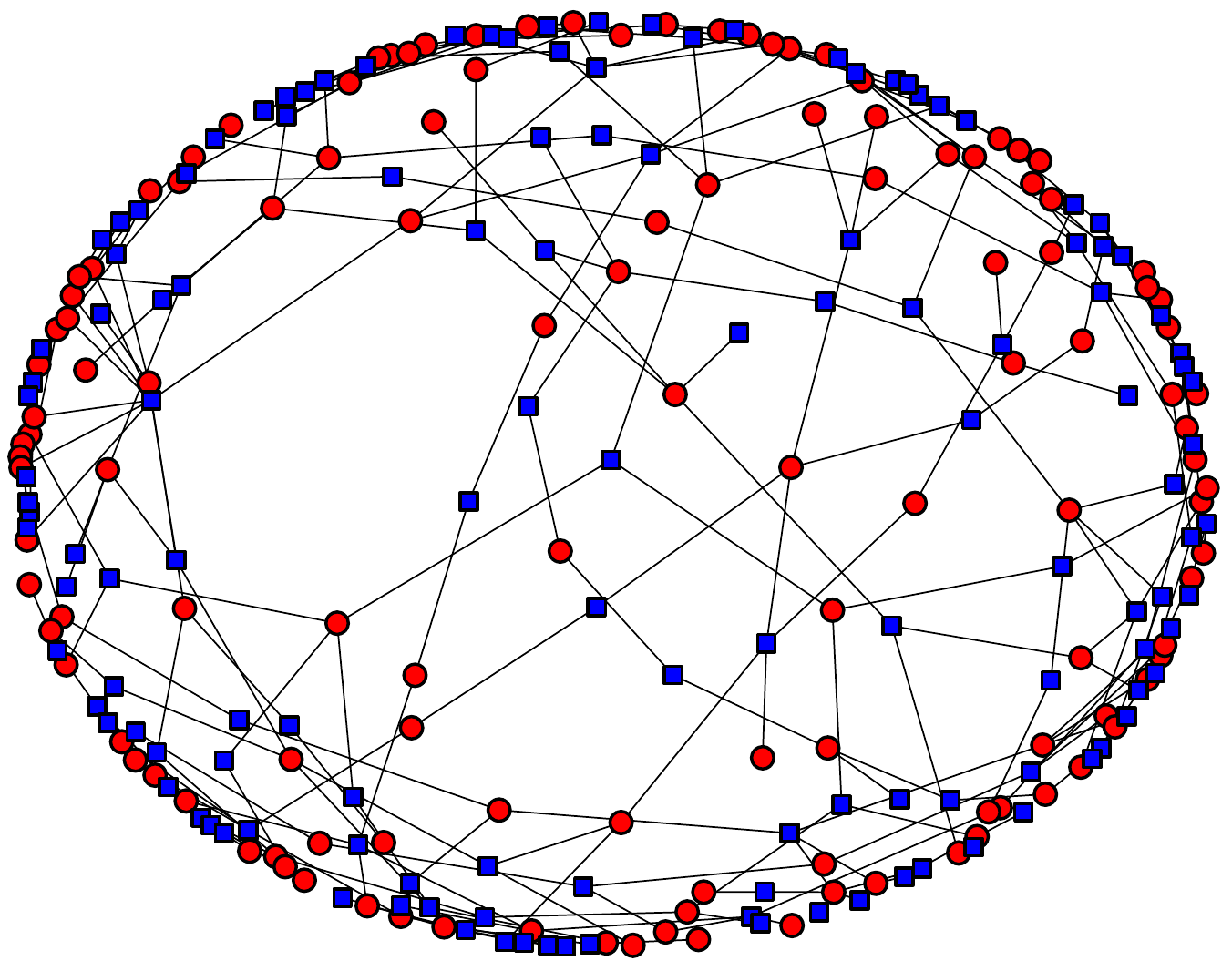}}\\
\subfloat[\textbf{Configuration Model}][Configuration Model]{\label{fig:1e}\includegraphics[width=0.5\textwidth]{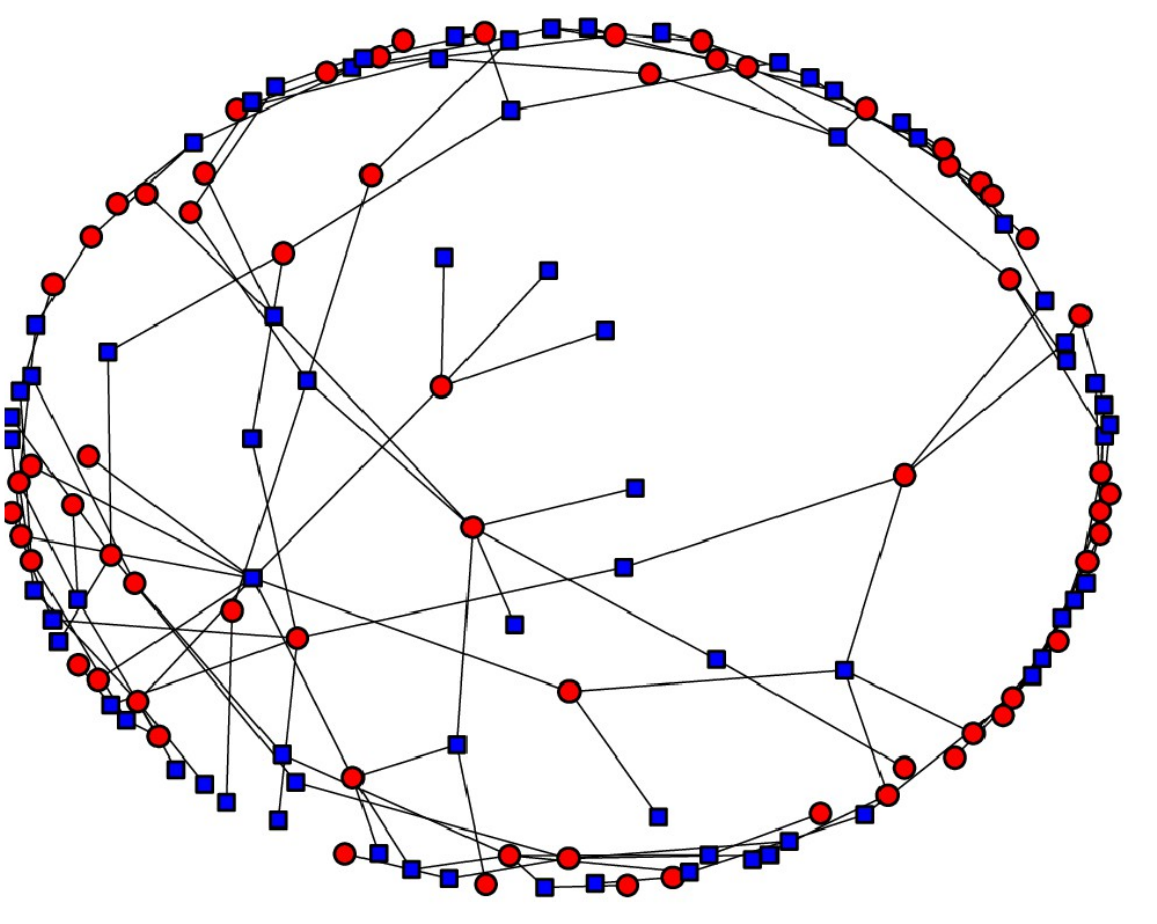}}
\subfloat[\textbf{Havel Hakimi model}][Havel Hakimi Model]{\label{fig:1f}\includegraphics[width=0.5\textwidth]{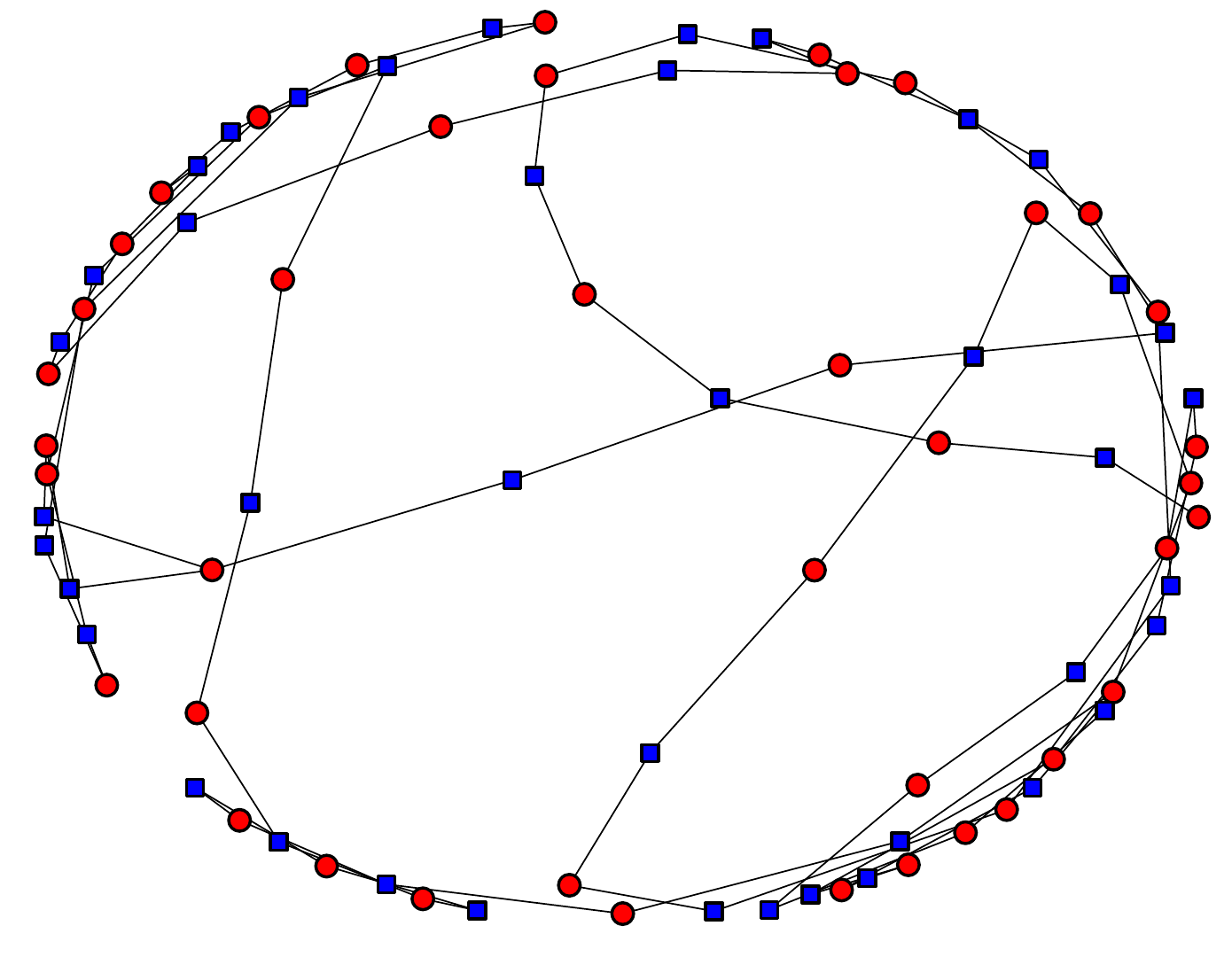}}
\caption[\textbf{Giant components of romance network}]{The structure of the giant components of romance network and the $B2K$ generated network are similar.
The configuration model and  Havel Hakimi  algorithms have the same degree distributions, but do not capture this property}\label{nets}
\end{figure}

\chapter{Sexual Activities Hidden in Social Organization: A Preferential Attachment Mechanism for Human Sexual Network Formation in  Social Network Context}\label{ap2}
The current appendix adds to our previous appendix on generating  heterosexual  networks with prescribed joint-degree distribution  considering impact of people's social behavior (i.e., non-sexual partners) on their sexual partner selection. We generate a sexual network which has two properties: first property is that our sexual network is a subgraph of social network with a structure defined in \cite{eubank2010detail}, second it follows a joint-degree (degree-degree) distribution  originated from sexual activity data.

\section{Degree and joint-degree distribution and $BJD$ matrix for a sexual network}
A conventional heterosexual network  $\textbf{G}$, which is from bipartite family networks, is a sexual network of men and women, in which men only have partnership with women and vise versa, there is no partnership between two men or two women. Every edge $\textbf{ij}$ in this network means that two persons \textbf{i} and \textbf{j} are sexual partners.
The \textit{degree}  of a person \textbf{i}, is defined as the number of his/her sexual partners.
The \textit{degree distribution} $d_k$ defines the number of people with  degree $k$.
The \textit {joint-degree distribution}  $(k,j)$ is the number of men with degree $j$ who are connected to women with  degree $k$.
This distribution  can be represented by the \textit{Bipartite Joint Degree} or $BJD$ matrix:

\[BJD_G = \parenMatrixstack{
    e_{11} & e_{12} & e_{13} & \dots  & e_{1m} \\
    e_{21} & e_{22} & e_{23} & \dots  & e_{2m} \\
    \vdots & \vdots & \vdots & \ddots & \vdots \\
    e_{w1} & e_{w2} & e_{w3} & \dots  & e_{wm} },\]

where, $w$ is the maximum degree in women nodes, and $m$ is the maximum degree in men nodes, each element $e_{ij}$ is the number of edges between women with $i$ partners and men with $j$ partners. 
The degree distribution of the number of women nodes, $d^w_k$, and men nodes, $d^m_k$, with  $k$ partners can be obtained from $BJD_G$:
\begin{eqnarray}
d^w_k=\frac{\sum_{j=1}^m e_{kj}}{k}~,\text{~~and~~~}~~~d^m_k=\frac{\sum_{i=1}^w e_{ik}}{k}~.
\end{eqnarray}

\section{Social network embedding sexual network}
The  Social Network  called  \so is a graph whose nodes are synthetic people, labeled by their demographics,
and whose edges represent contacts determined in which each synthetic person is deemed to have made contact with a subset of other synthetic people through some \textit{Activity} types. 
We have five different activity locations: Home($H$), Work($W$), School($SC$), Shopping($SH$), and Others($O$).
Each edge is labeled with one of these activity locations and is weighted by  the time spent on these contact per day. For example edge \textbf{i}\textbf{j} labeled by $W$ and weighted by  $T$   means two persons \textbf{i} and \textbf{j} have contact for $T$ fraction of their total time  spent at work  \cite{eubank2010detail}.  

\section{Generating Bipartite Sexual Network }\label{method}
Our goal is to introduce an algorithm for generating a heterosexual network called \se that is  a partial subgraph of social network \so and meets a particular joint-degree distribution represented by matrix $BJD$.
\subsection{The algorithm }
Now, we  provide an algorithm that uses the sexual activity data (BJD matrix) and social network to generate the sexual network of individual  called \se. This  graph is partially embedded in \so-depends on what percentage of sexual partners of a typical  person are selected from his/her social friends- and has a joint-degree distribution that meets BJD matrix. 
The algorithm includes three phases.
\subsubsection{Phase 1: extension and revision of SocNet }

This Phase  is a procedure to extend \so and  then to find its subgraph  in a way that  only consist the potensial edges for sexual activities:
\begin{enumerate}
\item The first step  is to condense \so by making friendship between friends of an index case. For two persons \textbf{i} and \textbf{j} who are not currently connected, suppose the probability of their meet through a common social friend like \textbf{k} within an activity $A$ is $p^A_{ij}$. If they have $k(i,j)$ common social friends through a particular activity $A$, therefore, with probability of $1-(1-p^A_{ij})^{k(i,j)}$ they meet each other and make friendship, that is, with probability of $1-(1-p^A_{ij})^{k(i,j)}$ we make a new social edges between \textbf{i} and \textbf{j}. We define $p^A_{ij}=p^A_ip^A_j$ where $p^A_k$ for a person \textbf{k} is the average fraction of time spent  per social friend for social friends within  activity $A$, that is, if $N_A(k)$ is set of all  social friends for person \textbf{k} through an activity location $A$, then 
\begin{eqnarray}
p^A_k=\frac{\sum_{l \in N_A(k) } T_{k,l}}{|N_A(k)|}.
\end{eqnarray}

 The Figure (\ref{frined_of_friend}) represents an schematic of this approach: for the persons \textbf{i} and \textbf{j} who are not currently  social friends but have three different common friends \textbf{k}$_0$,\textbf{k}$_1$, and \textbf{k}$_2$ though different activities. Suppose $A_{ik_0}=A_{jk_0}=A_{ik_1}=A_{jk_1}=A\neq A_{ik_2}\neq A_{ik_2}$, that is, \textbf{k}$_0$ meets \textbf{i} and  \textbf{j} at the same location, similarly \textbf{k}$_1$ meets \textbf{i} and  \textbf{j} at the same location to \textbf{k}$_0$'s, however, \textbf{k}$_2$ meets them in different places. To compute  $p^A_i$ and $p^A_j$ we only count the friends who meet them at the same location A, therefore, $p^A_i=\frac{T_{ik_0}+T_{ik_1}}{2}$, and $p^A_j=\frac{T_{jk_0}+T_{jk_1}}{2}$.
\begin{figure}[H]
\centering
  \begin{tikzpicture}[
      mycircle/.style={
         circle,
         draw=black,
         fill=gray,
         fill opacity = 0.3,
         text opacity=1,
         inner sep=0pt,
         minimum size=20pt,
         font=\small},
      myarrow/.style={-},
      node distance=1.5cm and 1.9cm
      ]
      \node[mycircle] (i) {$\textbf{i}$};
      \node[mycircle, right=of i] (k1) {\textbf{k}$_1$};
      \node[mycircle,above =of k1] (k0) {\textbf{k}$_0$};
       \node[mycircle,below =of k1] (k2) {\textbf{k}$_2$};
       \node[mycircle,right =of k1] (j) {\textbf{j}};
      
     \draw [myarrow] (i) -- node[sloped,above] {$A_{ik_0}$,~$T_{ik_0}$} (k0);
     \draw [myarrow] (i) -- node[sloped,above] {$A_{ik_1}$,~$T_{ik_1}$} (k1);
     \draw [myarrow] (i) -- node[sloped,above] {$A_{ik_2}$,~$T_{ik_2}$} (k2);
     \draw [myarrow] (j) -- node[sloped,above] {$A_{jk_0}$,~$T_{jk_0}$} (k0);
     \draw [myarrow] (j) -- node[sloped,above] {$A_{jk_1}$,~$T_{jk_1}$} (k1);
     \draw [myarrow] (j) -- node[sloped,above] {$A_{jk_2}$,~$T_{jk_2}$} (k2);
    \end{tikzpicture}
    \caption[\textbf{Schematic of step 1}]{An schematic of step 1, the only case that \textbf{i} and \textbf{j} have a chance to meet is when at least one of their common friends meet them within the same activity location.}
    \label{frined_of_friend}
\end{figure}
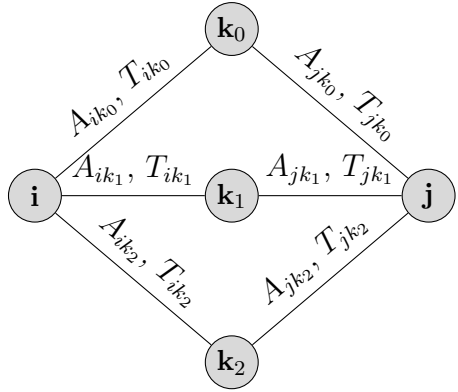

\item If sexually active population under study has an age range $\alpha=[\alpha_1,\alpha_2]$, therefore people in  \so but out of this age range cannot play a role in sexual network, for example a child with age six years old cannot be part of sexual network. Thus, the second step is to remove all people with ages $\notin\alpha$.

\item The sexual network we aim to extract from \so is heterosexual. That is, we do not have homosexual individuals in \textbf{SexNet}, therefore,  the edges between two men or two women in \so can not be a potensial sexual edge, so we remove this edge from \so.
\item Our last assumption is that sexual partners are not living together, therefore household edges in \so (edges with label H) cannot be the proper edge ending up with sexual activity, so we remove household edges.
\end{enumerate}

\subsubsection{Phase 2: main algorithm }

The main algorithm takes three inputs \textbf{SocNet}, $BJD$ matrix corresponding to  joint-degree distribution for \textbf{SexNet}, and $p\in[0,1]$ the embedding percentage of \textbf{SocNet}, in fact the value p tells what percentage of edges in \se is in \textbf{SocNet}. The output would be heterosexual network \se  which is a partially subgraph of \so and matrix $BJD$ represents its joint-degree distribution. 
Starting with an empty set of nodes for \textbf{SexNet}, we select nodes in \so to add to \textbf{SexNet}: we sort nodes in \so based on their degree and as the first node in \se add the man node in \so with maximum degree to set of nodes in \textbf{SexNet} and assign its desired degree in \se equal to column size of $BJD$, and then select his first sexual partner with probability $p$ from his social neighbors and with probability $1-p$ from closest nodes in \so but not a social friend.

At any stage, we first find the partners for nodes in \se from their social friend, if the degree of all nodes in \se meet their desired degree, we add a new node to \textbf{SexNet}. We repeat adding edges until all edges are placed. 
Algorithms \ref{main_alg} and \ref{help_alg}
explain this procedure, and table  \ref{notation} is the table of definition for symbols in the algorithms.

\begin{table}[htbp]
\begin{center}
\begin{tabular}{r c p{10cm} }
\toprule
\footnotesize
$G.n$ & $\defeq$ &set of nodes  in  $G$\\
$G.e$ & $\defeq$ &set of edges  in network $G$\\
$d_G(\textbf{i})$ & $\defeq$ &degree of node \textbf{i} in network $G$\\
$d_G$ & $\defeq$ &degree frequency list for network $G$\\
$G.N(\textbf{i})$ & $\defeq$ & set of neighbors of node \textbf{i} in Network G\\
$dis(G,\textbf{u},\textbf{v})$ & $\defeq$ & distance between two nodes \textbf{u} and \textbf{v} in Network G\\
$M.col(M.row)$ & $\defeq$ & column size (row size) of a matrix M\\
$M(i,:) (M(:,i))$ & $\defeq$ & $i^{th}$ row (column) of matrix M\\
$V( i)$ & $\defeq$ & $i^{th}$ element of vector V\\
$|S|$ & $\defeq$ & size (the number of elements) of  a set S\\ 
$S.remove(m)$ &$\defeq$ & remove member m from   a set S\\ 
$S.sample(P)$ &$\defeq$ & randomly select an element with property P (if P$=1$ there is no property) from set S \\ 
$urn$ &$\defeq$ & uniform random number in $[0,1]$\\
\bottomrule
\end{tabular}
\end{center}
\caption[\textbf{Notations table}]{Table of notation for a conventional network $G$ in algorithms.}
\label{notation}
\end{table}
To keep or remove an edge  we have to calculate the degree of nodes attached to it for each possible edge in the \textbf{SocNet}, thus,  the full set of experiments run in O($|E|P^mP^w$) time, where $|E|$ is the number of edges in \textbf{SocNet}, $P^m$ number of its men nodes and $P^w$ number of its women nodes. This method is feasible if average degree ($\frac{2|E|}{P^m+P^w}$) of the network is not high.

\newpage

\begin{algorithm}[H]
 \SetAlgoLined
 $\textbf{SexNet}.n=\emptyset,~\textbf{SexNet}.n \leftarrow \textbf{u}={max_{d_\so(\textbf{k})}}\{\textbf{k} \in \textbf{SocNet}.n \}$\;
  $d_{\se}(\textbf{u}) \triangleq BJD.col,~stub(\textbf{u}) \triangleq d_\se(\textbf{u})$\;
   $E\triangleq \sum_i\sum_jBJD(i,j)$\;
  \While{$|\textbf{SexNet}.e|\leq E$}{
 $NF\triangleq\{\textbf{k} \in \textbf{SexNet}.n~if~stub(\textbf{k})>0\}$\;
  \While {$|NF|\geq 1$}{
  $\textbf{u}={max_{stub(\textbf{k})}}\{\textbf{k} \in NF \}$\;
  $(d',\textbf{v})=\Call{FP}{\textbf{u}, \textbf{SocNet},            \textbf{SexNet}, BJD, NF, p}$\;
  \eIf{$(d',\textbf{v})\neq False$}{
    $NF.remove$(\textbf{u})\;
   }{
    Make edge (\textbf{u},\textbf{v}) in \textbf{SexNet}$,stub(\textbf{u})\leftarrow stub(\textbf{u})-1,$ $stub(\textbf{v})\leftarrow stub(\textbf{v})-1$\;
    \If{stub(\textbf{u})=0 [stub(\textbf{v})=0]}
    {$d_\se.remove(d_\se(\textbf{u}))$ [$d_\se.remove(d_\se(\textbf{v}))$]\;}
    \eIf{\textbf{u} is woman}{$BJD(d,d')-1$}{$BJD(d',d)-1$}
  }}
   \textbf{SexNet}$.n \leftarrow \textbf{u}={max_{d_\so(k)}}\{k \in \textbf{SocNet}.n-\textbf{SexNet}.n\}$\;
   $d_\se(\textbf{u})\triangleq max\{d_\se\},~stub(\textbf{u})\triangleq d_\se(\textbf{u}).$
  }
   \caption{ Extracting sexual network from social network (\textbf{Soc2sex})}\label{main_alg}
\end{algorithm}

\begin{minipage}{\linewidth}
\begin{algorithm}[H]
 \SetAlgoLined
   $d=d_{\se}(\textbf{u})$\;
      \lIf{\textbf{u} is woman}
       {$R\triangleq BJD(d,:)$, \textbf{else} $R\triangleq BJD(:,d)$}
\For{\texttt{iter} in range($|R|$)}
{\eIf{$R\neq0$}
    {$R.sample(d': R(d')\neq 0)$\;
    \eIf{$urn\leq$p}
          { $K1\triangleq\{\textbf{k}\in NF: \textbf{k}\in \textbf{SocNet}.N(\textbf{u})-\textbf{SexNet}.N(\textbf{u}), d_\se                      (k)=d'\}$\;
         $K2\triangleq\{\textbf{k}\in \textbf{SocNet}.N(\textbf{u})-\textbf{SexNet}.n:  d_\so(k)\geq d'\}$\;} 
         {$K1\triangleq \{\textbf{k}\in NF: \textbf{k}\notin \textbf{SocNet}.N(\textbf{u})\cup\textbf{SexNet}.N(\textbf{u}), d_\se (k)=d'\}$\;
        $K2\triangleq \{\textbf{k}\in \textbf{SocNet}.n-\textbf{SexNet}.n:  d_\so(k)\geq d'\}$\;}
          \uIf{$K1\neq \emptyset$}
           {\footnotesize{\textbf{v}$=K1.sample(\textbf{w}: dis(\textbf{SocNet},\textbf{u},\textbf{w})=min\{dis(\textbf{SocNet},\textbf{u},\textbf{k})~for~k\in K1\})$}\;
              Break\; }
              \uElseIf{$K2\neq \emptyset$}
              { \footnotesize{\textbf{v}$=K2.sample(\textbf{w}: dis(\textbf{SocNet},\textbf{u},\textbf{w})=min\{dis(\textbf{SocNet},\textbf{u},\textbf{k})~for~k\in K2\})$\;
       $d_\se(\textbf{v})\triangleq d'$, $stub(\textbf{v})\triangleq d'$}\;
          Break\;}
              \Else{$R(d')\triangleq 0$\;}
         }
         {$(d',\textbf{v})=False$\;
         break\; }
         }
\textbf{Return} $(d',\textbf{v})$\;
 
   \caption{  Finding partner with proper degree for a given node (\textbf{FP})}\label{help_alg}
\end{algorithm}
\end{minipage}
\subsubsection{Phase 3: correcting $BJD$ }
We intent to have  a generated sexual network  such that $BJD$ matrix represents its  joint-degree distribution. But, because in the main algorithm, we may not find a proper partner for some people, therefore, we may end up with a different joint-degree distribution for the network. The goal of third Phase is to correct joint-degree distribution of generated \se through some edge rewiring.  We define $\widetilde{BJD}$ as matrix representing joint-degree distribution of \se and $\mathcal{E}=BJD-\widetilde{BJD}$ as error matrix. Matrix  $\mathcal{E}$ may have nonzero elements:
\begin{enumerate}
\item If element $(i,j)$ of $\mathcal{E}$ for $i, j>1$ is a positive value $k$, it means that \se  needs $k$ more edges between women with $i$ partners and men with $j$ partners. To make these edges  we do the following process $k$ times: In \se we find a woman with $i$ partners like \textbf{w}$_i$ which has a degree-1 partner like \textbf{m}$_1$, then we find a man with $j$ partners  like \textbf{m}$_j$ which is social friend of \textbf{w}$_i$  but not her sexual partner and also has a degree-1 partner like \textbf{w}$_1$. Then we do a rewiring: remove edges (\textbf{w}$_i$,\textbf{m}$_1$) and (\textbf{m}$_j$,\textbf{w}$_i$) and add edge (\textbf{w}$_i$, \textbf{m}$_j$).
\item If element $(i,j)$ of $\mathcal{E}$ for $i, j>1$ is a negative value $k'$, it means that we have  extra $k'$  edges between women with $i$ partners and men with $j$ partners. To remove these extra edges  we do the following  process $k'$ times: we find a woman with $i$ partners like \textbf{w}$_i$ which has a degree-j partner like \textbf{m}$_j$, and  that at least one of their social friends are not in \se. Then we find a man  like \textbf{m}$_1$ which is social friend of \textbf{w}$_i$ that  not in \textbf{SexNet}$.n$, and also  we find a woman  like \textbf{w}$_1$ which is social friend of \textbf{m}$_j$ that is not in \textbf{SexNet}$.n$. Then we do a rewiring: remove edge (\textbf{w}$_i$,\textbf{m}$_j$) and add edges (\textbf{w}$_i$, \textbf{m}$_1$),(\textbf{w}$_1$,\textbf{m}$_j$).
\item In the previous steps, we pushed  back nonzero elements in $\mathcal{E}$ to its first row and column, which this causes new nonzero elements in the first row and column.
To remove these nonzero values, we have to add or remove small components. For example if  element (i,1) of $\mathcal{E}$ is a positive value $k$ it means  that we need a small component of a woman with $i$ partners whose all partners  are degree-1 men, therefore, we simply make this component from the people who are not currently in \textbf{SexNet}$.n$. If  element $(1,j)$ of $\mathcal{E}$ is a negative value $k$ it means  we have to remove a small component of a man with $j$ partners whose all partners  are degree-1 women, therefore, we simply look for such a component and remove it from \textbf{SexNet}.
\end{enumerate}
This Phase 3 may not make matrix $\mathcal{E}$ exactly equal to zero, because in its approach the proper nodes may not exist. However, it improves joint-degree distribution of \textbf{SexNet}. In the Result Section of Chapter (\ref{abm}), we observe that in practice Phase 3 would not be needed.

\newpage
\singlespacing
\addcontentsline{toc}{chapter}{References} 
\bibliographystyle{own_unsrt} 
\bibliography{thesis}
\thesisspacing

\chapter*{Biography}

The author was born in Boroojen in $1984$ year and graduated from Amir Kabir University with MSc in $2009$.
The author started the PhD program at the Tulane University mathematics department in August $2012$, eventually completing the program in May $2018$.

\end{document}